\definecolor{darkblue}{rgb}{0,0,0.38}
\definecolor{darkgreen}{rgb}{0.1,0.35,0}
\newcommand\qedhere{\hspace*{0mm}\hfill\ensuremath{\square}}
\newcommand\qed\qedhere
\newtheorem{definition}{Definition}[section]
\newenvironment{myproof}[1]{\medbreak\noindent\textit{#1.}}{\qedhere \medbreak}
\newtheorem{theorem}{Theorem}
\newtheorem{lemma}[theorem]{Lemma}
\newtheorem{proposition}[theorem]{Proposition}
\newtheorem{definition}[theorem]{Definition}
\newtheorem{corollary}[theorem]{Corollary}
\DeclareMathOperator{\conv}{conv}
\DeclareMathOperator{\cl}{cl}
\DeclareMathOperator{\diam}{diam}
\DeclareMathOperator{\flt}{Flt}
\DeclareMathOperator{\width}{width}
\DeclareMathOperator{\xc}{xc}
\DeclareMathOperator{\LPgapMax}{gap^+}
\DeclareMathOperator{\LPgapMin}{gap^-}
\DeclareMathOperator{\rdist}{rdist}
\DeclareMathOperator{\rd}{rdist}
\DeclareMathOperator{\af}{aff}
\DeclareMathOperator{\poly}{poly}
\DeclareMathOperator{\diag}{diag}
\def\R{\mathbb{R}}
\def\Z{\mathbb{Z}}
\def\calH{\mathcal{H}}
\def\calL{\mathcal{L}}
\def\calP{\mathcal{P}}
\def\zerovec{\mathbf{0}}
\def\eps{\varepsilon}
\def\t{^\intercal}
\def\onesvec{\mathbf{1}}
\def\match{\mathrm{P}_\mathrm{match}}
\def\pmatch{\mathrm{P}_\mathrm{pmatch}}
\def\cut{\mathrm{P}_\mathrm{cut}}
\def\tsp{\mathrm{P}_\mathrm{tsp}}
\def\stab{\mathrm{P}_\mathrm{stab}}
\def\knap{\mathrm{P}_\mathrm{knap}}
\def\vjoin{\mathrm{P}_\mathrm{vjoin}}
\def\vjoindom{\mathrm{P}_\mathrm{vjoin}\dom}
\def\oddcutdom{\mathrm{P}_\mathrm{ocut}\dom}
\def\oddcut{\mathrm{P}_\mathrm{ocut}}
\def\dom{^\uparrow}
\newcommand{\NPcomp}{\textsc{NP}}
\newcommand*\bigcdot{\mathpalette\bigcdot@{.7}}
\newcommand*\bigcdot@[2]{\mathbin{\vcenter{\hbox{\scalebox{#2}{$\m@th#1\bullet$}}}}}
\title{Lifting Linear Extension Complexity Bounds\\ to the Mixed-Integer Setting\ifbool{SODAfinal}{\thanks{The third author was supported by the Swiss National Science Foundation grant 200021\_165866.}}{}} 
\author{
Alfonso Cevallos\thanks{
Department of Mathematics, ETH Zurich, Zurich, Switzerland.
Email: \href{mailto:alfonso.cevallos@ifor.math.ethz.ch}
{alfonso.cevallos@ifor.math.ethz.ch}.
}
\and 
Stefan Weltge\thanks{
Department of Mathematics, ETH Zurich, Zurich, Switzerland.
Email: \href{mailto:stefan.weltge@ifor.math.ethz.ch}
{stefan.weltge@ifor.math.ethz.ch}.
}
\and
Rico Zenklusen\thanks{
Department of Mathematics, ETH Zurich, Zurich, Switzerland.
Email: \href{mailto:ricoz@math.ethz.ch}
{ricoz@math.ethz.ch}.
\ifbool{SODAfinal}{}{Supported by the Swiss National Science Foundation grant
200021\_165866.}
}
}
\date{}
\begin{document}

\maketitle

\ifbool{SODAfinal}{
\fancyfoot[R]{\footnotesize{\textbf{Copyright \textcopyright\ 2018 by SIAM\\
Unauthorized reproduction of this article is prohibited}}}
}{}

\begin{abstract}
\ifbool{SODAfinal}{\small\baselineskip=9pt}{}
Mixed-integer mathematical programs are among the most commonly used models for a wide set of problems in Operations Research and related fields. However, there is still very little known about what can be expressed by \emph{small} mixed-integer programs. In particular, prior to this work, it was open whether some classical problems, like the minimum odd-cut problem, can be expressed by a compact mixed-integer program with few (even constantly many) integer variables.
This is in stark contrast to linear formulations, where recent breakthroughs in the field of extended formulations have shown that many polytopes associated to classical combinatorial optimization problems do not even admit approximate extended formulations of sub-exponential size.

We provide a general framework for lifting inapproximability results of extended formulations to the setting of mixed-integer extended formulations, and obtain almost tight lower bounds on the number of integer variables needed to describe a variety of classical combinatorial optimization problems.
Among the implications we obtain, we show that any mixed-integer extended formulation of sub-exponential size for the matching polytope, cut polytope, traveling salesman polytope or dominant of the odd-cut polytope, needs $\Omega(n/\log n)$ many integer variables, where $n$ is the number of vertices of the underlying graph.
Conversely, the above-mentioned polyhedra admit polynomial-size mixed-integer formulations with only $O(n)$ or $O(n \log n)$ (for the traveling salesman polytope) many integer variables.

Our results build upon a new decomposition technique that, for any convex set $C$, allows for approximating any mixed-integer description of $C$ by the intersection of $C$ with the union of a small number of affine subspaces.

\end{abstract}
 
{\small
\textbf{Keywords:} extension complexity, mixed-integer programs, extended formulations
}

\section{Introduction}
Mixed-integer linear extended formulations (MILEFs) are one of the most common models to mathematically describe a wide variety of problems in Operations Research and related fields. 
This is due to their high expressive power, which made them the tool of choice for numerous real-world optimization problems, 
and also led to a large ecosystem of commercial solvers and modeling languages supporting mixed-integer models.
Despite their prevalence, the relation between what can be expressed by mixed-integer formulations and the number of integer variables used remains badly understood.
In particular, it is open how many integer variables are needed to obtain a compact, i.e., polynomial-size, MILEF for classical combinatorial objects including matchings, traveling salesman tours, cuts, stable sets, vertex covers, and odd cuts, just to name a few.
Moreover, there are natural problem classes beyond classical combinatorial optimization problems, for which efficient algorithms are known and yet, prior to this work, it was open whether they could as well be solved efficiently via a MILEF with a very small number of integer variables (maybe even just constantly many), through the use of Lenstra's Algorithm~\cite{Lenstra1983}. Bimodular integer programming is such an example (we expand on this in Section~\ref{secApplicationBimodular}). Whereas MILEFs are mostly used to describe hard problems, the hope to find such efficiently solvable MILEFs is driven by the desire to cast efficiently solvable problems, for which only specialized procedures are known, into a common, well-studied framework, for which there are moreover strong solvers available.
The only prior result on lower bounds for the number of integer variables in MILEFs, shows that any compact formulation of the matching polytope of a complete graph with $n$ vertices needs $\Omega(\sqrt{\sfrac{n}{\log n}})$ integer variables~\cite{HildebrandWZ2017}. Unfortunately, the presented technique is highly specialized to the matching polytope, heavily exploiting its facet structure. Moreover, this lower bound leaves a large gap compared to the canonical description of all matchings using $\Theta(n^2)$ many integer variables, one for each edge.

The goal of this work is to address this lack of understanding of the expressive power of MILEFs as a function of the number of constraints and integer variables used, by presenting a general framework to lift linear extension complexity results for approximate extensions to the mixed-integer setting.
To better put our results into context, we start with a brief summary of some basics on linear extensions, which also allows us to introduce notations and terminology to be used later, and formalize the notion of MILEFs.

\medskip

The typical settings in discrete optimization for which linear or mixed-integer models are developed ask about optimizing a linear function over some set of vectors $\mathcal{X} \subseteq \mathbb{Z}^d$, or often even just $\mathcal{X} \subseteq \{0,1\}^d$. The set $\mathcal{X}$ could for example correspond to the characteristic vectors of all matchings in a graph $G=(V,E)$, in which case $d=|E|$. Clearly, such problems can be restated as optimizing a linear function over the \emph{corresponding polytope} $P\subseteq \mathbb{R}^d$, which is simply the convex hull of the points in $\mathcal{X}$, i.e., $P=\conv(\mathcal{X})$. For $\mathcal{X}$ being all matchings of a graph, $P$ would therefore correspond to the matching polytope.
Hence, a discrete optimization problem gets described by a linear program, which is algorithmically well understood.
Ideally, to solve a linear program over $P$, we would like to have an inequality description of $P$, i.e., $P=\{x\in \mathbb{R}^d \mid Ax \leq b\}$.
Unfortunately, for many polytopes that arise in combinatorial optimization, such descriptions require an exponential number of inequalities. However, some of these polytopes admit much smaller descriptions if we allow the use of ``additional variables'', i.e., we allow describing $ P $ as
\begin{equation}
    \label{eqIntroEf}
    P = \{ x \in \R^d \, | \, \exists y \in \R^\ell : Ax + By \le b \}\enspace,
\end{equation}
where $ Ax + By \le b $ is a system of (preferably few) linear inequalities. The description $Q=\{(x,y) \in \mathbb{R}^d \times \mathbb{R}^\ell \mid Ax + By \leq b\}$ is called an \emph{extended formulation} of $P$. It allows for stating the original problem as a linear program over the solutions of $Q$.
Understanding which polytopes admit small extended formulations is the scope of the field of \emph{extended formulations} and we refer to~\cite{ConfortiCZ2013,Kaibel2011} and~\cite[Chap.~4]{ConfortiCZ2014} for many
examples and background material.
Formally, the \emph{extension complexity} $\xc(P)$ of a polyhedron $P$ is the minimum number of facets of an extended formulation $Q$ of $P$.
Clearly, compact (i.e., polynomial-size) extended formulations are desirable since they allow for rephrasing the original problem as a small linear program.

Whereas the above definition of an extended formulation $Q$ of $P$ requires $P$ to be an axis-parallel projection of $Q$, one can lift this restriction and allow for $P$ to be some affine image of $Q$. This generalization, which we use in this paper for convenience, can easily be seen not to have any impact to the notion of extension complexity.
More formally, we say that a polyhedron $Q$ is a linear extended formulation (LEF) of $P$ if there exists an affine map $\pi$ such that $P=\pi(Q)$.
Moreover, the \emph{size} of a LEF is equal to the number of its facets.

The study of linear extensions has received considerable attention recently due to breakthrough results stating that, for various prominent polytopes that arise in combinatorial optimization, the number of inequalities in every description of
type~\eqref{eqIntroEf} is super-polynomial in $ d $, see,
e.g.,~\cite{FioriniMPTW2015,Rothvoss2014,KaibelW2015,AvisT2013,PokuttaV2013,GoosJW2016}.

\medskip

This situation changes dramatically if we further allow for imposing ``integrality constraints'', which leads to the notion
of mixed-integer linear extended formulations.
In this setting, we describe $ P $ as
\ifbool{SODAfinal}{
  \begin{alignat}{2}
   P = \conv \Big\{ & x \in \R^d \; | \; \exists y \in \R^\ell, \, z \in \Z^k : & \\
		    & Ax + By + Cz \le b & \Big\}\enspace. \nonumber
  \end{alignat}
}{
\begin{equation} \label{eqIntroMilef}
    P = \conv \left(\left\{ x \in \R^d \; \middle\vert \; \exists y \in \R^\ell, \, z \in \Z^k : Ax + By + Cz \le b \right\}\right)\enspace.
\end{equation}
}
In other words, $P$ is described by some polyhedron $Q=\{(x,y,z)\in \mathbb{R}^d \times \mathbb{R}^\ell \times \mathbb{R}^k \mid Ax + By + Cz \leq b\}$ that is intersected with $k$ integrality constraints, projected down to a subset of the coordinates, and the convex hull of the resulting set is finally considered. 
The complexity of such a description of $P$ is now captured by $2$ parameters, namely the number $m$ of facets of $Q$---which is again called the \emph{size} of $Q$---and the number $k$ of integer variables.
As in the case of linear extensions, we can lift the restriction of the projection being axis-parallel, and allow for imposing integrality constraints on affine forms, 
without any impact on the values for $m$ and $k$ that can be achieved. 
Formally, we say that a polyhedron $Q\subseteq \mathbb{R}^\ell$ is a \emph{MILEF of complexity $(m,k)$} of $P\subseteq \mathbb{R}^d$, 
if the number of facets of $Q$ is at most $m$, the number of integrality constraints $k$, 
and there are affine maps $\sigma:\mathbb{R}^\ell \rightarrow \mathbb{R}^k$ and $\pi:\mathbb{R}^\ell \rightarrow \mathbb{R}^d$ such that
\ifbool{SODAfinal}{
\begin{align}
P &= \conv\left( \left\{ \pi(x) \;\middle\vert\; x\in Q, \sigma(x)\in \mathbb{Z}^k \right\} \right) \label{eq:defMILEF} \\
  &= \conv\left( \pi \left(Q\cap \sigma^{-1}(\mathbb{Z}^k)\right) \right)\enspace. \nonumber
\end{align}
}{
\begin{equation}\label{eq:defMILEF}
P = \conv\left( \left\{ \pi(x) \;\middle\vert\; x\in Q, \sigma(x)\in \mathbb{Z}^k \right\} \right)
  = \conv\left( \pi \left(Q\cap \sigma^{-1}(\mathbb{Z}^k)\right) \right)\enspace.
\end{equation}
}
When we need to be specific, we also say that the triple $(Q,\sigma, \pi)$ is a MILEF of $P$.
Again, a description of $P$ in terms of a MILEF (with corresponding maps $\pi$ and $\sigma$), allows for reducing any linear programming problem over $P$ to the problem of maximizing a linear function over the mixed-integer set $\pi(Q\cap \sigma^{-1}(\mathbb{Z}^k))$.

Even very difficult structures, like stable sets, or sets over which we can efficiently optimize but for which no small extended formulation exists, like matchings, can easily be described as small mixed-integer (or even just integer) formulations. 
On the downside, mixed-integer linear programs are a considerably harder problem class than their linear counterparts. The currently fastest algorithms, in terms of dependence on the number of integer variables $k$, have a running time dependence of $k^{O(k)}$~\cite{Lenstra1983}; hence, one needs $k=O(\sfrac{\log d}{\log\log d})$ for this expression to become polynomial, where $d$ corresponds to the input size of the original problem.
Thus, for hard problems, like maximum stable sets, we do not expect that MILEFs exist with both small size and few integer variables.

However, the mentioned achievements in the field of extended formulations did not seem to give rise to general techniques for obtaining lower bounds on the number of integer variables in the mixed-integer setting.
In this work, we demonstrate how recent generalizations of results in extended formulations, namely inapproximability results for extended formulations, can be leveraged in a general way to obtain lower bounds on $m$ and $k$ for any MILEF of complexity $(m,k)$ for the problem in question. Actually, the lower bounds we obtain even hold for any MILEF that is a close approximation (in a sense we will define formally later) of the problem under consideration.
As a consequence, we can show close-to-optimal lower bounds on $k$ for any compact (approximate) MILEF of classical polytopes like the matching polytope, the cut polytope, and the dominant of the odd-cut polytope.

 \subsection{Main results and consequences}
To exemplify the type of results we can derive from our technique, and to highlight its breadth, we first state some hardness results for classical combinatorial problems and for an efficiently solvable class of integer programs (bimodular integer programs, which we will formally define in Section~\ref{secApplicationBimodular}), and later discuss our general framework allowing to derive these and further results as consequences.
To this end, let $ K_n $ denote the complete undirected graph on $ n $ vertices.
\begin{samepage}
\begin{theorem}\label{thm:keyConsequences1}
    There is a constant $c>0$ such that the following holds.
    Let $n\in \mathbb{Z}_{\geq 1}$, and let $m,k >0$ such that there exists a MILEF with complexity $(m,k)$ with $m\leq 2^{c\cdot n}$ for either:
    \begin{itemize}[itemsep=-0.4em,topsep=0.2em]
    \item the matching polytope of $ K_n $,
    \item the dominant of the odd-cut polytope of $ K_n $,
    \item the cut polytope of $ K_n $, or
    \item the traveling salesman polytope of $ K_n $.
    \end{itemize}
    Then $k=\Omega(\sfrac{n}{\log n})$.
\end{theorem}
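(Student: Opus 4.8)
The plan is to use our decomposition technique to turn a hypothetical small MILEF of one of these four polyhedra into a small \emph{linear} approximate extended formulation, and then to contradict the known exponential lower bounds on the extension complexity of approximations of these polyhedra.

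Fix one of the four polyhedra and call it $P\subseteq\R^d$; in each case $P$ lives in dimension $d=\Theta(n^2)$. Assume $P$ has a MILEF $(Q,\sigma,\pi)$ of complexity $(m,k)$ with $m\le 2^{cn}$, where $c>0$ is a constant to be fixed at the end, and assume $k<n$ (otherwise the conclusion is trivial). Apply the decomposition technique (developed below) to the convex set $C=P$ and this MILEF: it extracts from the $\le m$ facets of $Q$ and the $k$ integrality constraints a bounded family of affine subspaces such that the polyhedron $R$ obtained as the convex hull of the corresponding slices approximates $P$ well enough for the inapproximability results below to apply. Each of these slices is obtained from $Q$ by imposing additional affine equations and then projecting, hence has an extended formulation of size at most $m$; so, by the standard union-of-polytopes (disjunctive programming) construction,
\[
  \xc(R)\;\le\;\poly(m)\cdot d^{O(k)}\,.
\]
The decisive feature is that the number of pieces depends only polynomially on $m$: the relevant lattice has rank $k$, so the number of fibers that influence the convex hull is controlled by $k$ and $d$, not by the facet complexity of $Q$.

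Next, recall the inapproximability results for linear extensions. Every polyhedron that approximates the matching polytope of $K_n$ within the relevant factor has extension complexity $2^{\Omega(n)}$, and the same exponential bound holds for the cut polytope of $K_n$ and the traveling salesman polytope of $K_n$ --- both of which carry the correlation/matching structure as a projection or as a face, so that an approximate extension of one yields an approximate extension of the other --- and, via blocking duality with the $T$-join polytope, for the dominant of the odd-cut polytope of $K_n$. Hence $\xc(R)\ge 2^{\Omega(n)}$ in all four cases. Comparing with the bound above gives $\poly(m)\cdot d^{O(k)}\ge 2^{\Omega(n)}$; since $d=O(n^2)$ we have $d^{O(k)}=2^{O(k\log n)}$, and since $m\le 2^{cn}$ we have $\poly(m)\le 2^{O(cn)}$. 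Choosing $c$ small enough that the factor $2^{O(cn)}$ absorbs at most half of the exponent $\Omega(n)$ leaves $2^{O(k\log n)}\ge 2^{\Omega(n)}$, i.e.\ $k\log n=\Omega(n)$, which is exactly $k=\Omega(n/\log n)$. This fixes $c$, depending only on the implied constants above.

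The hard part will be aligning the approximation guarantee produced by the decomposition with the form of approximation for which the extended-formulation lower bounds are known: the decomposition controls the error only in a particular sense (essentially a bounded multiplicative distortion of the relevant linear objectives), so one must either establish the inapproximability statements for precisely that class of relaxations or first pass to a suitable relaxation of $P$ before decomposing. A secondary difficulty is the dominant of the odd-cut polytope, which is unbounded: the decomposition must then be run for general convex sets rather than polytopes, and its inapproximability has to be transported from the matching / $T$-join side while respecting the up-monotone (dominant) structure.
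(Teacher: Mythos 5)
Your outline is the same architecture the paper uses: slice the MILEF along flat lattice directions, take the convex hull of the slices via disjunctive programming to obtain an approximate LEF of size $m\cdot(\poly(n))^{O(k)}=m\cdot 2^{O(k\log n)}$, and play this off against a $2^{\Omega(n)}$ lower bound for \emph{approximate} extensions; the final accounting ($m\le 2^{cn}$ forces $k\log n=\Omega(n)$) is exactly right. The genuine gaps are precisely the two issues you flag and defer, and one of your proposed shortcuts would fail. For the traveling salesman polytope you want to transfer \emph{approximate} extension hardness from $\match(n')$ to $\tsp(n)$ because the former is an affine projection of a face of the latter. That transfer does not work: relative-distance or LP-gap guarantees do not survive passing to a face, since intersecting an $\eps$-approximation of $\tsp(n)$ with the supporting hyperplane of the matching face can be arbitrarily bad relative to the face itself. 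The paper instead passes to the face at the level of the \emph{exact} MILEF --- a MILEF of complexity $(m,k)$ of a polyhedron induces one of the same complexity for any affine projection of a face, which needs a short argument that intersecting with the supporting hyperplane commutes with taking the mixed-integer hull --- and only afterwards applies the decomposition, to the resulting MILEF of the matching polytope.

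For the dominant of the odd-cut polytope, ``via blocking duality with the $V$-join polytope'' compresses most of the real work. One first needs an inapproximability result for LEFs of $\vjoindom(n)$, which does not follow formally from the matching bound; the paper proves it by showing that for objectives of the form $c+2n\onesvec$ the minimum over $\vjoindom(n)$ is attained on the perfect-matching face, so that the down-closure of $K\cap\{x:\onesvec\t x=n/2\}$ is sandwiched between $\match(n)$ and a $(1+O(\eps n^3))$-blow-up of it, and then invokes the Braun--Pokutta bound. Blocking duality (with $\xc(B(P))\le\xc(P)+d+1$) transfers this to $\oddcutdom(n)$, but the resulting hardness is a multiplicative, dominant-type (minimization-gap) statement about an unbounded polyhedron, whereas the decomposition controls relative distance: one still has to intersect with $[0,1]^E$, convert between minimization gap and relative distance (losing a dimension factor), and transfer the MILEF lower bound from $\oddcutdom(n)\cap[0,1]^E$ back to $\oddcutdom(n)$. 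The same alignment issue, in milder form, is present for matching and cut, where the known sandwich bounds $P\subseteq K\subseteq(1+\sfrac{\alpha}{n})P$ must be turned into relative-distance hardness (via down-closedness for matching, and a direct argument against the outer relaxation of the correlation polytope for cut). None of these conversions is deep --- they only cost polynomial factors in the approximation parameter --- but together they are the content of the theorem beyond the decomposition, and your proposal leaves all of them open.
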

\end{samepage}
As we will see later, in all of the above cases our lower bound on the number of integer variables is tight up to a
factor that is poly-logarithmic in $ n $.
\begin{theorem}\label{thm:keyConsequences2}
    There is a constant $c>0$ such that the following holds.
    Let $n\in \mathbb{Z}_{\geq 1}$, and let $m,k >0$ such that there exists a MILEF with complexity $(m,k)$ with $m\leq 2^{c\cdot \sqrt{n}}$ for either:
    \begin{itemize}[itemsep=-0.4em,topsep=0.2em]
    \item the stable set polytope of any $ n $-vertex graph,
    \item the knapsack polytope of any $ n $-item instance,
    \item the independence polytope of any matroid on a ground set of cardinality $n$, or
    \item the convex hull of all feasible points of any conic bimodular integer program with $n$ variables.
    \end{itemize}
    Then $k=\Omega(\sfrac{\sqrt{n}}{\log n})$.
\end{theorem}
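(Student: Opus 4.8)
The plan is to deduce Theorem~\ref{thm:keyConsequences2} from our general decomposition framework, together with known super-polynomial lower bounds on the \emph{approximate} extension complexity of the four polytope families; this follows the same route as Theorem~\ref{thm:keyConsequences1}, only with the input bounds weakened from $2^{\Omega(n)}$ to $2^{\Omega(\sqrt n)}$. The framework, instantiated with the appropriate approximation parameter, will show that whenever a polytope $P$ admits a MILEF $(Q,\sigma,\pi)$ of complexity $(m,k)$, there is a polyhedron $R$ approximating $P$ — in the sense of approximate extended formulations that the framework makes precise — which arises as the convex hull of $\pi(Q)\cap(L_1\cup\dots\cup L_N)$ for affine subspaces $L_1,\dots,L_N$ with $N = 2^{O(k\log k)}$. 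Since each set $\pi(Q)\cap L_i$ has extension complexity at most $m$, the standard extended formulation for a union of polyhedra gives $\xc(R)\le m\cdot 2^{O(k\log k)}$.

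Granting this, fix one of the four families and suppose it is known that, for every $n$, some size-$n$ instance has the property that \emph{every} approximation $R$ in the relevant sense satisfies $\xc(R)\ge 2^{\gamma\sqrt n}$ for an absolute constant $\gamma>0$. Taking the constant $c$ of the theorem to be any positive number below the minimum of these $\gamma$'s over the four families, the chain $m\cdot 2^{O(k\log k)}\ge\xc(R)\ge 2^{\gamma\sqrt n}$ together with $m\le 2^{c\sqrt n}$ forces $2^{O(k\log k)}\ge 2^{(\gamma-c)\sqrt n}$, hence $k\log k=\Omega(\sqrt n)$; if $k\ge\sqrt n$ then $k=\Omega(\sfrac{\sqrt n}{\log n})$ trivially, and otherwise $\log k<\tfrac12\log n$ yields $k=\Omega(\sfrac{\sqrt n}{\log n})$ again. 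It thus remains to supply the four lower bounds on approximate extension complexity. For the stable set polytope this is classical: the correlation polytope $\mathrm{COR}(t)$ has approximate extension complexity $2^{\Omega(t)}$ and is realized, in an approximation-preserving way, as a face of $\stab(G)$ for a suitable graph $G$ on $\Theta(t^2)$ vertices, so that every approximation of $\stab(G)$ has extension complexity $2^{\Omega(\sqrt{|V(G)|})}$. For the knapsack polytope and the independence polytopes of matroids, one invokes analogously the known $2^{\Omega(\sqrt n)}$ lower bounds, in terms of the number of items and of the ground-set size, respectively. For conic bimodular integer programs, the only additional issue is that it is not a priori clear that the class contains instances of large extension complexity at all; this is settled in Section~\ref{secApplicationBimodular}, where a family of such hard instances is realized by conic bimodular programs — which are, moreover, efficiently solvable, so that they form an ``easy'' problem class that nonetheless provably requires many integer variables in any compact MILEF.

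The substantive work lies entirely in the general framework, developed in the following sections; granting it, the part specific to Theorem~\ref{thm:keyConsequences2} is light. The two points still needing attention are reconciling the notion of approximation produced by the framework with the one used in the cited extension-complexity lower bounds — routine but necessary bookkeeping — and carrying out the conic bimodular encoding, which I expect to be the only genuinely non-mechanical step left here, since unlike the other three families the bimodular case offers no off-the-shelf hard instance to import.
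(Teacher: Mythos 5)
There is a genuine gap in how you supply the hardness inputs for the four families. Your plan requires, for each family, an \emph{approximate} extension complexity lower bound for the polytope itself (every $\delta$-approximation has $\xc \ge 2^{\gamma\sqrt n}$ for some $\delta \ge 1/\poly(n)$), and you treat these as classical imports: for stable set you assert that the correlation polytope sits inside $\stab(G)$ ``in an approximation-preserving way'' as a face, and analogously for knapsack and matroids. But the known reductions (Fiorini et al.\ for stable set, Pokutta--Van Vyve for knapsack) are \emph{exact} statements: the cut/correlation polytope is an affine projection of a \emph{face} of the target polytope, and passing to a face does not preserve approximation guarantees in general --- relative distance (or LP gap) to a relaxation can blow up arbitrarily on a face, which is exactly why the paper needs a bespoke quantitative argument (Theorem~\ref{thm:LEFlowboundVjoin}, Lemma~\ref{lemInapproxDomToInapproxRdist}, Lemma~\ref{lemmaAnotherAuxLemmaForVjoins}) even to move from the matching polytope to the $V$-join/odd-cut dominants. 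The paper avoids needing approximate hardness for $\stab(G)$, knapsack, or the bimodular hulls altogether by transferring at the \emph{exact MILEF} level: Lemma~\ref{lemProjectionOfFace} shows a MILEF of complexity $(m,k)$ for $P$ yields one of the same complexity for any affine projection of a face of $P$, so the already-established MILEF hardness of $\cut(n')$ (Corollary~\ref{corLowerBoundCut}) and of $\oddcutdom(n')$ (Corollary~\ref{cor:lowerBoundOddCutDomMILEF}) does all the work; the approximate-LEF machinery is only ever invoked where such bounds genuinely exist (correlation/cut, matching, and their derivatives). This face-projection lemma, or a substitute argument justifying your ``approximation-preserving'' claim, is the missing idea; without it the stable set, knapsack, and (to a lesser extent) bimodular bullets do not follow. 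Similarly, for matroids there is no off-the-shelf approximate bound to ``invoke'': the paper has to derive it from the counting theorem for families of polytopes (Theorem~\ref{thm:Averkovetal}) together with the Hausdorff-vs-relative-distance comparison (Lemma~\ref{lemHausdorffBoundedByRdist}).

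Two smaller points. First, your count $N = 2^{O(k\log k)}$ of affine subspaces is only valid for constant approximation parameter; since the hardness results you must ultimately hit are at relative distance $\Theta(1/\poly(n))$, the framework gives $N = 2^{O(k\log(kn))}$, i.e.\ $m\cdot 2^{O(k\log n)} \ge 2^{\gamma\sqrt n}$ --- this still yields $k = \Omega(\sqrt n/\log n)$ directly, so the conclusion survives, but your case split based on $k\log k = \Omega(\sqrt n)$ should be restated accordingly. Second, the subspaces produced by the decomposition live in the extended space and the pieces are $\pi(Q\cap H)$, not $\pi(Q)\cap L_i$ for subspaces $L_i$ of $\R^d$; this does not affect the size bound (each piece still has extension complexity at most $m$, and Balas' construction applies), but the description should be corrected.
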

A summary of stated lower bounds together with upper bounds on the number of integer variables needed in
polynomial-size MILEFs can be found in Figure~\ref{figTableResults}.
Actually, our techniques imply even slightly stronger versions of Theorems~\ref{thm:keyConsequences1}
and~\ref{thm:keyConsequences2} that rule out MILEFs for the said problems that closely approximate $P$.
We will get back to this later.

\begin{figure}
    \begin{center}
        \begin{tabular}{lcc}
            \toprule
            Polytope(s) & Lower bound & Upper bound \\
            \midrule
            Matching         &        $ \Omega(\sfrac{n}{\log n}) $ & $ O(n) $ \\
            Odd-cut dominant &        $ \Omega(\sfrac{n}{\log n}) $ & $ O(n) $ \\
            Cut              &        $ \Omega(\sfrac{n}{\log n}) $ & $ O(n) $ \\
            TSP              &        $ \Omega(\sfrac{n}{\log n}) $ & $ O(n \log n) $ \\
            \midrule
            Stable set       & $ \Omega(\sfrac{\sqrt{n}}{\log n}) $ & $ O(n) $ \\
            Knapsack         & $ \Omega(\sfrac{\sqrt{n}}{\log n}) $ & $ O(n) $ \\
            Matroid          & $ \Omega(\sfrac{\sqrt{n}}{\log n}) $ &    $ - $ \\
            Bimodular        & $ \Omega(\sfrac{\sqrt{n}}{\log n}) $ & $ O(n) $ \\
            \bottomrule
        \end{tabular}
    \end{center}
    \caption{
        Bounds on the number of integer variables in MILEFs of a certain size.
        The first four rows refer to the respective polytopes on the complete undirected graph on $ n $ nodes.
        Here, the lower bounds hold for MILEFs up to size $ 2^{c n} $ for some constant $ c > 0 $ (see Theorem~\ref{thm:keyConsequences1}).
        The lower bounds in the latter four rows refer to stable set polytopes of $ n $-vertex graphs, knapsack
        polytopes of $ n $-item instances, independence polytopes of matroids with cardinality $ n $, and
        integer hulls of conic bimodular integer programs with $ n $ variables, respectively.
        These bounds hold for MILEFs up to size $ 2^{c \sqrt{n}} $, and these bounds are to be interpreted as guaranteeing the existence of polytopes in the respective family for which the lower bound holds (see Theorem~\ref{thm:keyConsequences2}).
        The right column contains upper bounds on the number of integer variables that are sufficient to obtain MILEFs of size polynomial in $ n
        $ for the respective family, and these bounds are valid for all members of the family (see Section~\ref{secUpperBounds}).
        For the case of independence polytopes of matroids, no polynomial-size MILEF is known.
    }
    \label{figTableResults}
\end{figure}

Previously, a non-trivial lower bound on the number of integer variables needed in any sub-exponential MILEF was only known for matchings, where a bound of $\Omega(\sqrt{\sfrac{n}{\log n}})$ was obtained in~\cite{HildebrandWZ2017}. Our results not only apply to a much broader class of problems, but are often also nearly tight in terms of revealing how many integer variables are needed, which is a key parameter in MILEFs.
More precisely, polynomial-size MILEFs for cuts and minimal odd-cuts are well known.
For matchings, the textbook integer formulation uses one integer variable per edge, thus leading to a compact MILEF with $O(n^2)$ integer variables.
As we show in Section~\ref{secApplicationMatching}, also matchings admit a compact MILEF with only $O(n)$ integer variables.
Hence, Theorem~\ref{thm:keyConsequences1} shows that even if we allow exponential-size MILEFs, satisfying $m\leq 2^{cn}$ for a well-chosen constant $c>0$, at least nearly linearly many, i.e., $\Omega(\sfrac{n}{\log n})$, integer variables are needed for describing the matching polytope, the cut polytope, or the dominant of the odd-cut polytope.
Whereas this may be natural to expect for hard problems, it is interesting that polytopes corresponding to efficiently solvable problems, like maximum matchings, minimum odd-cuts, or bimodular integer programming, cannot be described by a MILEF with much fewer than linearly many integer variables.
In particular, this rules out the possibility to efficiently solve the odd-cut problem or bimodular integer programming through a MILEF with a classical algorithm for mixed-integer programs like Lenstra's algorithm~\cite{Lenstra1983}, whose running time dependence on the number of integer variables $k$ is $k^{O(k)}$; we would thus need $k=O(\sfrac{\log n}{\log\log n})$ for Lenstra's algorithm to run efficiently.

Moreover, since the matching polytope of a complete graph with $n$ vertices is a linear projection of the traveling salesman polytope (TSP) on an $O(n)$-vertex graph (see~\cite{yannakakis_1991_expressing}), our results also extend to the TSP polytope. We note that TSP and its variants have been heavily studied in the context of MILEF formulations (see \cite{padberg_1991_analytical,gouveia_1995_classification,orman_2007_survey,oncan_2009_comparative} and references therein) and we provide the first nearly-linear lower bound on the number of integer variables needed in such formulations.

\medskip

Whereas Theorems~\ref{thm:keyConsequences1} and~\ref{thm:keyConsequences2} give a nice overview of the type of results we can obtain, the main contribution of our work, which leads to those results, is a general technique to transform hardness results for approximate LEFs to the mixed-integer setting, in a black-box fashion. 
We first need a formal notion of approximate LEFs and MILEFs, which boils down to defining, for two non-empty convex sets $A,B\subseteq \mathbb{R}^d$ with $A\subseteq B$ (think of $B$ as being a relaxation of $A$), how well $B$ approximates $A$.
Various notions have been used in the literature, depending on the context. In particular, from a viewpoint of optimization, it is natural to consider a notion related to the ratio of the optimal values when optimizing a linear objective over $A$ and $B$, respectively, like the integrality gap. We therefore use the notion of the \emph{maximization gap} $\LPgapMax(A,B)$ and the \emph{minimization gap} $\LPgapMin(A,B)$ between $A$ and $B$ with $A\subseteq B\subseteq \R^d_{\geq 0}$, which are defined as follows:
\ifbool{SODAfinal}{
\begin{align*}
\LPgapMax(A,B)= \inf \Big\{& \eps \geq 0: \ \forall c\in \R^d_{\ge 0}, \\
&(1+\eps)\cdot \sup_{a\in A} c\t a \geq \sup_{b\in B} c\t b \Big\},
\\
\LPgapMin(A,B)=\inf \Big\{& \eps \geq 0: \ \forall c\in \R^d_{\ge 0}, \\ 
& \inf_{a\in A} c\t a \leq  (1+\eps)\cdot \inf_{b\in B} c\t b \Big\}.
\end{align*}
}{
\begin{align*}
\LPgapMax(A,B)&=\inf\{\eps \geq 0: \ (1+\eps)\cdot \sup_{a\in A} c\t a \geq \sup_{b\in B} c\t b \quad \forall c\in \R^d_{\ge 0}\},
\\
\LPgapMin(A,B)&=\inf\{\eps \geq 0: \ \inf_{a\in A} c\t a \leq  (1+\eps)\cdot \inf_{b\in B} c\t b \quad \forall c\in \R^d_{\ge 0}\}.
\end{align*}
}
Clearly, the maximization gap is relevant for maximization problems like maximum matchings, or maximum stable set, and the minimization gap is used for minimization problems like minimum odd-cut.

Many approximation hardness results for LEFs are stated in terms of these linear programming (LP) gap notions. However, for our techniques, a more ``geometric'' notion is more convenient. In particular, one that is invariant under basic operations like bijective affine transformations, which is not the case for $\LPgapMax$ and $\LPgapMin$. We therefore introduce the notion of ``relative distance'' between $A$ and $B$, which can be interpreted as a normalized notion of LP gap and has many helpful properties. In particular, it is invariant with respect to affine bijections and can easily be related to the LP gap notions for $0/1$-polytopes.
For two non-empty convex sets $A\subseteq B \subseteq \mathbb{R}^d$, we define their \emph{relative distance} by
\begin{equation*}
\rdist(A,B) \coloneqq \sup_{\pi \colon \R^d \to \R} \ \frac{d_H (\pi(A), \pi(B))}{\diam(\pi(A))} \enspace,
\end{equation*}
where the supremum is taken over all linear maps $\pi \colon \R^d \to \R$, $d_H(\cdot,\cdot)$ is the Hausdorff distance (i.e. $d_H (\pi(A), \pi(B))= \sup_{b \in B} \inf_{a\in A} |\pi(b)-\pi(a)|$), 
and $\diam(\cdot)$ is the diameter function (i.e. $\diam(\pi(A))=\sup_{a,a'\in A} |\pi(a)-\pi(a')|$).
In the above definition, we interpret as $0$ any fraction with $\infty$ in the denominator as well as the fraction $\sfrac{0}{0}$.
One can easily observe that the definition of relative distance does not change if the supremum is only taken over orthogonal projections onto a line, which goes through the origin.
As an illustration of the notion of relative distance, see Figure~\ref{figRdist}.
\begin{figure}
    \begin{center}
              \begin{tikzpicture}[scale=0.3]

           \path (-3,11) -- (3,2) coordinate[pos=0.07] (a) {};

            \fill[fill=black!20] (-4,10) -- (12,11) -- (2,19) -- cycle;

            \fill[fill=black!70] (-3,11) -- (a) -- (7,11) -- (3,17) -- cycle;

            \draw[dotted] (-4,10) -- (1.846,1.231);
            \draw[dotted] (a) -- (3,2);
            \draw[dotted] (7,11) -- (9.923,6.615);
            \draw[dotted] (12,11) -- (13.385,8.923);

            \draw[dotted] (0,0) -- (15,10);
            \draw[black!20,line width=3] (1.846,1.231) -- (13.385,8.923);
            \draw[black!70,line width=3] (3,2) -- (9.923,6.615);

  \begin{scope}
          \tikzstyle{br}=[decorate,decoration={brace,mirror,raise=4pt,amplitude=4pt}, line width=0.5pt]
          \tikzstyle{lab}=[midway, xshift=8pt, yshift=-12pt]
           \draw[br] (9.923,6.615) -- (13.385,8.923) node[lab] {$\alpha$};
           \draw[br] (3,2) -- (9.923,6.615) node[lab] {$\beta$};
  \end{scope}

        \end{tikzpicture}
     \end{center}
    \caption{The relative distance of the above light shaded and dark shaded polygons is $ \frac{\alpha}{\beta} $.}
    \label{figRdist}
\end{figure}
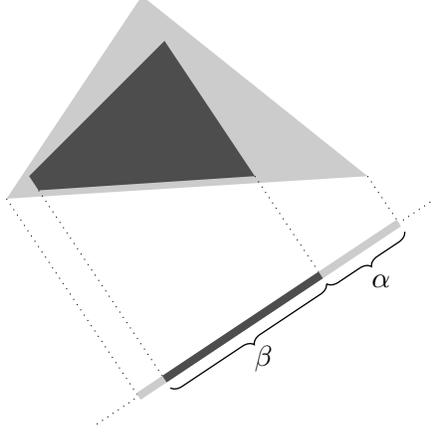
Moreover, we extend the definition to empty sets by setting $\rdist(\emptyset,\emptyset)=0$ and $\rdist(\emptyset,B)=\infty$ for $B\neq\emptyset$.
Even though one could define the relative distance in a broader context without assuming $A\subseteq B$, we restrict ourselves to the above setting since it is the one relevant for our derivations.
Using the notion of relative distance, we now define approximate LEFs and MILEFs in the natural way.

\begin{definition}
For a convex set $C\subseteq \mathbb{R}^d$ and $ \eps \ge 0 $, an $\eps$-LEF is a pair $(Q,\pi)$ where $Q$ is a polyhedron in some space $\R^\ell$, 
and $\pi:\R^\ell\rightarrow \R^d$ is an affine map such that $C\subseteq \pi(Q)$ and $\rdist(C, \pi(Q)) \leq \epsilon$. 

Analogously, an $\eps$-MILEF is a triple $(Q,\sigma,\pi)$ where $Q$ is a polyhedron in $\R^\ell$, 
and $\sigma:\R^\ell\rightarrow \R^k$ and $\pi:\R^\ell\rightarrow \R^d$ are affine maps such that 
$\bar{C}\coloneqq \conv(\pi(Q\cap \sigma^{-1}(\mathbb{Z}^k)))$ satisfies $C\subseteq \bar{C}$ and $\rdist(C,\bar{C})\leq \eps$.
\end{definition}
\noindent Note that classical LEFs and MILEFs are $0$-LEFs and $0$-MILEFs, respectively. 
We also remark that approximate LEFs and MILEFs are well-defined even for non-polyhedral convex sets $C$.

We are now ready to state our main reduction result, which shows that the existence of an approximate MILEF for a convex set $C$ implies the existence of an approximate LEF for $C$. 
Thus, this allows for lifting non-existence results for approximate LEFs to the mixed-integer setting.
\begin{theorem}
    \label{thm:fromMILEFToLEF}
    Let $ C \subseteq \R^d $ be a convex set admitting a $\eps$-MILEF of complexity $(m,k)$, where $ \eps \in [0,1] $.
    Then, for every $ \delta \in (0,1] $, $ C $ admits a $(\eps + \delta)$-LEF of size $ m \cdot \left( 1 +
    \sfrac{k}{\delta} \right)^{O(k)} $.
\end{theorem}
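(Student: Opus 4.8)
\medskip
\noindent\textit{Proof plan.}
The plan is to turn the given $\eps$-MILEF $(Q,\sigma,\pi)$ into an $(\eps+\delta)$-LEF by showing that its mixed-integer hull $\bar C:=\conv(\pi(Q\cap\sigma^{-1}(\Z^k)))$ coincides, up to relative error $\delta$, with the convex hull of only $N:=(1+\sfrac{k}{\delta})^{O(k)}$ affine slices of $Q$; since each such slice is an affine image of $Q$ intersected with $O(k)$ further inequalities, the standard extended formulation for a union of $N$ polyhedra (Balas) then yields an LEF of size $N\cdot(m+O(k))=m\cdot(1+\sfrac{k}{\delta})^{O(k)}$. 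For the preliminary reductions I would use the affine invariance of $\rdist$ to assume that $\sigma$ is the projection onto the last $k$ coordinates; writing $\R^\ell\ni x=(x',z)$ with $z\in\R^k$ and $Q_z:=Q\cap(\R^{\ell-k}\times\{z\})$, we get $\bar C=\conv(\bigcup_{z\in\Z^k}\pi(Q_z))$. A routine argument reduces to the case that $C$, hence $\bar C$ (because $\rdist(C,\bar C)\le1$), is bounded — e.g.\ by first splitting off the recession cone of $\bar C$, which is polyhedral and needs only $O(\ell)$ inequalities inherited from $Q$. (For a polytope $C$, Carathéodory applied to its vertices already gives $C\subseteq\conv(\bigcup_{z\in S_0}\pi(Q_z))$ for some finite, but possibly enormous, $S_0\subseteq\Z^k$; the content of the proof is to replace such an $S_0$ by one of size $(1+\sfrac{k}{\delta})^{O(k)}$.)

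The heart of the argument is a decomposition lemma. Here I would write, for each objective $c\in\R^d$, $g_c(z):=\sup\{c\t\pi(x',z):(x',z)\in Q\}$ and $h_c(z):=\inf\{c\t\pi(x',z):(x',z)\in Q\}$, which are respectively concave and convex, are piecewise linear with combinatorial structure bounded in terms of $m$ (being value functions of an LP over the $m$-facet set $Q$), and whose suprema/infima over the lattice points of $Q$'s $z$-projection equal the endpoints of the shadow $\pi_c(\bar C):=\{c\t x:x\in\bar C\}\subseteq\R$. Together with the hypothesis $\rdist(C,\bar C)\le\eps$ — which lets one trade per-direction errors measured against $\diam\pi_c(\bar C)$ for errors against $\diam\pi_c(C)$ at the cost of a factor $1+2\eps\le3$ — I would use these to show that the lattice points ``that matter'' can be reduced to $O((\sfrac{k}{\delta})^k)$ of them (the natural routes being to split $\Z^k$ into residue classes modulo an integer $\asymp\sfrac{k}{\delta}$ and argue on each coarser sublattice, or to eliminate one integer variable at a time, each step costing a factor $(1+\sfrac{k}{\delta})^{O(1)}$ in size and an additive $\sfrac{\delta}{k}$ in relative distance). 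These slices $\{Q_z:z\in S\}$, possibly together with a bounded number of ``thick'' pieces $\pi(Q\cap\sigma^{-1}(B))$ over boxes $B$ absorbing the remaining lattice points, then produce $\hat C:=\conv\bigl(\bigcup_{z\in S}\pi(Q_z)\cup\bigcup_B\pi(Q\cap\sigma^{-1}(B))\bigr)$. By construction every lattice slice lies in some piece, so $C\subseteq\bar C\subseteq\hat C$; and each piece overshoots $\bar C$ in every direction by at most $O(\delta)\cdot\diam\pi_c(\bar C)$, so $\rdist(C,\hat C)\le\rdist(C,\bar C)+O(\delta)\le\eps+O(\delta)$.

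Granting the lemma, the assembly is routine: $\hat C$ is the convex hull of $(1+\sfrac{k}{\delta})^{O(k)}$ polyhedra, each an affine image of a polyhedron with at most $m+O(k)$ facets, so Balas' extended formulation for a union of polyhedra gives a polyhedron $Q'$ and an affine map $\pi'$ with $\pi'(Q')=\hat C$ and at most $m\cdot(1+\sfrac{k}{\delta})^{O(k)}$ facets; with $C\subseteq\hat C$ and $\rdist(C,\hat C)\le\eps+O(\delta)$, the pair $(Q',\pi')$ is the desired LEF (rescaling $\delta$ by a constant at the start absorbs the $O(\cdot)$'s).

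The step I expect to be the main obstacle is the decomposition lemma, specifically keeping the loss in relative distance genuinely $O(\delta)$. The difficulty is that a ``thick'' piece $\pi(Q\cap\sigma^{-1}(B))$ — seemingly unavoidable whenever one piece must cover infinitely many lattice slices at once — also contains images of non-integer slices $Q_z$, on which the concave function $g_c$ can exceed its value at all nearby lattice points, and a crude bound on this overshoot gives only a constant multiple of $\diam\pi_c(C)$. Pushing it down to $O(\delta)\cdot\diam\pi_c(C)$ is exactly what forces the grid resolution $\sim\sfrac{\delta}{k}$, the reduction to a bounded lattice region, and a careful joint use of the concavity and piecewise-linearity of the LP value functions and of the hypothesis $\rdist(C,\bar C)\le\eps$.
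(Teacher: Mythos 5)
Your high-level skeleton matches the paper's: cover the fibers $Q\cap\sigma^{-1}(\Z^k)$ by a small family of pieces, show the convex hull of the pieces is within relative distance $O(\delta)$ of the mixed-integer hull, and then apply Balas' disjunctive-programming bound, using Lemma~\ref{lem:rd2} to combine the per-piece errors with the hypothesis $\rdist(C,\bar C)\le\eps$. However, the central quantitative step --- that $(1+\sfrac{k}{\delta})^{O(k)}$ pieces suffice with only $O(\delta)$ loss --- is exactly the part you leave unproved, and the mechanisms you gesture at (axis-parallel boxes of resolution $\sim\sfrac{\delta}{k}$, residue classes modulo an integer $\asymp\sfrac{k}{\delta}$, a ``reduction to a bounded lattice region'', concavity and piecewise-linearity of LP value functions) do not deliver it. The paper's key ingredient, which is entirely absent from your plan, is the Flatness Theorem: if $\rdist(D_\sigma,D)>\delta$, then a scaled copy of $\sigma(D_\sigma)$ is lattice-free (Lemma~\ref{lemLatticeWidth}), hence $\sigma(D_\sigma)$ has lattice width at most $\frac{1+\delta}{\delta}\flt(k)$ in \emph{some} integer direction $v$ --- in general not a coordinate direction and with no a priori bound on where in $\Z^k$ the relevant fibers sit. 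This yields the dichotomy ``either $D$ itself is already a $\delta$-approximation of $D_\sigma$, or all fibers are covered by at most $1+\frac{1+\delta}{\delta}\flt(k)$ parallel lattice hyperplanes $\{v\t\sigma(\cdot)=i\}$''; a unimodular change of variables (Lemma~\ref{lemElimination}) then turns each such slice into an instance with $k-1$ integer constraints, and recursion gives Proposition~\ref{propDivide}. Without this dichotomy there is no bound on how many coordinate-aligned hyperplanes or boxes are needed: the relevant lattice points need not lie in any bounded coordinate box (boundedness of $C$, or of $\bar C$, says nothing about boundedness in $z$-space, since $\pi$ can collapse arbitrarily large extents), and a ``thick'' piece $\pi(Q\cap\sigma^{-1}(B))$ can overshoot the hull of the fibers it contains by a constant fraction of the relevant diameter no matter how fine the grid, because LP value functions can be arbitrarily steep; concavity alone does not rescue this.

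So the gap is genuine: the decomposition lemma you identify as ``the main obstacle'' is the theorem's actual content, and your sketch neither proves it nor points to the tool (flatness of lattice-free convex bodies, exploited via a data-dependent integer direction and a unimodular coordinate change) that makes it true. Two smaller remarks: in the paper each piece is $Q\cap H$ with $H$ an affine subspace, so it has at most $m$ facets and no boundedness reduction is needed at all (your recession-cone preprocessing is both unnecessary and delicate, since mixed-integer hulls need not be closed or have polyhedral recession cones for irrational data); and when applying Balas one should pass to $\cl(Q_\calH)$, noting that $\rdist$ is unchanged under taking closures, as the paper does.
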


To derive from the above theorem the MILEF-hardness results stated in Theorem~\ref{thm:keyConsequences1}, we proceed as
follows.
First, we observe that existing LEF approximation hardness results imply that for some
$\delta=\Omega(\sfrac{1}{\poly(n)})$ and a constant $c>0$, there is no $\delta$-LEF of size at most $2^{c\cdot n}$
for the polyhedron $P$ we consider.
The result then follows by choosing $\eps=0$ in Theorem~\ref{thm:fromMILEFToLEF}, and observing that if $k < \sfrac{\eta
\cdot n}{\log n}$, for an appropriately chosen constant $\eta >0$, Theorem~\ref{thm:fromMILEFToLEF} implies the
existence of a $\delta$-approximate LEF of size strictly less than $2^{c\cdot n}$, thus leading to a contradiction.
In particular, this proof approach implies that for some $\bar{\eps}=\Omega(\sfrac{1}{\poly(n)})$, there does not even
exist a $\bar{\eps}$-MILEF of complexity $(m,k)$ with $m<2^{c\cdot n}$ and $k\leq \sfrac{\eta\cdot n}{\log n}$, where
$\eta>0$ is some constant.
Due to the relation between $\rdist$ and LP gap that we establish, this result can be rephrased in terms of non-existence of approximate MILEFs with respect to LP gap.
We expand on these connections and the precise statements resulting out of them in later sections.
 \subsection{Organization of the paper}
We start by summarizing some key properties of the relative distance which we exploit later, 
including its relation to LP distance.
This is done in Section~\ref{sec:rdistBasics}.
In Section~\ref{sec:outline}, we provide a thorough overview of our techniques that lead to
Theorem~\ref{thm:fromMILEFToLEF}, our main result to reduce approximation hardness results about LEFs to MILEFs.
A key ingredient of this proof is a new decomposition technique that, for any convex set $D$, allows for approximating
any mixed-integer description of $D$ by the intersection of $D$ with the union of a small number of affine subspaces.
Since this result may be of independent interest, we present it separately in Section~\ref{sec:covering}.
Section~\ref{secApplications} expands on the implications of our techniques to different polytopes, and provides in
particular a formal proof of (strengthened versions of) Theorems~\ref{thm:keyConsequences1}
and~\ref{thm:keyConsequences2}.
In Section~\ref{secUpperBounds}, we provide some MILEFs for the polytopes mentioned in
Theorems~\ref{thm:keyConsequences1} and~\ref{thm:keyConsequences2} to discuss the quality of the bounds derived in these
statements.
\ifbool{SODAfinal}{

\emph{A note on this version.} Due to space constraints, some proofs and constructions have been deferred to a future extended version of this paper. 
In particular, we have deferred the proofs of Lemmas \ref{lem:rd2}, \ref{lemma:rdistMaxGap}, \ref{lemma:rdistMinGap} and~\ref{lemHausdorffBoundedByRdist} on the properties of relative distance, 
and the explicit construction of a polynomial-sized MILEF for the TSP polytope (on the complete undirected graph on $n$ nodes) with only $O(n\log n)$ integer variables, 
whose existence provides an upper bound highlighted in Figure~\ref{figTableResults}.
}{
We close the main part of our paper with a general discussion on the bounds achieved with our techniques, in
Section~\ref{secTowards}.
Appendix~\ref{appendixRdist} contains deferred proofs of properties of the relative distance.
}
 \section{Relative distance: basic properties}
\label{sec:rdistBasics}
Due to the extensive use of the relative distance throughout this paper, we start by stating the key properties that are
used in the main part of the paper.
We remark that our notion of relative distance is closely related to the \emph{difference body metric} for convex bodies,
\footnote{A convex body is a convex set in $ \R^d $ that is bounded, closed and full-dimensional.}
introduced by Shephard~\cite{shephard1960inequalities}.
Namely, for two convex bodies $ A, B \subseteq \R^d $ with $ A \subseteq B $, the difference body metric $ \nu $ is
defined via $ \nu(A, B) \coloneqq \log \left(1 + 2 \rdist(A, B) \right) $.
Thus, most properties stated below are direct consequences of results in~\cite{shephard1960inequalities}. 
\ifbool{SODAfinal}{
A self-contained proof of Lemma~\ref{lem:rd2} (which is in order since we deal with arbitrary convex sets) is deferred to the extended version of this paper. 
}{
For the sake of completeness (and since we deal with arbitrary convex sets), we provide a self-contained proof of
Lemma~\ref{lem:rd2} in Appendix~\ref{appendixRdist}.
}

\begin{lemma}\label{lem:rd2}
    Consider three convex sets $A\subseteq B \subseteq C$ in $\R^d$.
    \begin{enumerate}[label=(\roman*),itemsep=0.0em,topsep=0.2em]
        \item\label{item:rdistSetDef}
        $\rdist(A, B) = \inf \left \{ \lambda \ge 0 : B\subseteq (1+\lambda)A -\lambda A \right \}$.
        \item\label{item:rdistProjSmaller}
        For any affine map $\pi:\R^d\rightarrow \R^m$, we have $\rd(\pi(A),\pi(B))\leq \rd(A,B)$, with equality in the
        case that $m=d$ and $\pi$ is invertible.
        \item\label{item:rdistTriangleIneq}
        \ifbool{SODAfinal}{
        \[\rdist(A,C)\leq \begin{array}{l}
                          \rdist(A,B)+\rdist(B,C)\\
                           + 2 \rdist(A,B) \rdist(B,C)
                         \end{array}.\]
 
        }{
        $\rdist(A,C)\leq \rdist(A,B)+\rdist(B,C) + 2 \rdist(A,B) \rdist(B,C)$.
        }        
        \item\label{item:lemrd2HullUnion}
        For convex sets $A_1,\cdots, A_t, B_1, \cdots, B_t \subseteq \R^d$ with $A_i\subseteq B_i$ for all $i \in [t]$,
        \ifbool{SODAfinal}{
        \begin{align*}
         \rd &\left(\conv(\cup_{i\in [t]} A_i), \conv(\cup_{i\in[t]} B_i) \right) \\
         &\leq \max_{i \in [t]} \ \rd(A_i, B_i)\enspace.
        \end{align*}
        }{
        \begin{equation*}
            \rd\left(\conv(\cup_{i\in [t]} A_i), \conv(\cup_{i\in[t]} B_i) \right) \leq \max_{i \in [t]} \ \rd(A_i, B_i)\enspace.
    \end{equation*}
	}
    \end{enumerate}
\end{lemma}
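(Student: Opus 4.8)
The plan is to prove item~\ref{item:rdistSetDef} first and then deduce items~\ref{item:rdistProjSmaller}--\ref{item:lemrd2HullUnion} from it. Throughout I would work with the rewriting $(1+\lambda)A-\lambda A = A+\lambda D$, where $D\coloneqq A-A$ is the difference body, and use the elementary facts that $D$ is convex and symmetric with $0\in D$, that $sD+tD=(s+t)D$ for $s,t\ge 0$, that $D-D=2D$, and that $(1+\lambda)X-\lambda X = X+\lambda(X-X)$ for every convex set $X$ (only the inclusion ``$\supseteq$'' uses convexity). I would also record that $\sup_{D}c = \sup_A c-\inf_A c = \diam(c(A))$ for every linear form $c\colon\R^d\to\R$.

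For item~\ref{item:rdistSetDef}, set $s\coloneqq\inf\{\lambda\ge 0: B\subseteq(1+\lambda)A-\lambda A\}$. For $\rdist(A,B)\le s$: if $B\subseteq A+\lambda D$, then for every linear $c$ one gets $\sup_B c\le\sup_A c+\lambda\diam(c(A))$ and, by symmetry of $D$, $\inf_B c\ge\inf_A c-\lambda\diam(c(A))$; since $c(A)\subseteq c(B)$ are intervals this gives $d_H(c(A),c(B))\le\lambda\diam(c(A))$, and when $\diam(c(A))=0$ the same computation forces $c$ constant on $B$, so the defining ratio is $0/0=0$. Hence the supremum defining $\rdist(A,B)$ is $\le\lambda$, so $\rdist(A,B)\le s$. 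For $s\le\rdist(A,B)$: assuming $\rdist(A,B)<\infty$ (otherwise there is nothing to prove), fix $\lambda>\rdist(A,B)$ and pick $\lambda_1\in(\rdist(A,B),\lambda)$. For any $b\in B$ and any linear $c$, the bound $d_H(c(A),c(B))\le\lambda_1\diam(c(A))$ (and, in the degenerate direction, finiteness of $\rdist$) gives $c^\intercal b\le\sup_A c+\lambda_1\diam(c(A))$, which is precisely the support function of $K_{\lambda_1}\coloneqq A+\lambda_1 D$ evaluated at $c$; since $K_{\lambda_1}$ is convex, this means $b\in\cl(K_{\lambda_1})$.

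The crux is then the convex-geometric claim that $\cl(K_{\lambda_1})\subseteq K_{\lambda}=(1+\lambda)A-\lambda A$ for $\lambda_1<\lambda$; combined with the previous step this gives $B\subseteq(1+\lambda)A-\lambda A$, hence $s\le\lambda$, and letting $\lambda\downarrow\rdist(A,B)$ yields $s\le\rdist(A,B)$, finishing item~\ref{item:rdistSetDef}. Because $A$ may be neither closed, nor bounded, nor full-dimensional, I cannot simply invoke that $K_{\lambda_1}$ is closed; this is the step that needs care and is exactly where one goes beyond Shephard's convex-body setting. I would translate so that a fixed point of $\ri(A)$ is the origin; then $\af(A)$ equals the linear span $U$ of the (translation-invariant) difference body $D$, and $0\in\ri(D)$ always, so $U\cap\rho\ball\subseteq D$ for some $\rho>0$. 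Writing $K_\lambda = K_{\lambda_1}+(\lambda-\lambda_1)D$ and using $K_{\lambda_1}\subseteq U$, one obtains $K_\lambda\supseteq K_{\lambda_1}+\bigl(U\cap(\lambda-\lambda_1)\rho\ball\bigr)$, whose right-hand side contains every point within $U$-distance $(\lambda-\lambda_1)\rho$ of $K_{\lambda_1}$, hence contains $\cl(K_{\lambda_1})$.

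Items~\ref{item:rdistProjSmaller}--\ref{item:lemrd2HullUnion} then follow formally from item~\ref{item:rdistSetDef}. For \ref{item:rdistProjSmaller}: for affine $\pi$ and $\lambda>\rdist(A,B)$ we have $B\subseteq(1+\lambda)A-\lambda A$, and since affine maps commute with $X\mapsto(1+\lambda)X-\lambda X$, applying $\pi$ gives $\pi(B)\subseteq(1+\lambda)\pi(A)-\lambda\pi(A)$, so $\rdist(\pi(A),\pi(B))\le\lambda$; letting $\lambda\downarrow\rdist(A,B)$ gives the inequality, and applying it also to $\pi^{-1}$ yields equality when $\pi$ is bijective. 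For \ref{item:rdistTriangleIneq}: with $\lambda_1>\rdist(A,B)$ and $\lambda_2>\rdist(B,C)$ we have $C\subseteq(1+\lambda_2)B-\lambda_2 B\subseteq(1+\lambda_2)B'-\lambda_2 B'$ for $B'\coloneqq(1+\lambda_1)A-\lambda_1 A=A+\lambda_1 D$, and since $B'-B'=(1+2\lambda_1)D$ this gives $C\subseteq A+(\lambda_1+\lambda_2+2\lambda_1\lambda_2)D$, so by \ref{item:rdistSetDef} $\rdist(A,C)\le\lambda_1+\lambda_2+2\lambda_1\lambda_2$; let $\lambda_1\downarrow\rdist(A,B)$, $\lambda_2\downarrow\rdist(B,C)$. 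For \ref{item:lemrd2HullUnion}: with $\lambda>\max_i\rdist(A_i,B_i)$, every point of $\conv(\cup_i B_i)$ can be written $\sum_i\mu_i b_i$ with $\mu_i\ge 0$, $\sum_i\mu_i=1$, $b_i\in B_i$ (discarding indices with $B_i=\emptyset$), and writing $b_i=a_i+\lambda(a_i'-a_i'')$ with $a_i,a_i',a_i''\in A_i$ and collecting terms shows this lies in $A+\lambda(A-A)$ for $A\coloneqq\conv(\cup_i A_i)$, so \ref{item:rdistSetDef} gives $\rdist(A,\conv(\cup_i B_i))\le\lambda$. Cases involving empty sets or infinite relative distances are dispatched directly from the definitions.
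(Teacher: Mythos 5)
Your proof is correct, but for the two substantive items it takes a genuinely different route from the paper. For the hard direction of~\ref{item:rdistSetDef}, the paper argues by contraposition via separation: given $b\in B\setminus\bigl((1+\lambda)A-\lambda A\bigr)$, it \emph{properly} separates $b$ from this (possibly non-closed) convex set using Rockafellar's proper separation theorem and reads off, in that one direction, a ratio $\geq\lambda$ (or $\rdist=\infty$ in the degenerate case); no closure ever enters, and no slack in $\lambda$ is needed. You instead go forward: from $\rdist(A,B)<\lambda_1$ you certify $b\in\cl\bigl(A+\lambda_1(A-A)\bigr)$ via support functions, and then repair the closure by absorbing it into $A+\lambda(A-A)$ for $\lambda>\lambda_1$, using that the difference body has $0$ in its relative interior within $\af(A)$ — this is exactly the right fix for the non-closed, non-full-dimensional, unbounded setting, at the cost of an $\eps$-of-room limiting argument that the paper's separation route avoids. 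For~\ref{item:rdistTriangleIneq} the paper projects to one dimension and does interval-endpoint arithmetic with the closures of $\pi(A)\subseteq\pi(B)\subseteq\pi(C)$, whereas you compose the set-theoretic characterization directly, computing $B'-B'=(1+2\lambda_1)(A-A)$ for $B'=(1+\lambda_1)A-\lambda_1A$ and obtaining $C\subseteq A+(\lambda_1+\lambda_2+2\lambda_1\lambda_2)(A-A)$; your derivation is more algebraic and arguably cleaner, and it makes transparent where the cross term $2\rdist(A,B)\rdist(B,C)$ comes from, while the paper's scalar argument needs separate bookkeeping for unbounded projections. Items~\ref{item:rdistProjSmaller} and~\ref{item:lemrd2HullUnion} are proved essentially as in the paper (commutation of affine maps with $X\mapsto(1+\lambda)X-\lambda X$, and recombining convex combinations, respectively), and your handling of the empty and infinite cases, though terse, is consistent with the paper's conventions.
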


The next two lemmas highlight the relation between the relative distance and the LP gap notions, $\LPgapMax$ and
$\LPgapMin$.
This allows us to first translate LEF approximation hardness results, which are often stated in terms of LP gap, into a
gap in terms of relative distance.
The same lemmas allow for translating our hardness results, which are stated with respect to the relative distance, back
to the notion of LP gap.

We call a convex set $C\subseteq \mathbb{R}^d_{\geq 0}$ \emph{down-closed} if for every $x\in C$ and $y\in \mathbb{R}^d_{\geq 0}$ such that $y \leq x$, we have $y\in C$. 
Moreover, for a $0/1$-polytope $P\subseteq [0,1]^d$---i.e., a polytope all of whose vertices are within $\{0,1\}^d$---we 
say that $P$ is \emph{up-closed} if $x\in P$ and $y\in [0,1]^d$ with $x\leq y$ implies $y\in P$.

\begin{lemma}
    \label{lemma:rdistMaxGap}
 For two non-empty convex sets $A\subseteq B\subseteq \R^d_{\geq 0}$ where $A$ is down-closed, we have $\rd(A,B)=\LPgapMax(A,B)$.  
\end{lemma}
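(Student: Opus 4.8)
The plan is to unwind both definitions and exhibit, for each direction of the inequality, a linear objective $c$ certifying the bound. Recall $\LPgapMax(A,B)$ is the infimum of $\eps\ge 0$ such that $(1+\eps)\sup_{a\in A}c\t a \ge \sup_{b\in B}c\t b$ for all $c\in\R^d_{\ge 0}$, while $\rd(A,B)=\sup_\pi d_H(\pi(A),\pi(B))/\diam(\pi(A))$ over linear maps $\pi:\R^d\to\R$.

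First I would show $\LPgapMax(A,B)\le\rd(A,B)$. Fix $c\in\R^d_{\ge 0}$ and let $\pi(x)=c\t x$; this is a valid map in the supremum defining $\rd$. Since $A$ is down-closed and nonempty, it contains the origin, so $\sup_{a\in A}c\t a\ge 0$ and $\pi(A)$ is an interval of the form $[0,M_A]$ with $M_A=\sup_{a\in A}c\t a$ (using $c\ge0$); hence $\diam(\pi(A))=M_A=\sup_{a\in A}c\t a$. Likewise $\pi(B)=[0,M_B]$ with $M_B=\sup_{b\in B}c\t b\ge M_A$, so $d_H(\pi(A),\pi(B))=M_B-M_A$. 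Therefore $\rd(A,B)\ge (M_B-M_A)/M_A$, i.e. $(1+\rd(A,B))\,\sup_{a\in A}c\t a\ge\sup_{b\in B}c\t b$. As this holds for every $c\in\R^d_{\ge0}$, we get $\LPgapMax(A,B)\le\rd(A,B)$. One has to handle the degenerate case $M_A=0$ (then $M_B=0$ too, as $B\subseteq\R^d_{\ge0}$ and down-closedness of $A$ forces $c$ to vanish on $B$'s support — or more simply the fraction $0/0$ is read as $0$ and the inequality is trivial), and the case $M_A=\infty$.

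For the reverse inequality $\rd(A,B)\le\LPgapMax(A,B)$, I would take an arbitrary linear $\pi:\R^d\to\R$, write $\pi(x)=v\t x$, and split $v=v^+-v^-$ into its nonnegative and nonpositive parts. The subtlety is that $\pi$ need not be a nonnegative functional, whereas $\LPgapMax$ only quantifies over $c\in\R^d_{\ge0}$; this is the step I expect to be the main obstacle. The fix uses down-closedness of $A$ crucially: for a down-closed set, $\inf_{a\in A}v\t a$ is attained (in the limit) by making the $v^-$-coordinates large and the $v^+$-coordinates zero, so in fact $\sup_{a\in A}v\t a = \sup_{a\in A}(v^+)\t a$ and $\inf_{a\in A}v\t a$ is controlled by $v^-$; a parallel statement holds for $B$. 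Splitting the Hausdorff distance $d_H(\pi(A),\pi(B))=\sup_{b\in B}\inf_{a\in A}|v\t b-v\t a|$ and $\diam(\pi(A))$ accordingly, one reduces to comparing $\sup_B (v^+)\t{} \cdot{}$ with $\sup_A(v^+)\t{}\cdot{}$ and the analogous quantities for $v^-$, each of which is governed by $\LPgapMax(A,B)$ applied to the nonnegative vectors $v^+$ and $v^-$ separately (using $A\subseteq B$ and that the ``lower'' endpoints of $\pi(A),\pi(B)$ coincide or differ by a $\LPgapMax$-controlled amount). Combining the two endpoint estimates yields $d_H(\pi(A),\pi(B))\le\LPgapMax(A,B)\cdot\diam(\pi(A))$, and taking the supremum over $\pi$ gives $\rd(A,B)\le\LPgapMax(A,B)$.

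Putting the two inequalities together proves $\rd(A,B)=\LPgapMax(A,B)$. An alternative, slicker route for the hard direction: by Lemma~\ref{lem:rd2}\ref{item:rdistSetDef}, $\rd(A,B)\le\lambda$ iff $B\subseteq(1+\lambda)A-\lambda A$; since $A$ is down-closed and contains $0$, one checks $(1+\lambda)A-\lambda A\supseteq (1+\lambda)A$ intersected appropriately with $\R^d_{\ge0}$, reducing containment of $B$ to the scalar inequalities $\sup_B c\t{}\cdot{}\le(1+\lambda)\sup_A c\t{}\cdot{}$ over $c\in\R^d_{\ge0}$, i.e. exactly the condition defining $\LPgapMax(A,B)\le\lambda$. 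Either way, the care needed is in the non-full-dimensional / unbounded cases and in the sign-splitting of the test functional, which the down-closedness hypothesis is precisely designed to neutralize.
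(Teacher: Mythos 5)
Your proposal is correct, but its primary route is genuinely different from the paper's proof. The paper handles both directions through the set characterization of Lemma~\ref{lem:rd2}~\ref{item:rdistSetDef}: for down-closed $A$ and $B \subseteq \R^d_{\ge 0}$ one has $B \subseteq (1+\lambda)A - \lambda A$ if and only if $B \subseteq (1+\lambda)A$ (one implication because $\zerovec \in A$, the other because $(1+\lambda)A$ is down-closed), and the latter containment is equivalent to the condition defining $\LPgapMax(A,B) \le \lambda$; this is exactly the ``alternative, slicker route'' you mention at the end. Your main argument instead works directly with the definition of $\rdist$ and the sign decomposition $v = v^+ - v^-$ of the test functional, and it does go through: down-closedness of $A$ gives $\sup_{a\in A} v\t a = \sup_{a \in A}(v^+)\t a$ and $\inf_{a\in A} v\t a = -\sup_{a\in A}(v^-)\t a$, while for $B$ one only needs the one-sided bounds $\sup_{b\in B} v\t b \le \sup_{b\in B}(v^+)\t b$ and $\inf_{b\in B} v\t b \ge -\sup_{b\in B}(v^-)\t b$ (valid simply because $B \subseteq \R^d_{\ge 0}$), which, combined with the gap condition applied to $c = v^+$ and $c = v^-$, bound each endpoint gap of the nested intervals $\pi(A) \subseteq \pi(B)$ by $\LPgapMax(A,B) \cdot \diam(\pi(A))$. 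Two loose spots in your sketch, neither fatal: the phrase ``a parallel statement holds for $B$'' is not literally true, since $B$ is not assumed down-closed --- only the one-sided inequalities above hold, but they are all you need; and in the degenerate case $\sup_{a\in A}c\t a = 0 < \sup_{b\in B}c\t b$ it is false that down-closedness of $A$ forces $c$ to vanish on $B$ (take $A = \{\zerovec\}$), but there both $\rdist(A,B)$ and $\LPgapMax(A,B)$ equal $+\infty$, so the claimed equality and your easy direction are unaffected. In terms of trade-offs, the paper's route is shorter and pushes the geometric content into Lemma~\ref{lem:rd2}~\ref{item:rdistSetDef}, whereas your direct computation is more elementary and self-contained at the cost of the endpoint bookkeeping you outline.
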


\begin{lemma}
    \label{lemma:rdistMinGap}
    Let $ A \subseteq [0,1]^d $ be an up-closed $ 0/1 $-polytope and $ B \subseteq [0,1]^d $ be a convex set with $A\subseteq B$ and $ d' := \dim(A) = \dim(B) $.
    If $ d' = 1 $, then $ A = B $.
    Otherwise, we have
    \begin{enumerate}[label=(\roman*)]
        \item \label{itemRdistMinGapLowerBoundRdist} $ \rd(A, B) \ge \frac{1}{d' - 1} \cdot \frac{\LPgapMin(A, B)}{1 + \LPgapMin(A, B)} $, and
        \item \label{itemRdistMinGapLowerBoundMinGap} $ \LPgapMin(A, B) \ge \frac{\rd(A, B)}{d' - 1 - \rd(A, B)} $.
    \end{enumerate}
\end{lemma}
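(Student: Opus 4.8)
The plan is to relate both inequalities to the set-theoretic characterization of relative distance from Lemma~\ref{lem:rd2}\ref{item:rdistSetDef}, namely $\rd(A,B) \le \lambda$ iff $B \subseteq (1+\lambda)A - \lambda A$, and to a ``one-dimensional'' reduction: since $\rd$ is the supremum over linear maps $\pi\colon\R^d\to\R$ of $d_H(\pi(A),\pi(B))/\diam(\pi(A))$, it suffices to understand what happens after projecting onto a single line, where everything becomes an interval computation. Likewise, $\LPgapMin(A,B)$ is by definition governed by a worst-case nonnegative objective $c$, which is again a one-dimensional projection. So I would first fix the ``bad'' direction for one of the two quantities and then bound the other.

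For part~\ref{itemRdistMinGapLowerBoundMinGap} ($\LPgapMin$ bounds $\rd$ from below), I would start from a direction $c \in \R^d_{\ge 0}$ witnessing (up to $\epsilon$) the minimization gap, so that $\inf_{a\in A} c\t a > (1+\LPgapMin)\inf_{b\in B} c\t b$ is \emph{false}; equivalently $\inf_A c\t a \le (1+g)\inf_B c\t b$ for $g=\LPgapMin(A,B)$. Since $A$ is a $0/1$-polytope that is up-closed, $\inf_{a\in A} c\t a = \sum_i c_i$ minus the maximum weight of coordinates we may set to $0$, and the relevant point of $A$ is a $0/1$-vertex; the geometry of $[0,1]^d$ lets us control $\diam$ of the projection by $\|c\|_1$ and relate the Hausdorff gap on that line to $g\cdot\inf_B c\t b$. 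The up-closedness is what guarantees that the minimizing face of $A$ is a genuine vertex and that $B$ sticks out ``below'' $A$ in the $c$-direction by a controlled amount. Plugging these interval estimates into the definition of $\rd$ and using that the projected diameter is at most $(d'-1)$ times a suitable edge length (this is where the $d'-1$ factor and the hypothesis $\dim A = \dim B = d'$ enter, via the fact that a segment of $A$ along $-c$ spans at most $d'-1$ unit-box steps beyond the minimizing vertex) yields the claimed bound; rearranging $\frac{g}{1+g}$-type expressions gives exactly $\LPgapMin \ge \frac{\rd}{d'-1-\rd}$.

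For part~\ref{itemRdistMinGapLowerBoundRdist} ($\rd$ bounds $\LPgapMin$ from below) I would run the argument in the other direction: fix a linear map $\pi$ nearly attaining $\rd(A,B)$, write $\pi(x)=c\t x$ for some $c\in\R^d$, and split $c$ into its positive and negative parts. Because $A$ is up-closed in $[0,1]^d$ and $\dim A = d'$, I can replace $c$ by a nonnegative objective (reflecting coordinates, i.e.\ substituting $x_i \mapsto 1-x_i$ on the coordinates where $c_i<0$, which is an affine bijection of the cube preserving the $0/1$-polytope structure and, by Lemma~\ref{lem:rd2}\ref{item:rdistProjSmaller}, preserving $\rd$) so that the witnessing direction may be taken in $\R^d_{\ge 0}$; the price is again bounded by the ambient dimension of the affine hull, giving the $\frac{1}{d'-1}$ factor. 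Then the Hausdorff distance on the projected line is exactly the ``overhang'' of $B$ beyond $A$ in that direction, which translates into $\inf_B c\t b$ being smaller than $\inf_A c\t a$ by that amount; normalizing by $\diam(\pi(A)) \le$ (something like $\inf_A c\t a$ up to the reflection bookkeeping) converts this into the inequality $\LPgapMin(A,B) \ge \frac{\rd(A,B)}{d'-1-\rd(A,B)}$... wait, into the claimed $\rd(A,B) \ge \frac{1}{d'-1}\cdot\frac{\LPgapMin}{1+\LPgapMin}$ after clearing denominators. The degenerate case $d'=1$ is immediate: $A$ and $B$ are then both segments in $[0,1]^d$ with the same endpoints forced by up-closedness and equal dimension, so $A=B$.

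The main obstacle I expect is the careful handling of the ``reflection to make $c$ nonnegative'' step together with the bookkeeping of the $d'-1$ factor: one must argue that projecting $A$ (an up-closed $0/1$-polytope of dimension $d'$) onto a worst-case line produces an interval whose length is comparable, up to the factor $d'-1$, to the gap quantity $\inf_B c\t b$ appearing in $\LPgapMin$, and that the relevant infimum over $A$ is attained at a vertex that differs from the ``$B$-overhang'' endpoint in at most $d'-1$ coordinates of the cube. This combinatorial/geometric estimate — essentially that along any direction the $0/1$-polytope's extent is controlled by its dimension rather than by $d$ — is the crux; once it is in place, both inequalities follow by elementary manipulation of the interval inequalities and the definitions of $\rd$ and $\LPgapMin$, invoking Lemma~\ref{lem:rd2}\ref{item:rdistSetDef} and~\ref{item:rdistProjSmaller} to reduce to the one-dimensional picture and to justify the affine reflections.
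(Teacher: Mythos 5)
Your high-level framing (reduce to one-dimensional projections and use the characterization of Lemma~\ref{lem:rd2}~\ref{item:rdistSetDef}) matches the paper, but the two quantitative steps that actually produce the factor $d'-1$ are missing, and the substitutes you propose do not work. For part~\ref{itemRdistMinGapLowerBoundRdist}, the paper first reduces to the full-dimensional case (projecting out the coordinates that up-closedness forces to equal $1$), takes a nonnegative objective $c$ with $\alpha \coloneqq \min_{a\in A}c\t a > (1+\eps)\inf_{b\in B}c\t b$, and then \emph{modifies} $c$, shrinking each entry $c_i$ as far as possible without changing the minimum over the vertices of $A$; only after this modification does one get $\alpha \ge \|c\|_1/d$, hence $\diam(\pi(A)) = \|c\|_1-\alpha \le (d-1)\alpha$, which converts the overhang $\alpha-\beta \ge \alpha\eps/(1+\eps)$ into $\rd(A,B) \ge \frac{1}{d-1}\cdot\frac{\eps}{1+\eps}$. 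The estimate you yourself call ``the crux'' --- that along the witnessing direction the projected extent of $A$ is controlled by its dimension --- is false without this modification: for $A=\{x\in[0,1]^2: x_1+x_2\ge 1\}$ and $c=(1,10)$ one has $\min_A c\t a = 1$ but $\diam(\pi(A)) = 10$, so the needed inequality $\diam(\pi(A)) \le (d'-1)\min_A c\t a$ fails badly. So this part of your argument is not merely unfinished; as stated it breaks, and the objective-modification step is exactly what repairs it.

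For part~\ref{itemRdistMinGapLowerBoundMinGap}, your plan amounts to bounding $\rd(A,B)$ above by $(d'-1)\frac{g}{1+g}$ directly from the inequalities $\inf_A c\t a \le (1+g)\inf_B c\t b$, but these hold only for $c\ge 0$, whereas the supremum defining $\rd$ ranges over \emph{all} linear maps; in directions with negative entries $B$ may overhang $A$ on the upper side, and $\LPgapMin$ says nothing about that. The reflection $x_i\mapsto 1-x_i$ you invoke to reduce to nonnegative $c$ destroys up-closedness and is incompatible with $\LPgapMin$ (which is tied to the nonnegative orthant), so it cannot bridge this gap. The paper instead argues from an $\rd$ witness: if $\rd(A,B)>\lambda$, then by Lemma~\ref{lem:rd2}~\ref{item:rdistSetDef} there is $\bar b\in B$ with $\frac{1}{1+\lambda}\bar b+\frac{\lambda}{1+\lambda}\onesvec\notin A$, hence (the point lies in the cube and $A$ is up-closed) not in $A+\R^d_{\ge 0}$, and separating from this up-closed polyhedron yields a \emph{nonnegative} direction $c$ with $c\t\bar b<(1+\lambda)\alpha-\lambda\|c\|_1$; combining this with $\alpha\le\frac{d-1}{d}\|c\|_1$, which comes from $\frac{d-1}{d}\onesvec\in A$ (full dimension plus up-closedness), gives $\LPgapMin(A,B)\ge\frac{\lambda}{d-1-\lambda}$. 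Neither the separation step producing a nonnegative direction nor either of the two dimension-dependent estimates ($\alpha\ge\|c\|_1/d$ after modification, and $\alpha\le\frac{d-1}{d}\|c\|_1$) appears in your sketch, so the proposal has a genuine gap at the heart of both inequalities.
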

\ifbool{SODAfinal}{
The proofs of Lemmas~\ref{lemma:rdistMaxGap} and~\ref{lemma:rdistMinGap} are deferred to the extended version of this paper.
}{
The proofs of Lemmas~\ref{lemma:rdistMaxGap} and~\ref{lemma:rdistMinGap} are postponed to Appendix~\ref{appendixRdist}.
}

 \section{Outline of our techniques}
\label{sec:outline}

In this section, we explain our approach for proving Theorem~\ref{thm:fromMILEFToLEF} and reduce it to a problem of approximating the mixed-integer hull of a convex set  
by the intersection of the set with few affine subspaces.

\ifbool{SODAfinal}{}{
\begin{figure}[h]
    \begin{center}
       \tdplotsetmaincoords{70}{30}
\begin{tikzpicture}[tdplot_main_coords,scale=0.4]
\small

    \node at (-13, -8, 3) {$ \R^k $};
    \draw (-13, 6, 8) -- (-13, 6, -3) -- (-13, -6, -3) -- (-13, -6, 8) -- cycle;

    \draw[opacity=0.2] (-13,  3,  5) -- (-13, 4,  5);
    \draw[opacity=0.2] (-13,  2,  4) -- (-13, 4,  4);
    \draw[opacity=0.2] (-13,  1,  3) -- (-13, 4,  3);
    \draw[opacity=0.2] (-13,  0,  2) -- (-13, 4,  2);
    \draw[opacity=0.2] (-13, -1,  1) -- (-13, 4,  1);
    \draw[opacity=0.2] (-13, -2,  0) -- (-13, 4,  0);
    \draw[opacity=0.2] (-13, -3, -1) -- (-13, 4, -1);

    \draw[very thick] (-13, 4, 6) -- (-13, 4, -2) -- (-13, -4, -2) -- cycle;

    \tdplotsetrotatedcoords{0}{90}{0}
    \tdplotdrawarc[tdplot_rotated_coords]{( 0,-3, -13)}{0.15}{0}{360}{}{};
    \tdplotdrawarc[fill,tdplot_rotated_coords]{( 0, 0, -13)}{0.15}{0}{360}{}{};
    \tdplotdrawarc[fill,tdplot_rotated_coords]{( 0, 3, -13)}{0.15}{0}{360}{}{};
    \tdplotdrawarc[tdplot_rotated_coords]{(-3,-3, -13)}{0.15}{0}{360}{}{};
    \tdplotdrawarc[tdplot_rotated_coords]{(-3, 0, -13)}{0.15}{0}{360}{}{};
    \tdplotdrawarc[fill,tdplot_rotated_coords]{(-3, 3, -13)}{0.15}{0}{360}{}{};
    \tdplotdrawarc[tdplot_rotated_coords]{(-6,-3, -13)}{0.15}{0}{360}{}{};
    \tdplotdrawarc[tdplot_rotated_coords]{(-6, 0, -13)}{0.15}{0}{360}{}{};
    \tdplotdrawarc[tdplot_rotated_coords]{(-6, 3, -13)}{0.15}{0}{360}{}{};

    \node at (-13, 2, 1) {$ \sigma(Q) $};

    \draw (-10, 6, -10) -- (-10, -6, -10) -- (10, -6, -10) -- (10, 6, -10) -- cycle;
    \node at (-5, -8, -10) {$ \R^d $};

    \draw[red,very thick,dashed] (-2, 0, -10) -- ( 2, 0, -10);
    \draw[red,very thick,dashed] (-5, 3, -10) -- ( 5, 3, -10);

    \fill[red,opacity=0.3] (-2, 0, -10) -- (-5, 3, -10) -- ( 5, 3, -10) -- ( 2, 0, -10);

    \draw[very thick] (-8, 4, -10) -- (0, -4, -10) -- ( 8, 4, -10) -- cycle;
    \node at (-5, -1, -10) {$ \pi(Q) $};
    \node at (5.5, -2.5, -10) {$ \color{red} \conv \left( \pi(Q \cap \sigma^{-1}(\Z^k)) \right) $};

    \node at (1, 4, 9) {$ \R^\ell $};

    \draw[red,very thick,dashed] (-2, 0, 0) -- ( 2, 0, 0);
    \draw[red,very thick,dashed] (-5, 3, 0) -- ( 5, 3, 0);
    \draw[red,very thick,dashed] (-2, 3, 3) -- ( 2, 3, 3);

    \fill[blue,opacity=0.2] ( 0, -4, -2) -- ( 8, 4, -2) -- ( 0, 4, 6);
    \fill[blue,opacity=0.1] (-8, 4, -2) -- ( 0, -4, -2) -- ( 0, 4, 6);

    \draw[opacity=0.1] (-1, 4,  5) -- ( 1, 4,  5);
    \draw[opacity=0.1] (-1, 4,  5) -- ( 1, 4,  5);
    \draw[opacity=0.1] (-2, 4,  4) -- ( 2, 4,  4);
    \draw[opacity=0.1] (-3, 4,  3) -- ( 3, 4,  3);
    \draw[opacity=0.1] (-4, 4,  2) -- ( 4, 4,  2);
    \draw[opacity=0.1] (-5, 4,  1) -- ( 5, 4,  1);
    \draw[opacity=0.1] (-6, 4,  0) -- ( 6, 4,  0);
    \draw[opacity=0.1] (-7, 4, -1) -- ( 7, 4, -1);
    \draw[opacity=0.2] (-1, 4,  5) -- ( 0,  3,  5) -- ( 1, 4,  5);
    \draw[opacity=0.2] (-1, 4,  5) -- ( 0,  3,  5) -- ( 1, 4,  5);
    \draw[opacity=0.2] (-2, 4,  4) -- ( 0,  2,  4) -- ( 2, 4,  4);
    \draw[opacity=0.2] (-3, 4,  3) -- ( 0,  1,  3) -- ( 3, 4,  3);
    \draw[opacity=0.2] (-4, 4,  2) -- ( 0,  0,  2) -- ( 4, 4,  2);
    \draw[opacity=0.2] (-5, 4,  1) -- ( 0, -1,  1) -- ( 5, 4,  1);
    \draw[opacity=0.2] (-6, 4,  0) -- ( 0, -2,  0) -- ( 6, 4,  0);
    \draw[opacity=0.2] (-7, 4, -1) -- ( 0, -3, -1) -- ( 7, 4, -1);

    \draw[opacity=0.5] (-8, 4, -2) -- ( 8, 4, -2);
    \draw[very thick] (-8, 4, -2) -- ( 0, -4, -2) -- ( 8, 4, -2) -- ( 0, 4, 6) -- cycle;
    \draw[very thick] ( 0, -4, -2) -- ( 0, 4, 6);
    \node at (5, 4, 2.5) {$ Q $};
    \node at (9, 4, 1) {$ \color{red} Q \cap \sigma^{-1}(\Z^k) $};

    \draw[dotted] (-13, 0, 0) -- (-2, 0, 0);
    \draw[dotted] (-13, 3, 0) -- (-5, 3, 0);
    \draw[dotted] (-13, 3, 3) -- (-2, 3, 3);    
\end{tikzpicture}
     \end{center}
    \caption{
    Representation of a MILEF $(Q,\sigma, \pi)$ of a convex set $C \subseteq \mathbb{R}^d$, i.e., $C = \conv(\pi(Q)\cap \sigma^{-1}(\mathbb{Z}^{k}))$.  The MILEF has $k$ integrality constraints. 
    In this picture, $Q$ lives in a $3$-dimensional space, i.e., $\ell=3$; the convex set $C$ lives in a $2$-dimensional space, i.e., $d=2$; and the number of integer constraints is $k=2$. 
    The projection $\sigma(Q)$ of $Q$ onto the integer space is highlighted on the left of the picture, and the projection $\pi(Q)$ onto the space of $C$ is highlighted at the bottom of the picture.}
    \label{figMilef}
\end{figure}
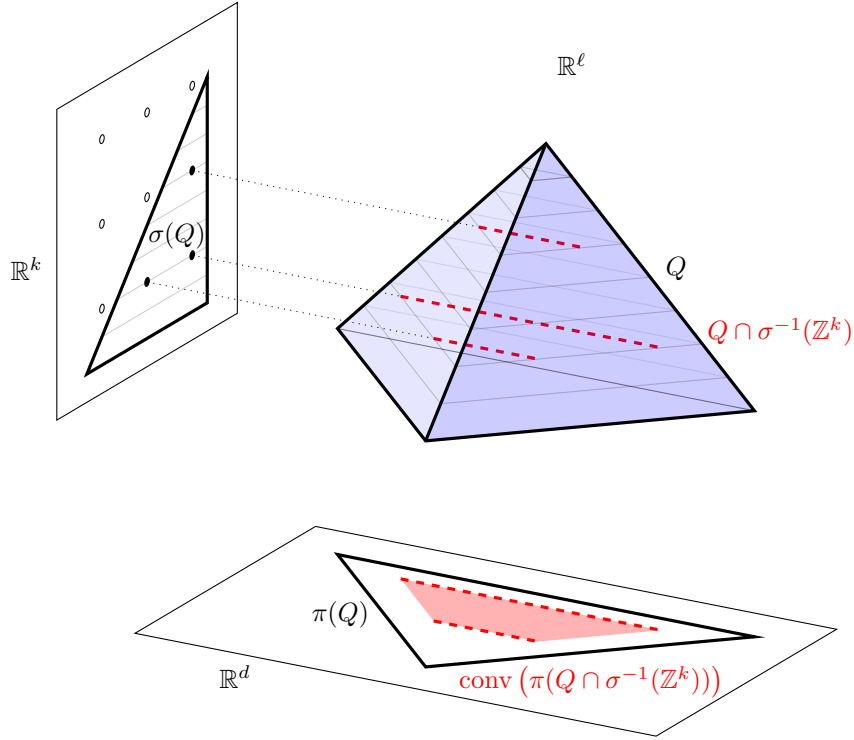
}

To exemplify our approach, consider a convex set $C\subseteq \mathbb{R}^d$, and a MILEF $(Q,\sigma,\pi)$ for $C$ of complexity $(m,k)$
\ifbool{SODAfinal}{.}{; see Figure~\ref{figMilef}.} 
This means that 
\ifbool{SODAfinal}{
\begin{align*}
C &= \conv\left(\pi(Q\cap \sigma^{-1}(\mathbb{Z}^k))\right) \\
  &= \pi\left(\conv\left( Q\cap \sigma^{-1}(\mathbb{Z}^k) \right)\right)
\enspace,
\end{align*}
}{
\begin{equation*}
C = \conv\left(\pi(Q\cap \sigma^{-1}(\mathbb{Z}^k))\right)
  = \pi\left(\conv\left(
 Q\cap \sigma^{-1}(\mathbb{Z}^k)
\right)\right)
\enspace,
\end{equation*}
}
where the second equality follows from the fact that the convex hull commutes with affine maps.
Assume that there are two constants $c,\eta > 0$ such that, for any $\delta < d^{-\eta}$, the set $C$ does not admit a $\delta$-LEF of size smaller than $2^{c\cdot d}$.
Our goal is to transform the MILEF into a $\delta$-LEF without blowing up the size too much.

The integer constraints of the MILEF cut the polyhedron $Q$ into \emph{fibers}, where a fiber is a set $Q\cap \sigma^{-1}(z)$ for some $z\in \mathbb{Z}^k$. 
Consider first a simple special case, where the number of non-empty fibers is not very large. To this end, let
\begin{equation*}
I = \{z\in \mathbb{Z}^k \mid Q\cap \sigma^{-1}(z)\neq\emptyset\}\enspace,
\end{equation*}
be all points in $\mathbb{Z}^k$ that correspond to non-empty fibers, and assume that we have $|I|<2^{c\cdot d/2}$. Notice that we can rewrite $C$ as
\begin{equation*}
C=\pi(Q_\sigma) \enspace,
\end{equation*}
where 
\ifbool{SODAfinal}{
\begin{align*}
Q_\sigma &\coloneqq \conv\Big(
 \bigcup_{z\in I} \left(
 Q\cap \sigma^{-1}(z)
 \right)
 \Big) \\
 &= \conv\left(Q\cap \sigma^{-1}(\mathbb{Z}^k)\right) \enspace.
\end{align*}
}{
\begin{equation*}
Q_\sigma \coloneqq \conv\Big(
 \bigcup_{z\in I} \left(
 Q\cap \sigma^{-1}(z)
 \right)
 \Big) 
 = \conv\left(Q\cap \sigma^{-1}(\mathbb{Z}^k)\right) \enspace.
\end{equation*}
}
We refer to $Q_\sigma$ as the mixed-integer hull of $Q$ \emph{with respect to $\sigma$}, and we remark that $Q_\sigma$ is a LEF of $C$, via the affine map $\pi$.
Moreover, we can bound the extension complexity of $Q_\sigma$ through a technique known as \emph{disjunctive programming}~\cite{Balas1979}, 
which allows for obtaining an inequality description of the convex hull of the union of a family of polyhedra, given an inequality description for each polyhedron in the family. 
In our case, the polyhedra are the fibers $Q\cap \sigma^{-1}(z)$ for $z\in I$, and each of those polyhedra has at most $m$ facets, because it is the intersection of $Q$, whose facets are bounded by $m$, 
and the affine subspace $\sigma^{-1}(z)$. 
The disjunctive programming technique then implies $\xc(Q_\sigma) \leq |I|(m+1)$, which, by assuming $|I|< 2^{c\cdot d/2}$, implies $\xc(Q_\sigma) = O(m \cdot 2^{c\cdot d/2})$.
Moreover, since we assumed that every $\delta$-LEF of $C$ has size at least $2^{c\cdot d}$, and $Q_\sigma$ is a $0$-LEF of $C$, we must have $\xc(Q_\sigma)= \Omega(2^{c\cdot d})$ 
and thus $m= \Omega(2^{c\cdot d/2})$.
In words, if the MILEF $(Q,\sigma,\pi)$ for $C$ has few fibers, then it must have a very large size.

Ideally, if one could show that any MILEF for $C$ of complexity $(m,k)$ with $k = O(\sfrac{d}{\log d})$ has a small number of fibers, then we would be done.
In particular, the number of fibers would be sufficiently small if all integer variables involved in the MILEF had bounded range.
Unfortunately, it does not hold in general that the number of fibers is bounded. A key aspect of our approach is to overcome this hurdle. 
More precisely, given a MILEF $(Q,\sigma,\pi)$ for $C$ as before, instead of describing $Q_\sigma$ in terms of the fibers $Q\cap \sigma^{-1}(z)$, 
we will show that one can \emph{approximate} $Q_\sigma$ by sets of the form $Q\cap H$, where $H$ comes from a family $\mathcal{H}$ of affine subspaces which is of small cardinality whenever $k$ is small. 
In particular, an affine subspace $H\in \mathcal{H}$ will not be of the form $\sigma^{-1}(z)$ as in the case of fibers, but will typically contain many subspaces of the form $\sigma^{-1}(z)$ for $z\in I$.
Moreover, the family $\mathcal{H}$ may contain subspaces of different dimensions. 
The price we pay is that the resulting description is not exact anymore, but only yields an approximation of $Q_\sigma$. 
Concretely, the resulting set will be 
\[
 Q_\mathcal{H} \coloneqq \conv\Big(\bigcup_{H\in \mathcal{H}} ( Q\cap H ) \Big) = \conv\Big( Q\cap \bigcup_{H\in\calH} H  \Big) \enspace,
\]
having the property that $Q_\sigma \subseteq Q_\mathcal{H} \subseteq Q$. 
We show that $Q_\mathcal{H}$ is a very good approximation of $Q_\sigma$, with an error of $\delta=O(d^{-\kappa})$ in terms of relative distance, 
where $\kappa>0$ is a constant that we can choose, and that only impacts other constants in our statements. 
Consequently, $\pi(Q_\mathcal{H})$ will be a good approximation of $C$ as well.

To find such family $\mathcal{H}$ of subspaces, we recursively ``slice'' $Q$ along different directions of small ``width''. 
We rely on a celebrated result in convex geometry to find good directions, which is commonly known as the Flatness Theorem, and which shows that low-dimensional lattice-free convex sets have small width. 
To formally state the Flatness Theorem, we start by defining the flatness constant and lattice width.
\footnote{In the remainder of the paper, we denote general convex sets by $B$, $C$ and $D$, and for the most part we keep the notational convention that $C$ is in the original space $\R^d$, 
$D$ is in the extended space $\R^\ell$, and $B$ is in the auxiliary space $\R^k$ (see Figure \ref{figMilef}).}

\begin{definition}[Flatness constant and lattice width]
Let $k\in \mathbb{Z}_{\geq 1}$. The \emph{flatness constant} $\flt(k)$ in dimension $k$ is the smallest $\lambda \in \mathbb{R}_{\geq 0}$ such that for any convex, closed, and full-dimensional set $ B \subseteq \R^k $ with $ B \cap \Z^k = \emptyset$, there exists a vector $ v \in \Z^k \setminus \{ \zerovec \} $ with
\begin{equation*}
\sup_{x \in B} v\t x - \inf_{x \in B} v\t x \le \lambda.
\end{equation*}
Moreover, the quantity
\begin{equation*}
\width(B) \coloneqq \inf_{v\in \mathbb{Z}^k\setminus \{\zerovec\}}
\left(
\sup_{x\in B} v\t x - \inf_{x\in B} v\t x
\right)
\end{equation*}
is called the \emph{lattice width of $B$}.
\end{definition}

Hence, $\flt(k)$ is the smallest real that upper bounds the lattice width of any convex, closed, full-dimensional and lattice-free set in $\mathbb{R}^k$. 
Notice that the term flatness \emph{constant} may be slightly misleading since $\flt(k)$ does depend on $k$. 
This term is historical, and comes from the fact that lattice width was often studied in settings where $k$ (and thus also $\flt(k)$) is constant.
Finally, the term Flatness Theorem is used for theorems that bound the quantity $\flt(k)$ in terms of $k$. 
There are many versions of it; we state one coming from \cite[Sec.~VII.8]{Barvinok2002} that is convenient for us in what follows.

\begin{theorem}[Flatness Theorem {(see~\cite{Barvinok2002})}]
\label{thm:flatness}
The flatness constant $\flt(k)$ is always finite and can moreover be bounded by a polynomial in $k$.
\end{theorem}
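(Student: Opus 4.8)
The plan is to reduce, via John's theorem, to a ``well-rounded'' body and then invoke a transference inequality between a lattice and its dual. Throughout I treat the case where $B$ is a convex body (bounded); the general closed case reduces to this by a routine argument which I omit. So let $B\subseteq\R^k$ be a lattice-free convex body. By John's theorem there is an invertible linear map $T$ and a translation such that, writing $B'\coloneqq T(B)$ up to that translation, one has $c+r\ball \subseteq B' \subseteq c+R\ball$ for concentric Euclidean balls with $R\le k\,r$ (the factor $k$ being what John's theorem gives for a body that need not be centrally symmetric). Translations do not affect widths, and $B\cap\Z^k=\emptyset$ is equivalent to $B'\cap\Lambda=\emptyset$ for the full-rank lattice $\Lambda\coloneqq T(\Z^k)$. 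A direct computation shows that for $v\in\Z^k$ the width of $B$ in direction $v$ equals the width of $B'$ in direction $w\coloneqq T^{-\intercal}v$, and that as $v$ ranges over $\Z^k$ the vector $w$ ranges over the dual lattice $\Lambda^* = T^{-\intercal}(\Z^k)$. Hence it suffices to find a nonzero $w\in\Lambda^*$ along which $B'$ is narrow.

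Next I would extract the geometric consequences of lattice-freeness. Since $B'$ contains a ball of radius $r$ and meets $\Lambda$ only trivially, that ball is $\Lambda$-free, which forces $r\le\mu(\Lambda)$, the covering radius of $\Lambda$ (otherwise the center of the ball would lie within distance $\mu(\Lambda)<r$ of some lattice point, which would then belong to $B'$). On the dual side, choose $w\in\Lambda^*\setminus\{\zerovec\}$ to be a shortest nonzero vector, so $\|w\| = \lambda_1(\Lambda^*)$. Because $B'\subseteq c+R\ball$, the width of $B'$ in direction $w$ is at most $2R\|w\| = 2R\,\lambda_1(\Lambda^*)\le 2k\,r\,\lambda_1(\Lambda^*)\le 2k\,\mu(\Lambda)\,\lambda_1(\Lambda^*)$.

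It then remains to bound $\mu(\Lambda)\,\lambda_1(\Lambda^*)$ by a polynomial in $k$, uniformly over all full-rank lattices $\Lambda\subseteq\R^k$. This follows by combining the standard estimate $\mu(\Lambda)\le\tfrac{\sqrt{k}}{2}\,\lambda_k(\Lambda)$ with a transference bound $\lambda_1(\Lambda^*)\,\lambda_k(\Lambda)=\poly(k)$ between the successive minima of a lattice and its dual (itself a consequence of Minkowski's second theorem together with Mahler's inequality; the sharp bound $\lambda_1(\Lambda^*)\,\lambda_k(\Lambda)=O(k)$ is Banaszczyk's, but any polynomial bound is enough here). Substituting back, the width of $B'$ in direction $w$ is $\poly(k)$, and pulling $w$ back to $v=T^{\intercal}w\in\Z^k\setminus\{\zerovec\}$ gives $\sup_{x\in B}v\t x-\inf_{x\in B}v\t x=\poly(k)$. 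Since $B$ was an arbitrary lattice-free body, $\flt(k)\le\poly(k)<\infty$.

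\textbf{Main obstacle.} The only genuinely nontrivial input is the transference inequality $\mu(\Lambda)\,\lambda_1(\Lambda^*)=\poly(k)$ (equivalently $\lambda_1(\Lambda^*)\,\lambda_k(\Lambda)=\poly(k)$); everything else is bookkeeping around John's theorem and the definitions of covering radius and successive minima. One should also be careful that the affine normalization interacts correctly with the lattice/dual-lattice pairing — this is the one place where a transpose or sign error could creep in.
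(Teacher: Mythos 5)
The paper does not prove Theorem~\ref{thm:flatness} at all: it is quoted as a known result with a pointer to Barvinok~\cite[Sec.~VII.8]{Barvinok2002}, so there is no internal proof to compare against. Your sketch is the standard argument behind that citation, and its reduction steps are sound: John's theorem gives the sandwiching $c+r\ball\subseteq B'\subseteq c+R\ball$ with $R\le k\,r$ for the non-symmetric case, the identity $\sup_{x\in B}v\t x-\inf_{x\in B}v\t x=\sup_{y\in B'}w\t y-\inf_{y\in B'}w\t y$ with $w=T^{-\intercal}v$ correctly pairs $\Z^k$ with the dual lattice $\Lambda^*=T^{-\intercal}(\Z^k)$, lattice-freeness of the inscribed ball forces $r\le\mu(\Lambda)$, and the circumscribed ball gives width at most $2k\,\mu(\Lambda)\,\lambda_1(\Lambda^*)$. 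The one weak spot is the parenthetical justification of the transference bound: Minkowski's second theorem combined with Mahler's classical duality inequality yields only a factorial-type bound on $\lambda_1(\Lambda^*)\,\lambda_k(\Lambda)$ (of order $k!$ or worse), which suffices for finiteness of $\flt(k)$ but \emph{not} for the polynomial bound claimed in the theorem; the polynomial statement genuinely requires a stronger transference theorem, e.g.\ Lagarias--Lenstra--Schnorr's $O(k^{3/2})$ or Banaszczyk's $O(k)$ bound, which you do invoke and which is a perfectly legitimate black box here (the paper itself treats the entire Flatness Theorem as one). With that citation made explicit, your proof is complete and matches the approach of the cited source.
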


The following theorem is a key technical ingredient in our approach, and guarantees the existence of a good family $\mathcal{H}$ of subspaces.

\begin{theorem}
    \label{thm:Divide}
    Let $ \ell, k \in \Z_{> 0} $, $ D \subseteq \R^\ell $ be a convex set, and $ \sigma : \R^\ell \to \R^k $ be an affine map.
    Then for every $ \delta > 0 $ there exists a family $ \calH $ of at most $ \left(1+ \frac{1+\delta}{\delta} \flt(k) \right)^k $ affine subspaces of $ \R^\ell $ such that the sets 
    $D_\sigma = \conv \left( D\cap \sigma^{-1} (\Z^k) \right)$ and $D_\calH \coloneqq \conv \left( D \cap \bigcup_{H \in \calH} H \right) $ satisfy
    \begin{enumerate}[label=(\alph*),itemsep=0.0em]
        \item\label{item:thmDivIsRelax} $ D_\sigma \subseteq D_\calH $, and
        \item\label{item:thmDivRdist} $ \rdist(D_\sigma, D_\calH) \le \delta $.
    \end{enumerate}
\end{theorem}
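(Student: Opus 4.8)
The plan is to build $\calH$ by recursion on $k$, the number of integer constraints, following the "slicing" strategy outlined in the overview. Set $B \coloneqq \cl(\sigma(D)) \subseteq \R^k$. First consider the base case $k=0$: then $\sigma^{-1}(\Z^k)$ is the whole space, $D_\sigma = D$, and we take $\calH = \{\R^\ell\}$, so the single subspace already gives $D_\calH = D = D_\sigma$; the bound $(1+\tfrac{1+\delta}{\delta}\flt(k))^k = 1$ holds. For the recursive step, we want to cover the integer points of $B$ by few parallel hyperplanes in $\R^k$ plus a few "slabs" that are themselves lattice-free and hence flat in one further direction. Concretely: if $B$ is lattice-free, the Flatness Theorem gives a primitive $v \in \Z^k\setminus\{\zerovec\}$ with $\sup_{x\in B} v\t x - \inf_{x\in B} v\t x \le \flt(k)$, so all integer points of $B$ lie on at most $\lfloor \flt(k)\rfloor + 1$ consecutive hyperplanes $\{v\t x = j\}$; each such hyperplane, pulled back through $\sigma$, is an affine subspace $H$ of $\R^\ell$ on which the induced integrality condition has only $k-1$ "free" integer coordinates, and we recurse on each of the (at most $\flt(k)+1$) lower-dimensional pieces $D\cap H$ with the same $\delta$. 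If $B$ is \emph{not} lattice-free, we cannot directly apply flatness; here is where the approximation error enters.

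The key idea for the non-lattice-free case, and the reason we only get relative distance $\le\delta$ rather than equality, is to discard integer points of $B$ that are "deep in the interior," keeping only those near the boundary, because a point in the relative interior of $B$ contributes a point of $D_\sigma$ that is already a convex combination of other fiber points and can be recovered approximately. More precisely: pick a point $x_0$ in the interior of $B$ (if $B$ is lower-dimensional, work inside its affine hull and adjust the lattice accordingly, which only helps). Scale $B$ about $x_0$ by a factor $1/(1+\delta')$ for a suitable $\delta'$ comparable to $\delta$; the shrunken body $B'$ has the property that any lattice point of $B$ outside $B'$ is "charged" to the boundary, while $B'\cap\Z^k$ captures the rest. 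We then cover $B\setminus \mathrm{int}(B')$ — a region whose every integer point sees a nearby supporting hyperplane — by applying the Flatness Theorem to the (now lattice-free, after the scaling) pieces, and handle $B'\cap\Z^k$ recursively in lower dimension. The bound $(1+\tfrac{1+\delta}{\delta}\flt(k))^k$ should fall out of the recursion $N(k) \le (\tfrac{1+\delta}{\delta}\flt(k)+1)\cdot N(k-1)$ with $N(0)=1$; the precise constant $\tfrac{1+\delta}{\delta}$ in front of $\flt(k)$ is exactly what one gets by counting hyperplanes across a slab of lattice width $\le \flt(k)$ together with the shrink factor.

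For correctness of~\ref{item:thmDivIsRelax}, note that by construction every fiber $D\cap\sigma^{-1}(z)$ with $z\in B\cap\Z^k$ is contained in $D\cap H$ for some $H\in\calH$ (it is either on one of the flatness hyperplanes or handled recursively), so $D\cap\sigma^{-1}(\Z^k)\subseteq D\cap\bigcup_{H}H$, and taking convex hulls gives $D_\sigma\subseteq D_\calH$. For~\ref{item:thmDivRdist}, the estimate is assembled from Lemma~\ref{lem:rd2}: part~\ref{item:lemrd2HullUnion} lets us reduce the relative distance between the two convex hulls to the worst relative distance between corresponding pieces $\conv(D\cap\sigma^{-1}(\Z^k)\cap H)$ and $D\cap H$ over $H\in\calH$, part~\ref{item:rdistTriangleIneq} lets us combine the one-level error (from the shrink by $1+\delta'$, which contributes relative distance about $\delta'$) with the recursively obtained error at level $k-1$, and part~\ref{item:rdistProjSmaller} lets us pass the $\R^k$-level geometric statement ("a convex body shrunk by $1+\delta'$ about an interior point is within relative distance $\delta'$") up to the $\R^\ell$ level through $\sigma$. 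Choosing $\delta'$ so that the accumulated error over $k$ levels telescopes to $\le\delta$ — using that $\rd$ composes almost additively for small values by~\ref{item:rdistTriangleIneq} — closes the argument.

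The main obstacle I expect is the non-lattice-free case: making precise the claim that integer points in the interior of $B$ are "dispensable up to relative distance $\delta$" requires a careful geometric argument relating the scaling $B \mapsto x_0 + \tfrac{1}{1+\delta'}(B-x_0)$ to containment of the discarded fibers' contribution inside $(1+\delta')D_\calH - \delta' D_\calH$ (the characterization of $\rd$ in Lemma~\ref{lem:rd2}\ref{item:rdistSetDef}), and controlling how the choice of interior point $x_0$ and the lattice-width direction interact across recursion levels so that the errors genuinely telescope rather than compound multiplicatively in an uncontrolled way. Getting the bookkeeping right so the final count is exactly $(1+\tfrac{1+\delta}{\delta}\flt(k))^k$, rather than something with an extra $2^k$ or a worse dependence on $\delta$, is the delicate part.
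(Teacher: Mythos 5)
Your recursion breaks down exactly at the step you flag as the main obstacle, and the gap is not just bookkeeping. In the case where $B=\cl(\sigma(D))$ is not lattice-free, shrinking $B$ about an arbitrary interior point $x_0$ by a factor $1/(1+\delta')$ gives you nothing to apply the Flatness Theorem to: the shrunken body $B'$ is in general still full-dimensional and still contains lattice points, the annular region $B\setminus \mathrm{int}(B')$ is not convex (so flatness does not apply to it), and $B'\cap\Z^k$ is a full-dimensional set of lattice points, so ``handle it recursively in lower dimension'' has no mechanism behind it---the parameter $k$ never decreases along this branch. Moreover, if you genuinely discard the deep-interior lattice points (as your ``dispensable up to relative distance $\delta$'' discussion suggests), property~\ref{item:thmDivIsRelax} fails: the theorem requires the exact inclusion $D_\sigma\subseteq D_\calH$, and a single interior fiber $D\cap\sigma^{-1}(z)$ can extend far beyond the convex hull of all other fibers in directions annihilated by $\sigma$, so it is not recoverable from boundary fibers even approximately. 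Finally, your plan to let the error accumulate over $k$ levels and ``telescope'' it via Lemma~\ref{lem:rd2}~\ref{item:rdistTriangleIneq} would, even if the slicing worked, force a per-level error of roughly $\delta/k$ and hence a count worse than the claimed $\left(1+\frac{1+\delta}{\delta}\flt(k)\right)^k$.

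The paper's proof avoids all of this with a different branching criterion: one never asks whether $\sigma(D)$ is lattice-free, but whether the current slice already satisfies $\rdist(D_\sigma,D)\le\delta$. If yes, take $\calH=\{\R^\ell\}$ and stop---this is the only place the error enters, so nothing accumulates and Lemma~\ref{lem:rd2}~\ref{item:lemrd2HullUnion} alone yields~\ref{item:thmDivRdist}. If no, a witness point $y\in D$ with $y\notin(1+\lambda)D_\sigma-\lambda D_\sigma$ makes the shrunken copy $(1-\mu)\sigma(y)+\mu\,\sigma(D_\sigma)$, with $\mu=\lambda/(1+\lambda)$, lattice-free: a lattice point in it would lift to a point of $D\cap\sigma^{-1}(\Z^k)\subseteq D_\sigma$ lying in $(1-\mu)y+\mu D_\sigma$, contradicting the disjointness coming from the choice of $y$. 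The Flatness Theorem applied to this shrunken copy bounds $\width(\sigma(D_\sigma))$ by $\frac{1+\delta}{\delta}\flt(k)$ (Lemma~\ref{lemLatticeWidth}), so \emph{all} fibers---interior ones included---lie on at most $1+\frac{1+\delta}{\delta}\flt(k)$ parallel lattice hyperplanes; pulling these back through $\sigma$ and applying a unimodular change of coordinates removes exactly one integer constraint on each slice (Lemma~\ref{lemElimination}), and the induction of Proposition~\ref{propDivide} then gives the exact covering~\ref{item:thmDivIsRelax} and the stated cardinality bound with a single, non-compounding error $\delta$. This use of the relative-distance witness to manufacture a lattice-free body is the key idea missing from your proposal.
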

Since the above statement is independent of the notion of (mixed-integer) extended formulations and might be of
independent interest, we discuss its proof in the next section. 

Let us demonstrate how Theorem~\ref{thm:Divide} indeed implies Theorem~\ref{thm:fromMILEFToLEF}.
To this end, we show that Theorem~\ref{thm:Divide} implies the following, slightly stronger version.
\begin{theorem}
    \label{thm:fromMILEFToLEFstrong}
    Let $ C \subseteq \R^d $ be a convex set that has an $\eps$-MILEF of complexity $(m,k)$, for some $\eps\geq 0$.
    Then for every $\delta > 0$, $C$ has an $(\eps + \delta + 2 \eps \delta)$-LEF of size at most $(m + 1) \left(1+ \frac{1+\delta}{\delta} \flt(k) \right)^k$.
\end{theorem}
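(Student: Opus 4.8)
The plan is to feed the polyhedron underlying the given MILEF into Theorem~\ref{thm:Divide} and then push the resulting approximation guarantee through the affine map~$\pi$. So let $(Q,\sigma,\pi)$ be an $\eps$-MILEF of $C$ of complexity $(m,k)$: thus $Q\subseteq\R^\ell$ has at most $m$ facets, $\sigma\colon\R^\ell\to\R^k$ and $\pi\colon\R^\ell\to\R^d$ are affine, and, setting $Q_\sigma\coloneqq\conv(Q\cap\sigma^{-1}(\Z^k))$ and $\bar C\coloneqq\pi(Q_\sigma)$ (which equals $\conv(\pi(Q\cap\sigma^{-1}(\Z^k)))$ since $\conv$ commutes with affine maps), we have $C\subseteq\bar C$ and $\rdist(C,\bar C)\le\eps$. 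Fix $\delta>0$ and apply Theorem~\ref{thm:Divide} to the convex set $D=Q$ and the map $\sigma$. This produces a family $\calH$ of at most $\big(1+\tfrac{1+\delta}{\delta}\flt(k)\big)^k$ affine subspaces of $\R^\ell$ with $Q_\sigma\subseteq Q_\calH$ and $\rdist(Q_\sigma,Q_\calH)\le\delta$, where $Q_\calH=\conv\big(\bigcup_{H\in\calH}(Q\cap H)\big)$. We claim that $(\pi(Q_\calH),\pi)$ is, after passing to a suitable extended formulation, an $(\eps+\delta+2\eps\delta)$-LEF of $C$ of the advertised size.

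The size bound comes from disjunctive programming. Each nonempty intersection $Q\cap H$ is, inside the affine hull of $H$, cut out by the at most $m$ inequalities describing $Q$, hence is a polyhedron with at most $m$ facets; intersecting with a subspace of any dimension cannot increase the facet count. Applying the disjunctive-programming construction of Balas~\cite{Balas1979} to the at most $|\calH|$ polyhedra $\{Q\cap H : H\in\calH,\ Q\cap H\neq\emptyset\}$ yields a polyhedron $R$ in a higher-dimensional space that linearly projects onto $\cl(Q_\calH)$ and has at most $\sum_{H\in\calH}(m+1)\le(m+1)\big(1+\tfrac{1+\delta}{\delta}\flt(k)\big)^k$ facets. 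Composing the projection $R\to\cl(Q_\calH)$ with $\pi$ gives an affine map under which $R$ maps onto $\pi(\cl(Q_\calH))\supseteq\pi(Q_\calH)$; passing to closures here is harmless, as it changes neither the containments below nor the relative distances involved. Hence we obtain a LEF of $C$ of size at most $(m+1)\big(1+\tfrac{1+\delta}{\delta}\flt(k)\big)^k$, provided the two defining conditions hold.

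Those two conditions follow immediately from the properties of $\rdist$ collected in Lemma~\ref{lem:rd2}. First, $Q_\sigma\subseteq Q_\calH$ gives $\bar C=\pi(Q_\sigma)\subseteq\pi(Q_\calH)$, and $C\subseteq\bar C$, so $C\subseteq\pi(Q_\calH)$. Second, Lemma~\ref{lem:rd2}\ref{item:rdistProjSmaller} applied to $\pi$ gives $\rdist(\bar C,\pi(Q_\calH))=\rdist(\pi(Q_\sigma),\pi(Q_\calH))\le\rdist(Q_\sigma,Q_\calH)\le\delta$, and then the approximate triangle inequality of Lemma~\ref{lem:rd2}\ref{item:rdistTriangleIneq} applied to $C\subseteq\bar C\subseteq\pi(Q_\calH)$ yields
\[
\rdist(C,\pi(Q_\calH))\le\rdist(C,\bar C)+\rdist(\bar C,\pi(Q_\calH))+2\,\rdist(C,\bar C)\cdot\rdist(\bar C,\pi(Q_\calH))\le\eps+\delta+2\eps\delta,
\]
which is the claimed bound; Theorem~\ref{thm:fromMILEFToLEF} then follows by rescaling $\delta$ (using $\eps\le 1$ to absorb the $2\eps\delta$ term) and using $\flt(k)=\poly(k)$. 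The only part that requires genuine care is the size estimate: one must observe that slicing $Q$ with arbitrary, possibly lower-dimensional affine subspaces keeps the facet count at most $m$, and that the convex hull of the finitely many resulting polyhedra — which may be unbounded and have differing recession cones, hence need not be closed — still admits, after closure, the disjunctive-programming extended formulation of size $|\calH|(m+1)$. Everything else is routine bookkeeping through the affine map $\pi$ and the triangle-type inequality for $\rdist$.
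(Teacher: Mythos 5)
Your proposal is correct and follows essentially the same route as the paper: apply Theorem~\ref{thm:Divide} to $Q$ and $\sigma$, bound $\xc(\cl(Q_\calH))$ by $|\calH|(m+1)$ via Balas' disjunctive programming, and then transfer the error through $\pi$ using Lemma~\ref{lem:rd2}~\ref{item:rdistProjSmaller} and the approximate triangle inequality~\ref{item:rdistTriangleIneq}, handling the closure exactly as the paper does via the invariance $\rdist(A,B)=\rdist(A,\cl(B))$. No gaps.
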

We first argue that Theorem~\ref{thm:fromMILEFToLEFstrong} indeed implies Theorem~\ref{thm:fromMILEFToLEF}.
\ifbool{SODAfinal}{\begin{myproof}{Proof of Theorem~\ref{thm:fromMILEFToLEF}}}{\begin{proof}[Proof of Theorem~\ref{thm:fromMILEFToLEF}]}
    Let $ C \subseteq \R^d $ be a convex set admitting an $\eps$-MILEF of complexity $(m,k)$, where $ \eps \in [0,1] $,
    and let $ \delta \in (0,1] $.
    Applying Theorem~\ref{thm:fromMILEFToLEFstrong}, with $\delta$ replaced by $ \sfrac{\delta}{3} $, we obtain that $
    C $ has a $ \mu $-LEF of size at most $ s $, where $ \mu \coloneqq \eps + \frac{\delta}{3} + \frac{2}{3} \eps \delta $ and
    $ s \coloneqq (m + 1) \left(1+ \frac{3+\delta}{\delta} \flt(k) \right)^k$.
    Notice that $ \eps \le 1 $ implies $ \mu \le \eps + \delta $; hence it only remains to prove that $s=m\cdot(1+\sfrac{k}{\delta})^{O(k)}.$
    Since $\delta\leq 1$, $m\geq 1$, and there are constants $\beta>0$ and $\gamma\geq 1$ such that $\flt(k)\leq \beta k^\gamma$, we have
    \ifbool{SODAfinal}{
    \begin{align*}
     s & \le 2m \cdot \Big( 1+ \frac{4}{\delta} \beta k^\gamma \Big)^k \leq 2m\cdot\left[ (1+4\beta)(1+\sfrac{k^\gamma}{\delta})  \right]^k \\
     &\leq 2m\cdot\left[ (1+4\beta)(1+\sfrac{k}{\delta})^\gamma  \right]^k.
    \end{align*}
    }{
    \[
        s \le 2m \cdot \Big( 1+ \frac{4}{\delta} \beta k^\gamma \Big)^k \leq 2m\cdot\left[ (1+4\beta)(1+\sfrac{k^\gamma}{\delta})  \right]^k \leq 2m\cdot\left[ (1+4\beta)(1+\sfrac{k}{\delta})^\gamma  \right]^k.
    \]
    }
    And finally, since $(1+\sfrac{k}{\delta})\geq 2$, there must be a constant $c$ such that $1+4\beta \leq (1+\sfrac{k}{\delta})^c$. Thus,
  \[
   s\leq 2m\cdot \left(1+\sfrac{k}{\delta}\right)^{(\gamma+c)k}\leq m\cdot (1+\sfrac{k}{\delta})^{(\gamma+c)k+1}\enspace,
  \]
    which completes the proof of the claim.
\ifbool{SODAfinal}{\end{myproof}}{\end{proof}}
It remains to show that Theorem~\ref{thm:Divide} implies Theorem~\ref{thm:fromMILEFToLEFstrong}.

\ifbool{SODAfinal}{\begin{myproof}{Proof of Theorem~\ref{thm:fromMILEFToLEFstrong}}}{\begin{proof}[Proof of Theorem~\ref{thm:fromMILEFToLEFstrong}]}
    By the assumption, there exists a polyhedron $ Q \subseteq \R^\ell $ with at most $ m $ facets, and affine maps $ \pi:\R^\ell \to \R^d $ and $ \sigma : \R^\ell \to \R^k $
    such that $Q_\sigma$ satisfies $C\subseteq \pi(Q_\sigma)$ and $ \rdist(C, \pi(Q_\sigma)) \le \eps $. 
    Applying Theorem~\ref{thm:Divide}, we obtain a set $ \calH $ of affine subspaces of $ \R^\ell$ with $ |\calH| \le \left(1+ \frac{1+\delta}{\delta} \flt(k) \right)^k $, 
    such that $ Q_\sigma \subseteq Q_\calH $ and $\rdist(Q_\sigma, Q_\calH) \le \delta $.

    Now, let $\cl(Q_\calH) $ be the closure of $ Q_\calH $.
    By Balas' Theorem~\cite{Balas1979}, we have 
    \ifbool{SODAfinal}{
    \begin{align*}
     \xc(\cl(Q_\calH)) &\le \sum_{H \in \calH} (\xc(Q \cap H) + 1) \\
     &\le \sum_{H \in \calH} (\xc(Q) + 1) \le |\calH| (m + 1).
    \end{align*}
    }{
        \[
        \xc(\cl(Q_\calH)) \le \sum_{H \in \calH} (\xc(Q \cap H) + 1) \le \sum_{H \in \calH} (\xc(Q) + 1) \le |\calH| (m + 1).
    \]
    }
    Thus, there exists a polyhedron $ Q' \subseteq \R^{\ell'} $ with at most $ (m + 1)|\calH| $ facets and an affine map
    $ \tau : \R^{\ell'} \to \R^\ell $ with $ \tau(Q') = \cl(Q_\calH) $.
    Let us define $ \pi' \coloneqq \pi \circ \tau $. We show Theorem~\ref{thm:fromMILEFToLEFstrong} by proving that $(Q',\pi')$ is a $(\epsilon+\delta+2\epsilon\delta)$-LEF of $C$; the above already shows that its size is bounded by $(m+1)(1+\frac{1+\delta}{\delta}\flt(k))^k$, as desired.
Clearly, we have 
    \ifbool{SODAfinal}{
    \begin{align*}
     \pi'(Q') &= \pi(\tau(Q')) = \pi(\cl(Q_{\calH})) \\
     &\supseteq \pi(Q_\calH) \supseteq \pi(Q_\sigma) \supseteq C,
    \end{align*}
    }{
    \[
        \pi'(Q') = \pi(\tau(Q')) = \pi(\cl(Q_{\calH})) \supseteq \pi(Q_\calH) \supseteq \pi(Q_\sigma) \supseteq C,
    \]
    }
    and hence it only remains to show that $ \rdist(C, \pi'(Q')) \le \eps + \delta + 2 \eps \delta $.
    To this end, first observe that
    \begin{align*}
        \rdist(\pi(Q_\sigma), \pi'(Q'))
        & = \rdist(\pi(Q_\sigma), \pi(\cl(Q_\calH))) \\
        & \le \rdist(Q_\sigma, \cl(Q_\calH)) \\
        & = \rdist(Q_\sigma, Q_\calH) \\
        & \le \delta\enspace,
    \end{align*}
where the first inequality follows from Lemma~\ref{lem:rd2}~\ref{item:rdistProjSmaller}, 
    the second equality from the fact that $ \rdist(A, B) = \rdist(A, \cl(B)) $ always holds, 
    and the last inequality from the definition of $ Q_\calH$.
    Finally, recall that $ \rdist(C, \pi(Q_\sigma)) \le \eps $, and hence using Lemma~\ref{lem:rd2}~\ref{item:rdistTriangleIneq} we obtain
    \ifbool{SODAfinal}{
        \begin{align*}
        \rdist &(C, \pi'(Q')) \\
        \le & \rdist(C,\pi(Q_\sigma)) +\rdist(\pi(Q_\sigma),\pi'(Q')) \\
        &+ 2 \rdist(C,\pi(Q_\sigma)) \rdist(\pi(Q_\sigma),\pi'(Q')) \\
        \le & \eps + \delta + 2 \eps \delta,     
    \end{align*}
    }{
        \begin{align*}
        \rdist(C, \pi'(Q')) &\le \rdist(C,\pi(Q_\sigma))+\rdist(\pi(Q_\sigma),\pi'(Q')) + 2 \rdist(C,\pi(Q_\sigma)) \rdist(\pi(Q_\sigma),\pi'(Q')) \\
        &\le \eps + \delta + 2 \eps \delta,     
    \end{align*}
    }
    as claimed.
\ifbool{SODAfinal}{\end{myproof}}{\end{proof}}
 \section{Approximating mixed-integer hulls by unions of slices}
\label{sec:covering}
As demonstrated in the previous section, the key technical ingredient for the proof of Theorem~\ref{thm:fromMILEFToLEF}
is the statement of Theorem~\ref{thm:Divide}.
In this section we prove the latter. 
Throughout this section, we make extensive use of the sets $D_\sigma$ and $D_\calH$ as defined in Theorem~\ref{thm:Divide}.
Before we give an idea of its proof, let us argue that it can easily be derived from the following statement.
\begin{proposition}
    \label{propDivide}
    Let $ \ell, k \in \Z_{> 0} $, $ D \subseteq \R^\ell $ be a convex set, and $ \sigma : \R^\ell \to \R^k $
    be an affine map.
    Then, for every $ \delta > 0 $ there exists a family $ \calH $ of affine subspaces of $ \R^\ell $ satisfying:
    \begin{enumerate}[itemsep=0.0em,label=(\roman*)]
        \item\label{item:propDivFamSize} $ |\calH| \le \prod_{i = 1}^k \left( 1+ \frac{(1 + \delta)}{\delta} \flt(i) \right) $,

        \item\label{item:propDivIntConstr} $ D \cap \sigma^{-1}(\Z^k) \subseteq  \bigcup_{H \in \calH} H $, and

        \item\label{item:propDivRdist} $ \rdist\left(  (D\cap H)_\sigma, D\cap H  \right) \le \delta $ for each $ H \in \calH $.
    \end{enumerate}
\end{proposition}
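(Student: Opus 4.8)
I would prove Proposition~\ref{propDivide} by induction on $k$, peeling off one lattice coordinate at a time: at each stage I look at the convex set $B:=\cl(\sigma(D))$ and its lattice points $P:=\sigma(D)\cap\Z^k$, and I branch on the \emph{lattice width} of $B$. If $B$ is ``flat'' in some lattice direction, I slice it into a controlled number of parallel hyperplanes and recurse in dimension $k-1$; if $B$ is ``fat'' in every lattice direction, I stop, taking $\calH=\{\R^\ell\}$, and show directly that $\rdist(D_\sigma,D)\le\delta$. The Flatness Theorem (Theorem~\ref{thm:flatness}) is what makes the second case possible, and it is where essentially all the work sits.

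\textbf{The recursion (the easy cases).} First dispose of degeneracies. If $P=\emptyset$, then $D\cap\sigma^{-1}(\Z^k)=\emptyset$ and $\calH:=\emptyset$ satisfies everything. If $P\neq\emptyset$ but $\af(P)$ is a proper affine subspace of $\R^k$, note that $\af(P)$ is rational (it contains a coset of a sublattice), set $H_0:=\sigma^{-1}(\af(P))$, observe $D\cap\sigma^{-1}(\Z^k)\subseteq H_0$, reparametrize the rank-$r$ lattice $\af(P)\cap\Z^k$ (with $r<k$) to obtain an affine map $\bar\sigma$ with $D\cap\bar\sigma^{-1}(\Z^r)=D\cap\sigma^{-1}(\Z^k)$, and apply the induction hypothesis at level $r$ to $(D\cap H_0,\bar\sigma)$; since the integer-point sets coincide, intersecting with any subspace $H$ of the returned family gives $(D\cap H)_{\bar\sigma}=(D\cap H)_\sigma$, so \ref{item:propDivIntConstr}--\ref{item:propDivRdist} carry over, and $\prod_{i=1}^r\le\prod_{i=1}^k$ gives~\ref{item:propDivFamSize}. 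So we may assume $\af(P)=\R^k$ (hence $\conv(P)$ is full-dimensional). Now if $\width(B)\le\frac{1+\delta}{\delta}\flt(k)$, pick a primitive $v\in\Z^k\setminus\{\zerovec\}$ attaining this width; since $v\t z\in\Z$ for $z\in\Z^k$, every point of $P$ lies on one of the at most $\width(B)+1\le 1+\frac{1+\delta}{\delta}\flt(k)$ hyperplanes $G_j=\{x:v\t x=j\}$, $j\in\Z$. For each such $j$ apply the induction hypothesis (same $\delta$) to $D\cap\sigma^{-1}(G_j)$ with the restricted map, whose image lattice $G_j\cap\Z^k$ has rank $k-1$, and take $\calH$ to be the union of the returned families. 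Property~\ref{item:propDivIntConstr} holds because $D\cap\sigma^{-1}(\Z^k)\subseteq\bigcup_j\sigma^{-1}(G_j)$, property~\ref{item:propDivRdist} is inherited, and the size multiplies by at most $1+\frac{1+\delta}{\delta}\flt(k)$, exactly the extra factor needed for~\ref{item:propDivFamSize}.

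\textbf{The stopping case (the heart).} Suppose $\width(B)>\frac{1+\delta}{\delta}\flt(k)$; I claim $\calH:=\{\R^\ell\}$ works, i.e.\ $\rdist(D_\sigma,D)\le\delta$. By Lemma~\ref{lem:rd2}\ref{item:rdistSetDef} it suffices to show $D\subseteq(1+\delta)D_\sigma-\delta D_\sigma$. Fix $x\in D$ and set $y:=\sigma(x)$ and $\lambda:=\frac{\delta}{1+\delta}$. Since convex hulls commute with affine maps, $\sigma(D_\sigma)=\conv(P)$; I will exhibit $p\in P$ and $q\in\conv(P)$ with $(1+\delta)p-\delta q=y$. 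Granting this, choose any $v\in D_\sigma$ with $\sigma(v)=q$ and put $u:=\frac{1}{1+\delta}(x+\delta v)$: then $u\in D$ (a convex combination of $x$ and $v$), $\sigma(u)=\frac{1}{1+\delta}(y+\delta q)=p\in\Z^k$, hence $u\in D\cap\sigma^{-1}(\Z^k)\subseteq D_\sigma$, and $(1+\delta)u-\delta v=x$, as desired. To find $p$ and $q$, note that the set of admissible $p$ is
\[
\Big(\tfrac{1}{1+\delta}\,y+\tfrac{\delta}{1+\delta}\conv(P)\Big)\cap P
=\big((1-\lambda)y+\lambda\conv(P)\big)\cap P,
\]
and $(1-\lambda)y+\lambda\conv(P)$ is a homothetic copy of $\conv(P)$ shrunk by $\lambda$ towards $y$; since $y\in\sigma(D)$ and $\conv(P)\subseteq\sigma(D)$ with $\sigma(D)$ convex, this shrunken copy is contained in $\sigma(D)$, so \emph{any} lattice point it contains automatically lies in $P=\sigma(D)\cap\Z^k$. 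This copy is full-dimensional (as $\af(P)=\R^k$) and has lattice width $\lambda\cdot\width(\conv(P))$, which exceeds $\flt(k)$ provided $\width(\conv(P))>\frac{1+\delta}{\delta}\flt(k)$; by the Flatness Theorem it is then not lattice-free, giving the required $p$ (and $q$ is then forced).

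\textbf{Main obstacle.} The delicate point is the last inequality: deducing $\width(\conv(P))>\frac{1+\delta}{\delta}\flt(k)$ from $\width(B)>\frac{1+\delta}{\delta}\flt(k)$, i.e.\ showing that the lattice points of a sufficiently fat convex body fill it up to a controlled margin in every lattice direction (and, relatedly, that $P$ must affinely span $\R^k$ once $B$ is fat). The bare Flatness Theorem only says a fat body \emph{contains} a lattice point; quantifying how close $\conv(P)$ comes to the faces of $B$ — and doing so tightly enough that the constant $\frac{1+\delta}{\delta}\flt(k)$ in the stopping criterion exactly matches the at most $1+\frac{1+\delta}{\delta}\flt(k)$ hyperplanes produced in the slicing case — is where the real effort goes; I expect this to need a quantitative refinement of flatness (or, alternatively, an iterated application of the displayed ``lift one point inward'' step, expressing $x$ as a convergent alternating affine combination of points of $D_\sigma$ and rebalancing the resulting coefficients). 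The remaining bookkeeping — the reparametrization in the lower-dimensional-image reduction, and checking that the per-leaf guarantee~\ref{item:propDivRdist} combines into Theorem~\ref{thm:Divide} via Lemma~\ref{lem:rd2}\ref{item:lemrd2HullUnion} — is routine.
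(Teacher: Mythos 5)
Your recursion and your core geometric step are essentially the paper's: the slicing-and-reparametrizing step is the content of Lemma~\ref{lemElimination}, and your ``shrink $\conv(P)$ towards $y=\sigma(x)$ and apply flatness'' computation is exactly the paper's Lemma~\ref{lemLatticeWidth} read in contrapositive form (there, $\rdist(D_\sigma,D)>\lambda$ produces a point $y\in D$ for which $(1-\mu)\sigma(y)+\mu\,\sigma(D_\sigma)$ is lattice-free, hence $\width(\sigma(D_\sigma))\le\frac{1+\lambda}{\lambda}\flt(k)$). The genuine gap is where you place the dichotomy. You branch on $\width(\cl(\sigma(D)))$, but in the stopping case your argument actually needs $\width(\conv(P))>\frac{1+\delta}{\delta}\flt(k)$, where $\conv(P)=\sigma(D_\sigma)$ is the integer hull of the image; passing from fatness of $\sigma(D)$ to fatness of its integer hull with the same constant is exactly the step you flag as the ``main obstacle,'' and it is neither proved nor available off the shelf (the bare Flatness Theorem gives one lattice point, not a quantitative approximation of $\sigma(D)$ by $\conv(P)$). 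So as written the stopping case does not close, and weakening the threshold to compensate would break the matching count in~\ref{item:propDivFamSize}.

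The fix---and the paper's route---is to dichotomize on the quantity your stopping argument really uses. Either branch directly on whether $\rdist(D_\sigma,D)\le\delta$ (if yes, take $\calH=\{\R^\ell\}$, which gives~\ref{item:propDivRdist} for free; if no, Lemma~\ref{lemLatticeWidth} yields $\width(\sigma(D_\sigma))\le\frac{1+\delta}{\delta}\flt(k)$ and you slice), or, equivalently in your setup, branch on $\width(\conv(P))$ instead of $\width(\cl(\sigma(D)))$: if $\width(\conv(P))>\frac{1+\delta}{\delta}\flt(k)$ your homothety argument goes through verbatim, while if $\width(\conv(P))\le\frac{1+\delta}{\delta}\flt(k)$ the slicing still covers all fibers, because every point of $P=\sigma(D)\cap\Z^k$ lies in $\conv(P)$, so a flat direction $v$ for $\conv(P)$ produces at most $1+\frac{1+\delta}{\delta}\flt(k)$ lattice hyperplanes $\{x: v\t x=j\}$ meeting $P$. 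With that one change no comparison between $\sigma(D)$ and its integer hull is ever needed, and the rest of your outline (the unimodular reparametrization dropping $k$ to $k-1$, transferring $(\cdot)_\tau$ back to $(\cdot)_\sigma$, and the bookkeeping for~\ref{item:propDivFamSize}--\ref{item:propDivRdist}) matches the paper's proofs of Lemma~\ref{lemElimination} and Proposition~\ref{propDivide}.
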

\ifbool{SODAfinal}{\begin{myproof}{Proof of Theorem~\ref{thm:Divide}}}{\begin{proof}[Proof of Theorem~\ref{thm:Divide}]}
    Let $\mathcal{H}$ be the family of affine subspaces as described in Proposition~\ref{propDivide}. 
    By Proposition~\ref{propDivide}~\ref{item:propDivFamSize} and the monotonicity of $\flt(k)$, we immediately obtain the bound $|\calH|\leq \left(1+\frac{1+\delta}{\delta}\flt(k)\right)^k$. 
    Next, property~\ref{item:thmDivIsRelax} of Theorem~\ref{thm:Divide} follows from Proposition~\ref{propDivide}~\ref{item:propDivIntConstr} because
    \ifbool{SODAfinal}{
    \begin{align*}
    D_\calH &= \conv\left( D\cap \bigcup_{H\in \mathcal{H}} H\right)\\
           &\supseteq \conv\left( D\cap \sigma^{-1}(\Z^k) \cap \bigcup_{H\in \mathcal{H}} H \right)\\
           &= \conv\left(D\cap \sigma^{-1}(\mathbb{Z}^k)\right) = D_\sigma \enspace.
    \end{align*}
    }{
    \begin{equation*}
    D_\calH = \conv\left( D\cap \bigcup_{H\in \mathcal{H}} H\right)
           \supseteq \conv\left( D\cap \sigma^{-1}(\Z^k) \cap \bigcup_{H\in \mathcal{H}} H \right)
           = \conv\left(D\cap \sigma^{-1}(\mathbb{Z}^k)\right) = D_\sigma \enspace.
    \end{equation*}
    }
Hence, it only remains to show property~\ref{item:thmDivRdist} of Theorem~\ref{thm:Divide}. To this end, for each $ H\in \calH$, we define 
$A_H \coloneqq (D\cap H)_\sigma$ and $B_H \coloneqq D\cap H\enspace$,
which leads to
\ifbool{SODAfinal}{
\begin{align*}
A_H &\subseteq B_H\enspace, \\
D_\sigma & = \conv\left(\bigcup_{H\in \calH} A_H \right)\enspace, \quad \text{ and }\\
D_\calH &= \conv\left(\bigcup_{H\in\calH} B_H \right)\enspace,
\end{align*}
}{
\begin{align*}
A_H\subseteq B_H\enspace, \qquad D_\sigma = \conv\left(\bigcup_{H\in \calH} A_H \right)\enspace,\qquad \text{ and } \qquad D_\calH = \conv\left(\bigcup_{H\in\calH} B_H \right)\enspace,
\end{align*}
}
where the second relation follows from Proposition~\ref{propDivide}~\ref{item:propDivIntConstr}. We finally obtain the bound
    \begin{align*}
        \rdist( D_\sigma, D_\calH) \leq \max_{H \in \calH} \rdist(A_H, B_H) \le \delta \enspace,
    \end{align*}
where the first inequality follows by Lemma~\ref{lem:rd2}~\ref{item:lemrd2HullUnion} and the second one by Proposition~\ref{propDivide}~\ref{item:propDivRdist}.
\ifbool{SODAfinal}{\end{myproof}}{\end{proof}}

Proposition~\ref{propDivide} states that there is a small family of affine subspaces of $\R^\ell$ that cover all fibers $D\cap\sigma^{-1}(\Z^k)$ of the set $D$, 
with the property that each slice $D\cap H$ approximates well its own mixed-integer hull $(D\cap H)_\sigma$. 
The idea behind the proof can be sketched as follows. 
If the set $ D $ is already a good approximation of its mixed-integer hull $D_\sigma$, 
then there is no need to intersect it with proper affine subspaces, i.e., we can simply choose $ \calH = \{\R^\ell\} $.
Otherwise, the next statement claims that there is a small family of affine subspaces covering all the fibers, 
such that the mixed-integer hull $(D\cap H)_\sigma$ of each slice can be described with fewer integer constraints; and we can recurse. 

Intuitively, the idea of how we exploit the Flatness Theorem is the following. 
If $D$ is not a good approximation of $D_\sigma$, then there is a point in $D\setminus D_\sigma$ certifying that $\rdist(D_\sigma,D)$ is large.
However, for this to be possible, the set of all fibers $\sigma^{-1}(\mathbb{Z}^k)$ cannot be extremely dense with respect to every direction. 
We exploit this through the Flatness Theorem to find a good direction with respect to which we can slice $D$ into polynomially (in $k$) many slices.

As mentioned before, Proposition~\ref{propDivide} is obtained through recursive slicing. 
The following lemma shows that we can find a family of affine subspaces for slicing $D$ once, thus reducing the number of integer constraints $k$ by one. 
With the recursive use of this lemma to eliminate all integer variables, the proof of Proposition~\ref{propDivide} becomes straightforward and is postponed to the end of this section.

\begin{lemma}
    \label{lemElimination}
    Let $ \ell, k \in \Z_{> 0} $, $ D \subseteq \R^\ell $ be a convex set, and $ \sigma : \R^\ell \to \R^k $ be an affine map.
    If
    \[
        \rdist(D_\sigma, D) = \delta > 0\enspace,
    \]
    then there exists a set $ \calH $ of affine subspaces of $ \R^\ell $ and an affine map $ \tau : \R^\ell \to \R^{k - 1} $
    satisfying:
    \begin{enumerate}[label=(\roman*),itemsep=0em]
        \item\label{item:lemElimSmallSize} $ |\calH| \le 1+ \frac{1 + \delta}{\delta}\flt(k)$,
        \item\label{item:lemElimCover} $ D \cap \sigma^{-1}(\Z^k) \subseteq  \bigcup_{H \in \calH} H $, and
        \item\label{item:lemElimKMinusOne} $ H \cap \sigma^{-1}(\Z^k) = H \cap \tau^{-1}(\Z^{k - 1}) $ for each $ H \in \calH $.
    \end{enumerate}
\end{lemma}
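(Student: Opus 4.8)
The plan is to exploit the Flatness Theorem on a cleverly chosen lattice-free convex set living in the image space $\R^k$. First I would analyze what it means for $\rdist(D_\sigma, D) = \delta > 0$: by Lemma~\ref{lem:rd2}~\ref{item:rdistSetDef}, there is a point $x^\star \in D$ that fails to lie in $(1+\lambda) D_\sigma - \lambda D_\sigma$ for $\lambda < \delta$, or more usefully (via the projection characterization and Lemma~\ref{lem:rd2}~\ref{item:rdistProjSmaller}) there is a linear functional witnessing that $D$ sticks out of $D_\sigma$ by a relative amount $\delta$. Pushing this through $\sigma$, I would look at the convex set $\sigma(D) \subseteq \R^k$ and its integer hull $\conv(\sigma(D) \cap \Z^k)$; the gap $\delta$ forces a point $p \in \sigma(D)$ whose distance to this integer hull, suitably normalized, is governed by $\delta$. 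The key geometric object is then a translate/dilate of $\sigma(D)$ around $p$ that is lattice-free: concretely, the set $p + \tfrac{\delta}{1+\delta}(\sigma(D) - p)$ (or a similar shrink by factor related to $\tfrac{\delta}{1+\delta}$) should contain no integer point, because any integer point in it would, when expanded, contradict that $p$ was a witness to relative distance $\delta$.

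Next I would apply the Flatness Theorem (Theorem~\ref{thm:flatness}) to this lattice-free set: there is a nonzero $v \in \Z^k \setminus \{\zerovec\}$ along which its width is at most $\flt(k)$, hence the width of $\sigma(D)$ itself along $v$ is at most $\tfrac{1+\delta}{\delta}\flt(k)$. Since $v$ is an integer vector, the hyperplanes $\{y \in \R^k : v\t y = t\}$ for $t \in \Z$ slice up $\sigma(D)$, and the width bound means only at most $1 + \tfrac{1+\delta}{\delta}\flt(k)$ of these integer levels $t$ meet $\sigma(D)$. Pulling back along $\sigma$, I define $\calH$ to be the collection of affine subspaces $H_t \coloneqq \sigma^{-1}(\{y : v\t y = t\}) = \{x \in \R^\ell : v\t \sigma(x) = t\}$ over those finitely many integers $t$ with $\sigma(D) \cap \{v\t y = t\} \neq \emptyset$. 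This gives property~\ref{item:lemElimSmallSize} immediately, and property~\ref{item:lemElimCover} holds because every point of $D \cap \sigma^{-1}(\Z^k)$ maps under $\sigma$ to an integer point of $\sigma(D)$, which lies on some level set $v\t y = t$ with $t = v\t(\text{that integer point}) \in \Z$.

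For property~\ref{item:lemElimKMinusOne}, I would produce $\tau$ by composing $\sigma$ with an affine map $\R^k \to \R^{k-1}$ that, together with the fixed linear form $v\t(\cdot)$, forms a unimodular (lattice-preserving) change of coordinates on $\Z^k$: extend the primitive part of $v$ to a basis of $\Z^k$, let the remaining $k-1$ basis functionals define the map to $\R^{k-1}$, and set $\tau$ to be that composed with $\sigma$, with the affine offset chosen on each $H_t$ so that the integrality constraint $\sigma(x) \in \Z^k$ restricted to $v\t\sigma(x) = t$ becomes exactly $\tau(x) \in \Z^{k-1}$. One subtlety: $v$ may not be primitive, but replacing $v$ by $v/\gcd$ only improves the width bound and keeps the slicing integral, so this is harmless; alternatively one argues directly that $\{y \in \Z^k : v\t y = t\}$ is either empty or an affine translate of a rank-$(k{-}1)$ sublattice, absorbing the empty case into fewer subspaces in $\calH$. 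I expect the main obstacle to be the first step: making precise exactly which shrunken copy of $\sigma(D)$ is lattice-free and getting the constant $\tfrac{1+\delta}{\delta}$ (rather than something weaker) out of the relative-distance hypothesis — this requires carefully unwinding the definition of $\rdist$ via Lemma~\ref{lem:rd2}~\ref{item:rdistSetDef} and tracking how the factor behaves under the projection $\sigma$. The bookkeeping for $\tau$ and the unimodularity is routine lattice theory once the slicing direction $v$ is in hand.
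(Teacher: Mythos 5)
There is a genuine gap in the first, and crucial, step. You shrink $\sigma(D)$ toward the witness point $p=\sigma(y)$ and claim the resulting set is lattice-free because an integer point in it would contradict the witness property of $p$. Unwinding that argument: if $z\in\Z^k$ lies in $(1-\mu)\sigma(y)+\mu\,\sigma(D)$ with $\mu=\frac{\lambda}{1+\lambda}$, then $z=(1-\mu)\sigma(y)+\mu\,\sigma(x)$ for some $x\in D$, the point $w\coloneqq(1-\mu)y+\mu x$ lies in $D\cap\sigma^{-1}(\Z^k)\subseteq D_\sigma$, and hence $y=(1+\lambda)w-\lambda x\in(1+\lambda)D_\sigma-\lambda D$. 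This does \emph{not} contradict $y\notin(1+\lambda)D_\sigma-\lambda D_\sigma$, because $x$ is only known to lie in $D$, not in $D_\sigma$. Indeed the claim is false: take $\ell=k=1$, $\sigma=\mathrm{id}$, $D=[-0.9,\,10.4]$, so $D_\sigma=[0,10]$ and $\delta=0.09$; the only witnesses $y$ are near $-0.9$, and the shrink of $\sigma(D)$ toward $y=-0.9$ by factor $\frac{\delta}{1+\delta}=\frac{9}{109}$ is $[-0.9,\,-0.9+\frac{9}{109}\cdot 11.3]\approx[-0.9,\,0.03]$, which contains $0$. The correct lattice-free set is the corresponding shrink of $\sigma(D_\sigma)$, the image of the mixed-integer hull, toward $\sigma(y)$: then the point $x$ above lies in $D_\sigma$ and the contradiction does go through. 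This is precisely what the paper does in Lemma~\ref{lemLatticeWidth} (working with $\lambda<\delta$ strictly and letting $\lambda\to\delta$, a limit you would also need).

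The error propagates into your cardinality bound. Flatness applied to the corrected set bounds the lattice width of $\sigma(D_\sigma)$, not of $\sigma(D)$, by $\frac{1+\delta}{\delta}\flt(k)$; along the same direction $v$ the width of $\sigma(D)$ can be larger (from $\rdist(\sigma(D_\sigma),\sigma(D))\le\delta$ one only gets roughly an extra factor $1+2\delta$). Hence you may not count, as you propose, all integer levels $t$ for which $\{y\in\R^k: v\t y=t\}$ meets $\sigma(D)$; to obtain $|\calH|\le 1+\frac{1+\delta}{\delta}\flt(k)$ you must count only the levels $t=v\t z$ with $z\in\sigma(D)\cap\Z^k$, and use that $\sigma(D)\cap\Z^k=\sigma(D_\sigma)\cap\Z^k$, so these levels are confined to $\sigma(D_\sigma)$ -- this is exactly the paper's set $I$, and the covering property still holds for this smaller family since every point of $D\cap\sigma^{-1}(\Z^k)$ maps to such a $z$. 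Your treatment of property~\ref{item:lemElimKMinusOne} (make $v$ primitive, extend to a unimodular matrix, set $\tau=\phi\circ\sigma$) matches the paper and is fine; note only that no per-subspace affine offset is needed (nor possible, since a single $\tau$ must serve all $H\in\calH$): on $H_i$ one has $v\t\sigma(x)=i\in\Z$, so $\sigma(x)\in\Z^k$ holds if and only if $\phi(\sigma(x))\in\Z^{k-1}$, automatically.
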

The affine subspaces (and the map $ \tau $) needed for Lemma~\ref{lemElimination} are implicitly given by the next
lemma, which shows that all fibers in $ D \cap \sigma^{-1}(\Z^k) $ can be covered by a small number of parallel lattice
hyperplanes.
\begin{lemma}
    \label{lemLatticeWidth}
    Let $ \ell, k \in \Z_{> 0} $, $ D \subseteq \R^\ell $ be a convex set, and $ \sigma : \R^d \to \R^k $ be an affine map.
    If
        $\rdist(D_\sigma, D) = \delta > 0$,  then
$\width\left(\sigma(D_\sigma) \right) \leq \frac{1 + \delta}{\delta} \flt(k)$.
\end{lemma}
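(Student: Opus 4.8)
The plan is to show the contrapositive-flavoured statement directly: assuming $\rdist(D_\sigma, D) = \delta > 0$, I will exhibit a single lattice direction along which $\sigma(D_\sigma)$ is narrow, by feeding the Flatness Theorem an auxiliary lattice-free convex set living in $\R^k$. First I would reduce to the image space: since $\rdist$ does not increase under the affine map $\sigma$ (Lemma~\ref{lem:rd2}~\ref{item:rdistProjSmaller}), and since $\sigma(D_\sigma) = \sigma(\conv(D \cap \sigma^{-1}(\Z^k))) = \conv(\sigma(D) \cap \Z^k) =: \bar{D}_{\mathrm{id}}$ (using that convex hulls commute with affine maps, and that $\sigma(\sigma^{-1}(\Z^k))$ meets $\sigma(D)$ exactly in the integer points it reaches), the problem becomes: for the convex set $E := \sigma(D) \subseteq \R^k$ and its integer hull $E_{\Z} := \conv(E \cap \Z^k)$, if $\rdist(E_\Z, E) \ge \delta$ then $\width(E_\Z) \le \frac{1+\delta}{\delta}\flt(k)$. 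Here I am using that $\rdist(D_\sigma, D) = \delta$ forces $\rdist(E_\Z, E) \ge \delta$ is not quite what projection gives — projection gives the reverse — so more carefully I want to keep track that it is exactly $\width(\sigma(D_\sigma))$ I must bound, and $\sigma(D_\sigma)$ is the integer hull of $\sigma(D)$ intersected appropriately; the cleanest route is to work with $P := \sigma(D_\sigma)$ directly as a convex subset of $\R^k$ and compare it to $\sigma(D)$.

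The core geometric step: if $\width(P) = w$ is large, I want to deduce that $P$ is itself a good approximation of any convex set containing it whose integer points all lie in $P$ — i.e. that $\rdist(P, \sigma(D))$ is at most something like $\flt(k)/w$, giving $w \le \flt(k)/\delta$ up to the $(1+\delta)$ factor. Concretely, by the definition of $\rdist$ via part~\ref{item:rdistSetDef} of Lemma~\ref{lem:rd2}, I must show $\sigma(D) \subseteq (1+\lambda)P - \lambda P$ for $\lambda = \flt(k)/w$ (roughly). Take any point $q \in \sigma(D)$; if $q \in P$ we are done, so suppose not. Consider the convex set $P' := \conv(P \cup \{q\})$, scaled down: the set $(P' \setminus P)$, or more precisely a suitable lattice-free slice of it, is lattice-free because all integer points of $\sigma(D)$ lie in $\sigma(D_\sigma)$ hence in $P$ — wait, that needs $q$'s ``new'' region to contain no lattice points, which holds since $\sigma(D_\sigma) = \conv(\sigma(D)\cap\Z^k)$ captures every lattice point of $\sigma(D) \supseteq P'$. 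So $P' \setminus \ri(P)$, intersected suitably, is lattice-free; by the Flatness Theorem it has lattice width $\le \flt(k)$ in some direction $v \in \Z^k\setminus\{0\}$. But $P$ has width $w$ along that same $v$ — actually at least $w$ since $w = \width(P)$ is the infimum over directions — and $q$ sticks out beyond $P$ along $v$ by a distance controlled by $\flt(k)$ relative to $w$; geometrically, the ``cap'' $\conv(P\cup\{q\}) \setminus P$ has width $\le \flt(k)$ along $v$ while $P$ has width $\ge w$ along $v$, which pins $q$ into $(1 + \flt(k)/w)P - (\flt(k)/w)P$. Solving the resulting inequality $\delta \le \rdist(P, \sigma(D)) \le \flt(k)/w$ yields $w \le \flt(k)/\delta$; tracking constants carefully gives the claimed $w \le \frac{1+\delta}{\delta}\flt(k)$.

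\textbf{Main obstacle.} The delicate point is the lattice-freeness argument and the precise bookkeeping of which set the Flatness Theorem is applied to. The Flatness Theorem requires a \emph{full-dimensional} closed convex lattice-free set in $\R^k$, but $\sigma(D_\sigma)$ and the relevant ``cap'' need not be full-dimensional, and may not be closed. I would handle this by first passing to the affine hull of $\sigma(D)$ and working with the induced lattice there (of some rank $k' \le k$), noting $\flt(k') \le \flt(k)$ by monotonicity; and by replacing sets with their closures, which is harmless for both $\rdist$ (since $\rdist(A,B) = \rdist(A,\cl(B))$) and for $\width$. The second subtlety is getting the $(1+\delta)$ rather than a bare $1$: this comes from being careful that the cap $\conv(P \cup \{q\}) \setminus P$ is only \emph{almost} lattice-free — one must slightly shrink toward $q$ or argue that the relative-interior version is genuinely lattice-free — and the $\frac{1+\delta}{\delta}$ absorbs the loss. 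I expect the write-up to spend most of its effort exactly on making ``$q$ sticks out by at most $\flt(k)$ along the flat direction, while $P$ has girth $\ge w$ there, hence $q \in (1+\flt(k)/w)P - (\flt(k)/w)P$'' fully rigorous.
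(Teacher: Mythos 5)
Your reduction to the image space $\R^k$ is where the argument breaks. Writing $P:=\sigma(D_\sigma)$, your concluding chain ``$\delta \le \rdist(P,\sigma(D)) \le \flt(k)/w$'' uses the lower bound $\rdist(\sigma(D_\sigma),\sigma(D)) \ge \delta$, which does not follow from the hypothesis: Lemma~\ref{lem:rd2}~\ref{item:rdistProjSmaller} only gives $\rdist(\sigma(D_\sigma),\sigma(D)) \le \rdist(D_\sigma,D)=\delta$ (as you yourself note), and the gap can genuinely vanish under $\sigma$ while staying large upstairs. For instance, take $\ell=2$, $k=1$, $\sigma(x_1,x_2)=x_1$ and $D=\conv\{(0,0),(1,0),(0,1),(1,1),(\tfrac{1}{2},M)\}$ with $M$ large: then $D_\sigma=[0,1]^2$, so $\rdist(D_\sigma,D)\ge M-1$, yet $\sigma(D_\sigma)=\sigma(D)=[0,1]$ and $\rdist(\sigma(D_\sigma),\sigma(D))=0$. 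So no argument that only compares $\sigma(D_\sigma)$ with $\sigma(D)$ inside $\R^k$ can use the hypothesis; the relevant information lives in $\R^\ell$ and must be carried along. The paper's proof does exactly this: for $0<\lambda<\delta$ it picks, via Lemma~\ref{lem:rd2}~\ref{item:rdistSetDef}, a witness $y\in D$ with $y\notin(1+\lambda)D_\sigma-\lambda D_\sigma$, so that with $\mu=\sfrac{\lambda}{(1+\lambda)}$ the homothet $(1-\mu)y+\mu D_\sigma$ is disjoint from $D_\sigma$ in $\R^\ell$; it then applies the Flatness Theorem to the projected homothet $B=(1-\mu)\sigma(y)+\mu\,\sigma(D_\sigma)\subseteq\R^k$, whose lattice-freeness is certified by lifting: a lattice point $z=(1-\mu)\sigma(y)+\mu\sigma(x)\in B$ with $x\in D_\sigma$ would give $w=(1-\mu)y+\mu x\in D\cap\sigma^{-1}(\Z^k)\subseteq D_\sigma$, contradicting the disjointness. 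Since $B$ is a $\mu$-scaled copy of $\sigma(D_\sigma)$, this yields $\width(\sigma(D_\sigma))=\width(B)/\mu\le\frac{1+\lambda}{\lambda}\flt(k)$.

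A secondary, independent problem is your application of the Flatness Theorem to the ``cap'' $\conv(P\cup\{q\})\setminus P$: this set is in general not convex (already for $P$ a square in the plane and $q$ seeing two of its facets), so the theorem does not apply to it, and it is not clear which convex ``lattice-free slice'' would still force $q\in(1+\flt(k)/w)P-(\flt(k)/w)P$. The shrunk homothet $(1-\mu)q+\mu P$ is the convex, lattice-free surrogate that makes this step rigorous; but even with that repair your argument would only establish the special case $\sigma=\mathrm{id}$, and it is the reduction to that case that fails as explained above.
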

\begin{proof}
    We assume that the set $\sigma(D_\sigma)$ is full-dimensional in $\R^k$, for otherwise its lattice width is $ 0 $ and the statement holds trivially.
    Notice that it is enough to show that, for any value $\lambda>0$ with $\lambda < \rd(D_\sigma, D)$, 
    the lattice width of $\sigma(D_\sigma)$ is bounded by $\frac{1+\lambda}{\lambda} \flt(k)$.
    
    For such a value $0<\lambda<\delta$, it follows from Lemma~\ref{lem:rd2}~\ref{item:rdistSetDef} that there exists a point $ y \in D $ 
    such that $ y \notin (1 + \lambda) D_\sigma - \lambda D_\sigma $. 
    In turn, this means that the sets $ y + \lambda D_\sigma $ and $ (1 + \lambda) D_\sigma $ are disjoint.
    Scaling both sets by $ \frac{1}{1 + \lambda} $ and setting $ \mu \coloneqq \frac{\lambda}{1 + \lambda} \in (0,1) $, we
    obtain
    \begin{equation}
        \label{eqProoflemLatticeWidth1}
        \left( (1 - \mu) y + \mu D_\sigma \right) \cap D_\sigma = \emptyset.
    \end{equation}
    Now consider the (convex, closed, and full-dimensional) set
    \begin{equation}\label{eq:defOfB}
    B \coloneqq (1 - \mu)\sigma(y) +\mu \sigma(D_\sigma) \subseteq \R^k\enspace,
    \end{equation}
    and notice that $\sigma(D_\sigma)$ can be obtained from $B$ via a scaling with factor $\sfrac{1}{\mu}$ followed by a translation. Hence,
    \begin{equation*}
        \width(D_\sigma) = \frac{1}{\mu} \width(B) = \frac{1 + \lambda}{\lambda} \width(B)\enspace.
    \end{equation*}
    Thus, to prove the lemma, it suffices to show that $B$ is lattice free, and consequently $\width(B) \leq \flt(k)$.
    
    Assume, for the sake of deriving a contradiction, that there is a point $z\in B\cap \Z^k$. 
    As $z$ is in $B$, and by the definition of $B$ given in~\eqref{eq:defOfB}, there exists some $ x \in D_\sigma $ such that $ z = (1 - \mu) \sigma(y) + \mu \sigma(x)$. 
    Now, consider the point 
    \[
     w\coloneqq (1-\mu)y+\mu x \enspace.
    \]
    It is clear from its definition that $w\in (1-\mu)y+\mu D_\sigma$. 
    Moreover, $w\in D$, because it is a convex combination of points $x$ and $y$ in $D$. 
    And finally, $w\in \sigma^{-1}(\Z^k)$, because $\sigma(w)=(1-\mu)\sigma(y)+\mu\sigma(x) = z \in\Z^k$. 
    Therefore, we obtain $w\in \left( (1 - \mu) y + \mu D_\sigma \right) \cap D_\sigma $, a contradiction to~\eqref{eqProoflemLatticeWidth1}.
\end{proof}

We are now ready to prove Lemma~\ref{lemElimination}.
\ifbool{SODAfinal}{\begin{myproof}{Proof of Lemma~\ref{lemElimination}}}{\begin{proof}[Proof of Lemma~\ref{lemElimination}]}
    By the hypothesis and by Lemma~\ref{lemLatticeWidth}, we have $\width(\sigma(D_\sigma))\leq \frac{1 + \delta }{\delta} \flt(k)$.
    Hence, there exists a vector $ v \in \Z^k \setminus \{\zerovec\} $ such that the set
    \ifbool{SODAfinal}{
    \begin{align*}
     I &\coloneqq \{ v\t z : z \in \sigma(D_\sigma) \cap \Z^k \} \\
       &= \{ v\t z : z \in \sigma(D) \cap \Z^k \} \subseteq \Z
    \end{align*}
    }{
    \begin{equation*}
    I \coloneqq \{ v\t z : z \in \sigma(D_\sigma) \cap \Z^k \} = \{ v\t z : z \in \sigma(D) \cap \Z^k \} \subseteq \Z
    \end{equation*}
    }
    has cardinality $|I|\leq 1+ \frac{1 + \delta }{\delta} \flt(k) $.
    We may assume $ \gcd(v) = 1 $, for otherwise replacing $ v $ by $ \sfrac{v}{\gcd(v)}$ can only decrease the cardinality of $I$.
  By the definition of $I$ we have
\begin{equation}\label{eq:coveredBySlices}
  \sigma(D) \cap \mathbb{Z}^k \subseteq \bigcup_{i\in I} \{x\in \mathbb{R}^k : v\t x = i\}\enspace,
\end{equation}
i.e., the set $\sigma(D)\cap \mathbb{Z}^k$ can be covered by only $|I|$ many hyperplanes. 
We now take the pre-images of these hyperplanes under $\sigma$ to define our family $\mathcal{H}$. Hence, let $\mathcal{H} \coloneqq \{H_i : i \in I\}$, where
\ifbool{SODAfinal}{
\begin{align*}
 H_i \coloneqq & \{y\in \mathbb{R}^\ell : v\t \sigma(y)=i\} \\
     =& \sigma^{-1}\left(\{x\in \mathbb{R}^k : v\t x = i\}\right) \quad \forall i\in I.
\end{align*}

}{
\[
 H_i \coloneqq \{y\in \mathbb{R}^\ell : v\t \sigma(y)=i\}
    = \sigma^{-1}\left(\{x\in \mathbb{R}^k : v\t x = i\}\right) \quad \forall i\in I.
\]
}
Clearly, $ \calH $ satisfies property~\ref{item:lemElimSmallSize} of Lemma~\ref{lemElimination}. 
Moreover, property~\ref{item:lemElimCover} follows immediately from~\eqref{eq:coveredBySlices} and the fact that 
the $H_i$'s are the pre-images of the hyperplanes in~\eqref{eq:coveredBySlices}.
It remains to show that $\mathcal{H}$ fulfills property~\ref{item:lemElimKMinusOne} of Lemma~\ref{lemElimination}.
Since $ \gcd(v) = 1 $, it is well known that there exists a unimodular matrix $ U \in \Z^{k \times k} $, i.e., $\det(U)\in \{-1,1\}$, with the first row being $ v\t $.\footnote{
The existence of such a unimodular matrix $U$ with $v\t$ as its first row easily follows from the fact that the Hermite Normal Form of any vector with $\gcd=1$ is $e_1=(1,0,\ldots, 0)$. 
Since the (column) Hermite Normal Form can be obtained by integer column operations, these operations can be described by a unimodular matrix $A\in \mathbb{Z}^{k\times k}$. Hence, there is a unimodular matrix $A\in \mathbb{Z}^{k\times k}$ such that $v\t A = e_1\t$, and one can choose $U=A^{-1}$.
}
We have 
\begin{equation}\label{eq:unimodIntegral}
\left(z\in \mathbb{Z}^k \iff Uz\in \mathbb{Z}^k\right)
\qquad \forall z\in\mathbb{R}^k\enspace,
\end{equation}
because $U^{-1}$ is integral.
Let $ U' \in \mathbb{Z}^{(k - 1) \times k} $ be the matrix that arises from $ U $ by removing the first row.
Clearly, by defining $ \phi : \R^k \to \R^{k - 1} $ via $ \phi(x) \coloneqq U' x $, we can rephrase~\eqref{eq:unimodIntegral} as follows:
\ifbool{SODAfinal}{
\begin{equation}\label{eq:unimodIntegralRephrased}
        \left( z \in \Z^k \iff 
        \begin{array}{c} 
        v\t z \in \Z \, \text{ and } \\
        \phi(z) \in \Z^{k - 1}
        \end{array}
        \right)
\quad \forall z\in \mathbb{R}^k\enspace.
\end{equation}
}{
\begin{equation}\label{eq:unimodIntegralRephrased}
        \left( z \in \Z^k \iff v\t z \in \Z \, \text{ and } \, \phi(z) \in \Z^{k - 1}\right)
\qquad \forall z\in \mathbb{R}^k\enspace.
\end{equation}
}
    Since $ I \subseteq \Z $, we thus obtain for all $ i \in I $
    \ifbool{SODAfinal}{
    \begin{align*}
        H_i &\cap \sigma^{-1}(\Z^k) \\
        & = \{ y \in \R^\ell : v\t \sigma(y) = i, \, \sigma(y) \in \Z^k \} \\
        & = \{ y \in \R^\ell : v\t \sigma(y) = i, \, \phi(\sigma(y)) \in \Z^{k-1} \} \\
        & = H_i \cap (\phi \circ \sigma)^{-1}(\Z^{k-1}),
    \end{align*}
    }{
    \begin{align*}
        H_i \cap \sigma^{-1}(\Z^k)
        &= \{ y \in \R^\ell : v\t \sigma(y) = i, \, \sigma(y) \in \Z^k \}
        = \{ y \in \R^\ell : v\t \sigma(y) = i, \, \phi(\sigma(y)) \in \Z^{k-1} \} \\
        & = H_i \cap (\phi \circ \sigma)^{-1}(\Z^{k-1}),
    \end{align*}
    }
where the second equality follows from~\eqref{eq:unimodIntegralRephrased}. 
Hence, by setting $\tau \coloneqq \phi \circ \sigma$, we have that $\mathcal{H}$ fulfills property~\ref{item:lemElimKMinusOne} of Lemma~\ref{lemElimination}, as desired.
\ifbool{SODAfinal}{\end{myproof}}{\end{proof}}

Finally, we provide the proof of Proposition~\ref{propDivide}.
\ifbool{SODAfinal}{\begin{myproof}{Proof of Proposition~\ref{propDivide}}}{\begin{proof}[Proof of Proposition~\ref{propDivide}]}
    We proceed by induction over $ k \ge 0 $ and note that the claim is trivial for $ k = 0 $ by choosing $ \calH = \{\R^\ell \} $.
    Now let $ k \ge 1 $ and observe that we may assume that $ \rdist(D_\sigma, D ) > \delta $, as otherwise we can again choose $ \calH = \{ \R^d \} $.
    By Lemma~\ref{lemElimination} there exists a family $ \calL $ of affine subspaces of $ \R^\ell $ and an affine map $ \tau: \R^\ell \to \R^{k - 1} $ such that
    \begin{align}
        \label{eqProofPropDivide1}
        |\calL| & \le 1+ \frac{1 + \delta }{\delta} \flt(k) \enspace, \\
        \label{eqProofPropDivide2}
        D \cap \sigma^{-1}(\Z^k) & \subseteq \bigcup_{L \in \calL} L \enspace, \text{ and} \\
        \label{eqProofPropDivide3}
        L \cap \sigma^{-1}(\Z^k) & = L \cap \tau^{-1}(\Z^{k - 1}) \quad \forall L \in \calL \enspace.
    \end{align}
    For each such $ L \in \calL $, by the induction hypothesis applied to $ D \cap L $ and $ \tau $, there exists a family $\calH_L $ of affine subspaces in $ \R^\ell $ such that
    \ifbool{SODAfinal}{
    \begin{align}
        \label{eqProofPropDivide4}
        & |\calH_L| \le \prod_{i = 1}^{k-1} \left( 1+ \frac{1 + \delta }{\delta} \flt(i) \right), \\
        \label{eqProofPropDivide5}
        & D \cap L \cap \tau^{-1}(\Z^{k-1}) \subseteq \bigcup_{H \in \calH_L} H \enspace, \text{ and} \\
        \label{eqProofPropDivide6}
        & \rdist ((D\cap L\cap H)_\tau , D\cap L \cap H) \le \delta,
    \end{align}
    where \eqref{eqProofPropDivide6} holds for every $H \in \calH_L.$ 
    }{
    \begin{align}
        \label{eqProofPropDivide4}
        |\calH_L| & \le \prod_{i = 1}^{k-1} \left( 1+ \frac{1 + \delta }{\delta} \flt(i) \right), \\
        \label{eqProofPropDivide5}
        D \cap L \cap \tau^{-1}(\Z^{k-1}) & \subseteq \bigcup_{H \in \calH_L} H \enspace, \text{ and} \\
        \label{eqProofPropDivide6}
        \rdist ((D\cap L\cap H)_\tau , D\cap L \cap H) & \le \delta \quad \forall H \in \calH_L.
    \end{align}
    }
    Defining the set $ \calH \coloneqq \{ L \cap H : L \in \calL, \, H \in \calH_L \} $, we clearly satisfy~\ref{item:propDivFamSize} due to~\eqref{eqProofPropDivide1} and~\eqref{eqProofPropDivide4}.
    Furthermore, we have
    \ifbool{SODAfinal}{
    \begin{align*}
        & D \cap \sigma^{-1}(\Z^k) \\
        & = \bigcup_{L \in \calL} \left(D \cap L \cap \sigma^{-1}(\Z^k)\right) & \text{ (by \eqref{eqProofPropDivide2})} \\
        & = \bigcup_{L \in \calL} \left(D \cap L \cap \tau^{-1}(\Z^{k - 1})\right) & \text{ (by \eqref{eqProofPropDivide3})} \\
        & \subseteq \bigcup_{L \in \calL} \bigcup_{H \in \calH_L} L\cap H\enspace,   & \text{ (by \eqref{eqProofPropDivide5})}
    \end{align*}
    }{
    \begin{align*}
        D \cap \sigma^{-1}(\Z^k)
        & = \bigcup_{L \in \calL} \left(D \cap L \cap \sigma^{-1}(\Z^k)\right) & \text{ (by \eqref{eqProofPropDivide2})} \\
        & = \bigcup_{L \in \calL} \left(D \cap L \cap \tau^{-1}(\Z^{k - 1})\right) & \text{ (by \eqref{eqProofPropDivide3})} \\
        & \subseteq \bigcup_{L \in \calL} \bigcup_{H \in \calH_L} L\cap H\enspace,   & \text{ (by \eqref{eqProofPropDivide5})}
    \end{align*}
    }
    which shows~\ref{item:propDivIntConstr}.
    Finally,~\ref{item:propDivRdist} is a direct consequence of \eqref{eqProofPropDivide6}, 
    and the fact $(D\cap L\cap H)_\sigma = (D\cap L\cap H)_\tau$ for each $L\in\calL$ and $H\in\calH_L$, by \eqref{eqProofPropDivide3}.
\ifbool{SODAfinal}{\end{myproof}}{\end{proof}}
 \section{Applications}
\label{secApplications}
In this section, we demonstrate how our framework can be applied to obtain strong lower bounds on the number of integer
variables in MILEFs in several relevant settings.
Among other results, we will obtain the statements mentioned in Theorems~\ref{thm:keyConsequences1}
and~\ref{thm:keyConsequences2} using Theorem~\ref{thm:fromMILEFToLEF} and existing inapproximability results on LEFs.
In fact, we prove stronger versions of all these statements, as we also rule out the existence of \emph{approximate}
MILEFs.
To this end, we first derive the following direct consequence of Theorem~\ref{thm:fromMILEFToLEF}, which is suited for
the applications we consider.
\begin{corollary}
    \label{corollary:milefHardness}
    Let $ \alpha, \beta, \gamma, \eps > 0 $ be constants with $ \eps < \alpha, \gamma $.
    Let $ C $ be any non-empty convex set that does not admit an $ \frac{\alpha}{n^\beta} $-LEF of size at most $
    2^{\gamma n} $.
    Then any $ \frac{\alpha - \varepsilon}{n^\beta} $-MILEF of $ C $ of size at most $ 2^{(\gamma - \varepsilon) n} $
    has $ \Omega \left( \sfrac{n}{\log n} \right) $ integer variables.
\end{corollary}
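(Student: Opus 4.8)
The plan is to obtain the lower bound on $k$ directly, by feeding a hypothetical MILEF into Theorem~\ref{thm:fromMILEFToLEF} with a carefully tuned accuracy parameter and then comparing the resulting LEF against the assumed inapproximability of $C$. Since the target $\Omega(n/\log n)$ is asymptotic, I would first reduce to the case where $n$ is large enough that $\frac{\alpha-\eps}{n^\beta}\le 1$ and $\frac{\eps}{n^\beta}\in(0,1]$ (the finitely many small values of $n$ are absorbed into the hidden constant), and I would also dispatch the trivial case $k\ge n$, where $k=\Omega(n)$ already suffices.

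So let $(Q,\sigma,\pi)$ be a $\frac{\alpha-\eps}{n^\beta}$-MILEF of $C$ of complexity $(m,k)$ with $m\le 2^{(\gamma-\eps)n}$ and $k<n$. The first step is to apply Theorem~\ref{thm:fromMILEFToLEF} with the MILEF accuracy $\eps_0:=\frac{\alpha-\eps}{n^\beta}\in[0,1]$ and with $\delta:=\frac{\eps}{n^\beta}\in(0,1]$. This yields an $(\eps_0+\delta)$-LEF of $C$ of size $m\cdot(1+k/\delta)^{O(k)}$. The point of this choice is that $\eps_0+\delta=\frac{\alpha-\eps}{n^\beta}+\frac{\eps}{n^\beta}=\frac{\alpha}{n^\beta}$, so what we have produced is in fact an $\frac{\alpha}{n^\beta}$-LEF of $C$, of size $s:=m\cdot\bigl(1+\tfrac{kn^\beta}{\eps}\bigr)^{O(k)}$.

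The second step is to invoke the hypothesis: since $C$ admits no $\frac{\alpha}{n^\beta}$-LEF of size at most $2^{\gamma n}$, we must have $s>2^{\gamma n}$. Taking base-$2$ logarithms and using $m\le 2^{(\gamma-\eps)n}$ gives $(\gamma-\eps)n+O(k)\cdot\log\bigl(1+\tfrac{kn^\beta}{\eps}\bigr)>\gamma n$, i.e.\ $O(k)\cdot\log\bigl(1+\tfrac{kn^\beta}{\eps}\bigr)>\eps n$. Because $k<n$ and $\beta$, $\eps$ are fixed constants, $\log\bigl(1+\tfrac{kn^\beta}{\eps}\bigr)=O(\log n)$, so this simplifies to $O(k\log n)>\eps n$, and therefore $k=\Omega\bigl(\tfrac{n}{\log n}\bigr)$, where the hidden constant depends only on $\alpha,\beta,\gamma,\eps$ and the universal constant hidden in Theorem~\ref{thm:fromMILEFToLEF}.

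I do not expect any genuine obstacle here: the argument is essentially bookkeeping on top of Theorem~\ref{thm:fromMILEFToLEF}. The only points that require a little care are (i) choosing $\delta$ so that the accuracy loss $\eps_0+\delta$ lands exactly on $\frac{\alpha}{n^\beta}$ while keeping $1/\delta$ only polynomial in $n$, so that the size blow-up $(1+k/\delta)^{O(k)}$ stays below $2^{\gamma n}$ precisely when $k=o(n/\log n)$; and (ii) the boundary regimes (small $n$, and $k\ge n$), which are why I would handle those cases up front.
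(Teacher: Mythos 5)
Your proposal is correct and follows essentially the same route as the paper's proof: apply Theorem~\ref{thm:fromMILEFToLEF} with $\delta=\sfrac{\eps}{n^\beta}$ so the resulting LEF accuracy is exactly $\sfrac{\alpha}{n^\beta}$, then compare the size bound $m\cdot(1+\sfrac{k}{\delta})^{O(k)}$ against the assumed threshold $2^{\gamma n}$ and extract $k=\Omega(\sfrac{n}{\log n})$ from the logarithm, after dispatching the trivial regimes (small $n$, $k\ge n$) exactly as the paper does via its assumptions $n\ge 2$, $k\le n$.
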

\begin{proof}
    Let $ C $ be any convex set as in the hypothesis and suppose it admits a $ \frac{\alpha - \eps}{n^\beta} $-MILEF of
    complexity $ (m, k) $ where $ m \le 2^{(\gamma - \eps) n} $.
    We may assume that $ n \ge 2 $ and $ k \le n $.
    By Theorem~\ref{thm:fromMILEFToLEF}, $ C $ admits a $ \varrho $-LEF of size $ s $, where
    \[
        \varrho \coloneqq \frac{\alpha - \eps}{n^\beta} + \frac{\eps}{n^\beta} = \frac{\alpha}{n^\beta}
    \]
    and
    \[
        s \coloneqq m \left(1 + \frac{k}{\sfrac{\epsilon}{n^\beta}} \right)^{c k},
    \]
    for some constant $ c > 0 $.
    By the assumption, we must have
    \ifbool{SODAfinal}{
    \begin{align*}
        2^{\gamma n}
        & \le s
        \le m \left( k n^{\beta'} \right)^{c k} \\
        & \le m n^{(\beta' + 1) ck}
        \le 2^{(\gamma - \eps)n} n^{(\beta' + 1) ck}
    \end{align*}
    }{
    \begin{align*}
        2^{\gamma n}
        \le s
        \le m \left( k n^{\beta'} \right)^{c k}
        \le m n^{(\beta' + 1) ck}
        \le 2^{(\gamma - \eps)n} n^{(\beta' + 1) ck}
    \end{align*}
    }
    for some constant $ \beta' > 0 $, where the second inequality follows from $ n \ge 2 $ and by choosing $ \beta' $
    sufficiently large, the third inequality follows from $ k \le n $, and the last inequality is due to $ m \le
    2^{(\gamma - \eps)n} $.
    This implies $ k \ge \frac{\eps n}{c (\beta' + 1) \log n} $, which yields the claim.
\end{proof}
Note that the above statement allows for quickly translating an inapproximability result on LEFs into a certain
inapproximability result on MILEFs.
Besides the proofs of Theorems~\ref{thm:keyConsequences1} and~\ref{thm:keyConsequences2}, the main purpose of this
section is to demonstrate how several existing inapproximability results on LEFs in the literature, which are usually
stated using different notions of approximations, can be transferred into inapproximability results on LEFs as required
in the statement of Corollary~\ref{corollary:milefHardness}.
 \subsection{Matching polytope}
\label{secApplicationMatching}
We start by applying our framework to the matching polytope of the complete graph, to which we simply refer to as the \emph{matching polytope}, 
and which is defined as the convex hull of the characteristic vectors of all matchings in $ K_n = (V, E) $ (the complete graph on $ n $ nodes). 
We denote this polytope by $\match(n) \subseteq \R^E$.
Recall that a \emph{matching} is an edge subset $ M \subseteq E $ such that every vertex in $ (V, M) $ has degree at
most one.
A well-known result by Edmonds~\cite{edmonds1965maximum} is that this polytope has an exponential number of
facets, even though any linear function can be optimized over it in strongly polynomial time.
The question of whether the matching polytope admits an extended formulation of size polynomial in $ n $ was open for a
long time, until Rothvo{\ss}~\cite{Rothvoss2014} proved that its extension complexity is exponential in $ n $.
More recently, it was even proved that this polytope cannot be well approximated by a polytope of low extension
complexity:
\begin{theorem}[{\cite{BraunP2015}, see also~\cite[Thm.~16]{Rothvoss2014}}]
    \label{thmBraun}
    There exist constants $ \alpha, \gamma > 0 $ such that every polytope $ K \subseteq \R^E $ with
    $ \match(n) \subseteq K \subseteq (1 + \frac{\alpha}{n}) \match(n) $ satisfies $\xc(K) > 2^{\gamma n} $.
\end{theorem}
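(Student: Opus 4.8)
The plan is to derive Theorem~\ref{thmBraun} essentially as a restatement of the inapproximability result of Braun and Pokutta~\cite{BraunP2015} (also attributable to~\cite[Thm.~16]{Rothvoss2014}), mediated by the relation between the multiplicative notion of approximation used there (a polytope $K$ sandwiched between $\match(n)$ and a scaled copy $(1+\sfrac{\alpha}{n})\match(n)$) and whatever notion those sources state their bound in. Concretely, the known lower bounds are typically phrased either in terms of the maximization LP gap $\LPgapMax(\match(n),K)$, or directly in the sandwiching form; the first thing I would do is fix the precise source statement and check that ``$K$ between $\match(n)$ and $(1+\eps)\match(n)$ forces $\xc(K)$ exponential for $\eps = \Theta(1/n)$'' is exactly what is proved there (Rothvo\ss's Theorem~16, for instance, already gives a slack-matrix / hyperplane-separation argument yielding this). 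If the source instead gives an additive or gap-based formulation, I would translate it: since $\match(n)$ is down-closed in $\R^E_{\geq 0}$ (deleting edges from a matching yields a matching), Lemma~\ref{lemma:rdistMaxGap} gives $\rd(\match(n),K) = \LPgapMax(\match(n),K)$, and the sandwiching $K \subseteq (1+\sfrac{\alpha}{n})\match(n)$ is equivalent to $\LPgapMax(\match(n),K)\leq \sfrac{\alpha}{n}$ because for a linear objective $c\in\R^E_{\geq 0}$ scaling the feasible region by $(1+\sfrac{\alpha}{n})$ scales the optimum by the same factor.

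The core combinatorial input I would invoke without reproving is the exponential lower bound on the nonnegative rank of the relevant slack matrix. The standard route: consider the family of ``matching'' inequalities $\sum_{e\in E(U)} x_e \leq \lfloor |U|/2\rfloor$ for odd $U\subseteq V$, together with the vertices of $\match(n)$ given by suitable matchings; the corresponding slack matrix (or a well-chosen submatrix indexed by sets $U$ of a fixed small size and matchings covering $U$ appropriately) has entries $0$ and positive integers, and one shows via a hyperplane/rectangle-covering argument — Rothvo\ss's original argument, or the Braun–Pokutta refinement — that approximating it within a $(1+\sfrac{\alpha}{n})$ multiplicative factor still requires nonnegative rank $2^{\Omega(n)}$. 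Since $\xc(K)$ is at least the nonnegative rank of any slack matrix of a polytope sandwiched as in the hypothesis (up to an additive constant for the number of inequalities used), this yields $\xc(K) > 2^{\gamma n}$ for an appropriate constant $\gamma>0$, with $\alpha$ the constant coming out of the counting argument.

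I do not expect a genuinely hard step here, since this theorem is cited, not newly proved — the work is bookkeeping: pinning down the exact constants $\alpha,\gamma$ from the cited sources and making sure the notion of ``$K$ with $\match(n)\subseteq K\subseteq(1+\sfrac{\alpha}{n})\match(n)$'' matches the notion in which~\cite{BraunP2015} and~\cite{Rothvoss2014} state their result. The one mild subtlety worth flagging is that $\xc(K)$ is defined as the minimum number of facets of an extended formulation, so I would note that for any such sandwiched $K$, every LEF $Q$ of $K$ of size $s$ gives a slack matrix with $s$ rows whose nonnegative rank is $\Omega(s)$-bounded, and the exponential lower bound on that nonnegative rank transfers to $s$; this is routine and only affects constants. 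Thus the final statement follows directly, and it feeds into Corollary~\ref{corollary:milefHardness} with $\beta = 1$ to produce the $\Omega(n/\log n)$ lower bound on integer variables in Theorem~\ref{thm:keyConsequences1}.
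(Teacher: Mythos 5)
Your proposal is correct and matches the paper's treatment: Theorem~\ref{thmBraun} is not proved in the paper but imported verbatim from~\cite{BraunP2015} (see also~\cite[Thm.~16]{Rothvoss2014}), which is exactly what you do, with the remaining work being the routine matching of the multiplicative sandwiching notion to the cited statement. The translation into relative distance via Lemma~\ref{lemma:rdistMaxGap} that you sketch is precisely what the paper performs afterwards in Corollary~\ref{corBraun}, not inside this theorem.
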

Let us translate this result using the notion of relative distance.
\begin{corollary}
    \label{corBraun}
    There exist constants $ \alpha, \gamma > 0 $ such that $ \match(n) $ admits no $ \frac{\alpha}{n} $-LEF of
    size at most $ 2^{\gamma n} $.
\end{corollary}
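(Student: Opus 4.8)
The plan is to deduce Corollary~\ref{corBraun} from Theorem~\ref{thmBraun} by showing that the relative distance between $\match(n)$ and any polytope $K$ sandwiched as $\match(n)\subseteq K$ is essentially controlled by the multiplicative slack in the scaling relation $K\subseteq(1+\tfrac{\alpha}{n})\match(n)$. The reason this connection works is that $\match(n)$ is a $0/1$-polytope contained in $\R^E_{\geq 0}$, so multiplicative and additive/geometric notions of closeness are comparable up to constant factors. Concretely, I would argue by contradiction: suppose for \emph{every} constant $\alpha'>0$ and $\gamma'>0$ there is a $\tfrac{\alpha'}{n}$-LEF $(Q,\pi)$ of $\match(n)$ of size at most $2^{\gamma' n}$, i.e.\ $\match(n)\subseteq\pi(Q)$ with $\rdist(\match(n),\pi(Q))\leq\tfrac{\alpha'}{n}$, and then I would produce from $K:=\pi(Q)$ (or a suitable truncation of it) a polytope contradicting Theorem~\ref{thmBraun}.

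The key steps, in order, would be as follows. First, by intersecting $\pi(Q)$ with the box $[0,1]^E$ (or more conservatively with $\R^E_{\geq 0}$, then with $[0,1]^E$), I may assume the approximating polytope $K$ is bounded and still contains $\match(n)$; intersecting with halfspaces does not increase extension complexity by more than an additive constant per halfspace, so $\xc(K)$ stays below $2^{\gamma'n}$ up to adjusting $\gamma'$, and intersecting a superset of $\match(n)$ with the box cannot make the relative distance larger (this uses the projection/monotonicity behaviour of $\rdist$, cf.\ Lemma~\ref{lem:rd2}\ref{item:rdistProjSmaller} applied to the inclusion, plus the fact that $\match(n)\subseteq[0,1]^E$). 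Second, I would translate the bound $\rdist(\match(n),K)\leq\tfrac{\alpha'}{n}$ into a multiplicative containment $K\subseteq(1+\tfrac{\alpha}{n})\match(n)$. Here $\match(n)$ is down-closed in $\R^E_{\geq 0}$ (dropping edges from a matching yields a matching), so by Lemma~\ref{lemma:rdistMaxGap} we have $\LPgapMax(\match(n),K)=\rdist(\match(n),K)\leq\tfrac{\alpha'}{n}$, which says exactly that $\sup_{y\in K}c^\intercal y\leq(1+\tfrac{\alpha'}{n})\sup_{x\in\match(n)}c^\intercal x$ for all $c\geq\zerovec$. By LP duality / the standard fact that for a down-closed polytope $P$ one has $P=\{y\geq\zerovec : c^\intercal y\leq\max_{x\in P}c^\intercal x\ \forall c\geq\zerovec\}$, this yields $K\subseteq(1+\tfrac{\alpha'}{n})\match(n)$. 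Third, choosing $\alpha':=\alpha$ (the constant from Theorem~\ref{thmBraun}) and $\gamma'$ slightly below $\gamma$, the polytope $K$ satisfies $\match(n)\subseteq K\subseteq(1+\tfrac{\alpha}{n})\match(n)$ yet $\xc(K)\leq 2^{\gamma'n}<2^{\gamma n}$, contradicting Theorem~\ref{thmBraun}. Hence the constants $\alpha,\gamma$ (possibly shrunk to absorb the box-intersection overhead) witness that no $\tfrac{\alpha}{n}$-LEF of size $\leq 2^{\gamma n}$ exists.

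The main obstacle I anticipate is the bookkeeping in the second step: making the passage from ``small relative distance'' to ``small multiplicative blow-up of $\match(n)$'' fully rigorous, including handling the unboundedness of a generic $\pi(Q)$ (which is why the truncation to $[0,1]^E$ in step one is needed) and confirming that down-closedness of $\match(n)$ is genuinely what lets $\LPgapMax$ coincide with $\rdist$ via Lemma~\ref{lemma:rdistMaxGap}. A secondary subtlety is that Theorem~\ref{thmBraun} is stated for polytopes $K$ (bounded), whereas an LEF only gives $\pi(Q)$ which could a priori be unbounded or not even full-dimensional in the relevant sense; the box-intersection argument resolves this, but one must check the extension-complexity increase is only additive in the number of box facets, i.e.\ $\xc(K\cap[0,1]^E)\leq\xc(K)+2|E|$, which is negligible against $2^{\gamma n}$ after adjusting $\gamma$. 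Everything else — the relabelling of constants, the direction of inclusions — is routine once these two points are nailed down.
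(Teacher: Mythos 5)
Your proposal is correct and follows essentially the same route as the paper: truncate the projection $\pi(Q)$ so that it lies in the nonnegative orthant (costing only $O(|E|)$ extra facets), use down-closedness of $\match(n)$ together with Lemma~\ref{lemma:rdistMaxGap} to convert the relative-distance bound into the multiplicative sandwich $\match(n)\subseteq K\subseteq(1+\tfrac{\alpha}{n})\match(n)$, and then invoke Theorem~\ref{thmBraun} and adjust constants. The only cosmetic differences are that the paper notes $K$ is automatically bounded (finite $\rdist$ to a bounded set), so intersecting with $\R^E_{\ge 0}$ alone suffices, and that the monotonicity of $\rdist$ under shrinking the outer set follows from Lemma~\ref{lem:rd2}~\ref{item:rdistSetDef} rather than \ref{item:rdistProjSmaller}.
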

\begin{proof}
    Let $ \alpha $ and $ \gamma' $ be the constants defined in Theorem~\ref{thmBraun}.
    We may assume that $ 2^{\gamma' n} > \binom{n}{2} $.
    We have to show that every polyhedron $ K \subseteq \R^E $ with $ \match(n) \subseteq K $ and $ \rdist(\match(n), K)
    \le \frac{\alpha}{n} $ satisfies $ \xc(K) \ge 2^{\gamma n} $ for some constant $ \gamma > 0 $.
    To this end, first note that since $ \rdist(\match(n), K) $ is finite and $ \match(n) $ is bounded, $ K $ must also
    be bounded.
    Defining the polytope $ K' \coloneqq K \cap \R^E_{\ge 0} $, we clearly have $ \rdist(\match(n), K') \le \frac{\alpha}{n} $.
    Since $ \match(n) $ is down-closed, by Lemma~\ref{lemma:rdistMaxGap} we have $ \LPgapMax(\match(n), K') \le
    \frac{\alpha}{n} $, which implies $ \match(n) \subseteq K' \subseteq (1 + \frac{\alpha}{n}) \match(n) $.
    Thus, by Theorem~\ref{thmBraun}, we obtain $ \xc(K') \ge 2^{\gamma' n} $ and hence $ \xc(K) \ge \xc(K') - |E| \ge
    2^{\gamma' n} - \binom{n}{2} \ge 2^{\gamma n} $ for some universal constant $ \gamma > 0 $.
\end{proof}
By Corollary~\ref{corollary:milefHardness} we directly obtain:
\begin{corollary}
    \label{corLowerBoundMatchingRdist}
    There exist constants $ \alpha, \gamma > 0 $ such that any $ \frac{\alpha}{n} $-MILEF of $ \match(n) $ of size at
    most $ 2^{\gamma n} $ has $ \Omega \left( \sfrac{n}{\log n} \right) $ integer variables. \hfill $ \qed $
\end{corollary}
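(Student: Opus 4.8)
The plan is to derive Corollary~\ref{corLowerBoundMatchingRdist} directly from Corollary~\ref{corBraun} and Corollary~\ref{corollary:milefHardness}, so essentially no new work is needed. First I would recall that Corollary~\ref{corBraun} furnishes constants $\alpha,\gamma>0$ such that $\match(n)$ admits no $\frac{\alpha}{n}$-LEF of size at most $2^{\gamma n}$; this is precisely the hypothesis of Corollary~\ref{corollary:milefHardness} with $\beta=1$.

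Next, to invoke Corollary~\ref{corollary:milefHardness} I would pick a constant $\eps>0$ with $\eps<\alpha$ and $\eps<\gamma$ (for instance $\eps=\tfrac12\min\{\alpha,\gamma\}$), and apply the corollary with these values of $\alpha,\beta=1,\gamma,\eps$. It yields that any $\frac{\alpha-\eps}{n}$-MILEF of $\match(n)$ of size at most $2^{(\gamma-\eps)n}$ has $\Omega(n/\log n)$ integer variables. Since $\frac{\alpha-\eps}{n}\le\frac{\alpha}{n}$ and $2^{(\gamma-\eps)n}\le 2^{\gamma n}$ are only stronger constraints on the MILEF, renaming $\alpha-\eps$ and $\gamma-\eps$ as fresh constants $\alpha',\gamma'>0$ gives exactly the statement claimed, namely that there exist constants $\alpha',\gamma'>0$ such that any $\frac{\alpha'}{n}$-MILEF of $\match(n)$ of size at most $2^{\gamma' n}$ has $\Omega(n/\log n)$ integer variables.

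There is no real obstacle here: the entire content is bookkeeping of constants, and all the substance has already been established in Corollary~\ref{corollary:milefHardness} (which packages Theorem~\ref{thm:fromMILEFToLEF}) and in Corollary~\ref{corBraun} (which translates the Braun--Pokutta inapproximability result into the language of relative distance via Lemma~\ref{lemma:rdistMaxGap}). If one wanted a self-contained argument instead of citing Corollary~\ref{corollary:milefHardness}, one would repeat its proof: assume a $\frac{\alpha'}{n}$-MILEF of complexity $(m,k)$ with $m\le 2^{\gamma' n}$, apply Theorem~\ref{thm:fromMILEFToLEF} with a suitable $\delta=\Theta(1/n)$ to get a $\frac{\alpha}{n}$-LEF of size $m\cdot(1+k/\delta)^{O(k)}\le 2^{\gamma' n}\cdot n^{O(k)}$, and observe that unless $k=\Omega(n/\log n)$ this size is below $2^{\gamma n}$, contradicting Corollary~\ref{corBraun}. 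Either route is a couple of lines; the stated one-line proof ``By Corollary~\ref{corollary:milefHardness} we directly obtain\dots'' is the cleanest.
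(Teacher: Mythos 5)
Your proposal is correct and follows exactly the paper's route: the paper derives Corollary~\ref{corLowerBoundMatchingRdist} directly from Corollary~\ref{corollary:milefHardness} applied to the inapproximability statement of Corollary~\ref{corBraun}, which is precisely your argument, with the same constant bookkeeping ($\beta=1$, choosing $\eps<\min\{\alpha,\gamma\}$ and renaming $\alpha-\eps$, $\gamma-\eps$).
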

Using Lemma~\ref{lemma:rdistMaxGap}, the above statement can be phrased similarly to Theorem~\ref{thmBraun}.
\begin{corollary}
    \label{corLowerBoundMatchingGap}
    There exist constants $ \alpha, \gamma > 0 $ such that the following holds.
    Let $ K \subseteq \R^E_{\ge 0} $ be a polytope with $ \match(n) \subseteq K \subseteq (1 + \frac{\alpha}{n})
    \match(n) $.
    Then any MILEF of $ K $ of size at most $ 2^{\gamma n} $ has $ \Omega \left( \sfrac{n}{\log n} \right) $ integer
    variables. \hfill $ \qed $
\end{corollary}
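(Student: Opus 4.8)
The plan is to reduce the statement directly to Corollary~\ref{corLowerBoundMatchingRdist}, taking $\alpha$ and $\gamma$ to be exactly the constants furnished there. The only thing that needs to be verified is that, under the hypothesis $\match(n) \subseteq K \subseteq (1+\frac{\alpha}{n})\match(n)$, every (exact) MILEF of $K$ is automatically a $\frac{\alpha}{n}$-MILEF of $\match(n)$, with the same size and the same number of integer variables; the lower bound on the number of integer variables then follows immediately.

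First I would bound the relative distance $\rdist(\match(n), K)$. Since $K \subseteq \R^E_{\geq 0}$ by hypothesis and $\match(n)$ is down-closed, Lemma~\ref{lemma:rdistMaxGap} gives $\rdist(\match(n), K) = \LPgapMax(\match(n), K)$. The inclusion $K \subseteq (1+\frac{\alpha}{n})\match(n)$ then yields $\LPgapMax(\match(n), K) \leq \frac{\alpha}{n}$, because for every $c \in \R^E_{\geq 0}$ we have $\sup_{b \in K} c\t b \leq (1+\frac{\alpha}{n}) \sup_{a \in \match(n)} c\t a$. (One could also avoid Lemma~\ref{lemma:rdistMaxGap} altogether: as $\zerovec \in \match(n)$, we have $(1+\frac{\alpha}{n})\match(n) \subseteq (1+\frac{\alpha}{n})\match(n) - \frac{\alpha}{n}\match(n)$, so $K$ lies in the right-hand set and Lemma~\ref{lem:rd2}\ref{item:rdistSetDef} gives the same bound.)

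Then I would unwind the definition of an $\eps$-MILEF. Let $(Q,\sigma,\pi)$ be a MILEF of $K$ of size at most $2^{\gamma n}$ with $k$ integer variables, so that $\bar{C} \coloneqq \conv(\pi(Q \cap \sigma^{-1}(\Z^k))) = K$. Then $\match(n) \subseteq K = \bar{C}$ and $\rdist(\match(n), \bar{C}) = \rdist(\match(n), K) \leq \frac{\alpha}{n}$, so $(Q,\sigma,\pi)$ is by definition a $\frac{\alpha}{n}$-MILEF of $\match(n)$ of size at most $2^{\gamma n}$ with $k$ integer variables. Corollary~\ref{corLowerBoundMatchingRdist} then forces $k = \Omega(\sfrac{n}{\log n})$, as claimed. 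I do not expect any real obstacle here: the result is essentially a reformulation, and the only points to be careful about are that the hypothesis $K \subseteq \R^E_{\geq 0}$ is exactly what makes Lemma~\ref{lemma:rdistMaxGap} applicable, and that the approximation error of the MILEF must be measured against $\match(n)$ rather than against $K$ itself.
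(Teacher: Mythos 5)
Your proposal is correct and follows exactly the paper's intended route: the corollary is presented there as a direct reformulation of Corollary~\ref{corLowerBoundMatchingRdist} via Lemma~\ref{lemma:rdistMaxGap}, translating the sandwich condition $\match(n) \subseteq K \subseteq (1+\frac{\alpha}{n})\match(n)$ into the bound $\rdist(\match(n),K)\le\frac{\alpha}{n}$ so that any MILEF of $K$ is an $\frac{\alpha}{n}$-MILEF of $\match(n)$. Your observation that the hypothesis $K\subseteq\R^E_{\ge 0}$ (together with down-closedness of $\match(n)$) is precisely what makes that lemma applicable is the only point of care, and you handled it properly.
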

While polynomial-size textbook MILEFs for $ \match(n) $ usually require $ \Omega(n^2) $ integer variables, in
Section~\ref{secUpperBoundMatching} we give a simple polynomial-size MILEF for $ \match(n) $ that only uses $ O(n) $
integer variables.
Thus, the lower bound on the number of integer variables in Corollaries~\ref{corLowerBoundMatchingRdist}
and~\ref{corLowerBoundMatchingGap} is tight up to a factor of $ O(\log n) $.
 \subsection{Cut polytope}
\label{secApplicationCut}
Let $ K_n = (V, E) $ be the complete undirected graph on $ n $ vertices, and define a \emph{cut} in $ K_n $ to be a
subset $ F \subseteq E $ that can be written as $ F = \left\{ \{v, w\} \in E : v \in S, \, w \notin S \right\} $ for
some $ S \subseteq V $.
\footnote{
    We highlight that $S$ is allowed to be equal to the empty set or $V$.
    Sometimes, to define cuts, one requires $S\not\in \{\emptyset, V\}$.
    Our discussion can easily be transferred to this case, but is a bit simpler when also allowing the trivial sets
    $S=\emptyset$ and $S=V$.
}
The convex hull $ \cut(n) \subseteq [0,1]^E $ of all characteristic vectors of cuts in $ K_n $ is called the \emph{cut
polytope}.
Recall that optimizing a linear function over $ \cut(n) $ is at least as hard as solving the maximum cut problem, which
is NP-hard.
The cut polytope was the first specific $ 0/1 $-polytope shown to have a super-polynomial (in its dimension) extension
complexity; see~\cite{FioriniMPTW2015}.
More specifically, every LEF for $ \cut(n) $ has size at least exponential in $ n $; see also~\cite{KaibelW2015}.
In what follows, we lift this bound to MILEFs with $k\leq  \kappa n / \log n $ integer variables, for some constant
$\kappa$.
To this end, we make use of the following inapproximability result in~\cite{BraunFPS2015} that refers to the
\emph{correlation polytope}
\[
    P_n \coloneqq \conv \left \{ bb\t : b \in \{0,1\}^n \right \} \subseteq \R^{n \times n},
\]
which is affinely isomorphic to $ \cut(n) $, i.e., there exists an affine bijection $ \pi : \R^E \to \af(P_n) $ with $
\pi(\cut(n)) = P_n $; see~\cite{DeSimone1990}.
\ifbool{SODAfinal}{
\begin{theorem}[{\cite[Thm.~6]{BraunFPS2015}}]
    \label{thm:BraunEtAl}
    There is a constant $ \gamma > 0 $ such that every polyhedron $ K \subseteq \R^{n \times n} $ with
    \[
        P_n \subseteq K \subseteq Q_n
    \]
    satisfies $ \xc(K) \ge 2^{\gamma n} $, where 
    \begin{alignat*}{2}
     Q_n \coloneqq \Big\{ & x \in \R^{n \times n} : \quad \forall a\in \{0,1\}^n, & \\ 
		      & \left( 2 \diag(a) - aa\t \right)\t x \le 2 &  \Big\}. 
    \end{alignat*}

\end{theorem}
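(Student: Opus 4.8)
The plan is to establish the extension complexity bound in the now-standard two-step fashion: first reduce it, via Yannakakis's factorization theorem~\cite{yannakakis_1991_expressing}, to a lower bound on the nonnegative rank of an explicit nonnegative matrix, and then derive that lower bound from the unique-disjointness barrier of the extended-formulations literature~\cite{FioriniMPTW2015,BraunFPS2015}.

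First I would fix an arbitrary polyhedron $K$ with $P_n\subseteq K\subseteq Q_n$ and consider the matrix $M$, with rows and columns both indexed by $\{0,1\}^n$, whose $(a,b)$-entry is the slack
\[
    M_{a,b}\coloneqq 2-\bigl\langle 2\diag(a)-aa\t,\, bb\t\bigr\rangle
\]
of the point $bb\t\in P_n\subseteq K$ in the inequality $(2\diag(a)-aa\t)\t x\le 2$, which is valid for $K$ because $K\subseteq Q_n$. Any extended formulation of $K$ of size $r$ yields a nonnegative factorization of the full slack matrix of $K$, and in particular of the submatrix $M$; hence $\xc(K)\ge\rank_+(M)$, and it suffices to prove $\rank_+(M)\ge 2^{\gamma n}$ for a universal constant $\gamma>0$. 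Using $b_i^2=b_i$ for $b\in\{0,1\}^n$, a one-line computation gives $\langle 2\diag(a)-aa\t,\,bb\t\rangle=2\,a\t b-(a\t b)^2$, so that $M_{a,b}=1+(a\t b-1)^2=1+(|a\cap b|-1)^2$; in particular $M_{a,b}=1$ exactly when $|a\cap b|=1$, and $M_{a,b}\ge 2$ otherwise.

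It remains to bound $\rank_+(M)$ from below, and this is where I expect the real work to be. Since $M$ is strictly positive, the elementary rectangle-covering bound on nonnegative rank is vacuous here, so I would instead use the hyperplane-separation (``corruption'') method. After restricting the rows and columns of $M$ to the subsets of $[n]$ of some well-chosen fixed cardinality --- which only decreases the nonnegative rank, hence only weakens the bound we must prove --- I would exhibit a weight matrix $W$ supported on the pairs with $|a\cap b|\le 1$, positive on the ``value-$1$'' pairs $\{|a\cap b|=1\}$ and negative on the ``value-$2$'' pairs $\{a\cap b=\emptyset\}$, scaled so that $\langle W,M\rangle=2^{\Omega(n)}$ while $\langle W,R\rangle=O(1)$ for every combinatorial rectangle $R$ (identified with its $0/1$ indicator matrix). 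The LP-duality (hyperplane-separation) bound for nonnegative rank then gives
\[
    \rank_+(M) \ge \frac{\langle W,M\rangle}{\|M\|_\infty\cdot\max_R\langle W,R\rangle} = \frac{2^{\Omega(n)}}{\poly(n)} = 2^{\Omega(n)},
\]
which finishes the proof; equivalently, one may simply import the exponential lower bound on the (approximate) nonnegative rank of the unique-disjointness matrix from~\cite{FioriniMPTW2015,BraunFPS2015} as a black box.

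The genuine obstacle is exactly this last step: producing the witness $W$ and verifying the rectangle bound $\langle W,R\rangle=O(1)$ \emph{uniformly} over the exponentially many combinatorial rectangles. This is where the combinatorics of set intersections --- equivalently, the $\Omega(n)$ information complexity of unique disjointness --- is actually used, and it is the reason why no argument that merely counts the $2^n$ inequalities defining $Q_n$, or that only inspects the support of $M$, can succeed.
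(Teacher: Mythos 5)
First, note that the paper does not prove Theorem~\ref{thm:BraunEtAl} at all: it is imported verbatim as a black box from \cite[Thm.~6]{BraunFPS2015}, so there is no in-paper argument to compare against. Your sketch is a faithful reconstruction of the strategy used in that reference: the reduction of $\xc(K)$ for any $K$ with $P_n\subseteq K\subseteq Q_n$ to the nonnegative rank of the pair's slack matrix is correct (modulo the usual $+1$ in the generalized Yannakakis factorization), and your computation $M_{a,b}=2-\langle 2\diag(a)-aa\t,bb\t\rangle=1+(a\t b-1)^2$ is exactly right, identifying $M$ as the ``shifted'' unique-disjointness matrix, with value $1$ on uniquely intersecting pairs and value at least $2$ elsewhere.

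The genuine gap is that the entire substance of the theorem lies in the step you leave open: proving $\rank_+(M)=2^{\Omega(n)}$ for this \emph{strictly positive} matrix. You correctly observe that the rectangle-covering bound of \cite{FioriniMPTW2015} is vacuous here, so that reference cannot be imported; but ``importing the bound from \cite{BraunFPS2015}'' is circular, since that bound \emph{is} Theorem~\ref{thm:BraunEtAl} (the whole point of $Q_n$ having right-hand side $2$ rather than $1$ is that the resulting slack matrix is the $+1$-shifted UDISJ matrix, whose nonnegative-rank lower bound is the new contribution of that paper). Your alternative, the hyperplane-separation route, is the right shape, but you neither construct the witness $W$ nor prove the uniform rectangle bound $\max_R\langle W,R\rangle=O(1)$ (relative to $\langle W,M\rangle=2^{\Omega(n)}$); establishing this is equivalent to a Razborov-type corruption statement robust to the $+1$ shift --- every rectangle must either be exponentially small or contain a constant fraction of intersecting (value-$\ge 2$) pairs --- and in \cite{BraunFPS2015} this is carried out via nontrivial information-theoretic arguments. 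Without that step the proposal identifies where the difficulty sits but does not prove the statement.
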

}{
    \begin{theorem}[{\cite[Thm.~6]{BraunFPS2015}}]
    \label{thm:BraunEtAl}
    There is a constant $ \gamma > 0 $ such that every polyhedron $ K \subseteq \R^{n \times n} $ with
    \[
        P_n \subseteq K \subseteq Q_n \coloneqq \left\{ x \in \R^{n \times n} : \left( 2 \diag(a) - aa\t \right)\bigcdot x
        \le 2 \quad\forall a \in \{0,1\}^n \right\} 
    \]
    satisfies $ \xc(K) \ge 2^{\gamma n} $.\footnote{For two matrices $A,B\in \mathbb{R}^{n\times n}$, we denote by $A\bigcdot B \coloneqq \sum_{i=1}^n\sum_{j=1}^n A_{ij} B_{ij}$ the Frobenius inner product of $A$ and $B$.}
\end{theorem}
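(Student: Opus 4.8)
The plan is to bound $\xc(K)$ from below by the nonnegative rank of a suitable slack-type matrix, via Yannakakis's factorization theorem, and then to recognize that matrix as a (shifted) unique-disjointness matrix, whose nonnegative rank is known to be exponential. I first recall the ``generalized slack matrix'' principle: for any polyhedron $K$, any finite set of points $V\subseteq K$, and any finite set $\mathcal I$ of inequalities $\langle \alpha,x\rangle\le\beta$ valid for $K$, the nonnegative matrix $S$ with $S_{(\alpha,\beta),v}\coloneqq \beta-\langle\alpha,v\rangle$ satisfies $\rank_+(S)\le\xc(K)$. This follows from Yannakakis's factorization of the facet-vs-vertex slack matrix of $K$, by writing each valid inequality as a nonnegative combination of facet inequalities of $K$ (plus an arbitrary combination of its equalities, which contribute $0$ at points of $K$) and each point of $K$ as a convex combination of vertices together with a nonnegative combination of rays.

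Next I would instantiate this with $V\coloneqq\{bb\t : b\in\{0,1\}^n\}$, which lies in $P_n\subseteq K$, and with $\mathcal I\coloneqq\{(2\diag(a)-aa\t)\bigcdot x\le 2 : a\in\{0,1\}^n\}$, which is valid for $Q_n\supseteq K$. A one-line computation using $b_i^2=b_i$ gives $\diag(a)\bigcdot bb\t=a\t b$ and $(aa\t)\bigcdot bb\t=(a\t b)^2$, hence $S_{a,b}=2-2\,a\t b+(a\t b)^2=(a\t b-1)^2+1$. Thus $S_{a,b}=1$ exactly when the supports of $a$ and $b$ intersect in a single element, and $S_{a,b}\ge 2$ otherwise. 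So $S$ is an affine normalization of the unique-disjointness matrix, except that every entry is shifted up by $1$ — and this shift is precisely the footprint of the relaxation $K\subseteq Q_n$ (rather than $K=P_n$), i.e.\ it is what makes the statement a robustness result rather than merely $\xc(P_n)\ge 2^{\gamma n}$.

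The crux is therefore to prove $\rank_+(S)\ge 2^{\gamma n}$ for a constant $\gamma>0$. Here I would use the hyperplane-separation lower bound for nonnegative rank (as in \cite{Rothvoss2014}): it suffices to exhibit a signed witness matrix $W$ and a threshold $\tau>0$ such that $\langle W,S\rangle\ge\tau$ (entrywise matrix inner product) while $\langle W,R\rangle\le \tau\cdot 2^{-\gamma n}$ for every nonnegative rank-one matrix $R$ dominated entrywise by $S$, whence $\rank_+(S)\ge \langle W,S\rangle/\max_R\langle W,R\rangle\ge 2^{\gamma n}$. For the unique-disjointness pattern such a $W$ is classical (a normalized product distribution over pairs $(a,b)$ with a negative correction on the ``unique-intersection'' entries), and the key estimate on rank-one matrices — combinatorial rectangles — is the corruption bound stating that every large rectangle contains a constant fraction of pairs whose supports intersect in at least two elements; see \cite{FioriniMPTW2015,KaibelW2015}. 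Alternatively one may invoke an information-theoretic lower bound (Braverman--Moitra; \cite{BraunP2015}), which bounds $\log\rank_+$ from below by the smooth rectangle / corruption bound of the unique-disjointness function, known to be $\Omega(n)$. Combining the three steps gives $\xc(K)\ge\rank_+(S)\ge 2^{\gamma n}$, as in \cite{BraunFPS2015}.

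I expect the last step to be the real obstacle. The first two steps are bookkeeping plus a short computation, whereas the exponential bound on $\rank_+(S)$ is a genuine communication-complexity / convex-geometry argument; moreover it must be executed in a way that is robust to the perturbation of the base entries from $0$ to $2$, so that the uniform ``$+1$'' shift (which sums to $\Theta(n\cdot 2^{2n})$ across the matrix) does not swamp the gap $\langle W,S\rangle$ produced by the witness. Extracting a clean constant $\gamma$ out of this, uniformly in $n$, is where essentially all of the work lies.
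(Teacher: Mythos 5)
The paper does not prove this statement at all: Theorem~\ref{thm:BraunEtAl} is imported verbatim from \cite[Thm.~6]{BraunFPS2015} and used as a black box (its only role here is to feed Corollary~\ref{corBraunCut}), so there is no internal proof to compare against. Your sketch is a reconstruction of the cited reference's own route, and its first two steps are sound: the generalized slack-matrix bound is the right tool (though note that for valid-but-not-tight inequalities LP duality gives $\beta-\alpha\t x=(\beta-\lambda\t g)+\lambda\t(g-Cx-Dy)$, so the clean statement is $\rank_+(S)\le \xc(K)+1$ rather than $\le\xc(K)$ --- harmless asymptotically), and the computation $(2\diag(a)-aa\t)\bigcdot bb\t=2a\t b-(a\t b)^2$, hence $S_{a,b}=1+(1-a\t b)^2$, is correct, so $S$ is the unique-disjointness slack pattern shifted up by one.

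Viewed as a standalone proof, however, your third step is a genuine gap rather than a routine citation. The classical corruption/rectangle bound for unique disjointness (Razborov-style, as used in \cite{FioriniMPTW2015,KaibelW2015}) lower-bounds $\rank_+$ of the matrix whose unique-intersection entries are exactly $0$; after the uniform $+1$ shift, the all-ones matrix is itself a nonnegative rank-one matrix dominated entrywise by $S$, so any hyperplane witness must additionally have small total mass, and the naive transfer of the classical witness breaks. Proving that the \emph{shifted} matrix still has nonnegative rank $2^{\Omega(n)}$ is precisely the technical content of \cite{BraunFPS2015} (with information-theoretic and hyperplane-separation refinements in \cite{BraunP2015,Rothvoss2014}); it is not obtained by quoting the unshifted corruption bound. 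You flag this yourself (``where essentially all of the work lies''), and also slightly misstate the shift (the zero entries move to $1$, the disjoint entries to $2$), so the proposal should be read as a faithful outline of the proof in the cited source, not as an independent argument for the exponential bound.
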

}
Again, let us translate this result using the notion of relative distance.
\begin{corollary}
    \label{corBraunCut}
    There exist constants $ \alpha, \gamma > 0 $ such that $ \cut(n) $ admits no $ \frac{\alpha}{n^2} $-LEF of
    size at most $ 2^{\gamma n} $.
\end{corollary}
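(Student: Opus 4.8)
The plan is to follow the template of the matching‑polytope argument (the proof of Corollary~\ref{corBraun}), with Theorem~\ref{thmBraun} replaced by the correlation‑polytope inapproximability result of Theorem~\ref{thm:BraunEtAl} and with $\cut(n)$ transported to $P_n$ via the affine isomorphism. Suppose, for contradiction, that $(Q,\pi)$ is an $\frac{\alpha}{n^2}$‑LEF of $\cut(n)$ of size at most $2^{\gamma n}$, where $\alpha,\gamma>0$ are still to be fixed and $\gamma$ will be taken strictly smaller than the constant of Theorem~\ref{thm:BraunEtAl}, which I rename $\gamma_0$. Let $\pi_0\colon\R^E\to\af(P_n)$ be the affine bijection with $\pi_0(\cut(n))=P_n$, and set $K\coloneqq\pi_0(\pi(Q))$. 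Then $K$ is an affine image of the polyhedron $\pi(Q)\supseteq\cut(n)$, hence a polyhedron with $P_n\subseteq K$ and $\xc(K)\le 2^{\gamma n}$; it is bounded because $\rdist(\cut(n),\pi(Q))$ is finite and $\cut(n)$ is bounded; and, since relative distance is invariant under affine bijections (Lemma~\ref{lem:rd2}~\ref{item:rdistProjSmaller}), $\rdist(P_n,K)=\rdist(\cut(n),\pi(Q))\le\frac{\alpha}{n^2}$. The goal is to squeeze $K$ between $P_n$ and $Q_n$, so that Theorem~\ref{thm:BraunEtAl} forces $\xc(K)\ge 2^{\gamma_0 n}>2^{\gamma n}$, a contradiction.

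The crux is the geometric claim that $(1+\lambda)P_n-\lambda P_n\subseteq Q_n$ for every $0\le\lambda\le\frac{1}{(n-1)^2}$. Granting this and choosing $\alpha$ small enough that $\frac{\alpha}{n^2}<\frac{1}{(n-1)^2}$ for all $n\ge 2$ (e.g.\ $\alpha=1$), Lemma~\ref{lem:rd2}~\ref{item:rdistSetDef} yields a $\lambda$ with $\rdist(P_n,K)<\lambda\le\frac{1}{(n-1)^2}$ and hence $K\subseteq(1+\lambda)P_n-\lambda P_n\subseteq Q_n$, as desired. To prove the claim I would evaluate the functionals $2\diag(a)-aa\t$, $a\in\{0,1\}^n$, defining $Q_n$ at a vertex $bb\t$ of $P_n$: a one‑line computation gives $(2\diag(a)-aa\t)\bigcdot bb\t=2t-t^2$ with $t=a\t b\in\{0,1,\dots,|a|\}$, so over all of $P_n$ this functional takes values in the interval $[\min(0,2|a|-|a|^2),\,1]$. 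Writing a point of $(1+\lambda)P_n-\lambda P_n$ as $(1+\lambda)p-\lambda q$ with $p,q\in P_n$, we therefore get $(2\diag(a)-aa\t)\bigcdot((1+\lambda)p-\lambda q)\le(1+\lambda)-\lambda\min(0,2|a|-|a|^2)=1+\lambda\max(1,(|a|-1)^2)\le 1+\lambda(n-1)^2\le 2$, using $|a|\le n$. Intuitively, $(|a|-1)^2$ is the width of $P_n$ in the direction $2\diag(a)-aa\t$, which can be as large as $(n-1)^2$, while $Q_n$ leaves only room $1$ beyond $P_n$ in that direction; this is precisely what pushes the admissible blow‑up factor — and thus the strength of the LEF‑approximation we can exclude — down to order $n^{-2}$, rather than $n^{-1}$ as for matchings.

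It then only remains to assemble the pieces exactly as in Corollary~\ref{corBraun}: from $P_n\subseteq K\subseteq Q_n$ and $K$ a polyhedron, Theorem~\ref{thm:BraunEtAl} gives $\xc(K)\ge 2^{\gamma_0 n}$, contradicting $\xc(K)\le 2^{\gamma n}$ once $\gamma<\gamma_0$; the finitely many small $n$ for which this comparison fails are absorbed into the constants, just as in the matching case. Note that, unlike that case, no intersection with an auxiliary polytope (and hence no subtraction of facets) is needed here, since the bound above delivers $K\subseteq Q_n$ directly. I expect the only genuine work to be the geometric claim — in particular pinning the constant in $\lambda\le\frac{1}{(n-1)^2}$ so that the permitted $\lambda$ actually reaches the $n^{-2}$ scale, and checking that the comparison with $Q_n$ (stated for subsets of the full space $\R^{n\times n}$) remains legitimate even though $P_n$ and $K$ are not full‑dimensional there; everything else parallels the matching‑polytope corollary.
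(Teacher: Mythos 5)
Your proposal is correct and follows essentially the same route as the paper's proof: transport $\cut(n)$ to the correlation polytope $P_n$ via the affine isomorphism, use the relative-distance bound to sandwich the image $K$ between $P_n$ and $Q_n$, and invoke Theorem~\ref{thm:BraunEtAl} to contradict the size bound. The only (harmless) difference is that you certify $K \subseteq Q_n$ through the difference-body characterization of Lemma~\ref{lem:rd2}~\ref{item:rdistSetDef} together with an exact computation of the range of each functional $2\diag(a)-aa\t$ over $P_n$, whereas the paper argues directly from the definition of $\rdist$ along a single violated direction, using the cruder bound $\min_{x\in P_n} \left(2\diag(a)-aa\t\right)\bigcdot x \ge -n^2$ and the choice $\alpha=\sfrac{1}{3}$.
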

\begin{proof}
    Define $ \alpha \coloneqq \frac{1}{3} $ and let $ \gamma > 0 $ be the constant in Theorem~\ref{thm:BraunEtAl}.
    Let $ K \subseteq \R^E $ be any polyhedron with $ \cut(n) \subseteq K $ and $ \rdist(\cut(n), K) \le
    \frac{\alpha}{n^2} $.
    It remains to show that $ \xc(K) \ge 2^{\gamma n} $ holds.
    To this end, let $ \pi : \R^E \to \af(P_n) $ be the affine map that satisfies $ \pi(\cut(n)) = P_n $.
    Clearly, we have $ P_n \subseteq \pi(K) $, as well as $ \rdist(P_n, \pi(C)) \le \frac{\alpha}{n^2} $.

    We claim that $ \pi(K) $ is contained in the set $ Q_n $ (as defined in the statement of
    Theorem~\ref{thm:BraunEtAl}).
    Otherwise, there is some $ a \in \{0,1\}^n $ such that the matrix $ c \coloneqq 2 \diag(a) - aa\t $ satisfies $
    \sup_{x \in \pi(K)} c\bigcdot x > 2 $.
    On the other hand, one has $ \max_{x \in P_n} c\bigcdot x \le 1 $ (see, e.g.,~\cite[Sec. IV]{BraunFPS2015}) as
    well as $ \min_{x \in P_n} c\bigcdot x \ge -n^2 $ (because $ c \in [-1,1]^{n \times n} $ and $ P_n \subseteq [0,1]^{n
    \times n} $).
    By the definition of the relative distance, this would imply $ \rdist(P_n, \pi(K)) \ge \frac{2 - 1}{n^2 + 1} >
    \frac{\alpha}{n^2} $, a contradiction.

    Thus, we have $ P_n \subseteq \pi(K) \subseteq Q_n $ and hence by Theorem~\ref{thm:BraunEtAl} we obtain $ \xc(\pi(K)) \ge
    2^{\gamma n} $.
    The claim follows since we have $ \xc(K) \ge \xc(\pi(K)) $.
\end{proof}
By Corollary~\ref{corollary:milefHardness} we directly obtain:
\begin{corollary}
    \label{corLowerBoundCut}
    There exist constants $ \alpha, \gamma > 0 $ such that any $ \frac{\alpha}{n^2} $-MILEF of $ \cut(n) $ of size at
    most $ 2^{\gamma n} $ has $ \Omega \left( \sfrac{n}{\log n} \right) $ integer variables. \hfill $ \qed $
\end{corollary}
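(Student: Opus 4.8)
The plan is to obtain this as an immediate consequence of the general reduction in Corollary~\ref{corollary:milefHardness}, fed with the cut-polytope inapproximability result already established in Corollary~\ref{corBraunCut}. All the substantive work has in fact been done upstream: Corollary~\ref{corBraunCut} packages the Braun--Fiorini--Pokutta--Steurer lower bound for the correlation polytope (Theorem~\ref{thm:BraunEtAl}) into the relative-distance language, and Corollary~\ref{corollary:milefHardness} converts any such LEF-inapproximability statement into a lower bound on the number of integer variables in approximate MILEFs via Theorem~\ref{thm:fromMILEFToLEF}.

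Concretely, I would proceed as follows. Let $\alpha_0, \gamma_0 > 0$ be the constants furnished by Corollary~\ref{corBraunCut}, so that $\cut(n)$ admits no $\frac{\alpha_0}{n^2}$-LEF of size at most $2^{\gamma_0 n}$. Put $\beta \coloneqq 2$ and choose $\eps \coloneqq \tfrac{1}{2}\min\{\alpha_0,\gamma_0\}$, so that $0 < \eps < \alpha_0$ and $0 < \eps < \gamma_0$, matching the hypotheses of Corollary~\ref{corollary:milefHardness}. Applying that corollary with $C = \cut(n)$, $\alpha = \alpha_0$, $\gamma = \gamma_0$ yields that every $\frac{\alpha_0 - \eps}{n^2}$-MILEF of $\cut(n)$ of size at most $2^{(\gamma_0 - \eps)n}$ has $\Omega(n/\log n)$ integer variables. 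Finally, renaming the constants $\alpha \coloneqq \alpha_0 - \eps > 0$ and $\gamma \coloneqq \gamma_0 - \eps > 0$ gives exactly the asserted statement.

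There is essentially no obstacle here; the only point deserving a word of care is the mild loss of constants in the exponents ($\alpha_0 \to \alpha_0-\eps$ in the approximation parameter and $\gamma_0 \to \gamma_0-\eps$ in the size bound), which is harmless since we are free to name the final constants, and which is precisely the slack that Corollary~\ref{corollary:milefHardness} was designed to absorb. If one additionally wanted the ``LP-gap'' phrasing (as in Corollary~\ref{corLowerBoundMatchingGap} for matchings), one would further invoke the relation between $\rdist$ and $\LPgapMax$ from Lemma~\ref{lemma:rdistMaxGap}; but for the statement as given, the two-line deduction above suffices.
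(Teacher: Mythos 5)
Your proposal is correct and is exactly the paper's argument: the paper derives Corollary~\ref{corLowerBoundCut} directly by applying Corollary~\ref{corollary:milefHardness} to the LEF-inapproximability statement of Corollary~\ref{corBraunCut}. Your explicit choice of $\eps$ and renaming of constants just spells out the bookkeeping that the paper leaves implicit.
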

Several known polynomial-size MILEFs for $ \cut(n) $ use $ \Omega(n^2) $ integer variables.
However, similar to the case for the matching polytope, there are simple polynomial-size MILEFs for $ \cut(n) $ that
only use $ O(n) $ integer variables; see Section~\ref{secUpperBoundCut}.
Again, the bound on the number of integer variables given in Corollary~\ref{corLowerBoundCut} is tight up to a factor of
$ O(\log n) $.
 \subsection{Traveling salesman polytope}
\label{secApplicationTSP}
In this section, we use our result on the matching polytope to obtain a lower bound on the number of integer variables
in MILEFs for the \emph{traveling salesman polytope} $ \tsp(n) \subseteq \R^E $, which is defined as the convex hull of the
characteristic vectors of all Hamiltonian cycles in $ K_n = (V, E) $.
It is known that there is a constant $ c > 0 $ such that for every $ n $, there exists a face of $ \tsp(n) $ that can be
affinely projected onto $ \match(n') $, where $ n' \ge c n $, see, e.g.,~\cite[Proof of Thm.~2]{yannakakis_1991_expressing}.
By the following lemma, this implies that whenever $ \tsp(n) $ admits a MILEF of complexity $ (m, k) $, then also $
\match(n') $ admits a MILEF of the same complexity.
\begin{lemma}
    \label{lemProjectionOfFace}
    Let $ P \subseteq \R^d $ and $ P' \subseteq \R^{d'} $ be non-empty polyhedra such that $ P' $ is an affine projection of a face of $ P $.
    If $ P $ admits a MILEF of complexity $ (m, k) $, then also $ P' $ admits a MILEF of complexity $ (m, k) $.
\end{lemma}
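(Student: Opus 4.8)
The plan is to restrict a given MILEF of $P$ to the part lying over the prescribed face, reusing essentially the same data. Fix an affine map $\rho\colon\R^d\to\R^{d'}$ and a face $F$ of $P$ with $P'=\rho(F)$, and write $F=\{x\in P:c\t x=\delta\}$ for some $c\in\R^d$, $\delta\in\R$ with $c\t x\le\delta$ valid on $P$. Let $(Q,\sigma,\pi)$ be a MILEF of $P$ of complexity $(m,k)$, and set $S\coloneqq\pi(Q\cap\sigma^{-1}(\Z^k))$, so that $P=\conv(S)$.

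The first step is the elementary observation that a face commutes with taking the convex hull of a generating set: since $c\t x\le\delta$ holds on all of $P=\conv(S)$, every point of $F$ is a finite convex combination of points of $S$, each satisfying $c\t s\le\delta$, and equality $c\t s=\delta$ must then hold for every $s\in S$ actually used in the combination; hence $F=\conv(\{s\in S:c\t s=\delta\})$, and this generating set is non-empty because $P'$, and thus $F$, is non-empty. Next I would observe that $y\mapsto c\t\pi(y)$ is an affine function on $\R^\ell$, so the condition $c\t\pi(y)=\delta$ cuts out an affine subspace $H\subseteq\R^\ell$ (possibly $H=\R^\ell$, which happens exactly when $F=P$), and then $\{s\in S:c\t s=\delta\}=\pi\bigl((Q\cap H)\cap\sigma^{-1}(\Z^k)\bigr)$.

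The second step is to define $Q'\coloneqq Q\cap H$, keep $\sigma$ unchanged, and set $\pi'\coloneqq\rho\circ\pi$, which is an affine map $\R^\ell\to\R^{d'}$; note $Q'\subseteq Q\subseteq\R^\ell$, so $\sigma$ and $\pi'$ live on the right ambient space without modification. Because convex hull commutes with affine maps, $\conv\bigl(\pi'(Q'\cap\sigma^{-1}(\Z^k))\bigr)=\rho\bigl(\conv(\{s\in S:c\t s=\delta\})\bigr)=\rho(F)=P'$, so $(Q',\sigma,\pi')$ is a MILEF of $P'$ with the same number $k$ of integrality constraints. The one remaining point, and the only one I expect to require any care, is the size bound: $Q'$ is obtained from a description of $Q$ by at most $m$ linear inequalities by intersecting with an affine subspace, hence it is itself described by at most $m$ linear inequalities, and therefore has at most $m$ facets. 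This yields a MILEF of $P'$ of complexity $(m,k)$, as claimed; the boundary case $H=\R^\ell$ is harmless, and $H=\emptyset$ cannot occur since $F\neq\emptyset$.
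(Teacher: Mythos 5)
Your proof is correct and follows essentially the same route as the paper's: both intersect $Q$ with the affine subspace obtained as the preimage (under the projection composed with the face-defining affine functional) of the face's hyperplane, keep $\sigma$, compose the projection with the map onto $P'$, and use the fact that tightness of a valid inequality in a convex combination forces tightness of every point used. The only cosmetic difference is that you verify the key identity in the original space ($F=\conv\{s\in S: c\t s=\delta\}$) while the paper verifies the equivalent statement $Q_\sigma\cap H=(Q\cap H)_\sigma$ in the extended space.
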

\begin{proof}
    By the hypotheses, there is a face $F$ of $P$ and an affine map $\tau:\R^d \to \R^{d'}$ such that $P'=\tau(F)$. 
    Additionally, there exists a polyhedron $Q\subseteq \R^\ell$ with at most $m$ facets, and affine maps $ \sigma:\R^\ell \to \R^k $ and $ \pi :\R^\ell \to \R^{d} $ 
    such that $P=\pi(Q_\sigma)$, where $Q_{\sigma}\coloneqq \conv(Q\cap \sigma^{-1}(\mathbb{Z}^k))$.
    
    We remark that $F$ is a face of $P$ if and only if there is an affine map $\phi:\R^d\to \R$ such that $\phi(P)\subseteq \R_{\geq 0}$ and $F=P\cap \phi^{-1}(0)$. 
    Now, define the affine subspace $H\coloneqq (\phi \circ \pi)^{-1} (0)$ in $\R^\ell$, and notice that $Q_\sigma \cap H$ is a face of $Q_\sigma$ 
    by the above-mentioned characterization of a face and the fact that $(\phi\circ \pi) (Q_\sigma) = \phi(P)\subseteq \R_{\geq 0}$.
    This implies that $Q_\sigma \cap H = (Q\cap H)_\sigma \coloneqq \conv(Q \cap H \cap \sigma^{-1}(\Z^k)) $.
    \footnote{
    Explicitly, the claim is $\conv(Q\cap \sigma^{-1}(\Z^k))\cap H=\conv(Q\cap \sigma^{-1}(\Z^k) \cap H)$. 
    The inclusion ``$\supseteq$" follows immediately from the fact that $\conv(Q\cap \sigma^{-1}(\mathbb{Z}^k))\cap H$ is a convex set containing $Q\cap \sigma^{-1}(\mathbb{Z}^k) \cap H$, and must thus contain $\conv(Q\cap \sigma^{-1}(\mathbb{Z}^k) \cap H)$, which is the smallest convex set containing $Q\cap \sigma^{-1}(\mathbb{Z}^k) \cap H$.
For the opposite inclusion, consider a point $x\in \conv(Q\cap \sigma^{-1}(\Z^k))\cap H$. 
    It must be a convex combination of some points $(x_i)_{i\in I}$ in $Q\cap \sigma^{-1}(\Z^k)$. 
    As each point $x_i$ is in $Q_\sigma$, we must have $(\phi\circ \pi)(x_i)\geq 0$; however, $x$ being in $H$ implies that $(\phi\circ \pi)(x)= 0$, 
    which forces all these inequalities to be tight, and thus $x_i\in H$  for each $i\in I$. This proves that $x\in \conv(Q\cap \sigma^{-1}(\Z^k) \cap H)$, as desired.
    }
    Thus, we obtain
    \begin{align*}
        P' = \tau(F) &= \tau(P\cap \phi^{-1}(0))\\
        & = \tau(\pi(Q_\sigma) \cap \pi(H))\\
        & = (\tau \circ \pi) (Q_\sigma \cap H)\\
        & = (\tau \circ \pi) ((Q \cap H)_\sigma)\enspace,
    \end{align*}
    and hence $ P' $ admits the MILEF $(Q\cap H, \sigma, \tau\circ \pi)$, which is of complexity $(m', k)$, 
    where $ m' $ is the number of facets of $ Q \cap H $.
    Since the number of facets of $ Q \cap H $ is at most the number of facets of $ Q $, we obtain $ m' \le m $, which yields the claim.
\end{proof}
By Corollary~\ref{corLowerBoundMatchingRdist} we directly obtain:
\begin{corollary}
    \label{corLowerBoundTSP}
    There exists a constant $ \gamma > 0 $ such that any MILEF for $ \tsp(n) $ of size at most $ 2^{\gamma n} $ has $
    \Omega \left( \sfrac{n}{\log n} \right) $ integer variables. \hfill $ \qed $
\end{corollary}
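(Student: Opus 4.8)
The plan is to chain together the MILEF-hardness of the matching polytope (Corollary~\ref{corLowerBoundMatchingRdist}) with the two reductions already in place: the structural fact that $\match(n')$ is an affine projection of a face of $\tsp(n)$ for some $n' = \Theta(n)$ (from the proof of~\cite[Thm.~2]{yannakakis_1991_expressing}), and Lemma~\ref{lemProjectionOfFace}, which transfers MILEFs along such projections without changing the complexity pair $(m,k)$.

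Concretely, I would first fix the relevant constants: let $\alpha, \gamma_M > 0$ be the constants provided by Corollary~\ref{corLowerBoundMatchingRdist}, and let $c > 0$ be such that, for every $n$, there is a face of $\tsp(n)$ that affinely projects onto $\match(n')$ with $n' \ge c\,n$. I would then \emph{define} the constant $\gamma \coloneqq c\,\gamma_M$ in the statement. Now take any MILEF for $\tsp(n)$ of complexity $(m,k)$ with $m \le 2^{\gamma n}$. Applying Lemma~\ref{lemProjectionOfFace} with $P = \tsp(n)$ and $P' = \match(n')$ yields a MILEF for $\match(n')$ of the same complexity $(m,k)$.

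It remains to feed this into Corollary~\ref{corLowerBoundMatchingRdist}. A classical MILEF is in particular a $0$-MILEF, hence trivially an $\frac{\alpha}{n'}$-MILEF of $\match(n')$; and its size satisfies $m \le 2^{\gamma n} = 2^{c \gamma_M n} \le 2^{\gamma_M n'}$ since $n' \ge c\,n$. Thus Corollary~\ref{corLowerBoundMatchingRdist} gives $k = \Omega(n'/\log n')$. Finally, using $c\,n \le n' \le \binom{n}{2}$ — so that $\log n' = \Theta(\log n)$ — this bound becomes $k = \Omega(n/\log n)$, as claimed.

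I do not expect any real obstacle here: the argument is a two-step black-box reduction. The only point requiring a moment of care is absorbing the factor $c$ into the exponent so that the size threshold $2^{\gamma n}$ for $\tsp(n)$ stays below the threshold $2^{\gamma_M n'}$ demanded by the matching hardness result; everything else — the transfer of the MILEF via Lemma~\ref{lemProjectionOfFace}, the trivial promotion of a $0$-MILEF to an $\frac{\alpha}{n'}$-MILEF, and the comparison $\log n' = \Theta(\log n)$ — is routine.
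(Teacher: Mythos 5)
Your proposal is correct and follows exactly the paper's route: the Yannakakis projection fact, Lemma~\ref{lemProjectionOfFace} to transfer the MILEF to $\match(n')$, and then Corollary~\ref{corLowerBoundMatchingRdist} (with a $0$-MILEF trivially being an $\frac{\alpha}{n'}$-MILEF), with the constant $\gamma \coloneqq c\,\gamma_M$ handling the size threshold. The only cosmetic remark is that the upper bound $n' \le \binom{n}{2}$ is not needed (and not immediately justified): since $x/\log x$ is increasing, $k = \Omega(n'/\log n')$ with $n' \ge cn$ already gives $k = \Omega(n/\log n)$.
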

\ifbool{SODAfinal}{
While most polynomial-size textbook MILEFs for $ \tsp(n) $ require $ \Omega(n^2) $ integer variables, 
there exists a polynomial-size MILEF for $ \tsp(n) $ that uses only $ O(n \log n) $ integer variables. 
The explicit construction of such MILEF is deferred to the extended version of this paper, due to space constraints.
}{
While most polynomial-size textbook MILEFs for $ \tsp(n) $ require $ \Omega(n^2) $ integer variables, in
Section~\ref{secUpperBoundTSP} we give a polynomial-size MILEF for $ \tsp(n) $ that only uses $ O(n \log n) $ integer
variables.
}
Thus, the above bound on the number of integer variables is tight up to a factor of $ O(\log^2 n) $.
 \subsection{Stable set polytope}
\label{secApplicationStableSet}
The \emph{stable set polytope} $ \stab(G) \subseteq \R^V $ of an undirected graph $ G = (V, E) $ is defined as the
convex hull of characteristic vectors of all stable sets in $ G $.
While for some graphs $ G $ the polytope $ \stab(G) $ can be easily described, it is an arguably complicated polytope
in general.
As an example, in~\cite[Lem. 8]{FioriniMPTW2015} it was shown that for every $ n $, there exists a graph $ H_n $ on $
n^2 $ vertices such that a face of $ \stab(H_n) $ can be affinely projected onto $ \cut(n + 1) $.
Thus, using Lemma~\ref{lemProjectionOfFace} and Corollary~\ref{corLowerBoundCut} we conclude:
\begin{corollary}
    \label{corLowerBoundStableSet}
    There exists a constant $ \gamma > 0 $ such that the following holds.
    For every $ n $, there exists an $ n $-vertex graph $ G $ such that any MILEF of $ \stab(G) $ of size at most $
    2^{\gamma \sqrt{n}} $ has $ \Omega \left( \sfrac{\sqrt{n}}{\log n} \right) $ integer variables.
\end{corollary}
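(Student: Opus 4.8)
The plan is to reduce $\stab(G)$, for a suitable $n$-vertex graph $G$, to the cut polytope via Lemma~\ref{lemProjectionOfFace}, and then invoke Corollary~\ref{corLowerBoundCut}. The starting point is the relationship recalled above from~\cite{FioriniMPTW2015}: for every $t$ there is a graph $H_t$ on $t^2$ vertices such that a face of $\stab(H_t)$ affinely projects onto $\cut(t+1)$. Given $n$, I would set $t \coloneqq \lfloor\sqrt n\rfloor$ and take $G$ to be the disjoint union of $H_t$ with a set $I$ of $n-t^2$ isolated vertices, so that $G$ has exactly $n$ vertices. Since $H_t$ is an induced subgraph of $G$, we have $\stab(G) = \stab(H_t)\times[0,1]^{I}$, and hence $\stab(H_t)$ is the coordinate projection of the face $F \coloneqq \{x\in\stab(G) : x_v = 0 \text{ for all } v\in I\}$ of $\stab(G)$; in particular $\stab(H_t)$ is an affine projection of a face of $\stab(G)$.

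Now suppose $\stab(G)$ admits a MILEF of complexity $(m,k)$ with $m \le 2^{\gamma\sqrt n}$, for a constant $\gamma>0$ to be fixed. Applying Lemma~\ref{lemProjectionOfFace} with $P=\stab(G)$, $P'=\stab(H_t)$ shows $\stab(H_t)$ has a MILEF of complexity $(m,k)$; applying it once more with $P=\stab(H_t)$, $P'=\cut(t+1)$ shows $\cut(t+1)$ has a MILEF of complexity $(m,k)$. Being exact, this MILEF is in particular a $\frac{\alpha}{(t+1)^2}$-MILEF of $\cut(t+1)$, where $\alpha$ is the constant of Corollary~\ref{corLowerBoundCut}. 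Let $\gamma'$ denote the other constant of that corollary, and put $\gamma \coloneqq \gamma'$. Since $t+1 > \sqrt n$, we get $m \le 2^{\gamma\sqrt n} = 2^{\gamma'\sqrt n} < 2^{\gamma'(t+1)}$, so Corollary~\ref{corLowerBoundCut} applies and yields $k = \Omega\!\left((t+1)/\log(t+1)\right)$. Finally $t+1 \ge \sqrt n$ and $\log(t+1) = O(\log n)$ give $k = \Omega(\sqrt n/\log n)$, as desired; for the finitely many small $n$ where the estimates degenerate the claim is vacuous after adjusting the hidden constant.

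I do not anticipate a genuine obstacle: the substance is entirely contained in ingredients already available (the $\stab$-to-$\cut$ reduction of~\cite{FioriniMPTW2015}, Lemma~\ref{lemProjectionOfFace}, and Corollary~\ref{corLowerBoundCut}). The only two points that need a little care are (i) checking that padding $H_t$ with isolated vertices really exhibits $\stab(H_t)$ as an affine projection of a face of $\stab(G)$ — the standard fact that the stable set polytope of an induced subgraph is recovered by restricting to the face where the remaining coordinates vanish and then forgetting those coordinates — and (ii) the exponent bookkeeping needed to turn the size bound $2^{\gamma\sqrt n}$ for $\stab(G)$ into the bound $2^{\gamma'(t+1)}$ that Corollary~\ref{corLowerBoundCut} requires, which is immediate from $t+1 > \sqrt n$.
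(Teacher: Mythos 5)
Your proposal is correct and follows essentially the same route as the paper: the reduction of~\cite[Lem.~8]{FioriniMPTW2015} from a face of a stable set polytope to $\cut(t+1)$, combined with Lemma~\ref{lemProjectionOfFace} and Corollary~\ref{corLowerBoundCut}. The only addition is the explicit padding with isolated vertices to reach exactly $n$ vertices, a detail the paper leaves implicit, and your handling of it (restricting to the face where the padded coordinates vanish) is correct.
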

We highlight that the above result can also be reduced from our MILEF extension complexity result for matchings, i.e., Corollary~\ref{corLowerBoundMatchingRdist}. This follows from the fact that the matching polytope of any graph $G$ is the stable set polytope of the corresponding line graph, whose number of vertices is equal to the number of edges in $G$. Hence, if $G=K_n$, then the matching polytope of $K_n$ is the stable set polytope of a graph with $O(n^2)$ many vertices.

Note that $ \stab(G) = \conv \left \{ x \in \{0,1\}^V : x_v + x_w \le 1 \ \forall \{v,w\} \in E \right \} $, and hence, $
\stab(G) $ admits a polynomial-size MILEF with $ n $ integer variables, for every $ n $-vertex graph $ G $.
However, we are not aware of polynomial-size MILEFs with $ o(n) $ integer variables.
In particular, we believe that the bound in Corollary~\ref{corLowerBoundStableSet} can be significantly improved.
\ifbool{SODAfinal}{}{We comment on this issue in Section~\ref{secTowards}.}

 \subsection{Knapsack polytope}
\label{secApplicationKnapsack}
Given item sizes $ a = (a_1,\dotsc,a_n) \in \R^n_{\ge 0} $ and a capacity $ B \ge 0 $, the corresponding \emph{knapsack
polytope} is defined as $ \knap(a, B) \coloneqq \conv \left \{ x \in \{0,1\}^n : a\t x \le B \right\} $.
Similar to the case of stable set polytopes, for certain item sizes and capacities the corresponding knapsack polytopes
have a simple structure.
In general, however, knapsack polytopes turn out to be complicated polytopes.
Indeed, in~\cite{PokuttaV2013} it is shown that for every $ n $, there exist item sizes $ a \in \R^{O(n^2)}_{\ge 0} $
and a capacity $ B \ge 0 $ such that $ \cut(n) $ is an affine projection of a face of $ \knap(a, B) $.
Analogous to the previous section, using Lemma~\ref{lemProjectionOfFace} and Corollary~\ref{corLowerBoundCut} we
conclude:
\begin{corollary}
    \label{corLowerBoundKnapsack}
    There exists a constant $ \gamma > 0 $ such that the following holds.
    For every $ n $, there exist item sizes $ a \in \R^n_{\ge 0} $ and a capacity $ B \ge 0 $ such that any MILEF of $
    \knap(a, B) $ of size at most $ 2^{\gamma \sqrt{n}} $ has $ \Omega \left( \sfrac{\sqrt{n}}{\log n} \right) $ integer
    variables. \hfill $ \qed $
\end{corollary}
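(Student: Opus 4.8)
The plan is to mimic the argument used for the stable set polytope (Corollary~\ref{corLowerBoundStableSet}), reducing this time from the cut polytope via the result of~\cite{PokuttaV2013}, which provides, for every positive integer $m$, item sizes $a' \in \R^{O(m^2)}_{\ge 0}$ and a capacity $B \ge 0$ such that $\cut(m)$ is an affine projection of a face of $\knap(a', B)$.

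First I would fix the target dimension: given $n$, choose $m \coloneqq \lfloor c\sqrt{n} \rfloor$ for a small enough constant $c > 0$ so that the number $O(m^2)$ of items guaranteed by~\cite{PokuttaV2013} is at most $n$. To meet the ``exactly $n$ items'' requirement in the statement, I would append zero-size items: if $a' \in \R^{n'}_{\ge 0}$ with $n' \le n$, then $a \coloneqq (a', 0, \dots, 0) \in \R^n_{\ge 0}$ satisfies $\knap(a, B) = \knap(a', B) \times [0,1]^{n - n'}$, so that $\knap(a', B)$—and hence the face of it projecting onto $\cut(m)$—is itself a face of $\knap(a, B)$ (namely, the intersection with the hyperplanes fixing the padded coordinates to $0$). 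Composing projections, $\cut(m)$ is an affine projection of a face of $\knap(a, B)$.

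Next I would invoke the transfer lemma. Suppose $\knap(a, B)$ admits a MILEF of complexity $(M, k)$ with $M \le 2^{\gamma \sqrt{n}}$, for a constant $\gamma > 0$ to be fixed at the end. By Lemma~\ref{lemProjectionOfFace}, $\cut(m)$ then admits a MILEF of complexity $(M, k)$ as well. Since $m = \Theta(\sqrt{n})$, there is a constant $\gamma' > 0$ (depending only on $\gamma$ and $c$) with $M \le 2^{\gamma \sqrt{n}} \le 2^{\gamma' m}$. A classical (i.e.\ $0$-)MILEF is in particular an $\tfrac{\alpha}{m^2}$-MILEF for the constant $\alpha$ of Corollary~\ref{corLowerBoundCut}, so provided $\gamma'$ is at most the threshold constant appearing there (which we enforce by choosing $\gamma$ small enough), that corollary forces $k = \Omega(m / \log m) = \Omega(\sqrt{n} / \log n)$. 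This is exactly the claim.

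The reduction is essentially routine once Lemma~\ref{lemProjectionOfFace} and Corollary~\ref{corLowerBoundCut} are available; the only mild obstacle is the bookkeeping of constants—verifying that the size bound $2^{\gamma\sqrt{n}}$ on the knapsack side translates, after the substitution $m = \Theta(\sqrt{n})$, into a size bound $2^{\gamma' m}$ on the cut side that still lies below the exponential threshold of Corollary~\ref{corLowerBoundCut}—together with the harmless padding step needed to hit exactly $n$ items.
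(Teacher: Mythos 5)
Your proposal is correct and follows exactly the paper's route: the paper also obtains this corollary by combining the result of~\cite{PokuttaV2013} (that $\cut(m)$ is an affine projection of a face of a knapsack polytope with $O(m^2)$ items) with Lemma~\ref{lemProjectionOfFace} and Corollary~\ref{corLowerBoundCut}. The details you supply (choosing $m = \Theta(\sqrt{n})$, padding with zero-size items, and the constant bookkeeping) are just the routine steps the paper leaves implicit, and they are carried out correctly.
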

Clearly, by its definition, $ \knap(a, B) $ admits a linear-size MILEF with $ n $ integer variables.
While we are not aware of any other polynomial-size MILEF for general knapsack polytopes that uses $ o(n) $ integer
variables, it is not clear to us whether the bound in Corollary~\ref{corLowerBoundKnapsack} can be significantly
improved.
 \subsection{Dominant of the $ V $-join polytope}
\label{secApplicationVJoin}
In this section, we consider the $ V $-join polytope of $ K_n $ and in particular its dominant.
Since both polyhedra contain the perfect matching polytope as a face, it is not surprising that we obtain lower bounds
on the complexity of MILEFs of these polyhedra.
However, the main purpose of this section is to obtain lower bounds for \emph{approximate} (MI)LEFs, which will be
essential for establishing lower bounds for the dominant of the odd cut polytope in the next section.

Let $ n $ be even and $ K_n = (V, E) $ be the complete graph on $ n $ vertices.
Recall that an edge subset $ F \subseteq E $ is a called a \emph{$V$-join} in $ K_n $ if every vertex in $ (V, F) $ has
odd degree.
The \emph{$V$-join polytope} of $ K_n $ is defined as the convex hull of the characteristic vectors of all $V$-joins in
$ K_n $ and is denoted by $ \vjoin(n) \subseteq \R^E $.
The \emph{dominant} of the $ V $-join polytope is defined as $ \vjoindom(n) \coloneqq \vjoin(n) + \R^E_{\ge 0} $.

In the next statement, we derive a lower bound on approximate LEFs for $ \vjoindom(n) $ by exploiting the following
relation between $ \vjoindom(n) $ and $ \match(n) $.
First, note that every $ V $-join has cardinality at least $ n/2 $, and hence the set $ F \coloneqq \{ x \in \vjoindom(n)
: \onesvec\t x = n/2 \} $ is a face of $ \vjoindom(n) $.
Furthermore, a subset of edges of $ E $ is a $ V $-join of cardinality $ n/2 $ if and only if it is a \emph{perfect
matching} in $ K_n $, i.e., a matching of cardinality $ n/2 $.
Since every matching consists of at most $ n/2 $ edges, we have that
\[
    \pmatch(n) \coloneqq \left \{ x \in \match(n) : \onesvec\t x = n/2 \right \} = F
\]
is also a face of $ \match(n) $.
The polytope $ \pmatch(n) $ is called the \emph{perfect matching polytope}.
Furthermore, it is easy to see that $ \match(n) = \{ y \in \R^E_{\ge 0} : y \le x \text{ for some } x \in \pmatch(n) \}
$ holds.
In what follows, we use all these relations together with Theorem~\ref{thmBraun} to obtain a similar statement for $
\vjoindom(n) $.
\begin{theorem}
    \label{thm:LEFlowboundVjoin}
    There are constants $ \alpha, \gamma > 0 $ such that, for every $ n $ even, any polyhedron $ K $ with
    $ \vjoindom(n) \subseteq K \subseteq (1 - \frac{\alpha}{n^4}) \vjoindom(n) $ satisfies $ \xc(K) > 2^{\gamma n} $.
\end{theorem}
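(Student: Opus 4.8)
The plan is to reduce the statement to the known inapproximability of the matching polytope, Corollary~\ref{corBraun}, using the two structural facts recalled above: $\pmatch(n)$ is exactly the minimal face of $\vjoindom(n)$ in the direction $-\onesvec$, namely $\pmatch(n) = \vjoindom(n) \cap \{x : \onesvec^\intercal x = n/2\}$, and $\match(n) = \{y \in \R^E_{\ge 0} : y \le x \text{ for some } x \in \pmatch(n)\}$. So from any polyhedron $K$ with $\vjoindom(n) \subseteq K \subseteq (1 - \frac{\alpha}{n^4})\vjoindom(n)$ I will construct a polyhedron $G'$ with $\match(n) \subseteq G'$, $\rdist(\match(n), G') \le \frac{\alpha}{n}$, and $\xc(G') \le \xc(K) + 2|E|$. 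With $\alpha$ chosen as the constant of Corollary~\ref{corBraun} this forces $\xc(G') > 2^{\gamma' n}$ for the corresponding constant $\gamma'$, hence $\xc(K) > 2^{\gamma n}$ for $n$ large and any $\gamma < \gamma'$; the finitely many small $n$ are handled by shrinking $\gamma$. An important subtlety that dictates the shape of the argument is that $G'$ must approximate the \emph{full-dimensional} polytope $\match(n)$, not $\pmatch(n)$ itself: the face of $K$ we will extract typically violates the implicit equations $x(\delta(v)) = 1$ of $\pmatch(n)$ slightly, so its relative distance to $\pmatch(n)$ would be infinite, whereas passing to the down-closure $\match(n)$ removes those implicit equations.

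First I would normalize the scaling so that the relevant face sits at the right level. Since the recession cone of $K$ is $\R^E_{\ge 0}$, the value $\mu := \min_{x \in K} \onesvec^\intercal x$ is attained, and the sandwich (together with $\onesvec^\intercal x \ge n/2$ on $\vjoindom(n)$) gives $\mu \in [(1 - \frac{\alpha}{n^4})\frac{n}{2}, \frac{n}{2}]$. Replacing $K$ by $\bar K := \frac{n/2}{\mu} K$, which has the same extension complexity, one still has $\vjoindom(n) \subseteq \bar K \subseteq \beta \vjoindom(n)$ for some $\beta \in [1 - \frac{\alpha}{n^4}, 1]$, but now $\min_{x \in \bar K}\onesvec^\intercal x = n/2$ exactly. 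Thus $F := \bar K \cap \{x : \onesvec^\intercal x = n/2\}$ is a bounded face of $\bar K$ with $\pmatch(n) \subseteq F \subseteq \beta\vjoindom(n) \cap \{x : \onesvec^\intercal x = n/2\}$ and $\xc(F) \le \xc(\bar K) = \xc(K)$. Setting $G' := \{y \in \R^E_{\ge 0} : y \le x \text{ for some } x \in F\}$, monotonicity gives $\match(n) \subseteq G'$, and since $G'$ is the projection onto the $y$-coordinates of $\{(x, y) : x \in F, \ 0 \le y \le x\}$ we get $\xc(G') \le \xc(F) + 2|E| \le \xc(K) + 2\binom{n}{2}$.

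The core of the proof is the estimate $\rdist(\match(n), G') \le \frac{\alpha}{n}$. Since $\match(n)$ is down-closed, Lemma~\ref{lemma:rdistMaxGap} reduces it to $\LPgapMax(\match(n), G') \le \frac{\alpha}{n}$, i.e.\ to showing $\max_{y \in G'} c^\intercal y \le (1 + \frac{\alpha}{n}) \max_{y \in \match(n)} c^\intercal y$ for all $c \in \R^E_{\ge 0}$. For $c \ge 0$ the left-hand side equals $\max_{0 \le c' \le c} \max_{x \in F} c'^\intercal x$ and the right-hand side equals $\max_{0 \le c' \le c} \max_{x \in \pmatch(n)} c'^\intercal x$, so it suffices to prove $\max_{x \in F} c'^\intercal x \le (1 + \frac{\alpha}{n}) \max_{x \in \pmatch(n)} c'^\intercal x$ for every $c' \in \R^E_{\ge 0}$. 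Writing a point of $F \subseteq \beta\vjoindom(n) \cap \{\onesvec^\intercal x = n/2\}$ as $\beta u + w$ with $u \in \vjoin(n)$ and $w \in \R^E_{\ge 0}$, the identity $\beta \onesvec^\intercal u + \onesvec^\intercal w = n/2$ together with $\onesvec^\intercal u \ge n/2$ (every $V$-join has at least $n/2$ edges) forces $\onesvec^\intercal w \le (1 - \beta)\frac{n}{2}$ and, since a $V$-join with more than $n/2$ edges has at least $n/2 + 1$ of them, forces the total weight that the convex combination $u$ puts outside $\pmatch(n)$ to be $O((1 - \beta) n)$; both quantities are $O(1/n^3)$. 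Feeding this into $c'^\intercal(\beta u + w) = \beta c'^\intercal u + c'^\intercal w$ and using the crude bounds $\max_{x \in \pmatch(n)} c'^\intercal x \ge \|c'\|_\infty$ (every edge lies in a perfect matching) and $\|c'\|_1 \le |E| \cdot \|c'\|_\infty$ yields $c'^\intercal(\beta u + w) \le (1 + \frac{\alpha}{n}) \max_{x \in \pmatch(n)} c'^\intercal x$ once $n$ is large relative to the fixed $\alpha$. This completes the construction, and the contradiction with Corollary~\ref{corBraun} follows as described.

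I expect the main obstacle to be exactly this last chain of estimates: one must verify that \emph{every} source of error — the slack $\onesvec^\intercal w$, the mass of $u$ outside $\pmatch(n)$, and the ratio $\|c'\|_1 / \max_{x \in \pmatch(n)} c'^\intercal x$ — is polynomially bounded in $n$, and it is the product of these polynomial losses that forces the exponent $4$ in the hypothesis (a $(1 - \frac{\alpha}{n^4})$-sandwiching degrades only to a $\frac{\alpha}{n}$ relative distance, which is precisely the regime covered by Corollary~\ref{corBraun}). The extension-complexity bookkeeping under scaling, taking a face, and taking a down-closure is routine, as is the reduction through Lemma~\ref{lemma:rdistMaxGap}; the one genuinely conceptual point is that the approximation has to be measured against $\match(n)$ rather than $\pmatch(n)$.
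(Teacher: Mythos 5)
Your overall plan---slice at the level $\onesvec\t x = n/2$, pass to the down-closure so that the comparison is made with the full-dimensional polytope $\match(n)$ rather than $\pmatch(n)$, bound the LP gap and combine Lemma~\ref{lemma:rdistMaxGap} with Corollary~\ref{corBraun}---is the same route the paper takes, and your error estimates have the right order of magnitude (a $(1-\tfrac{\alpha}{n^4})$-sandwich degrades to an error of about $\tfrac{\alpha}{n}$ against $\match(n)$). The genuine gap is in the normalization step. After rescaling by $\tfrac{n/2}{\mu}$ with $\mu=\min_{x\in K}\onesvec\t x$, it is in general false that $\vjoindom(n)\subseteq\bar K$, and false even that $\pmatch(n)\subseteq F=\bar K\cap\{x:\onesvec\t x=n/2\}$. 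Concretely, fix a single perfect matching vertex $x_0$ of $\pmatch(n)$, let $\eps=\tfrac{\alpha}{n^4}$, and take
\[
K \coloneqq \conv\bigl(\vjoindom(n)\cup\{(1-\eps)x_0\}\bigr)+\R^E_{\ge 0}.
\]
Then $\vjoindom(n)\subseteq K\subseteq(1-\eps)\vjoindom(n)$, so $K$ satisfies the hypothesis, but $\mu=(1-\eps)\tfrac n2$ is attained only at $(1-\eps)x_0$: writing any $x\in K$ as $\theta(1-\eps)x_0+(1-\theta)v+w$ with $v\in\vjoindom(n)$, $w\ge 0$, the bound $\onesvec\t v\ge\tfrac n2$ forces $\theta=1$ and $w=0$ whenever $\onesvec\t x=(1-\eps)\tfrac n2$. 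Hence after your rescaling $F=\{x_0\}$, the set $G'$ is merely the box $\{y:0\le y\le x_0\}$, and $\match(n)\not\subseteq G'$; the claimed containments $\pmatch(n)\subseteq F$ and $\match(n)\subseteq G'$ fail, and with them the contradiction with Corollary~\ref{corBraun}. The problem is structural: the minimizing face of $\onesvec\t x$ over $K$ can be a single point unrelated to (a scaled copy of) $\pmatch(n)$, so insisting that the slice be a \emph{face} of a rescaled $K$ destroys exactly the containment your argument needs.

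The repair is to drop the rescaling and slice the original $K$ by $H=\{x:\onesvec\t x=n/2\}$: then $\pmatch(n)\subseteq K\cap H$ holds automatically (as $\pmatch(n)\subseteq\vjoindom(n)\subseteq K$) and $\xc(K\cap H)\le\xc(K)$, so the bookkeeping survives. What you lose is the clean inclusion $F\subseteq\beta\,\vjoindom(n)\cap H$ on which your decomposition $\beta u+w$ rests, so the upper bound on $\max_{x\in K\cap H}c\t x$ must instead be proved by an optimization argument. This is precisely what the paper does: it perturbs the objective to $c+2n\onesvec$, shows that $\min_{x\in\vjoindom(n)}(c+2n\onesvec)\t x$ is attained on $\pmatch(n)$, and uses $K\subseteq(1-\eps)\vjoindom(n)$ to deduce $\max_{x\in K\cap H}c\t x\le\max_{x\in\pmatch(n)}c\t x+\eps n^2$ for unit vectors $c$ with $c\t\onesvec=0$; from there the comparison of the down-closure with $\match(n)$ and the appeal to Theorem~\ref{thmBraun} proceed essentially as you sketch.
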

\begin{proof}
    For brevity, we write $ \vjoindom $, $ \match $, and $ \pmatch $ as shorthands for $ \vjoindom(n) $, $ \match(n) $,
    and $ \pmatch(n) $, respectively.
    Consider any polyhedron $ K $ such that $ \vjoindom \subseteq K \subseteq (1 - \eps) \vjoindom $, where $ \eps $
    will be fixed later, and consider the hyperplanes $ H = \{ x \in \R^E : \onesvec\t x = n/2 \} $ and $ H_0 = \{ x \in
    \R^E : \onesvec\t x = 0 \} $.
    To better structure the proof, we divide it into three claims.

    \medskip

    \noindent
    \textit{Claim: For any $ c \in \R^E $ with $ \|c\|_2 \leq 1 $, we have $ \min_{x \in \vjoindom} (c + 2n \cdot
    \onesvec)\t x = \min_{x \in \pmatch} (c + 2n \cdot \onesvec)\t x $.}

    \smallskip

    \begin{adjustwidth}{0.5cm}{}
        \noindent
        Since $ \pmatch \subseteq \vjoindom $, it suffices to show that $ \alpha \coloneqq \min_{x \in \vjoindom} (c + 2n \cdot
        \onesvec)\t x $ is attained by a point in $ \pmatch $.
        To this end, let $ y $ be any vertex of $ \pmatch $ and observe that we have
        \ifbool{SODAfinal}{
        \begin{align*}
         \alpha & \le (c + 2n \cdot \onesvec)\t y = c\t y + 2n \|y\|_1 \\
		& \le \|y\|_2 + 2n \|y\|_1 \\
		& \le (2n + 1) \|y\|_1 = (2n + 1) \frac{n}{2}.
        \end{align*}
        }{
        \[
            \alpha
            \le (c + 2n \cdot \onesvec)\t y
            = c\t y + 2n \|y\|_1
            \le \|y\|_2 + 2n \|y\|_1
            \le (2n + 1) \|y\|_1
            = (2n + 1) \frac{n}{2}.
        \]
        }
        Next, since $ c + 2n \cdot \onesvec $ is nonnegative, $ \alpha $ is finite and hence attained by a vertex $ x $
        of $ \vjoindom $.
        We claim that $ x $ must be contained in $ \pmatch $.
        Indeed, otherwise, $ x $ would satisfy $ \|x\|_1 = \onesvec\t x \ge \frac{n}{2} + 1 $, and hence
        \ifbool{SODAfinal}{
        \begin{align*}
         \alpha & = (c + 2n \cdot \onesvec)\t x = c\t x + 2n \|x\|_1 \\
		& \ge - \|x\|_2 + 2n \|x\|_1 \\
		& \ge (2n - 1) \|x\|_1 \ge (2n - 1) \left(\frac{n}{2}+1\right),
        \end{align*}
        }{
         \[
            \alpha
            = (c + 2n \cdot \onesvec)\t x
            = c\t x + 2n \|x\|_1
            \ge - \|x\|_2 + 2n \|x\|_1
            \ge (2n - 1) \|x\|_1
            \ge (2n - 1) \left(\frac{n}{2}+1\right),
        \]
        }
        which contradicts the previous inequality whenever $n>1$. \hfill $ \diamond $
    \end{adjustwidth}

    \medskip

    \noindent
    Next, we show that $ K \cap H $ approximates $ \pmatch $ well.

    \medskip

    \noindent
    \textit{Claim: for any $ c \in H_0 $ with $ \|c\|_2 \leq 1 $, we have $ \max_{x \in K \cap H} c\t x \leq \max_{x \in
    \pmatch} c\t x + \eps n^2 $.}

    \smallskip

    \begin{adjustwidth}{0.5cm}{}
        \noindent
        First, let $ c \in \R^E $ with $ \|c\|_2 \le 1 $ be arbitrary.
        Since $ c + 2n \cdot \onesvec $ is nonnegative and since $ K \subseteq (1 - \eps) \vjoindom $, we obtain
        \ifbool{SODAfinal}{
        \begin{align*}
         & \min_{x \in K} (c + 2n \cdot \onesvec)\t x \\
         & \geq (1 - \eps) \min_{x \in \vjoindom} (c + 2n \cdot \onesvec)\t x \\
         & = (1 - \eps ) \min_{x \in \pmatch} (c + 2n \cdot \onesvec)\t x,
        \end{align*}
        }{
        \[
            \min_{x \in K} (c + 2n \cdot \onesvec)\t x
            \geq (1 - \eps) \min_{x \in \vjoindom} (c + 2n \cdot \onesvec)\t x
            = (1 - \eps ) \min_{x \in \pmatch} (c + 2n \cdot \onesvec)\t x,
        \]
        }
        where the equality follows from the previous claim.
        This clearly implies that
        \ifbool{SODAfinal}{
        \begin{align*}
            & \min_{x \in K \cap H} c\t x \\
            &= \min_{x \in K \cap H} (c + 2n \cdot \onesvec) \t x - n^2 \\
            & \ge \min_{x \in K} (c + 2n \cdot \onesvec) \t x - n^2 \\
            & \ge (1 - \eps ) \min_{x \in \pmatch} (c + 2n \cdot \onesvec)\t x - n^2 \\
            & = (1 - \eps ) \min_{x \in \pmatch} c\t x + (1 - \eps) n^2 - n^2 \\
            & = (1 - \eps ) \min_{x \in \pmatch} c\t x - \eps n^2
        \end{align*}
        }{
        \begin{align*}
            \min_{x \in K \cap H} c\t x
            = \min_{x \in K \cap H} (c + 2n \cdot \onesvec) \t x - n^2
            & \ge \min_{x \in K} (c + 2n \cdot \onesvec) \t x - n^2 \\
            & \ge (1 - \eps ) \min_{x \in \pmatch} (c + 2n \cdot \onesvec)\t x - n^2 \\
            & = (1 - \eps ) \min_{x \in \pmatch} c\t x + (1 - \eps) n^2 - n^2 \\
            & = (1 - \eps ) \min_{x \in \pmatch} c\t x - \eps n^2
        \end{align*}
        }
        holds for every $ c \in \R^E $ with $ \|c\|_2 \le 1 $.
        Equivalently, we obtain that
        \[
            \max_{x \in K \cap H} c\t x \leq (1 - \eps) \max_{x \in \pmatch} c\t x + \varepsilon n^2
        \]
        holds for every $ c \in \R^E $ with $ \|c\|_2 \le 1 $.
        Now let $ c \in H_0 $ with $ \|c\|_2 \le 1 $.
        Since $ c $ satisfies $ c\t \onesvec = 0 $ and since $ \frac{1}{n - 1} \onesvec $ is contained in $ \pmatch $,
        we clearly have $ \max_{x \in \pmatch} c\t x \geq 0 $ and hence we obtain the claimed inequality. \hfill $
        \diamond $
    \end{adjustwidth}

    \medskip

    \noindent
    Define $ \bar{K} \coloneqq \{ x \in \R^E_{\geq 0} : x \leq y \text{ for some } y \in K \cap H \} $ and observe that we have
    \[
        \xc(\bar{K}) \le 2 |E| + \xc(K \cap H) \le n^2 + \xc(K).
    \]
    Thus, it is enough to prove that $ \xc(\bar{K}) \ge 2^{\gamma' n} $ holds for some universal constant $ \gamma' > 0$.

    Since $ \pmatch \subseteq K \cap H $ and $ \match = \{ x \in \R^E_{\ge 0} : x \le y \text{ for some } y \in \pmatch \} $, we have $ \match \subseteq \bar{K} $.
    Together with the following claim we finally obtain $(1-\epsilon n^3)\bar{K} \subseteq \match \subseteq \bar{K}$. This implies $\xc(\bar{K}) = 2^{\Omega(n)}$, as desired, by setting $\epsilon \coloneqq \frac{\alpha}{n^4}$, where $\alpha$ is the constant from Theorem~\ref{thmBraun}, and using Theorem~\ref{thmBraun}.

    \medskip

    \noindent
    \textit{Claim: We have $ (1 - \eps n^3) \bar{K} \subseteq \match $.}

    \smallskip

    \begin{adjustwidth}{0.5cm}{}
        \noindent
        As $ \bar{K} $ and $ \match $ are down-closed, it suffices to show that $ (1 - \eps n^3) \max_{x \in
        \bar{K}} \bar{c}\t x \le \max_{x \in \match} \bar{c}\t x $ holds for every $ \bar{c} \in \R^E_{\ge 0} $ with $
        \|\bar{c}\|_2 = 1 $.
        To this end, fix such a $ \bar{c} $ and write it as $ \bar{c} = c + \lambda \cdot \onesvec $, where $ c \in H_0
        $ with $ \|c\|_2 \le 1 $ and $ \lambda \ge 0 $.
        From the previous claim, we obtain
        \ifbool{SODAfinal}{
        \begin{align*}
            & \max_{x \in \bar{K}}  \bar{c}\t x - \max_{x \in \match} \bar{c}\t x \\
            & = \max_{x\in K \cap H} \bar{c}\t x - \max_{x \in \pmatch} \bar{c}\t x \\
            & = \max_{x\in K \cap H} (c + \lambda \cdot \onesvec)\t x - \max_{x \in \pmatch} (c + \lambda \cdot \onesvec)\t x \\
            & = \max_{x\in K \cap H} c\t x - \max_{x \in \pmatch} c\t x \\
            & \le \eps n^2
            \le \eps n^3 \cdot \max_{x \in \bar{K}} \bar{c}\t x,
        \end{align*}
        }{
        \begin{align*}
            \max_{x \in \bar{K}} \bar{c}\t x - \max_{x \in \match} \bar{c}\t x
            & = \max_{x\in K \cap H} \bar{c}\t x - \max_{x \in \pmatch} \bar{c}\t x \\
            & = \max_{x\in K \cap H} (c + \lambda \cdot \onesvec)\t x - \max_{x \in \pmatch} (c + \lambda \cdot \onesvec)\t x \\
            & = \max_{x\in K \cap H} c\t x - \max_{x \in \pmatch} c\t x \\
            & \le \eps n^2
            \le \eps n^3 \cdot \max_{x \in \bar{K}} \bar{c}\t x,
        \end{align*}
        }
        where the first equality follows from the fact that $ \bar{c} $ is nonnegative, and the last inequality
        is implied by $ \max_{x \in \bar{K}} \bar{c}\t x \ge \frac{1}{n}$, which holds due to the following.
        As $ \frac{1}{n-1} \onesvec \in \pmatch \subseteq \bar{K} $, we have $ \max_{x \in \bar{K}} \bar{c}\t x \ge
        \bar{c} (\frac{1}{n-1} \onesvec) \ge \frac{1}{n} \|\bar{c}\|_1 \ge \frac{1}{n} \|\bar{c} \|_2 = \frac{1}{n} $.
        \hfill $ \diamond $
    \end{adjustwidth}
\end{proof}
Next, we demonstrate that a statement as in Theorem~\ref{thm:LEFlowboundVjoin} implies a particular inapproximability result in terms of relative distance.
To this end, for a set $ P \subseteq \R^d $ we define $ P\dom := P + \R^d_{\ge 0} $.
\begin{lemma}
    \label{lemInapproxDomToInapproxRdist}
    Let $ P \subseteq [0,1]^d $ be a $ 0/1 $-polytope and let $ \eps \in (0,1) $, $ M > 0 $ such that every polyhedron $ K \subseteq \R^d $ with $ P\dom \subseteq K \subseteq (1 - \eps) P\dom $ satisfies $ \xc(K) \ge M $.
    Then $ P\dom \cap [0,1]^d $ does not admit an $ \frac{\eps}{d} $-LEF of size less than $ M - 3d $.
\end{lemma}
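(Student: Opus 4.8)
The plan is to argue by contradiction: assume $C \coloneqq P\dom \cap [0,1]^d$ admits an $\frac{\eps}{d}$-LEF of size $s < M - 3d$, and turn it into a polyhedron $K$ with $P\dom \subseteq K \subseteq (1-\eps)P\dom$ and $\xc(K) \le s + 3d < M$, contradicting the hypothesis. Three elementary facts about $C$ drive everything: (i) $C$ is an \emph{up-closed $0/1$-polytope} — up-closedness holds because $x \ge y \in P\dom$ implies $x \in P\dom$, and a sign-of-coefficients case analysis on a generic linear objective shows that every vertex of $C$ lies in $\{0,1\}^d$; (ii) $\min_{x\in C} c\t x = \min_{x \in P} c\t x$ for every $c \in \R^d_{\ge 0}$, since $P \subseteq C \subseteq P\dom$ and passing to the dominant does not change the minimum of a nonnegative objective; and (iii) $P\dom = C + \R^d_{\ge 0}$, again from $P \subseteq C \subseteq P\dom$.

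Given the LEF, let $K' \coloneqq \pi(R)$ be the affine image under $\pi$ of its underlying polyhedron $R$, so that $C \subseteq K'$, $\rdist(C, K') \le \eps/d$, and $\xc(K') \le s$. I would then pass to the bounded polytope $K'' \coloneqq K' \cap [0,1]^d$: it satisfies $C \subseteq K'' \subseteq [0,1]^d$, $\xc(K'') \le \xc(K') + 2d \le s + 2d$, and, by monotonicity of relative distance in the second argument (Lemma~\ref{lem:rd2}~\ref{item:rdistSetDef}), $\rdist(C, K'') \le \eps/d$. Finiteness of $\rdist(C, K'')$ forces $C$ and $K''$ to have the same affine hull, so $d' \coloneqq \dim(C) = \dim(K'')$.

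The crucial step is to upgrade the \emph{additive} relative-distance bound to a \emph{multiplicative} bound on the minimization gap — which is exactly what the up-closed $0/1$ structure of $C$ delivers via Lemma~\ref{lemma:rdistMinGap}. One gets $\LPgapMin(C, K'') < \frac{\eps}{1-\eps}$: for $d' \ge 2$ this follows from Lemma~\ref{lemma:rdistMinGap}~\ref{itemRdistMinGapLowerBoundRdist}, which yields $\frac{\LPgapMin(C,K'')}{1+\LPgapMin(C,K'')} \le (d'-1)\rdist(C,K'') \le \frac{(d-1)\eps}{d} < \eps$; for $d' \le 1$ it is trivial since then $\rdist(C,K'') < \infty$ forces $K'' = C$. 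Hence, for every $c \in \R^d_{\ge 0}$, $\min_{x\in C} c\t x \le (1+\LPgapMin(C,K'')) \inf_{x\in K''} c\t x$, and as $K'' \subseteq \R^d_{\ge 0}$ these infima are nonnegative, so $\inf_{x\in K''} c\t x \ge (1-\eps)\min_{x\in C} c\t x = (1-\eps)\min_{x\in P} c\t x$ by fact (ii). Now set $K \coloneqq K'' + \R^d_{\ge 0}$. For $c \ge 0$ the added cone does not lower the infimum, so $\inf_{x\in K} c\t x \ge (1-\eps)\min_{x\in P} c\t x = \inf_{x \in (1-\eps)P\dom} c\t x$; since $(1-\eps)P\dom$ is a closed convex set equal to the intersection of its supporting halfspaces (which are trivial in directions $c \not\ge 0$), we conclude $K \subseteq (1-\eps)P\dom$. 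Conversely, $P \subseteq C \subseteq K''$ gives $P\dom = P + \R^d_{\ge 0} \subseteq K$. Finally $\xc(K) \le \xc(K'') + d \le s + 3d$, so the hypothesis applied to $K$ forces $s + 3d \ge \xc(K) \ge M$, contradicting $s < M - 3d$.

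The main obstacle is precisely that the $\rdist$-to-multiplicative-gap conversion is unavoidable: a bare bound $\rdist(C,K') \le \eps/d$ cannot by itself imply $K'' \subseteq (1-\eps)P\dom$, because the $c$-width of $C$ along a coordinate direction can be of order $\onesvec\t c$ while $\min_{x\in P} c\t x$ is arbitrarily small. Everything therefore hinges on Lemma~\ref{lemma:rdistMinGap}, which in turn requires first establishing fact (i) (that $C$ is an up-closed $0/1$-polytope) and the dimension equality $\dim(C) = \dim(K'')$; the dominant-reconstitution identities (ii)--(iii) and the extension-complexity bookkeeping are routine.
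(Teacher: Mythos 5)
Your proposal is correct and follows essentially the same route as the paper's proof: intersect the LEF's image with $[0,1]^d$, use that $P\dom\cap[0,1]^d$ is an up-closed $0/1$-polytope to invoke Lemma~\ref{lemma:rdistMinGap}~\ref{itemRdistMinGapLowerBoundRdist} and convert the $\rdist$ bound into $\LPgapMin \le \frac{\eps}{1-\eps}$, pass to dominants to obtain $P\dom \subseteq K \subseteq (1-\eps)P\dom$, and account for the extra $3d$ in extension complexity. The only differences are presentational (your halfspace description of the dominant replaces the paper's ``both are equal to their dominants'' scaling remark, and you are slightly more explicit about the $0/1$-vertex fact and the low-dimensional cases).
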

\begin{proof}
    We have to show that every polyhedron $ K'' \subseteq \R^d $ with $ P\dom \cap [0,1]^d \subseteq K'' $ and $ \rd(P\dom \cap [0,1]^d, K'') \le \frac{\eps}{d} $ satisfies $ \xc(K'') \ge M - 3d $.
    To this end, define $ K' := K'' \cap [0,1]^d $ and observe that we have $ P\dom \cap [0,1]^d \subseteq K' $ and $ \rd(P\dom \cap [0,1]^d, K') \le \rd(P\dom \cap [0,1]^d, K'') \leq \frac{\eps}{d} $.
    Note that the latter implies $ \dim(P\dom \cap [0,1]^d) = \dim(K') $.
    Since $ P\dom \cap [0,1]^d $ is an up-closed $ 0/1 $-polytope, Lemma~\ref{lemma:rdistMinGap}~\ref{itemRdistMinGapLowerBoundRdist} implies
    \[
        \frac{\LPgapMin(P\dom \cap [0,1]^d, K')}{1 + \LPgapMin(P\dom \cap [0,1]^d, K')} \le d \cdot \rd(P\dom \cap [0,1]^d, K') \le \eps,
    \]
    which is equivalent to $ \LPgapMin(P\dom \cap [0,1]^d, K') \le \frac{\eps}{1 - \eps} $.
    Note that this implies
    \[
        \LPgapMin(P\dom, (K')\dom)
        = \LPgapMin((P\dom \cap [0,1]^d)\dom, (K')\dom)
        \le \frac{\eps}{1 - \eps},
    \]
    where the equality follows from $ P \subseteq [0,1]^d $ implying $ (P\dom \cap [0,1]^d)\dom = P\dom $.
    Thus, for the polyhedron $ K := (K')\dom \supseteq P\dom $ we also obtain $ \LPgapMin(P\dom, K) \le \frac{\eps}{1 - \eps} $.
    Since $ P\dom, K \subseteq \R^d_{\ge 0} $ are equal to their dominants, this implies $ (1 + \frac{\eps}{1 - \eps}) K \subseteq P\dom $, which is equivalent to $ K \subseteq (1 - \eps) P\dom $.
    By the assumption, we conclude that $ \xc(K) \ge M $ holds.
    Recall that $ K $ is defined via
    \[
        K = \{ x + y : x \in K'', \, x \in [0,1]^d, \, y \in \R^d_{\ge 0} \},
    \]
    and hence
    \[
        \xc(K) \le \xc(K'') + \xc([0,1]^d) + \xc(\R^d_{\ge 0}) \le \xc(K'') + 2d + d = \xc(K'') + 3d,
    \]
    which shows $ \xc(K'') \ge \xc(K) - 3d \ge M - 3d $, as claimed.    
\end{proof}
The following inapproximability result for $ \vjoindom(n) \cap [0,1]^E $ is a direct consequence of Theorem~\ref{thm:LEFlowboundVjoin} and Lemma~\ref{lemInapproxDomToInapproxRdist}.
\begin{corollary}
    \label{corVjoinNoLEF}
    There exist constants $ \alpha, \gamma > 0 $ such that, for every $ n $ even, $ \vjoindom(n) \cap [0,1]^E $ does not
    admit an $ \frac{\alpha}{n^6} $-LEF of size at most $ 2^{\gamma n} $.
\end{corollary}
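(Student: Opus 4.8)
The plan is to derive Corollary~\ref{corVjoinNoLEF} by plugging Theorem~\ref{thm:LEFlowboundVjoin} into Lemma~\ref{lemInapproxDomToInapproxRdist}; essentially no new idea is needed, only a verification that the two statements line up and some bookkeeping of constants. First I would record the elementary facts that make Lemma~\ref{lemInapproxDomToInapproxRdist} applicable with $P := \vjoin(n)$: the set $\vjoin(n) \subseteq \{0,1\}^E$ is a $0/1$-polytope (it is the convex hull of characteristic vectors of $V$-joins), its dominant equals $\vjoin(n)\dom = \vjoin(n) + \R^E_{\ge 0} = \vjoindom(n)$, and the ambient dimension is $d = |E| = \binom{n}{2} \le n^2$. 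Consequently $P\dom \cap [0,1]^d = \vjoindom(n) \cap [0,1]^E$ is exactly the polytope appearing in the conclusion of Lemma~\ref{lemInapproxDomToInapproxRdist}.

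Next I would invoke Theorem~\ref{thm:LEFlowboundVjoin}, which supplies constants $\alpha_0, \gamma_0 > 0$; we may assume $\alpha_0 \le 1$, so that $\eps := \alpha_0 / n^4 \in (0,1)$ for every even $n \ge 2$. The theorem states that every polyhedron $K$ with $\vjoindom(n) \subseteq K \subseteq (1-\eps)\vjoindom(n)$ satisfies $\xc(K) > 2^{\gamma_0 n}$, which is exactly the hypothesis of Lemma~\ref{lemInapproxDomToInapproxRdist} with this $P$, this $\eps$, and $M := 2^{\gamma_0 n}$. The lemma then yields that $\vjoindom(n) \cap [0,1]^E$ admits no $\frac{\eps}{d}$-LEF of size less than $M - 3d$.

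Finally I would clean up the parameters, which is where the only (very mild) care is required. From $d = \binom{n}{2} \le n^2$ we get $\frac{\eps}{d} = \frac{\alpha_0}{n^4 \binom{n}{2}} \ge \frac{\alpha_0}{n^6}$; since any $\eta$-LEF is in particular an $\eta'$-LEF for every $\eta' \ge \eta$, the non-existence of a $\frac{\eps}{d}$-LEF of a given size implies the non-existence of a $\frac{\alpha_0}{n^6}$-LEF of that size. Likewise $M - 3d = 2^{\gamma_0 n} - 3\binom{n}{2} \ge 2^{\gamma n}$ for a suitable constant $\gamma \in (0, \gamma_0)$ and all sufficiently large $n$ (the asymptotic nature of the statement lets us absorb the finitely many remaining values of $n$). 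Setting $\alpha := \alpha_0$ then completes the proof. The main thing to watch is the direction of the comparison for the error parameter (a smaller value gives the more restrictive notion of LEF, so passing from $\frac{\eps}{d}$ down to $\frac{\alpha_0}{n^6}$ is legitimate), together with checking $\eps \in (0,1)$ so that Lemma~\ref{lemInapproxDomToInapproxRdist} is indeed applicable and absorbing the additive $-3d$ loss into the exponent.
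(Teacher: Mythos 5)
Your proposal is correct and is exactly the paper's intended argument: the paper derives Corollary~\ref{corVjoinNoLEF} as a direct consequence of Theorem~\ref{thm:LEFlowboundVjoin} and Lemma~\ref{lemInapproxDomToInapproxRdist}, with $P=\vjoin(n)$, $\eps=\sfrac{\alpha_0}{n^4}$, $M=2^{\gamma_0 n}$, and the same bookkeeping $\sfrac{\eps}{d}\ge\sfrac{\alpha_0}{n^6}$ and $M-3d\ge 2^{\gamma n}$ that you spell out. Your remarks on the direction of the error-parameter comparison and on $\eps\in(0,1)$ are the right points to check, and the handling of small $n$ by adjusting constants is no less careful than what the paper itself does implicitly.
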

Using Corollary~\ref{corollary:milefHardness}, this immediately implies:
\begin{corollary}
    \label{corLowerBoundMILEFVjoinDomIntersectedCube}
    There exist constants $ \alpha, \gamma > 0 $ such that, for every $ n $ even, any $ \frac{\alpha}{n^6} $-MILEF of $
    \vjoindom(n) \cap [0,1]^E $ of size at most $ 2^{\gamma n} $ has $ \Omega \left( \sfrac{n}{\log n} \right) $ integer
    variables. 
\end{corollary}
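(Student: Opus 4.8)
The plan is to derive the statement directly from the linear-extension inapproximability result of Corollary~\ref{corVjoinNoLEF} together with the black-box reduction packaged in Corollary~\ref{corollary:milefHardness}. First I would fix the constants $\alpha,\gamma>0$ provided by Corollary~\ref{corVjoinNoLEF}, so that for every even $n$ the polytope $C_n \coloneqq \vjoindom(n)\cap[0,1]^E$ --- which is non-empty, being a full-dimensional polytope in $\R^E$ --- admits no $\frac{\alpha}{n^6}$-LEF of size at most $2^{\gamma n}$. This is precisely the hypothesis required to invoke Corollary~\ref{corollary:milefHardness} with exponent $\beta = 6$.

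Next I would choose a single constant $\eps>0$ with $\eps < \alpha$ and $\eps < \gamma$ (e.g.\ $\eps \coloneqq \frac{1}{2}\min\{\alpha,\gamma\}$), and apply Corollary~\ref{corollary:milefHardness} to $C = C_n$, identifying its parameter with the vertex count $n$; this identification is legitimate since Corollary~\ref{corVjoinNoLEF} already phrases the LEF-inapproximability in terms of $n$. The conclusion is that any $\frac{\alpha-\eps}{n^6}$-MILEF of $C_n$ of size at most $2^{(\gamma-\eps)n}$ uses $\Omega(n/\log n)$ integer variables. Finally I would rename $\alpha-\eps>0$ and $\gamma-\eps>0$ as the constants appearing in the statement; since $\eps$, and the universal constant hidden inside Corollary~\ref{corollary:milefHardness}, do not depend on $n$, the resulting bound holds uniformly for all even $n$.

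There is no genuine obstacle here: the substance has already been established in Theorem~\ref{thm:LEFlowboundVjoin}, Lemma~\ref{lemInapproxDomToInapproxRdist}, Corollary~\ref{corVjoinNoLEF}, and the reduction Theorem~\ref{thm:fromMILEFToLEF} underlying Corollary~\ref{corollary:milefHardness}. The only point needing a little care is the bookkeeping of constants --- making sure that the same fixed positive $\eps$ is subtracted both in the numerator of the relative-distance parameter and in the exponent of the size bound, and that it is chosen strictly below both $\alpha$ and $\gamma$ so that the resulting constants stay positive.
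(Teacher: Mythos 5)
Your proposal is correct and follows exactly the paper's route: the paper also obtains this corollary immediately by feeding Corollary~\ref{corVjoinNoLEF} into the black-box reduction of Corollary~\ref{corollary:milefHardness} (with $\beta=6$), with the same constant bookkeeping you describe.
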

Finally, we use the following lemma to deduce an inapproximability result for MILEFs of $ \vjoindom(n) $.
\begin{lemma}
    \label{lemmaAnotherAuxLemmaForVjoins}
    Let $ P \subseteq [0,1]^d $ be a $ 0/1 $-polytope and let $ \eps, M, k > 0 $ such that every $ \eps $-MILEF of $ P\dom \cap [0,1]^d $ of size at most $ M $ has at least $ k $ integer variables.
    Furthermore, let $ K $ be any polyhedron with $ P\dom \subseteq K \subseteq (1 - \frac{\eps}{d + \eps}) P\dom $.
    Then every MILEF of $ K $ of size at most $ M - 2d $ has at least $ k $ integer variables.
\end{lemma}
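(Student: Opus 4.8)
The idea is to convert a given MILEF of $K$ into an $\eps$-MILEF of $P\dom\cap[0,1]^d$ of size at most $M$ with the \emph{same} number of integer variables; the hypothesis then forces that number to be at least $k$. So let $(Q,\sigma,\pi)$ be a MILEF of $K$ of size at most $M-2d$ with $k'$ integer variables, and write $X\coloneqq\pi(Q\cap\sigma^{-1}(\Z^{k'}))$, so $\conv(X)=K$. Since $\eps>0$ and $d\ge1$, the number $\eps'\coloneqq\frac{\eps}{d+\eps}$ lies in $(0,1)$, and $K\subseteq(1-\eps')P\dom$. As $(1-\eps')P\dom$ is a dominant contained in $\R^d_{\ge0}$, this gives $K\dom\subseteq(1-\eps')P\dom\subseteq\R^d_{\ge0}$, and together with $P\dom\subseteq K\subseteq K\dom$ the set $K\dom$ obeys exactly the same sandwich as $K$. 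In particular $X\subseteq\R^d_{\ge0}$. I would take $\bar C\coloneqq K\dom\cap[0,1]^d$ as the body of the new MILEF; note $P\dom\cap[0,1]^d\subseteq\bar C$ already.

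To realize $\bar C$ as a MILEF, I would set $Q'\coloneqq(Q\times\R^d_{\ge0})\cap\{(x,y):\pi(x)+y\le\onesvec\}$ with affine maps $\pi'(x,y)\coloneqq\pi(x)+y$ and $\sigma'(x,y)\coloneqq\sigma(x)$. Then $Q'$ has at most $(M-2d)+d+d=M$ facets and still $k'$ integrality constraints: the $d$ inequalities $y\ge\zerovec$ come from the product with $\R^d_{\ge0}$ (this is the step that ``closes up'' $K$ to $K\dom$), the $d$ inequalities $\pi(x)+y\le\onesvec$ clip from above, and no lower box inequalities are needed because $X+\R^d_{\ge0}\subseteq\R^d_{\ge0}$. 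One computes $\pi'(Q'\cap\sigma'^{-1}(\Z^{k'}))=(X+\R^d_{\ge0})\cap[0,1]^d$, and the key claim is that its convex hull equals $K\dom\cap[0,1]^d$. The inclusion ``$\subseteq$'' is immediate; for ``$\supseteq$'', given $p\in K\dom\cap[0,1]^d=\conv(X+\R^d_{\ge0})\cap[0,1]^d$ write $p=\sum_i\lambda_i w_i$ with $w_i\in X+\R^d_{\ge0}$, so $w_i\ge x_i$ for some $x_i\in X$. For each coordinate $\nu$ one has $p_\nu\le1$ and $p_\nu=\sum_i\lambda_i(w_i)_\nu\ge\sum_i\lambda_i(x_i)_\nu$, so one can pick reals $t_{i,\nu}\in[(x_i)_\nu,1]$ with $\sum_i\lambda_i t_{i,\nu}=p_\nu$; the points $w_i'\coloneqq(t_{i,\nu})_\nu$ lie in $(X+\R^d_{\ge0})\cap[0,1]^d$ and satisfy $\sum_i\lambda_i w_i'=p$. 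Hence $(Q',\sigma',\pi')$ is a MILEF of $\bar C=K\dom\cap[0,1]^d$ of size $\le M$ with $k'$ integer variables.

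It then remains to bound $\rdist(A,\bar C)$ where $A\coloneqq P\dom\cap[0,1]^d$; recall $A$ is an up-closed $0/1$-polytope since $P$ is a $0/1$-polytope. For $c\in\R^d_{\ge0}$, using that $P\dom$ is a dominant with base $P\subseteq[0,1]^d$ and $P\subseteq A\subseteq P\dom$, we get $\inf_{x\in A}c\t x=\min_{x\in P}c\t x=:\alpha\ge0$, and also $\inf_{x\in\bar C}c\t x\ge\inf_{x\in K\dom}c\t x\ge(1-\eps')\inf_{x\in P\dom}c\t x=(1-\eps')\alpha$; since $\tfrac1{1-\eps'}=1+\tfrac\eps d$ and (in the case $\alpha=0$) $\bar C\subseteq\R^d_{\ge0}$, this yields $\alpha\le(1+\tfrac\eps d)\inf_{x\in\bar C}c\t x$, i.e.\ $\LPgapMin(A,\bar C)\le\tfrac\eps d$. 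Assuming first that $A$ is full-dimensional (otherwise $P$ lies in a face $\{x_j=1\}$ of the cube for some coordinates, and one reduces to a lower-dimensional instance), we have $\dim(A)=\dim(\bar C)=d$ automatically; if $d=1$ then $A=\bar C$, and if $d\ge2$ then Lemma~\ref{lemma:rdistMinGap}\ref{itemRdistMinGapLowerBoundMinGap} gives $\tfrac\eps d\ge\LPgapMin(A,\bar C)\ge\frac{\rdist(A,\bar C)}{d-1-\rdist(A,\bar C)}$, which rearranges to $\rdist(A,\bar C)\le\frac{(d-1)\eps}{d+\eps}<\eps$. Thus $(Q',\sigma',\pi')$ is an $\eps$-MILEF of $A$ of size $\le M$ with $k'$ integer variables, so by hypothesis $k'\ge k$, as claimed.

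\textbf{Main obstacle.} The crux is the construction in the second paragraph: realizing $K\dom\cap[0,1]^d$ \emph{exactly} as a convex hull of mixed-integer points while spending only $2d$ extra facets. The naive move (intersecting $Q$ with $\pi^{-1}([0,1]^d)$) only produces $\conv(X\cap[0,1]^d)$, which can be strictly smaller than $K\cap[0,1]^d$ and need not contain $P\dom\cap[0,1]^d$; passing to the dominant first is precisely what fixes this, and dropping the now-redundant lower box inequalities is what keeps the facet count at $M$. The coordinatewise interval-splitting argument verifying $\conv((X+\R^d_{\ge0})\cap[0,1]^d)=K\dom\cap[0,1]^d$ is the technical heart. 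By comparison, the relative-distance estimate is a direct application of Lemma~\ref{lemma:rdistMinGap}, and the only remaining subtlety is the routine reduction handling the non-full-dimensional case.
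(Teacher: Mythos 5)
Your overall skeleton is the same as the paper's: pass to a body inside $[0,1]^d$ that is sandwiched between $A\coloneqq P\dom\cap[0,1]^d$ and $(1-\tfrac{\eps}{d+\eps})P\dom$, bound $\LPgapMin(A,\cdot)\le\sfrac{\eps}{d}$ from the sandwich, convert this to $\rdist\le\eps$ via Lemma~\ref{lemma:rdistMinGap}, and conclude $k'\ge k$. The genuine gap is in the step you yourself flag as the technical heart: the claim that $\conv\bigl((X+\R^d_{\ge0})\cap[0,1]^d\bigr)=K\dom\cap[0,1]^d$. Your interval-splitting argument chooses $t_{i,\nu}\in[(x_i)_\nu,1]$, which tacitly assumes $(x_i)_\nu\le1$ for every point $x_i\in X$ appearing in the decomposition of $p$; but $X=\pi(Q\cap\sigma^{-1}(\Z^{k'}))$ is only constrained by $\conv(X)=K$, and all of its points may lie outside the cube. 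In fact the claim is false under the lemma's hypotheses. Take $d=2$, $P=\{(1,1)\}$, $\eps'=\sfrac{\eps}{(2+\eps)}$, $Q=\{x\in\R^2: x_1+x_2\ge 2,\ x\ge(1-\eps')\onesvec\}$, $\pi=\mathrm{id}$, and the single integrality constraint $\sigma(x)=(x_1-x_2-2\eps')/(4\eps')\in\Z$. Every point of $X$ then has a coordinate at least $1+\eps'$, yet $\conv(X)=\{x\ge(1-\eps')\onesvec: x_1+x_2\ge2\}=K$ satisfies $P\dom\subseteq K\subseteq(1-\eps')P\dom$. Here $(X+\R^2_{\ge0})\cap[0,1]^2=\emptyset$, so your $\bar C$ is empty and does not contain $A=\{(1,1)\}$; thus $(Q',\sigma',\pi')$ is not an $\eps$-MILEF of $A$ at all. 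Taking a product of this instance with a genuinely hard instance in additional coordinates shows the failure also occurs when the hypothesis on $A$ is non-vacuous, so it cannot be excused by that hypothesis: what your argument must deliver is precisely the containment $A\subseteq\bar C$, and that is what breaks.

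For comparison, the paper argues directly with $K\cap[0,1]^d$: it shows $\rdist(A,K\cap[0,1]^d)\le\eps$ from $\tfrac{\eps+d}{d}K\subseteq P\dom$ and Lemma~\ref{lemma:rdistMinGap}, and then asserts that a MILEF of $K$ of size $M-2d$ yields one of $K\cap[0,1]^d$ of size $M$ by adding the box constraints. Your ``main obstacle'' observation (that clipping the given MILEF only certifies $\conv(X\cap[0,1]^d)$) does point at a step the paper treats very tersely, but your repair via the dominant does not restore the needed containment, so the gap remains. The remaining ingredients of your write-up (facet counting, the $\LPgapMin\le\sfrac{\eps}{d}$ estimate, and the use of Lemma~\ref{lemma:rdistMinGap}\ref{itemRdistMinGapLowerBoundMinGap}) are fine, apart from a secondary loose end: the lemma needs $\dim(A)=\dim(\bar C)$, and your parenthetical reduction for the non-full-dimensional case is not immediate, since $\bar C$ need not lie in the affine hull of $A$ (only coordinates $x_i\ge 1-\eps'$, not $x_i=1$, are forced there).
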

\begin{proof}
    We may assume that $ \dim(K) = \dim(P\dom) $, otherwise intersect $ K $ with the affine hull $ H $ of $ P\dom $ and observe that if $ K $ has a MILEF of a certain complexity, then $ K \cap H $ has a MILEF of the same complexity.

    By the assumption, we have $ \frac{\eps + d}{d} K \subseteq P\dom $.
    First, we claim that this implies
    \begin{equation}
        \label{eqAnotherAuxLemmaForVjoins1}
        \LPgapMin \left( P\dom \cap [0,1]^d, K \cap [0,1]^d \right) \le \frac{\eps + d}{d} - 1.
    \end{equation}
    To see this, let $ c \in \R^d_{\ge 0} $.
    We have to show that
    \begin{equation}
        \label{eqAnotherAuxLemmaForVjoins2}
        \min_{x \in P\dom \cap [0,1]^d} c\t x \le \frac{\eps + d}{d} \min_{y \in K \cap [0,1]^d} c\t y
    \end{equation}
    holds.
    Note that since $ P \subseteq [0,1]^d $ we have
    \[
        \min_{x \in P\dom \cap [0,1]^d} c\t x = \min_{x \in P\dom} c\t x.
    \]
    Let $ y^\star \in K \cap [0,1]^d $ such that $ c\t y^\star = \min_{y \in K \cap [0,1]^d} c\t y $.
    Since $ \frac{\eps + d}{d} K \subseteq P\dom $, we obtain $ \frac{\eps + d}{d} y^\star \in P\dom $ and hence
    \[
        \min_{x \in P\dom \cap [0,1]^d} c\t x = \min_{x \in P\dom} c\t x \le \frac{\eps + d}{d} c\t y^\star = \frac{\eps + d}{d} \min_{y \in K \cap [0,1]^d} c\t y,
    \]
    which shows~\eqref{eqAnotherAuxLemmaForVjoins2} and hence we have established~\eqref{eqAnotherAuxLemmaForVjoins1}.

    Thus, using the facts that $ P\dom \cap [0,1]^d $ is a $ 0/1 $-polytope, $ \dim(K \cap [0,1]^d) = \dim(P\dom \cap [0,1]^d) $, and $ P\dom \cap [0,1]^d \subseteq K \cap [0,1]^d $, we can invoke Lemma~\ref{lemma:rdistMinGap}~\ref{itemRdistMinGapLowerBoundMinGap}, which, together with inequality~\eqref{eqAnotherAuxLemmaForVjoins1} implies
    \begin{align*}
        \rd \left(P\dom \cap [0,1]^d, K \cap [0,1]^d\right)
        & \le d \cdot \LPgapMin\left(P\dom \cap [0,1]^d, K \cap [0,1]^d \right) \\
        & \le d \cdot \left( \frac{\eps + d}{d} - 1 \right) \\
        & = \eps.
    \end{align*}
    Suppose now that $ K $ has a MILEF of size at most $ M - 2d $ with $ k' $ integer variables.
    Then $ K \cap [0,1]^d $ has a MILEF of size at most $ M $ with $ k' $ integer variables.
    This means that $ P\dom \cap [0,1]^d $ has an $ \eps $-MILEF of size at most $ M $ with $ k' $ integer variables.
    By the assumption we must have $ k' \ge k $, which yields the claim.
\end{proof}
Finally, we are able to prove the following lower bound on the complexity of MILEFs approximating the dominant of the $ V $-join polytope, which is a direct consequence of Corollary~\ref{corLowerBoundMILEFVjoinDomIntersectedCube} and Lemma~\ref{lemmaAnotherAuxLemmaForVjoins}.
\begin{corollary}
    \label{corLowerBoundMILEFVjoin}
    There are constants $ \alpha, \gamma > 0 $ such that the following holds.
    Let $ n $ be even and $ K $ be any polyhedron with $ \vjoindom(n) \subseteq K \subseteq (1 - \frac{\alpha}{n^8})
    \vjoindom(n) $.
    Then every MILEF of $ K $ of size at most $ 2^{\gamma n} $ has $ \Omega( \sfrac{n}{\log n} ) $ integer variables.
\end{corollary}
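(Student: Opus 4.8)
The plan is to obtain the statement as a direct instantiation of Lemma~\ref{lemmaAnotherAuxLemmaForVjoins}, taking $P \coloneqq \vjoin(n)$, which is a $0/1$-polytope in $[0,1]^E$ of dimension $d = |E| = \binom{n}{2}$, so that $P\dom = \vjoindom(n)$ and $P\dom \cap [0,1]^E = \vjoindom(n) \cap [0,1]^E$. By Corollary~\ref{corLowerBoundMILEFVjoinDomIntersectedCube} there are constants $\alpha', \gamma' > 0$ such that, for every even $n$, any $\frac{\alpha'}{n^6}$-MILEF of $\vjoindom(n) \cap [0,1]^E$ of size at most $2^{\gamma' n}$ has at least $c \cdot \sfrac{n}{\log n}$ integer variables for a suitable constant $c>0$. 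This is precisely the hypothesis of Lemma~\ref{lemmaAnotherAuxLemmaForVjoins} with $\eps \coloneqq \frac{\alpha'}{n^6}$, $M \coloneqq 2^{\gamma' n}$, and $k \coloneqq c \cdot \sfrac{n}{\log n}$.

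It then remains to translate the conclusion of Lemma~\ref{lemmaAnotherAuxLemmaForVjoins} into the bounds claimed here, which is a matter of tracking polynomial and exponential factors. For the approximation parameter, $\frac{\eps}{d + \eps} = \frac{\alpha'/n^6}{\binom{n}{2} + \alpha'/n^6} \ge \frac{\alpha'}{n^8}$ for all sufficiently large $n$, since $\binom{n}{2} + \alpha'/n^6 \le n^2$ eventually. Hence, setting $\alpha \coloneqq \alpha'$, any polyhedron $K$ with $\vjoindom(n) \subseteq K \subseteq (1 - \frac{\alpha}{n^8}) \vjoindom(n)$ also satisfies $K \subseteq (1 - \frac{\eps}{d+\eps}) \vjoindom(n)$: indeed, for $0 \le a \le b < 1$ one has $(1-a)\vjoindom(n) \subseteq (1-b)\vjoindom(n)$, because $\vjoin(n) \subseteq \R^E_{\ge 0}$ implies $(1-a)p = (1-b)p + (b-a)p \in (1-b)\vjoin(n) + \R^E_{\ge 0} = (1-b)\vjoindom(n)$ for every $p \in \vjoin(n)$, and adding $\R^E_{\ge 0}$ on both sides preserves the inclusion. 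For the size, with $d = \binom{n}{2}$ we have $M - 2d = 2^{\gamma' n} - n(n-1) \ge 2^{\gamma n}$ for every $\gamma < \gamma'$ and all sufficiently large $n$; fix $\gamma \coloneqq \gamma'/2$.

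Lemma~\ref{lemmaAnotherAuxLemmaForVjoins} now yields that every MILEF of such a $K$ of size at most $2^{\gamma n} \le M - 2d$ has at least $c \cdot \sfrac{n}{\log n} = \Omega(\sfrac{n}{\log n})$ integer variables, which is the claim; the finitely many small even $n$ not handled above can be absorbed into the constants since the statement is asymptotic in $n$. I do not expect a genuine obstacle here: the only care needed is in bookkeeping how the polynomial factor degrades from $n^6$ to $n^8$ — the extra $\Theta(n^2)$ being exactly $d = \binom{n}{2}$ — and in verifying the elementary monotonicity $(1-a)\vjoindom(n) \subseteq (1-b)\vjoindom(n)$ for $a \le b$. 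All the real work is already contained in Theorem~\ref{thm:LEFlowboundVjoin}, Lemma~\ref{lemInapproxDomToInapproxRdist}, Lemma~\ref{lemmaAnotherAuxLemmaForVjoins}, and Theorem~\ref{thm:fromMILEFToLEF}.
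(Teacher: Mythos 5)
Your proposal is correct and follows exactly the route the paper intends: the paper states this corollary as a direct consequence of Corollary~\ref{corLowerBoundMILEFVjoinDomIntersectedCube} and Lemma~\ref{lemmaAnotherAuxLemmaForVjoins}, instantiated with $P=\vjoin(n)$ and $d=\binom{n}{2}$, which is precisely your argument. Your bookkeeping (the degradation from $n^6$ to $n^8$ via $\eps/(d+\eps)$, the monotonicity $(1-a)\vjoindom(n)\subseteq(1-b)\vjoindom(n)$ for $a\le b$, and absorbing the $-2d$ in the size bound) is sound and fills in the details the paper leaves implicit.
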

We remark that the $ V $-join polytope (and hence also its dominant) has a polynomial-size (exact) MILEF with $ O(n) $
integer variables, see Section~\ref{secUpperBoundMatching}.
 \subsection{Dominant of the odd-cut polytope}
\label{secApplicationOddCut}
Using the bounds obtained in the previous section, we are ready to provide lower bounds on the complexity of MILEFs for
the dominant of the odd-cut polytope.
Let $ n $ be even and let $ K_n = (V, E) $ be the complete undirected graph on $ n $ vertices.
An \emph{odd cut} in $ K_n $ is defined as a subset $ F \subseteq E $ that can be written as $ F = \left\{ \{v, w\} \in
E : v \in S, \, w \notin S \right\} $ for some set $ S \subseteq V $ that has odd cardinality.
The \emph{odd-cut polytope} $ \oddcut(n) \subseteq \R^E $ is defined as the convex hull of the characteristic vectors of all
odd cuts in $ K_n $, and its dominant is defined as $ \oddcutdom(n) \coloneqq \oddcut(n) + \R^E_{\ge 0} $.

It is easy to check that every odd cut intersects every $ V $-join.
However, an even stronger and well-known link between $ \oddcutdom(n) $ and $ \vjoindom(n) $ is that these polyhedra are so-called \emph{blockers} of each other.
For a convex set $ K \subseteq \R^d $, its \emph{blocker} $B(K)$ is defined as $ B(K) \coloneqq \{ x \in \R^d_{\ge 0} : y\t x \ge 1 \ \forall y \in K \} $ (see, e.g.,~\cite[Sec.~9]{Schrijver86} for more information on blocking polyhedra).
Using this notation, the mentioned relation reads
\ifbool{SODAfinal}{
\begin{align*}
  B \left( \oddcutdom(n) \right) & = \vjoindom(n) \quad \text{ and } \\ 
  B \left( \vjoindom(n) \right) &= \oddcutdom(n). 
\end{align*}

}{
\[
    B \left( \oddcutdom(n) \right) = \vjoindom(n) \quad \text{ and } \quad B \left( \vjoindom(n) \right) = \oddcutdom(n).
\]
}
Another important fact that we will use in what follows is the observation that every linear extended formulation for a
polyhedron $ P \subseteq \R^d $ can be turned into one for $ B(P) $ by adding at most $ d + 1 $ additional inequalities.
More precisely, we use the following well-known fact (see, e.g., \cite[Prop.~1]{ConfortiKWW2015}):
\begin{equation}\label{eq:xcOfBlocker}
\xc(B(P)) \leq \xc(P) + d + 1 \qquad \forall P\subseteq \mathbb{R}^d\enspace.
\end{equation}
We are ready to transfer Theorem~\ref{thm:LEFlowboundVjoin} to the dominant of the odd-cut polytope:
\begin{corollary}
    \label{corLEFlowboundOddCut}
    There are constants $ \alpha, \gamma > 0 $ such that, for every $ n $ even, any polyhedron $ K $ with
    $ \oddcutdom(n) \subseteq K \subseteq (1 - \frac{\alpha}{n^4}) \oddcutdom(n) $ satisfies $ \xc(K) > 2^{\gamma n} $.
\end{corollary}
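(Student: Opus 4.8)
The plan is to derive Corollary~\ref{corLEFlowboundOddCut} as the blocker-dual of Theorem~\ref{thm:LEFlowboundVjoin}, paying only the linear overhead recorded in~\eqref{eq:xcOfBlocker}. Let $\alpha,\gamma'>0$ be the constants provided by Theorem~\ref{thm:LEFlowboundVjoin}; I will use this same $\alpha$ in Corollary~\ref{corLEFlowboundOddCut} and fix its $\gamma$ only at the very end. Fix an even $n$ and set $\mu \coloneqq 1-\tfrac{\alpha}{n^4}$, which we may assume lies in $(0,1)$ (replacing $\alpha$ by $\min\{\alpha,1\}$ in Theorem~\ref{thm:LEFlowboundVjoin} if necessary, which only weakens the hypothesis there and hence keeps the statement valid). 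Let $K\subseteq\R^E$ be any polyhedron with $\oddcutdom(n)\subseteq K\subseteq\mu\,\oddcutdom(n)$; the object I would study is its blocker $B(K)$, which is a nonempty polyhedron by the sandwich derived below.

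The first step is to push the two inclusions defining $K$ through the blocker. Using that $B(\cdot)$ is inclusion-reversing, that $B(\lambda P)=\tfrac1\lambda B(P)$ for every $\lambda>0$, and that $B(\oddcutdom(n))=\vjoindom(n)$, the inclusion $\oddcutdom(n)\subseteq K$ gives $B(K)\subseteq\vjoindom(n)$, while $K\subseteq\mu\,\oddcutdom(n)$ gives $\tfrac1\mu\vjoindom(n)=B(\mu\,\oddcutdom(n))\subseteq B(K)$. Hence $\tfrac1\mu\vjoindom(n)\subseteq B(K)\subseteq\vjoindom(n)$, and multiplying this whole chain by $\mu>0$ yields $\vjoindom(n)\subseteq\mu B(K)\subseteq(1-\tfrac{\alpha}{n^4})\vjoindom(n)$. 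This is verbatim the hypothesis of Theorem~\ref{thm:LEFlowboundVjoin} applied to the polyhedron $\mu B(K)$, so $\xc(\mu B(K))>2^{\gamma'n}$; and since extension complexity is invariant under scaling by a positive constant, $\xc(B(K))=\xc(\mu B(K))>2^{\gamma'n}$.

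It then remains to transfer this lower bound back from $B(K)$ to $K$. Applying~\eqref{eq:xcOfBlocker} with $P=K$ and $d=|E|=\binom{n}{2}$ gives $\xc(B(K))\le\xc(K)+\binom{n}{2}+1$, hence $\xc(K)\ge\xc(B(K))-\binom{n}{2}-1>2^{\gamma'n}-\binom{n}{2}-1$. For all sufficiently large even $n$ this exceeds $2^{\gamma n}$ with, say, $\gamma\coloneqq\gamma'/2$, and the finitely many remaining small even $n$ are absorbed by further decreasing $\gamma$, exactly as in the other corollaries of this section.

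I do not anticipate a genuine obstacle here: the argument is precisely the blocker duality indicated in the text, combined with the linear blowup of~\eqref{eq:xcOfBlocker}. The only points that need care are bookkeeping ones: correctly composing the two reversals — one coming from inclusion-reversal of $B(\cdot)$, one from the scaling identity $B(\lambda P)=\tfrac1\lambda B(P)$ — so that after rescaling by $\mu$ one lands on exactly the hypothesis of Theorem~\ref{thm:LEFlowboundVjoin} with the same constant $\alpha$, and observing that $B(K)$ is indeed a nonempty polyhedron (it contains $\tfrac1\mu\vjoindom(n)\neq\emptyset$, and is cut out by finitely many inequalities since $K$ is a nonempty polyhedron).
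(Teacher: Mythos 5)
Your proposal is correct and follows essentially the same route as the paper: pass to the blocker $B(K)$, use $B(\oddcutdom(n))=\vjoindom(n)$ and $B(\lambda P)=\tfrac1\lambda B(P)$ to land (after rescaling by $1-\tfrac{\alpha}{n^4}$) exactly in the hypothesis of Theorem~\ref{thm:LEFlowboundVjoin}, and then transfer the bound back via~\eqref{eq:xcOfBlocker} at the cost of $|E|+1$ facets, absorbed into a slightly smaller $\gamma$. The extra bookkeeping you supply (nonemptiness of $B(K)$, scale-invariance of $\xc$, small $n$) is fine and consistent with the paper's argument.
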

\begin{proof}
    Let $ \alpha, \gamma $ denote the constants in the statement of Theorem~\ref{thm:LEFlowboundVjoin}.
    For brevity, let us use the notation $ P \coloneqq \oddcutdom(n) $ and $ \eps \coloneqq \frac{\alpha}{n^4} $, where we may assume
    that $ \eps \in [0,1] $.

    Let $ K $ be a polyhedron with $ P \subseteq K \subseteq (1 - \eps) P $.
    Note that we have $ B((1 - \eps)P) \subseteq B(K) \subseteq B(P) $.
    By $ B((1 - \eps) P) = \frac{1}{1 - \eps} B(P) $ this yields $ B(P) \subseteq (1 - \eps) B(K) \subseteq (1 - \eps)
    B(P) $.
    Since $ B(P) = \vjoindom(n) $, using Theorem~\ref{thm:LEFlowboundVjoin} we obtain $ \xc(B(K)) \ge 2^{\gamma n} $.
    By~\eqref{eq:xcOfBlocker}, we have $ \xc(K) \ge \xc(B(K)) - |E| - 1 \ge 2^{\gamma' n} $ for some
    universal constant $ \gamma' > 0 $, and the claim follows.
\end{proof}
Analogously to the case of the $ V $-join polytope, we obtain from Corollary~\ref{corLEFlowboundOddCut} and Lemma~\ref{lemInapproxDomToInapproxRdist}.
\begin{corollary}
    There exist constants $ \alpha, \gamma > 0 $ such that, for every $ n $ even, $ \oddcutdom(n) \cap [0,1]^E $ does
    not admit an $ \frac{\alpha}{n^6} $-LEF of size at most $ 2^{\gamma n} $. \hfill $ \qed $
\end{corollary}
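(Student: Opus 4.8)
The plan is to obtain the statement as a direct instantiation of Lemma~\ref{lemInapproxDomToInapproxRdist}, exactly mirroring how Corollary~\ref{corVjoinNoLEF} was derived from Theorem~\ref{thm:LEFlowboundVjoin}. I would take $P \coloneqq \oddcut(n) \subseteq [0,1]^E$, which is a $0/1$-polytope by definition, and set $d \coloneqq |E| = \binom{n}{2}$; note that $P\dom = \oddcutdom(n)$. By Corollary~\ref{corLEFlowboundOddCut} there are constants $\alpha_0, \gamma_0 > 0$ such that, writing $\eps \coloneqq \frac{\alpha_0}{n^4}$, every polyhedron $K$ with $\oddcutdom(n) \subseteq K \subseteq (1 - \eps)\oddcutdom(n)$ satisfies $\xc(K) > 2^{\gamma_0 n}$. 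For $n$ large enough $\eps \in (0,1)$, so the hypotheses of Lemma~\ref{lemInapproxDomToInapproxRdist} are met with $M \coloneqq 2^{\gamma_0 n}$.

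Applying that lemma then yields that $\oddcutdom(n) \cap [0,1]^E$ admits no $\frac{\eps}{d}$-LEF of size less than $M - 3d$. The only remaining work is bookkeeping on the parameters. Since $d = \binom{n}{2} \le n^2$, we have $\frac{\eps}{d} \ge \frac{\alpha_0}{n^6}$, so any $\frac{\alpha_0}{n^6}$-LEF of $\oddcutdom(n)\cap[0,1]^E$ is in particular a $\frac{\eps}{d}$-LEF and hence must have size at least $M - 3d$; and since $M - 3d = 2^{\gamma_0 n} - 3\binom{n}{2} > 2^{\gamma_0 n/2}$ for all large $n$, I would set $\alpha \coloneqq \alpha_0$ and $\gamma \coloneqq \gamma_0/2$, so that no $\frac{\alpha}{n^6}$-LEF of size at most $2^{\gamma n}$ can exist. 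The finitely many small even $n$ are absorbed by further shrinking $\gamma$ (and, if needed, $\alpha$), exactly as in the analogous corollaries.

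There is essentially no obstacle here: all the substance was established upstream — Corollary~\ref{corLEFlowboundOddCut} comes from the $V$-join bound of Theorem~\ref{thm:LEFlowboundVjoin} via blocking duality and~\eqref{eq:xcOfBlocker}, while Lemma~\ref{lemInapproxDomToInapproxRdist} converts a multiplicative-scaling inapproximability statement for a dominant into a $\rdist$-inapproximability statement for the dominant truncated to the unit cube, using the $\rdist$–$\LPgapMin$ comparison of Lemma~\ref{lemma:rdistMinGap}. The one minor point to handle carefully is that Lemma~\ref{lemInapproxDomToInapproxRdist} is phrased for a fixed pair $(\eps, M)$, whereas here both depend on $n$; one simply applies it separately for each $n$, noting that the loss terms (the additive $3d$ in the size and the multiplicative factor $d$ in the distance) are only polynomial in $n$ and are therefore swallowed by fresh constants in the final asymptotic statement.
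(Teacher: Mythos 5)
Your proposal is correct and follows exactly the paper's route: the paper obtains this corollary as a direct consequence of Corollary~\ref{corLEFlowboundOddCut} together with Lemma~\ref{lemInapproxDomToInapproxRdist}, precisely mirroring the $V$-join case, with the same parameter bookkeeping ($d=\binom{n}{2}\le n^2$, the additive $3d$ loss, and absorbing small $n$ into the constants) that you carry out.
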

Using Corollary~\ref{corollary:milefHardness}, this immediately implies:
\begin{corollary}
    \label{corLowerBoundOddCutDom}
    There exist constants $ \alpha, \gamma > 0 $ such that, for every $ n $ even, any $ \frac{\alpha}{n^6} $-MILEF of $
    \oddcutdom(n) \cap [0,1]^E $ of size at most $ 2^{\gamma n} $ has $ \Omega \left( \sfrac{n}{\log n} \right) $
    integer variables.
\end{corollary}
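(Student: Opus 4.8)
The plan is to obtain this statement as a direct application of Corollary~\ref{corollary:milefHardness} to the convex set $C = \oddcutdom(n) \cap [0,1]^E$, exactly as Corollaries~\ref{corLowerBoundMatchingRdist} and~\ref{corLowerBoundCut} were derived for the matching and cut polytopes. The only ingredient required is an inapproximability result for \emph{linear} extended formulations of $\oddcutdom(n) \cap [0,1]^E$, and this is precisely the content of the corollary stated immediately before: there are constants $\alpha_0, \gamma_0 > 0$ such that, for every even $n$, the polytope $\oddcutdom(n) \cap [0,1]^E$ admits no $\frac{\alpha_0}{n^6}$-LEF of size at most $2^{\gamma_0 n}$.

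With this in hand, I would invoke Corollary~\ref{corollary:milefHardness} with $\beta = 6$, $\alpha = \alpha_0$, $\gamma = \gamma_0$, and $\eps \coloneqq \tfrac{1}{2}\min\{\alpha_0, \gamma_0\}$, so that the hypothesis $\eps < \alpha_0, \gamma_0$ is met. The conclusion of that corollary then states that any $\frac{\alpha_0 - \eps}{n^6}$-MILEF of $\oddcutdom(n) \cap [0,1]^E$ of size at most $2^{(\gamma_0 - \eps)n}$ uses $\Omega(\sfrac{n}{\log n})$ integer variables. Setting $\alpha \coloneqq \alpha_0 - \eps > 0$ and $\gamma \coloneqq \gamma_0 - \eps > 0$ yields precisely the claimed statement.

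At this point there is essentially no remaining obstacle, since the difficult work sits in the preceding results. If one instead wanted to reconstruct the argument from scratch, the main effort would be to re-establish that LEF inapproximability statement: transferring the exponential lower bound on linear extensions of $\vjoindom(n)$ from Theorem~\ref{thm:LEFlowboundVjoin} over to $\oddcutdom(n)$ via the blocker identity $B(\oddcutdom(n)) = \vjoindom(n)$ together with $\xc(B(P)) \le \xc(P) + d + 1$ (this gives Corollary~\ref{corLEFlowboundOddCut}), and then converting the ``$(1-\eps)$-dilation of the dominant'' style of hardness into a relative-distance statement through Lemma~\ref{lemInapproxDomToInapproxRdist}, which itself leans on the $\rdist$--$\LPgapMin$ comparison in Lemma~\ref{lemma:rdistMinGap}. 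Granting the preceding corollary, however, the proof is just the short plug-in into Corollary~\ref{corollary:milefHardness} described above.
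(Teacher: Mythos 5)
Your proposal is correct and follows exactly the paper's route: the statement is obtained by plugging the preceding LEF inapproximability result for $\oddcutdom(n) \cap [0,1]^E$ (itself derived from Corollary~\ref{corLEFlowboundOddCut} via Lemma~\ref{lemInapproxDomToInapproxRdist}) into Corollary~\ref{corollary:milefHardness}, with a suitable choice of $\eps$ and a renaming of constants. Nothing further is needed.
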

Finally, Corollary~\ref{corLowerBoundOddCutDom} and Lemma~\ref{lemmaAnotherAuxLemmaForVjoins} yield:
\begin{corollary}\label{cor:lowerBoundOddCutDomMILEF}
    There are constants $ \alpha, \gamma > 0 $ such that the following holds.
    Let $ n $ be even and $ K $ be any polyhedron with $ \oddcutdom(n) \subseteq K \subseteq (1 - \frac{\alpha}{n^8})
    \oddcutdom(n) $.
    Then every MILEF of $ K $ of size at most $ 2^{\gamma n} $ has $ \Omega( \sfrac{n}{\log n} ) $ integer variables.
\end{corollary}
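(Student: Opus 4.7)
The plan is to obtain the corollary as a direct combination of the two cited results, namely Corollary~\ref{corLowerBoundOddCutDom} (which provides an inapproximability result for MILEFs of $\oddcutdom(n) \cap [0,1]^E$) and Lemma~\ref{lemmaAnotherAuxLemmaForVjoins} (which lifts such an inapproximability result from $P\dom \cap [0,1]^d$ to arbitrary polyhedra sandwiched between $P\dom$ and a small shrinking of $P\dom$). The choice of $ P $ is natural: since $\oddcut(n) \subseteq [0,1]^E$ is a $0/1$-polytope whose dominant is precisely $\oddcutdom(n)$, we set $P := \oddcut(n)$ and take $d := |E| = \binom{n}{2}$.

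First I would invoke Corollary~\ref{corLowerBoundOddCutDom} to fix constants $\alpha', \gamma' > 0$ and a function $k(n) = \Omega(n / \log n)$ such that every $\tfrac{\alpha'}{n^6}$-MILEF of $\oddcutdom(n) \cap [0,1]^E$ of size at most $2^{\gamma' n}$ uses at least $k(n)$ integer variables. Next I would feed this into Lemma~\ref{lemmaAnotherAuxLemmaForVjoins} with parameters $\eps := \tfrac{\alpha'}{n^6}$, $M := 2^{\gamma' n}$ and $k := k(n)$. The lemma then asserts that for every polyhedron $K$ with
\[
\oddcutdom(n) \subseteq K \subseteq \Bigl(1 - \tfrac{\eps}{d + \eps}\Bigr)\oddcutdom(n),
\]
every MILEF of $K$ of size at most $M - 2d$ has at least $k(n)$ integer variables.

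It then remains to translate the two quantitative statements into the form required by the corollary. Since $d = \binom{n}{2} = \Theta(n^2)$ and $\eps = \Theta(n^{-6})$, one has $\tfrac{\eps}{d + \eps} = \Theta(n^{-8})$, so there is a constant $\alpha > 0$ (a small fixed fraction of $\alpha'$) such that $\tfrac{\alpha}{n^8} \le \tfrac{\eps}{d+\eps}$ for all sufficiently large even $n$; this guarantees the required containment $K \subseteq (1 - \tfrac{\alpha}{n^8})\oddcutdom(n)$ implies the hypothesis of the lemma. Similarly, $M - 2d = 2^{\gamma' n} - O(n^2) \ge 2^{\gamma n}$ for any constant $\gamma < \gamma'$, once $n$ is large enough. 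Finitely many small values of $n$ can be absorbed by shrinking $\gamma$ further, since the conclusion is vacuous in that regime.

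There is no real obstacle: all the substance is already in Corollary~\ref{corLowerBoundOddCutDom} and Lemma~\ref{lemmaAnotherAuxLemmaForVjoins}, and the proof reduces to a bookkeeping check of constants, essentially mirroring the argument given for Corollary~\ref{corLowerBoundMILEFVjoin} in the $V$-join case. The only mild point of care is ensuring that the exponent $n^8$ arising from $\tfrac{\eps}{d+\eps}$ is correctly tracked, but this follows transparently from $d = \Theta(n^2)$ combined with the $n^{-6}$ scaling inherited from Corollary~\ref{corLowerBoundOddCutDom}.
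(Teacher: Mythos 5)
Your proposal is correct and follows exactly the paper's route: the corollary is obtained by feeding Corollary~\ref{corLowerBoundOddCutDom} (with $P=\oddcut(n)$, $d=\binom{n}{2}$, $\eps=\Theta(n^{-6})$) into Lemma~\ref{lemmaAnotherAuxLemmaForVjoins}, with the $n^{-8}$ shrinkage and the loss of $2d$ in the size bound absorbed into the constants, just as in the $V$-join case. The constant bookkeeping, including the direction of the containment $\tfrac{\alpha}{n^8}\le\tfrac{\eps}{d+\eps}$, is handled correctly.
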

We remark that the odd-cut polytope (and hence also its dominant) has a polynomial-size (exact) MILEF with $ O(n) $
integer variables; see Section~\ref{secUpperBoundCut}.
 \subsection{Conic bimodular integer programming}
\label{secApplicationBimodular}
Given $ A \in \Z^{m \times d} $ and $ b \in \Z^m $, consider the problem of optimizing a given linear function 
over the integer hull $P_I\coloneqq \conv (P\cap \Z^d)$ of the polyhedron $ P \coloneqq \{ x \in \R^n : Ax \le b \} $.
Without any further assumption on $ A$ and $b $, this describes a general integer program and is hence $ \NPcomp $-hard to
solve.
A well-known special case in which the problem becomes polynomial-time solvable is when $A$ is \emph{totally unimodular}, 
i.e., the largest absolute value of the determinant of any square submatrix of $A$ is equal to 1. It is a well-known open question in the integer programming community whether integer programs can still be solved efficiently if they are described by an integer constraint matrix $A$ such that the absolute value of any determinant of a square submatrix of $A$ is bounded by some constant $k$.
Recently,~\cite{ArtmannWZ2017} answered this question in the affirmative for $k=2$, by showing that integer programs are tractable if the constraint matrix $A$ is \emph{bimodular}, that is, $A$ is an integer matrix of full column rank such that all determinants of $n\times n$ submatrices of $A$ lie within $\{-2,-1,0,1,2\}$.

In the totally unimodular case, the polynomial-time solvability can be easily explained by observing that $ P $ and its
integer hull $ P_I$ coincide, and hence the problem reduces to solving a linear program.
In contrast, the argumentation in~\cite{ArtmannWZ2017} for the bimodular case is much more involved and gives no
evidence of whether $ P_I $ has a simple polyhedral representation as well, compared to $P$. In this section we show that bimodular integer programs, i.e., integer programs with bimodular constraint matrices, can lead to polyhedra $P_I$ that cannot be described by a small MILEF. This result will follow by showing that the dominant of the odd cut polytope can be captured by a bimodular integer program.

To this end, let $ D = (V, A) $ be the complete digraph on $ n $ vertices and let us consider the polyhedron
\ifbool{SODAfinal}{
\begin{alignat*}{2}
 P \coloneqq \Big\{ & (x, y, z) \in \R^A_{\geq 0} \times \R^V \times \R : & \\
		    & x_{(v,w)} \geq y_w - y_v \quad \forall (v, w) \in A, \\
		    & \sum \nolimits_{v \in V} y_v = 2 z + 1 & \Big\}.
\end{alignat*}

}{
\begin{equation}\label{eq:bimodularOddCutRelax}
     P \coloneqq \left\{ (x, y, z) \in \R^A_{\geq 0} \times \R^V \times \R :
     y_w - y_v \le x_{(v, w)} \ \forall (v, w) \in A, \, \sum \nolimits_{v \in V} y_v = 2 z + 1 \right\}.
\end{equation}
}
First, note that $ P $ is described by a system of linear inequalities with a bimodular coefficient matrix (and an
integer right-hand side).
To see this, observe first that the constraint matrix has full column rank due to the non-negativity constraints. Moreover, notice that $P$ is described by inequalities forming identity matrices and a vertex-arc incidence matrix,
which are totally unimodular, plus an additional row (related to $z$) containing an entry of value 2 in an otherwise
empty column (that of variable $z$).
Thus, by developing over this last column, we see that any determinant of an $ \ell \times \ell $ submatrix with $ \ell
= |A| + |V| + 1 $ is bounded by $2$ in absolute value.
Second, note that the polyhedron $P$ is conic, i.e., there is a vertex for which all constraints are tight, because the
point $(x,y,z)=(0,0,-\frac{1}{2})$ satisfies all linear constraints with equality.
Third, it is easy to see that $ P_I \coloneqq \conv ( P \cap \R^A \times \Z^V \times \Z ) $ can be affinely projected
onto $ \oddcutdom(n) $ (a formal proof is provided in Section~\ref{secUpperBoundCut}).
Optimizing over the integer points of a conic polyhedron $P$ that is described by a bimodular constraint matrix is a \emph{conic bimodular integer program}. By the above discussion and Corollary~\ref{cor:lowerBoundOddCutDomMILEF}, we thus obtain:
\begin{theorem}
    \label{thm:bimodularHard}
    There exists a constant $ \gamma > 0 $ such that the following holds.
    For every $ n $ there is a conic bimodular integer program with $O(n^2)$ variables, such that any MILEF of size at
    most $ 2^{\gamma n} $ for the convex hull of its feasible points requires $\Omega(n / \log n)$ integrality
    constraints.
\end{theorem}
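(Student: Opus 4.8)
The plan is to obtain the statement essentially as a corollary of the odd-cut lower bound (Corollary~\ref{cor:lowerBoundOddCutDomMILEF}), using the bimodular formulation $P$ already recorded above. Fix $n$ even (the case of odd $n$ will be handled at the end by passing to $n+1$, which affects neither the $O(n^2)$ variable count nor the $\Omega(\sfrac{n}{\log n})$ bound), let $D=(V,A)$ be the complete digraph on $n$ vertices, and let $P \subseteq \R^A_{\ge 0} \times \R^V \times \R$ be the polyhedron from \eqref{eq:bimodularOddCutRelax}. As observed above, $P$ is conic and its constraint system has a bimodular coefficient matrix and an integer right-hand side; hence optimizing a linear function over the mixed-integer feasible points $P \cap (\R^A \times \Z^V \times \Z)$ is a conic bimodular integer program with $|A| + |V| + 1 = n(n-1) + n + 1 = O(n^2)$ variables, and the convex hull of these feasible points is $P_I := \conv\bigl(P \cap (\R^A \times \Z^V \times \Z)\bigr)$. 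By a classical result of Meyer on mixed-integer hulls of rational polyhedra, $P_I$ is itself a (rational) polyhedron, and by the construction detailed in Section~\ref{secUpperBoundCut} there is an affine map $\rho$ with $\rho(P_I) = \oddcutdom(n)$.

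Now let $\alpha, \gamma > 0$ be the constants furnished by Corollary~\ref{cor:lowerBoundOddCutDomMILEF}, and suppose $(Q, \sigma, \pi)$ is any MILEF of $P_I$ of complexity $(m, k)$ with $m \le 2^{\gamma n}$. Since taking convex hulls commutes with affine maps,
\[
  \oddcutdom(n) = \rho(P_I) = \rho\Bigl(\conv\bigl(\pi(Q \cap \sigma^{-1}(\Z^k))\bigr)\Bigr) = \conv\bigl((\rho \circ \pi)(Q \cap \sigma^{-1}(\Z^k))\bigr),
\]
so $(Q, \sigma, \rho \circ \pi)$ is a MILEF of $\oddcutdom(n)$ of complexity $(m, k)$; equivalently, this follows from Lemma~\ref{lemProjectionOfFace} applied with the improper face $P_I$ of $P_I$. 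To invoke Corollary~\ref{cor:lowerBoundOddCutDomMILEF} I would instantiate it with $K = \oddcutdom(n)$, which is admissible because $\oddcutdom(n) \subseteq (1 - \tfrac{\alpha}{n^8}) \oddcutdom(n)$: indeed, for any $S \subseteq \R^d_{\ge 0}$ and $\eps \in (0,1)$ one has $S\dom = S + \R^d_{\ge 0} \subseteq (1-\eps)S + \R^d_{\ge 0} = (1-\eps) S\dom$, since every $s \in S$ can be written as $(1-\eps)s + \eps s$ with $(1-\eps)s \in (1-\eps)S$ and $\eps s \in \R^d_{\ge 0}$. Corollary~\ref{cor:lowerBoundOddCutDomMILEF} then yields that the MILEF $(Q, \sigma, \rho\circ\pi)$ of $\oddcutdom(n)$, having size $m \le 2^{\gamma n}$, must use $k = \Omega(\sfrac{n}{\log n})$ integer variables, which is the claim. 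Finally, for odd $n$ one applies the above to $n' := n+1$: the resulting program has $O(n^2)$ variables and, since $2^{\gamma n} \le 2^{\gamma n'}$, any MILEF of its feasible hull of size at most $2^{\gamma n}$ needs $\Omega(\sfrac{n'}{\log n'}) = \Omega(\sfrac{n}{\log n})$ integer variables.

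The heavy lifting for this theorem has already been done: the exponential-in-$n$ inapproximability of $\oddcutdom(n)$ by polyhedra of low extension complexity (via the $V$-join polytope and blocking duality), the lifting machinery of Theorem~\ref{thm:fromMILEFToLEF}, and the explicit conic bimodular polyhedron $P$ together with its affine projection onto $\oddcutdom(n)$. Consequently, I do not anticipate a substantive obstacle in this proof; the only points requiring mild care are bookkeeping ones — namely that $P_I$ is genuinely a polyhedron so that ``a MILEF of $P_I$'' is well posed (Meyer's theorem), that MILEF complexity is preserved under the exact affine projection $\rho$, and the elementary but necessary observation that a dominant can only grow under scaling toward the origin, which is precisely what allows Corollary~\ref{cor:lowerBoundOddCutDomMILEF} to be run at the extreme choice $K = \oddcutdom(n)$ rather than only at a strict relaxation of it.
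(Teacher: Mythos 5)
Your proposal is correct and follows essentially the same route as the paper, which derives the theorem directly from the conic bimodular polyhedron $P$ of \eqref{eq:bimodularOddCutRelax}, the affine projection of $P_I$ onto $\oddcutdom(n)$, and Corollary~\ref{cor:lowerBoundOddCutDomMILEF}. Your extra bookkeeping (that a MILEF of $P_I$ yields one of $\oddcutdom(n)$ of the same complexity, and that $K=\oddcutdom(n)$ satisfies $\oddcutdom(n)\subseteq(1-\sfrac{\alpha}{n^8})\oddcutdom(n)$ so the corollary applies at this extreme choice) is exactly what the paper leaves implicit in its one-line deduction.
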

The importance of the fact that our hardness result even holds for \emph{conic} bimodular integer programs is motivated by a result from
Veselov and Chirkov~\cite{veselov_2009_integer}, which implies that it suffices to find an efficient algorithm for conic
bimodular integer programming, to solve any bimodular integer program efficiently.
Thus, a natural approach to solve bimodular integer programs would have been to try to find a compact LEF or MILEF, with few integer variables, that describes the 
feasible solutions to conic bimodular integer programs, thus avoiding the partially involved combinatorial techniques used in~\cite{ArtmannWZ2017}, which is so far the only method to efficiently solve bimodular integer programs. Also, one could have hoped that an approach based on extended formulations may be amenable to extensions beyond the bimodular case.
Theorem~\ref{thm:bimodularHard} shows that this approach cannot succeed.
Still, there is hope that one may be able to design combinatorial approaches that can solve natural generalizations of bimodular integer programs. A step in this direction was done in~\cite{nagele_2018_submodular}.

 \subsection{Large families of $ 0/1 $-polytopes}
As one of the first results establishing non-trivial lower bounds on size of LEFs, it is shown in~\cite{Rothvoss2013}
that for every constant $ \gamma > 0 $ the following holds: If $ \calP $ is any family of $ 0/1 $-polytopes in
$ \R^d $ with $ |\calP| \ge 2^{2^{\gamma d}} $, then there exists a polytope $ P \in \calP $ with $ \xc(P) \ge
2^{\Omega(d)} $.
It is also observed in~\cite{Rothvoss2013} that by the well-known fact that there are doubly-exponentially many
matroids\footnote{
    A matroid is a tuple $M=(N,\mathcal{I})$, where $N$ is a finite ground set and $\mathcal{I}\subseteq 2^N$ is a
    non-empty family of subsets of $N$ satisfying:
    \begin{enumerate*}[label=(\roman*)]
        \item if $I\in \mathcal{I}$ and $J\subseteq I$, then $J\in \mathcal{I}$, and
        \item if $I,J\in \mathcal{I}$ with $|I|< |J|$, then there is an element $e\in J\setminus I$ such that $I\cup
            \{e\}\in \mathcal{I}$.
    \end{enumerate*}
    The matroid polytope $P_M\subseteq [0,1]^N$ that corresponds to $M$ is the convex hull of all characteristic vectors
    of sets in $\mathcal{I}$.
}
on a ground set of cardinality $n$, there is for each $n\in \mathbb{Z}_{\geq 1}$ a matroid on a ground set of size $n$
whose corresponding matroid polytope has exponential (in $n$) extension complexity.
In this section, we extend both results to the mixed-integer setting.

To this end, we make use of a recent generalization of the result in~\cite{Rothvoss2013}.
For two non-empty compact sets $ A, B \subseteq \R^d $ recall that their \emph{Hausdorff distance} with respect to the
Euclidean norm is defined via
\ifbool{SODAfinal}{
$d_H(A, B) \coloneqq$
\[
 \max \left\{ \sup_{a \in A} \inf_{b \in B} \|a - b\|_2, \, \sup_{b \in B} \inf_{a \in A} \|a -
    b\|_2 \right\}.
\]
}{
\[
    d_H(A, B) \coloneqq \max \left\{ \sup_{a \in A} \inf_{b \in B} \|a - b\|_2, \, \sup_{b \in B} \inf_{a \in A} \|a -
    b\|_2 \right\}.
\]
}
\begin{theorem}[{\cite[Thm.~1]{AverkovKW2017}}]
    \label{thm:Averkovetal}
    Let $ \calP $ be a family of polytopes in $ [0,1]^d $ of dimensions at least one with $ 2 \le |\calP| < \infty $.
    Let $ \Delta > 0 $ be such that $ d_H(P, P') \ge \Delta $ holds for every two distinct polytopes $ P, P' \in \calP
    $.
    Then there exists a polytope $ P \in \calP $ with
    \[
        \xc(P) \ge \sqrt{\frac{\log_2 {|\calP|}}{8d(1 + \log_2(2\sqrt{d} / \Delta) + \log_2 \log_2 |\calP|)}}\enspace.
    \]
\end{theorem}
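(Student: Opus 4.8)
\emph{Proof plan.} The plan is a counting argument in the spirit of Rothvo\ss's result on random $0/1$-polytopes: bound the number of polytopes in $[0,1]^d$ of extension complexity at most $s$, measured up to small Hausdorff distance, and pick $s$ just below the claimed threshold to contradict the largeness and $\Delta$-separation of $\calP$. Suppose, for contradiction, that $\xc(P)<s_0$ for every $P\in\calP$, where $s_0$ denotes the right-hand side of the asserted bound; since $\xc$ is integer-valued this means $\xc(P)\le s$ for all $P\in\calP$, where $s:=\lceil s_0\rceil-1$. Observe already that $s<s_0\le\sqrt{\log_2|\calP|}\le\log_2|\calP|$; this a priori bound on $s$ will be used at the very end. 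I would then construct a family $\mathcal{N}$ of polytopes with $\log_2|\mathcal{N}|\le 8ds^2\bigl(1+\log_2(2\sqrt d/\Delta)+\log_2 s\bigr)$ such that every $P\in\calP$ lies within Hausdorff distance $\Delta/3$ of some member of $\mathcal{N}$.

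To build $\mathcal{N}$, for each $P\in\calP$ I first fix a \emph{bounded} polyhedron $Q\subseteq\R^\ell$ with at most $s$ facets together with an affine map $\pi(y)=Cy+e$ such that $\pi(Q)=P$ — the existence of such a bounded extension is standard, e.g.\ via a nonnegative factorisation of the slack matrix of $P$. Since $\dim P\ge1$ we may take $Q$ full-dimensional in $\R^\ell$, and a bounded full-dimensional polyhedron in $\R^\ell$ has at least $\ell+1$ facets, so $\ell\le s-1$. Next I would put $(Q,\pi)$ into a normal form: applying John's theorem and absorbing the resulting invertible affine map into $\pi$, we may assume $\ball(\zerovec,1)\subseteq Q\subseteq\ball(\zerovec,\ell)$; normalising each facet inequality $a_i\t y\le b_i$ so that $\|a_i\|_2=1$ then forces $b_i\in[1,\ell]$, while $\zerovec\in Q$ and $\pi(Q)=P\subseteq[0,1]^d$ force $e=\pi(\zerovec)\in[0,1]^d$ and, since $C$ maps $\ball(\zerovec,1)$ into a translate of $P$, also $\|C\|_{\mathrm{op}}\le\sqrt d/2$. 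Hence the $O(s\ell+d\ell)=O(ds^2)$ real parameters describing $(Q,\pi)$ all lie in a bounded box, and I would let $\mathcal{N}$ be the set of polytopes obtained by rounding every such parameter to a grid of spacing $\varepsilon:=\Theta(\Delta/(ds^2))$; the number of grid points per coordinate is then $\mathrm{poly}(s,\sqrt d)/\Delta$, so $\log_2|\mathcal{N}|$ has the claimed form (the constants being arranged so that the ``$\log_2 s$'' term appears exactly as stated).

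The hard part is the perturbation estimate: rounding $(Q,\pi)$ as above must move $P=\pi(Q)$ by at most $\Delta/3$ in Hausdorff distance, and an \emph{arbitrary} inequality description of $Q$ can be arbitrarily ill-conditioned, so the normalisation is essential. The key device is the inscribed ball $\ball(\zerovec,1)\subseteq Q$: since $\zerovec$ is at distance $b_i\ge1$ from each facet hyperplane, if the right-hand sides are perturbed by at most $\varepsilon<1$, then any point $y'$ of the perturbed polyhedron can be pulled back into (a translate/rescaling of) the rounded polyhedron by moving a $\Theta(\varepsilon)$-fraction of the way towards $\zerovec$ — which keeps all facets satisfied because $b_i>0$ — so it is displaced by at most $\Theta(\varepsilon)\cdot\diam(Q)=O(\ell\varepsilon)$; thus $d_H(Q,Q')=O(\ell\varepsilon)$ and $d_H(\pi(Q),\pi(Q'))\le\|C\|_{\mathrm{op}}\cdot O(\ell\varepsilon)=O(\sqrt d\,s\,\varepsilon)$. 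Perturbing a facet normal $a_i$, or an entry of $C$ or of $e$, is bounded the same way (a perturbation of $a_i$ on $Q\subseteq\ball(\zerovec,\ell)$ acts like a perturbation of $b_i$ of size $O(\ell\varepsilon)$), and summing the $O(ds^2)$ contributions yields $d_H(P,P')\le\Delta/3$ once the absolute constant hidden in $\varepsilon$ is chosen small enough. (A minor technical point — that a perturbation of $Q$ stays bounded — can be handled by first intersecting $Q$ with a bounding box $\ball(\zerovec,2\ell)$, which adds only $O(\ell)$ facets.)

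Finally, since the polytopes in $\calP$ are pairwise at Hausdorff distance $\ge\Delta>2\cdot\tfrac{\Delta}{3}$, distinct members of $\calP$ are mapped to distinct elements of $\mathcal{N}$, whence $\log_2|\calP|\le\log_2|\mathcal{N}|\le 8ds^2\bigl(1+\log_2(2\sqrt d/\Delta)+\log_2 s\bigr)$. Using the a priori bound $s<\log_2|\calP|$ noted at the outset, we have $\log_2 s<\log_2\log_2|\calP|$, and therefore $\log_2|\calP|\le 8ds^2\bigl(1+\log_2(2\sqrt d/\Delta)+\log_2\log_2|\calP|\bigr)$, which rearranges to $s\ge s_0$, contradicting $s<s_0$. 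Hence some $P\in\calP$ satisfies $\xc(P)\ge s_0$, as claimed. The restriction $\dim P\ge1$ enters only to guarantee $\ell\ge\dim P\ge1$, so that the inscribed ball used in the normalisation is genuine; otherwise the argument would go through verbatim for arbitrary convex bodies, which explains why the main obstacle throughout is the conditioning/perturbation lemma rather than anything combinatorial.
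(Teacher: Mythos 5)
First, a point of reference: the paper does not prove Theorem~\ref{thm:Averkovetal} at all --- it is quoted verbatim from \cite{AverkovKW2017} and used as a black box --- so there is no in-paper proof to compare against; the relevant comparison is with the cited source. Your plan is the right one and is the same strategy that underlies that result: a Rothvo{\ss}-style metric-entropy/counting argument, i.e.\ build a finite net $\mathcal{N}$ of polytopes such that every $P\subseteq[0,1]^d$ of dimension at least one with $\xc(P)\le s$ lies within Hausdorff distance $<\Delta/2$ of a net member, then pigeonhole against the $\Delta$-separation of $\calP$, and finally use the a priori bound $s<\log_2|\calP|$ to turn the $\log_2 s$ in the entropy estimate into $\log_2\log_2|\calP|$. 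That skeleton, including the reduction to a bounded full-dimensional extension $Q\subseteq\R^\ell$ with $\ell\le s-1$ (your ``standard'' step is indeed fine: the Yannakakis extension $\{(x,y):b-Ax=Fy,\ y\ge 0\}$ becomes bounded once zero columns of the nonnegative factor $F$ are deleted) and the final rearrangement, is correct. Where you differ from the cited source is in how the net is encoded: you round a John-position inequality description of a geometric extension together with the affine map, whereas the encoding in \cite{AverkovKW2017} is organized around (rounded) factorization data, which is what lets them treat the semidefinite case by the same argument; your geometric normalization is a perfectly legitimate alternative for the linear case.

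The genuine shortfall is quantitative: as written, your argument does not prove the theorem as stated, only a version with the constant $8$ (and the exact parenthetical $1+\log_2(2\sqrt d/\Delta)+\log_2\log_2|\calP|$) replaced by an unspecified absolute constant. Your accounting --- ``$O(ds^2)$ parameters, grid spacing $\Theta(\Delta/(ds^2))$, constants arranged so that the $\log_2 s$ term appears exactly as stated'' --- is exactly the part that is not carried out, and it is lossier than you suggest: rounding the coordinates of a unit normal $a_i$ by $\eps$ perturbs $a_i\t y$ on $\ball(\zerovec,\ell)$ by up to $\ell^{3/2}\eps$, not $O(\ell\eps)$; the pull-toward-the-center argument then gives $d_H(Q,Q')=O(\ell^{5/2}\eps)$; the bounding-box fix and the union over $\ell\in\{1,\dotsc,s-1\}$ add further factors. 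Summing these up with your parameter count ($\approx s\ell+s+d\ell+d$) gives $\log_2|\mathcal{N}|\le K\,ds^2\bigl(1+\log_2(2\sqrt d/\Delta)+\log_2 s\bigr)$ for some $K$ strictly larger than $8$ unless the bookkeeping is substantially tightened or the encoding made more economical. For every downstream use in this paper the weaker constant is immaterial (only the asymptotic form of Theorem~\ref{thm:Averkovetal} is used), but as a proof of the quoted statement the constant-tracking is a real missing step, not a formality. Two smaller points worth making explicit: the harmlessness of intersecting the perturbed $Q'$ with a bounding box rests on the observation that net members need not themselves have small extension complexity (you gesture at this, but it is the reason the extra facets cost nothing), and your a priori bound $s_0\le\sqrt{\log_2|\calP|}$ implicitly uses $\Delta\le\sqrt d$, which holds because two distinct polytopes in $[0,1]^d$ are at Hausdorff distance at most $\sqrt d$ and $|\calP|\ge 2$.
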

Note that the family~$ \calP $ in Theorem~\ref{thm:Averkovetal} is not restricted to only contain $ 0/1 $-polytopes.
Next, we show that every large enough family of polytopes in $ [0,1]^d $ even contains polytopes that do not admit small
approximate LEFs.
\ifbool{SODAfinal}{
To this end, we make use of the following lemma, whose proof is deferred to the extended version of this paper.
}{
To this end, we make use of the following lemma whose proof is given in Appendix~\ref{appendixRdist}.
}
\begin{lemma}
    \label{lemHausdorffBoundedByRdist}
    Let $ A, B \subseteq [0,1]^d $ be convex sets with $ \emptyset \ne A \subseteq B $.
    Then $ d_H(A, B) \le \sqrt{d} \cdot \frac{\rdist(A, B)}{1 + \rdist(A, B)} $.
\end{lemma}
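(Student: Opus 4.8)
The plan is to leverage the set-theoretic description of relative distance provided by Lemma~\ref{lem:rd2}~\ref{item:rdistSetDef}: writing $\rho := \rdist(A,B)$, this says $\rho = \inf\{\lambda \ge 0 : B \subseteq (1+\lambda)A - \lambda A\}$. The idea is that membership $b \in (1+\lambda)A - \lambda A$ produces, for each $b \in B$, a point of $A$ that is provably close to $b$, with the closeness controlled by $\lambda$ together with the fact that everything lives in the unit cube.

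First I would handle the degenerate case $\rho = \infty$ separately, where the stated right-hand side $\sqrt d \cdot \frac{\rho}{1+\rho}$ is read as $\sqrt d$: here the claim is immediate, since $A, B \subseteq [0,1]^d$ and $A \ne \emptyset$ give $d_H(A,B) = \sup_{b \in B}\inf_{a \in A}\|a - b\|_2 \le \sqrt d$ (any two points of $[0,1]^d$ are within Euclidean distance $\sqrt d$; note also that $A \subseteq B$ makes the other term in the definition of $d_H$ vanish). So assume $\rho < \infty$. Next I would pick a sequence $\lambda_n \downarrow \rho$ with $B \subseteq (1+\lambda_n)A - \lambda_n A$ for all $n$, as guaranteed by Lemma~\ref{lem:rd2}~\ref{item:rdistSetDef}. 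Fix $b \in B$ and $n$, and write $b = (1+\lambda_n)a - \lambda_n a'$ with $a, a' \in A$. The one genuine computation is the rearrangement $a - b = \frac{\lambda_n}{1+\lambda_n}(a' - b)$, which exhibits $a$ as the point a fraction $\frac{\lambda_n}{1+\lambda_n}$ of the way from $b$ to $a'$. Since $a', b \in [0,1]^d$, this gives $\|a - b\|_2 = \frac{\lambda_n}{1+\lambda_n}\|a' - b\|_2 \le \frac{\lambda_n}{1+\lambda_n}\sqrt d$, whence $\inf_{a'' \in A}\|a'' - b\|_2 \le \frac{\lambda_n}{1+\lambda_n}\sqrt d$. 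Letting $n \to \infty$ yields $\inf_{a'' \in A}\|a'' - b\|_2 \le \frac{\rho}{1+\rho}\sqrt d$, and taking the supremum over $b \in B$ finishes the argument.

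I do not expect any real obstacle; the only subtlety is that Lemma~\ref{lem:rd2}~\ref{item:rdistSetDef} only asserts the containment $B \subseteq (1+\lambda)A - \lambda A$ in the infimum sense, so one should argue along a sequence $\lambda_n \downarrow \rho$ and pass to the limit at the end, rather than trying to use $\lambda = \rho$ directly. If one prefers, one can instead first note that the set $\{\lambda \ge 0 : B \subseteq (1+\lambda)A - \lambda A\}$ is upward closed --- because $(1+\lambda)A - \lambda A \subseteq (1+\lambda')A - \lambda' A$ for $0 \le \lambda \le \lambda'$ by convexity of $A$ --- and then run the bound with an arbitrary $\lambda > \rho$ and take $\lambda \downarrow \rho$; either route works.
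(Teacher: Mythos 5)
Your proposal is correct and follows essentially the same route as the paper: both invoke Lemma~\ref{lem:rd2}~\ref{item:rdistSetDef} to write each $b\in B$ as $b=(1+\lambda)a-\lambda a'$ with $a,a'\in A$ for $\lambda$ slightly above $\rdist(A,B)$, observe that the resulting point of $A$ lies within $\frac{\lambda}{1+\lambda}\sqrt{d}$ of $b$ because both points sit in $[0,1]^d$, and then let $\lambda$ tend to $\rdist(A,B)$. Your extra care with the $\rdist(A,B)=\infty$ case and the upward-closedness of $\{\lambda: B\subseteq(1+\lambda)A-\lambda A\}$ only makes explicit what the paper leaves implicit.
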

\begin{proposition}
    For every constant $ \gamma > 0 $ there exists a constant $ \gamma' > 0 $ such that the following holds.
    For every family $ \calP $ of $ 0/1 $-polytopes in $ \R^d $ with $ |\calP| \ge 2^{2^{\gamma d}} $ there exists
    a polytope $ P \in \calP $ that admits no $ \frac{1}{4 d} $-LEF of size at most $ 2^{\gamma' d} $.
\end{proposition}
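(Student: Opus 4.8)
The plan is to argue by contradiction, turning hypothetical small approximate LEFs into a large, well‑separated family of polytopes to which Theorem~\ref{thm:Averkovetal} applies. Fix $\gamma>0$ and suppose, for a constant $\gamma'>0$ to be determined, that every $P\in\mathcal P$ admits a $\frac{1}{4d}$-LEF $(Q_P,\pi_P)$ of size at most $2^{\gamma' d}$. Writing $\bar P:=\pi_P(Q_P)$ (a polyhedron, as affine images of polyhedra are polyhedra) we have $P\subseteq\bar P$, $\rdist(P,\bar P)\le\frac{1}{4d}$, and $\xc(\bar P)\le 2^{\gamma' d}$ since $Q_P$ is an extended formulation of $\bar P$. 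The point is that each $\bar P$ is Hausdorff‑close to $P$, whereas distinct $0/1$-polytopes are Hausdorff‑far; hence the $\bar P$'s inherit a large pairwise Hausdorff separation and are doubly exponentially many, so Theorem~\ref{thm:Averkovetal} produces one of extension complexity $2^{\Omega(d)}$, contradicting $\xc(\bar P)\le 2^{\gamma' d}$ for $\gamma'$ small.

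First I would record the geometric separation fact: any two distinct (nonempty) $0/1$-polytopes $P,P'\subseteq[0,1]^d$ satisfy $d_H(P,P')\ge\frac{1}{\sqrt d}$. Since the $0/1$-points contained in a $0/1$-polytope are exactly its vertices, $P\ne P'$ yields, without loss of generality, a vertex $v\in\{0,1\}^d$ of $P$ with $v\notin P'$. The affine form $f(x):=\sum_{i:\,v_i=1}(1-x_i)+\sum_{i:\,v_i=0}x_i$ satisfies $f(v)=0$ and $f(w)\ge 1$ for every $w\in\{0,1\}^d\setminus\{v\}$, hence $f\ge 1$ on $P'$; as $\|\nabla f\|_2=\sqrt d$, this forces $\mathrm{dist}_2(v,P')\ge\frac{1}{\sqrt d}$ and therefore $d_H(P,P')\ge\frac{1}{\sqrt d}$.

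Next, for each $P$ set $R_P:=\bar P\cap[0,1]^d$, a polytope with $P\subseteq R_P\subseteq[0,1]^d$; monotonicity of $\rdist$ in the larger set (Lemma~\ref{lem:rd2}~\ref{item:rdistSetDef}) gives $\rdist(P,R_P)\le\rdist(P,\bar P)\le\frac{1}{4d}$, while $\xc(R_P)\le\xc(\bar P)+\xc([0,1]^d)\le 2^{\gamma' d}+2d$. By Lemma~\ref{lemHausdorffBoundedByRdist}, $d_H(P,R_P)\le\sqrt d\cdot\frac{\rdist(P,R_P)}{1+\rdist(P,R_P)}\le\frac{\sqrt d}{4d}=\frac{1}{4\sqrt d}$. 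Combining the geometric fact with the triangle inequality for $d_H$, any two distinct $P,P'\in\mathcal P$ satisfy $d_H(R_P,R_{P'})\ge\frac{1}{\sqrt d}-\frac{1}{4\sqrt d}-\frac{1}{4\sqrt d}=\frac{1}{2\sqrt d}=:\Delta$; in particular the $R_P$ are pairwise distinct. Discarding the at most $2^d$ polytopes $P$ of dimension $0$ (so that all remaining $R_P\supseteq P$ have dimension at least one), I obtain a family $\mathcal R$ of polytopes in $[0,1]^d$ of dimension at least one, pairwise at Hausdorff distance $\ge\Delta$, with $|\mathcal R|\ge 2^{2^{\gamma d}}-2^d\ge 2^{2^{\gamma d}/2}$ for $d$ large. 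Applying Theorem~\ref{thm:Averkovetal}, and using $2\sqrt d/\Delta=4d$, $\log_2|\mathcal R|\ge 2^{\gamma d}/2$, and $\log_2\log_2|\mathcal R|\le d$ (there are at most $2^{2^d}$ distinct $0/1$-polytopes in $\R^d$), a short estimate gives some $R_P\in\mathcal R$ with $\xc(R_P)\ge\sqrt{\frac{2^{\gamma d}/2}{8d(1+\log_2(4d)+d)}}=2^{\Omega(d)}$, say $\xc(R_P)\ge 2^{\gamma d/3}$ for all sufficiently large $d$. Choosing $\gamma':=\gamma/6$ then contradicts $\xc(R_P)\le 2^{\gamma' d}+2d$ for $d$ large, which proves the claim for $d$ beyond a threshold depending only on $\gamma$; the finitely many remaining values of $d$ are absorbed by further shrinking $\gamma'$.

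The crux is the reduction itself — turning an arbitrary, possibly Hausdorff‑clustered family into one that is provably spread apart — which is exactly where Lemma~\ref{lemHausdorffBoundedByRdist} (the $\rdist$-to-$d_H$ conversion, for which it is essential that $R_P\subseteq[0,1]^d$ after intersecting with the cube) meets the $\frac{1}{\sqrt d}$-separation of distinct $0/1$-polytopes. Everything after that is careful but routine parameter chasing to make the $2^{\Omega(d)}$ bound of Theorem~\ref{thm:Averkovetal} dominate $2^{\gamma' d}$; the only subtleties are that one must budget a constant factor in the exponent (hence $\gamma'=\gamma/6$ rather than, say, $\gamma/2$) and restrict to polytopes of dimension at least one, as Theorem~\ref{thm:Averkovetal} requires.
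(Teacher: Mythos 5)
Your proof is correct and follows essentially the same route as the paper's: intersect the approximating sets with $[0,1]^d$, convert relative distance to Hausdorff distance via Lemma~\ref{lemHausdorffBoundedByRdist}, invoke the $1/\sqrt{d}$ Hausdorff separation of distinct $0/1$-polytopes, and apply Theorem~\ref{thm:Averkovetal} to force exponential extension complexity, contradicting the assumed bound. The differences are only bookkeeping: you prove the separation fact and discard the at most $2^d$ zero-dimensional members explicitly (the paper just asserts both), and your somewhat hand-wavy treatment of small $d$ (``absorbed by shrinking $\gamma'$'') is no worse than the paper's own, which glosses over the same corner case.
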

\begin{proof}
    We may assume that $ |\calP| \ge 2 $ and that $ \calP $ only contains polytopes of dimensions at least one.
    Suppose that every $ P \in \calP $ admits a $ \delta \coloneqq \frac{1}{4 d} $-LEF of size at most $ M $.
    Thus, for every $ P \in \calP $ there is a convex set $ B_P \subseteq \R^d $ with $ P \subseteq B_P $, $ \rdist(P,
    B_P) \le \delta $, and $ \xc(B_P) \le M $.
    Clearly, the set $ C_P \coloneqq B_P \cap [0,1]^d $ satisfies $ P \subseteq C_P $, $ \rdist(P, C_P) \le \delta $, as well
    as $ \xc(C_P) \le M + 2d $.
    By Lemma~\ref{lemHausdorffBoundedByRdist}, we have $ d_H(P, C_P) \le \delta \sqrt{d} $ for every $ P
    \in \calP $, and hence for every two distinct polytopes $ P, P' \in \calP $ we obtain
    \ifbool{SODAfinal}{
    \begin{align*}
     \tfrac{1}{\sqrt{d}} & \le d_H(P, P') \\
			& \le d_H(P, C_P) + d_H(C_P, C_{P'}) + d_H(C_{P'}, P') \\
			& \le d_H(C_P, C_{P'}) + 2 \delta \sqrt{d},
    \end{align*}

    }{
    \[
        \tfrac{1}{\sqrt{d}} \le d_H(P, P') \le d_H(P, C_P) + d_H(C_P, C_{P'}) + d_H(C_{P'}, P') \le
        d_H(C_P, C_{P'}) + 2 \delta \sqrt{d},
    \]
    }
    where the first inequality follows from the fact that for any two distinct $0/1$-polytopes in $\mathbb{R}^d$, we
    have that $\sfrac{1}{\sqrt{d}}$ is a lower bound on their Hausdorff distance,\footnote{
        This can be deduced by observing that the Hausdorff distance of any vertex of the hypercube $[0,1]^d$ to the
        convex hull of all other vertices is $\sfrac{1}{\sqrt{d}}$.
    }
    and the second inequality follows by the triangle inequality for the Hausdorff distance.
    Hence, this implies $ d_H(C_P, C_{P'}) \ge \frac{1}{2 \sqrt{d}} $.
    Applying Theorem~\ref{thm:Averkovetal} to the family~$ \{ C_P : P \in \calP \} $, we obtain that there exists a $ P
    \in \calP $ such that
    \[
        \xc(C_P) \ge \sqrt{\frac{2^{\gamma d}}{8d(1 + \log_2(4d) + \gamma d)}} \ge 2^{\tilde{\gamma} d}
    \]
    for some $ \tilde{\gamma} > 0 $ only depending on $ \gamma $.
    This shows $ M \ge \xc(C_P) - 2d \ge 2^{\tilde{\gamma} d} - 2d $, which yields the claim.
\end{proof}
The above statement together with Corollary~\ref{corollary:milefHardness} implies the following result.
\begin{proposition}
    For every constant $ \gamma > 0 $ there is a constant $ \alpha > 0 $ such that the following holds.
    Let $ \calP $ be any family of $ 0/1 $-polytopes in $ \R^d $ with $ |\calP| \ge 2^{2^{\gamma d}} $.
    Then there exists a polytope $ P \in \calP $ such that every $ \frac{1}{5d} $-MILEF of $ P $ of size at most $
    2^{\alpha d} $ has $ \Omega(\sfrac{n}{\log n}) $ integer variables.
\end{proposition}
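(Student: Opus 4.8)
The plan is to obtain this statement as a routine corollary of the proposition established immediately above (which produces, in any sufficiently rich family, a polytope with no small approximate LEF) together with Corollary~\ref{corollary:milefHardness}, the general device turning LEF-inapproximability into MILEF-inapproximability. First I would fix the constant $\gamma>0$ and invoke the preceding proposition to get a constant $\gamma'>0$, depending only on $\gamma$, such that every family $\calP$ of $0/1$-polytopes in $\R^d$ with $|\calP|\ge 2^{2^{\gamma d}}$ contains some polytope $P$ admitting no $\tfrac{1}{4d}$-LEF of size at most $2^{\gamma' d}$. Fix such a $P$; since it is non-empty we may apply Corollary~\ref{corollary:milefHardness} to $C=P$.

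Next I would instantiate Corollary~\ref{corollary:milefHardness} with $n=d$, $\beta=1$, its ``$\alpha$'' equal to $\tfrac14$, its ``$\gamma$'' equal to $\gamma'$, and $\eps:=\min\{\tfrac{1}{20},\tfrac{\gamma'}{2}\}$. One checks $\eps<\tfrac14$ and $\eps<\gamma'$, so the hypotheses are met, and the corollary gives that every $\tfrac{1/4-\eps}{d}$-MILEF of $P$ of size at most $2^{(\gamma'-\eps)d}$ has $\Omega(\sfrac{d}{\log d})$ integer variables. Finally, because $\eps\le\tfrac{1}{20}$ we have $\tfrac{1}{5d}\le\tfrac{1/4-\eps}{d}$, and since a $\delta_1$-MILEF is in particular a $\delta_2$-MILEF whenever $\delta_1\le\delta_2$ (the relative-distance bound only gets easier), every $\tfrac{1}{5d}$-MILEF of $P$ is such a $\tfrac{1/4-\eps}{d}$-MILEF. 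Hence, setting $\alpha:=\gamma'-\eps>0$, which depends only on $\gamma$, every $\tfrac{1}{5d}$-MILEF of $P$ of size at most $2^{\alpha d}$ has $\Omega(\sfrac{d}{\log d})$ integer variables, which is the claim (the ``$n$'' in the statement being the dimension $d$).

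I do not expect a genuine obstacle here: the argument is pure bookkeeping of constants. The only point requiring a little care is reconciling the two constraints on $\eps$ imposed by Corollary~\ref{corollary:milefHardness} — that $\eps$ be strictly below both $\tfrac14$ and $\gamma'$, yet large enough that $\tfrac14-\eps\le\tfrac15$ so that the $\tfrac{1}{5d}$ target is reached — which is exactly why $\eps$ is taken as $\min\{\tfrac{1}{20},\tfrac{\gamma'}{2}\}$, covering the case of a very small $\gamma'$. As usual, for small $d$ the asymptotic conclusion is vacuous and is absorbed into the implicit constant, mirroring the treatment inside the proof of Corollary~\ref{corollary:milefHardness}.
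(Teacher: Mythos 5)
Your proposal is correct and is essentially the paper's own argument: the paper derives this proposition exactly by combining the preceding proposition with Corollary~\ref{corollary:milefHardness}, and your instantiation ($\beta=1$, $\alpha=\tfrac14$, $\gamma=\gamma'$, $\eps=\min\{\tfrac{1}{20},\tfrac{\gamma'}{2}\}$, using that $\tfrac14-\eps\ge\tfrac15$ and the monotonicity of approximate MILEFs in the accuracy parameter) is the natural constant bookkeeping the paper leaves implicit.
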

Using the fact that there are doubly-exponentially many matroids (see~\cite{Dukes2003}), we thus obtain.
\begin{theorem}
    There is a constant $c>0$ such that the following holds.  Let $n\in \mathbb{Z}_{\geq 1}$, and let $m,k >0$ such that
    there exists a $ \frac{1}{5d} $-MILEF with complexity $(m,k)$ with $m\leq 2^{c\cdot n}$ for any matroid polytope of
    any matroid on a ground set of cardinality $ n $.
    Then $ k=\Omega(\sfrac{n}{\log n})$.
\end{theorem}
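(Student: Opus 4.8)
The plan is to apply the last Proposition (the one asserting that every sufficiently large family of $0/1$-polytopes in $\R^d$ contains a member admitting no small approximate MILEF) to the family $\calP_n$ consisting of the matroid polytopes of all matroids on a ground set of cardinality $n$, taking $d := n$. First I would record two elementary facts about this family. Each such matroid polytope $P_M$ is the convex hull of characteristic vectors of independent sets, hence a $0/1$-polytope in $\R^n$; and distinct matroids yield distinct polytopes, because every vertex of the hypercube contained in $P_M$ is a vertex of $P_M$, so the set of vertices of $P_M$ is exactly $\{\chi_I : I\in\mathcal{I}\}$, which determines $\mathcal{I}$ and hence $M$. Therefore $|\calP_n|$ equals the number of matroids on $n$ elements.

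Next I would use the well-known doubly-exponential lower bound on the number of matroids (see~\cite{Dukes2003}): there is a constant $\gamma>0$ such that $|\calP_n|\ge 2^{2^{\gamma n}}$ for all sufficiently large $n$. Feeding this $\gamma$ into the last Proposition produces a constant $\alpha>0$ with the property that, for every sufficiently large $n$, some $P\in\calP_n$ — i.e. some matroid $M$ on a ground set of size $n$ — satisfies: every $\tfrac{1}{5n}$-MILEF of $P_M$ of size at most $2^{\alpha n}$ uses $\Omega(\sfrac{n}{\log n})$ integer variables. Setting $c := \alpha$ (and shrinking $c$ further if necessary so that the statement holds vacuously for the finitely many small values of $n$ not covered by the estimate $|\calP_n|\ge 2^{2^{\gamma n}}$, which is harmless since the conclusion $k=\Omega(\sfrac{n}{\log n})$ already absorbs such cases into its hidden constant) gives exactly the asserted statement, where the $\tfrac{1}{5d}$ in the hypothesis is $\tfrac{1}{5n}$ because the matroid polytope lives in $\R^n$.

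There is no genuine obstacle here — the argument is a direct composition of the last Proposition with the matroid-counting bound. The only care needed is bookkeeping: the ambient dimension $d$ of the polytope and the ground-set size $n$ coincide in this application, so the $\tfrac{1}{4d}$/$\tfrac{1}{5d}$ approximation thresholds and the $\Omega(\sfrac{d}{\log d})$ bound of the Proposition translate verbatim into $\tfrac{1}{4n}$, $\tfrac{1}{5n}$ and $\Omega(\sfrac{n}{\log n})$. As with Theorem~\ref{thm:keyConsequences2}, the statement should be read as guaranteeing the \emph{existence} of a matroid in the family witnessing the lower bound, rather than a claim about every matroid — indeed, for uniform matroids the independence polytope has a trivially small description with no integer variables at all.
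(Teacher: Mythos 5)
Your argument is correct and is exactly the paper's proof: the paper likewise obtains this theorem by combining the preceding proposition on large families of $0/1$-polytopes with the doubly-exponential count of matroids from~\cite{Dukes2003}, with $d=n$ since matroid polytopes live in $\R^n$. Your additional bookkeeping (distinctness of matroid polytopes, the existence-type reading of the statement, and absorbing small $n$ into the constants) is sound and matches the intended interpretation.
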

 \section{Upper bounds}
\label{secUpperBounds}
In this section, we provide MILEFs for polyhedra considered in Section~\ref{secApplications} that complement some bounds
on the number of integer variables obtained in that section.
To this end, we will consider different polytopes that are convex hulls of characteristic vectors of certain edge
subsets of the complete graph on $ n $ vertices, which we denoted by $ K_n = (V, E) $.
For all these polytopes there exist polynomial-size textbook MILEFs that use $ \Theta(n^2) $ integer variables (usually
they consist of a binary variable for every edge).
However, in what follows we present some (rather non-standard) MILEFs that only use $ O(n \log n) $ or even $ O(n) $
integer variables, respectively.

\subsection{Matching polytope and $ V $-join polytope}
\label{secUpperBoundMatching}
We start by considering the $ V $-join polytope $ \vjoin(n) \subseteq \R^E $ of $ K_n $.
Recall that a $ V $-join is an edge subset $ F \subseteq E $ such that every vertex in $ (V, F) $ has odd degree.
To construct a polynomial-size MILEF for $ \vjoin(n) $ with only $ n $ integer variables, let us fix any orientation $
\mathcal{O} $ of the edges in $ E $ and denote by $ \delta^+(v) \subseteq E $ and $ \delta^-(v) \subseteq E $ the
sets of edges that enter and leave $ v $ according to $ \mathcal{O} $, respectively.
Furthermore, let us write $ \delta(v) \coloneqq \{ e \in E : v \in e \} = \delta^+(v) \cup \delta^-(v) $.
Finally, for any edge set $ F \subseteq E $ and any vector $ x \in \R^E $ we use the notation $ x(F) \coloneqq \sum_{e \in
F} x_e $.
\begin{proposition}\label{prop:MILEFupperBoundVjoin}
    For every $ n $ even, we have
    \ifbool{SODAfinal}{
    \begin{alignat*}{2}
     \vjoin(n) = \conv \Big \{ & x\in [0,1]^E : & \\
				& \exists z\in \Z^V \text{ so that } \, \forall v \in V,  &  \\
			      & x(\delta^+(v)) - x(\delta^-(v)) = 2z_v + 1 & \Big \}. 
    \end{alignat*}
    }{
    \begin{equation*}
        \vjoin(n) = \conv \left \{ x \in [0,1]^E : \exists z \in \Z^V \text{ with } x(\delta^+(v)) - x(\delta^-(v)) =
        2z_v + 1 \text{ for all } v \in V \right \}.
    \end{equation*}
    }
    In particular, $ \vjoin(n) $ admits a MILEF of size $ O(n^2) $ with $ n $ integer variables.
\end{proposition}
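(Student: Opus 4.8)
The plan is to prove the displayed set identity, after which the ``in particular'' claim is immediate. I would fix the orientation $\mathcal{O}$ and introduce the signed vertex--arc incidence matrix $N$ of $K_n$, i.e.\ the $V \times E$ matrix with $(Nx)_v = x(\delta^+(v)) - x(\delta^-(v))$ for every $v \in V$ and every $x \in \R^E$. With this notation the right-hand side of the identity is exactly $\conv(R)$, where
\[
    R \;\coloneqq\; \bigl\{ x \in [0,1]^E : Nx \in 2\Z^V + \onesvec \bigr\} \;=\; \bigcup_{b \,\in\, 2\Z^V + \onesvec} R_b , \qquad R_b \;\coloneqq\; \bigl\{ x \in [0,1]^E : Nx = b \bigr\} .
\]
Note that $x \in R_b$ forces $b \in \{-(n-1),\ldots,n-1\}^V$, so only finitely many of the polytopes $R_b$ are nonempty.

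The argument then rests on two ingredients. First, an elementary parity observation: for $x \in \{0,1\}^E$ and any vertex $v$ one has $x(\delta(v)) \equiv x(\delta^+(v)) + x(\delta^-(v)) \equiv x(\delta^+(v)) - x(\delta^-(v)) = (Nx)_v \pmod 2$; hence $x \in \{0,1\}^E$ lies in $R$ iff every vertex has odd degree in $\{e : x_e = 1\}$, i.e.\ iff $x$ is the characteristic vector of a $V$-join, in which case $x \in R_b$ for the (odd, integral) vector $b = Nx$. So $\bigcup_b \bigl(R_b \cap \{0,1\}^E\bigr)$ is precisely the set of characteristic vectors of $V$-joins. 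Second --- and this is the only point that needs (standard) care --- the matrix $N$ is totally unimodular, since each of its columns has exactly one $+1$ and one $-1$ (the classical fact that incidence matrices of digraphs are TU). Therefore, for every integral $b$, the polytope $R_b = \{x : Nx = b,\ \zerovec \le x \le \onesvec\}$ has only integral vertices, and these lie in $\{0,1\}^E$; so $R_b = \conv\bigl(R_b \cap \{0,1\}^E\bigr)$. Combining the two ingredients,
\[
    \conv(R) = \conv\Bigl( \bigcup_b R_b \Bigr) = \conv\Bigl( \bigcup_b \bigl(R_b \cap \{0,1\}^E\bigr) \Bigr) = \conv\bigl\{ \chi^J : J \text{ is a $V$-join of } K_n \bigr\} = \vjoin(n) ,
\]
which is the claimed identity. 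Apart from this TU step, everything is routine bookkeeping, so I do not anticipate a genuine obstacle.

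For the ``in particular'' part I would take $Q \coloneqq \{ (x,z) \in \R^E \times \R^V : \zerovec \le x \le \onesvec,\ (Nx)_v = 2 z_v + 1 \ \forall v \in V \}$, a polytope described by $2|E| + 2n = O(n^2)$ linear inequalities, together with the affine maps $\sigma(x,z) \coloneqq z$ and $\pi(x,z) \coloneqq x$. Then $\pi\bigl(Q \cap \sigma^{-1}(\Z^V)\bigr) = R$, so by the identity $\conv\bigl(\pi(Q \cap \sigma^{-1}(\Z^V))\bigr) = \vjoin(n)$; that is, $(Q,\sigma,\pi)$ is a MILEF of $\vjoin(n)$ of size $O(n^2)$ using exactly $n = |V|$ integer variables.
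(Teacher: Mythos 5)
Your proposal is correct and follows essentially the same route as the paper: the parity observation that $x(\delta^+(v))-x(\delta^-(v))\equiv x(\delta(v)) \pmod 2$ identifies the integral points with characteristic vectors of $V$-joins, and total unimodularity of the signed vertex--arc incidence matrix of the orientation gives integrality of each fiber $\{x\in[0,1]^E : Nx=b\}$ for integral $b$, exactly as in the paper's argument (which phrases the fiber step via vertices of the polytopes $P_z$ rather than via $\conv$ of a union of fibers). The MILEF construction in your ``in particular'' part likewise matches the intended one.
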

\begin{proof}
    Let $ Q \subseteq \R^E $ denote the polytope on the right-hand side.
    To show $ \vjoin(n) \subseteq Q $, it suffices to show that every vertex of $ \vjoin(n) $ is contained in $ Q $.
    To this end, let $ x \in \R^E $ be a vertex of $ \vjoin(n) $.
    Since $ x $ is the characteristic vector of a $ V $-join, for every $ v \in V $ we have that 
    \[
        x(\delta(v)) = x(\delta^+(v)) + x(\delta^-(v))
    \]
    is odd, and so is $ x(\delta^+(v)) - x(\delta^-(v)) $.
    Thus, for every $ v \in V $ there exists an integer $ z_v \in \Z $ that satisfies $ x(\delta^+(v)) - x(\delta^-(v)) = 2z_v
    + 1 $, and hence $ x \in Q $.

    It remains to show $ Q \subseteq \vjoin(n) $, for which it again suffices to show that every vertex of $ Q $ is
    contained in $ \vjoin(n) $.
    To this end, let $ x $ be a vertex of $ Q $.
    Observe that there exists a vector $ z \in \Z^V $ such that $ x $ is a vertex of the polytope
    \ifbool{SODAfinal}{
    \begin{alignat*}{2}
     P_z \coloneqq \Big\{ & \tilde{x} \in [0,1]^E : \quad \forall v\in V, & \\
			  & \tilde{x}(\delta^+(v)) - \tilde{x}(\delta^-(v)) = 2z_v + 1 & \Big\}.
    \end{alignat*}
    }{
    \[
        P_z \coloneqq \left \{ \tilde{x} \in [0,1]^E : \tilde{x}(\delta^+(v)) - \tilde{x}(\delta^-(v)) = 2z_v + 1 \text{ for
        all } v \in V \right \}.
    \]
    }
    Note that $ P_z $ is defined by a totally unimodular matrix (the non-trivial constraints are described by a
    node-arc incidence matrix of the directed graph defined by the orientation $ \mathcal{O} $).
    Thus, since $ z $ is integral, we obtain that $ x \in \{0,1\}^E $.
    Furthermore, for every $ v \in V $ we clearly have that $ x(\delta^+(v)) - x(\delta^-(v)) $ is odd, and so is $
    x(\delta^+(v)) + x(\delta^-(v)) = x(\delta(v)) $.
    This shows that $ x $ is a characteristic vector of a $ V $-join and hence $ x \in \vjoin(n) $.
\end{proof}
As an immediate corollary of Proposition~\ref{prop:MILEFupperBoundVjoin} we obtain the following.
\begin{corollary}
    For every $ n $ even, $ \vjoindom(n) $ admits a MILEF of size $ O(n^2) $ with $ n $ integer variables.
\end{corollary}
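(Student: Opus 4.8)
The plan is to obtain the corollary directly from Proposition~\ref{prop:MILEFupperBoundVjoin} via a generic ``dominant'' closure property of MILEFs, namely: if a polyhedron $P \subseteq \R^d$ admits a MILEF of complexity $(m, k)$, then its dominant $P\dom = P + \R^d_{\ge 0}$ admits a MILEF of complexity $(m + d, k)$. To prove this, I would start from a MILEF $(Q, \sigma, \pi)$ of $P$ with $Q \subseteq \R^\ell$, and set $Q' \coloneqq Q \times \R^d_{\ge 0} \subseteq \R^{\ell + d}$, $\sigma'(q, w) \coloneqq \sigma(q)$, and $\pi'(q, w) \coloneqq \pi(q) + w$. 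Since $(\sigma')^{-1}(\Z^k) = \sigma^{-1}(\Z^k) \times \R^d$, one has $Q' \cap (\sigma')^{-1}(\Z^k) = \bigl(Q \cap \sigma^{-1}(\Z^k)\bigr) \times \R^d_{\ge 0}$, hence $\pi'\bigl(Q' \cap (\sigma')^{-1}(\Z^k)\bigr) = \pi\bigl(Q \cap \sigma^{-1}(\Z^k)\bigr) + \R^d_{\ge 0}$; taking convex hulls on both sides and using the identity $\conv(A + \R^d_{\ge 0}) = \conv(A) + \R^d_{\ge 0}$ yields $\conv\bigl(\pi'(Q' \cap (\sigma')^{-1}(\Z^k))\bigr) = P + \R^d_{\ge 0} = P\dom$. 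Finally, $Q'$ has at most $m + d$ facets (those of $Q$ plus the $d$ nonnegativity constraints on $w$), and $\sigma'$ still imposes only $k$ integrality constraints.

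With this in hand, I would simply apply it to $P = \vjoin(n)$, which lives in $\R^E$ with $d = |E| = \binom{n}{2}$ and, by Proposition~\ref{prop:MILEFupperBoundVjoin}, has a MILEF of size $O(n^2)$ with $n$ integer variables. The construction above then produces a MILEF for $\vjoindom(n) = \vjoin(n) + \R^E_{\ge 0}$ of size $O(n^2) + \binom{n}{2} = O(n^2)$ with $n$ integer variables, which is exactly the claim.

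There is essentially no obstacle here, which is why the statement is an immediate corollary; the only point deserving a line of justification is the identity $\conv(A + \R^d_{\ge 0}) = \conv(A) + \R^d_{\ge 0}$, which follows from the general fact $\conv(A + B) = \conv(A) + \conv(B)$ (obtained by expanding a sum of two convex combinations into a single convex combination of the pairwise sums) together with the convexity of $\R^d_{\ge 0}$.
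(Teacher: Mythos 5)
Your proof is correct and matches the paper's (implicit) reasoning: the paper treats the corollary as immediate from Proposition~\ref{prop:MILEFupperBoundVjoin}, the point being exactly the dominant-taking step you spell out, namely appending $|E|$ nonnegative slack variables with $\pi'(q,w)=\pi(q)+w$, which adds only $O(n^2)$ facets and no integrality constraints. The one step worth justifying, $\conv(A+\R^d_{\ge 0})=\conv(A)+\R^d_{\ge 0}$, is handled correctly.
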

This shows that the lower bound provided in Corollary~\ref{corLowerBoundMILEFVjoin} is tight up to a factor of $ O(\log
n) $.

Since the perfect matching polytope $ \pmatch(n) $ of $ K_n $ is a face of $ \vjoin(n) $, and since the matching polytope
$ \match(n) $ of $ K_n $ is equal to $ \{ x \in \R^E_{\ge 0} : x \le y \text{ for some } y \in \pmatch(n) \} $, the
above observation shows that $ \match(n) $ also admits a MILEF of size $ O(n^2) $ with $ n $ integer variables.

Below, we provide an alternative, even simpler MILEF of the same complexity for general graphs.
To this end, let $ G = (V, E) $ be any undirected graph, and fix any orientation $ \mathcal{O} $ of the edges of $ G $.
\begin{proposition}
    \label{propUpperBoundMatching}
    If $P\subseteq \R^E$ is the matching polytope of graph $G$, then 
    \ifbool{SODAfinal}{
    \begin{alignat*}{2}
     P = \conv \Big\{ & x \in \R^E_{\ge 0} : \quad \forall v\in V, & \\
		      & x(\delta(v))\leq 1 \text{ and } \, x(\delta^+(v))\in\Z & \Big \}.
    \end{alignat*}
    }{
    \[
        P = \conv \left \{ x \in \R^E_{\ge 0} : x(\delta(v)) \le 1 \text{ and } x(\delta^+(v)) \in \Z \text{ for every }
        v \in V \right \}.
    \]
    }
    In particular, $ P $ admits a MILEF of size $ O(n^2) $ with $ n $ integer variables.
\end{proposition}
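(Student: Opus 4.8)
The plan is to set $\hat{P} \coloneqq \{ x \in \R^E_{\ge 0} : x(\delta(v)) \le 1 \ \forall v \in V \}$ (the fractional matching polytope), to let $\sigma \colon \R^E \to \R^V$ be the linear map $\sigma(x) \coloneqq (x(\delta^+(v)))_{v\in V}$, and to write $Q \coloneqq \hat{P} \cap \sigma^{-1}(\Z^V)$, so that the set on the right-hand side of the claimed identity is exactly $\conv(Q)$. I would prove $P = \conv(Q)$ via the two inclusions $P \subseteq \conv(Q)$ and $Q \subseteq P$. The first is immediate: the vertices of $P$ are the characteristic vectors $\chi_M$ of matchings $M$, and each such vector lies in $\R^E_{\ge 0}$, satisfies $\chi_M(\delta(v)) \le 1$ by the matching property, and has $\chi_M(\delta^+(v)) \in \Z$ (it counts the edges of $M$ entering $v$); hence $\chi_M \in Q$, and taking convex hulls gives $P \subseteq \conv(Q)$.

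For $Q \subseteq P$, I would fix $x \in Q$ and put $z \coloneqq \sigma(x) \in \Z^V$, so that $x$ lies in the polytope $P_z \coloneqq \{ y \in \R^E_{\ge 0} : y(\delta(v)) \le 1,\ y(\delta^+(v)) = z_v \ \forall v \in V \}$. The key claim is that $P_z$ is an \emph{integral} polytope. To see this, I would rewrite $P_z$ using $y(\delta(v)) = y(\delta^+(v)) + y(\delta^-(v))$ and the equalities $y(\delta^+(v)) = z_v$ as
\[
P_z = \{ y \in \R^E_{\ge 0} : M^+ y = z,\ M^- y \le \onesvec - z \},
\]
where $M^+, M^- \in \{0,1\}^{V\times E}$ are the matrices whose $v$-th rows are the indicator vectors of $\delta^+(v)$ and $\delta^-(v)$. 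Each column of $\binom{M^+}{M^-}$, say the one of an edge $e$ oriented from $u$ to $w$, has exactly one $1$ in the $M^+$-block (row $w$) and exactly one in the $M^-$-block (row $u$); thus $\binom{M^+}{M^-}$ is the vertex--edge incidence matrix of a bipartite graph on two disjoint copies of $V$, hence totally unimodular. Appending rows $\pm e_v$ (for the constraints $y \ge 0$ and for splitting $M^+ y = z$ into two inequalities) preserves total unimodularity, and the right-hand side is integral, so $P_z$ is integral. Every point of $P_z$ satisfies $y(\delta(v)) = z_v + (M^-y)_v \le 1$, so each vertex of $P_z$ is an integral point $y \ge 0$ with $y(\delta(v)) \le 1$, i.e. a $0/1$ vector whose support is a matching, hence a vertex of $P$. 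Therefore $P_z \subseteq P$, in particular $x \in P$, so $Q \subseteq P$; combined with the first inclusion this gives $P = \conv(Q)$.

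The ``in particular'' is then immediate: $(\hat{P}, \sigma, \mathrm{id})$ is a MILEF of $P$, its polyhedron $\hat{P}$ is described by the $|E| + |V| = O(n^2)$ inequalities $x \ge 0$ and $x(\delta(v)) \le 1$, the map $\sigma$ imposes $|V| = n$ integrality constraints, and $\conv(\mathrm{id}(\hat{P} \cap \sigma^{-1}(\Z^V))) = \conv(Q) = P$. The only genuine idea here is the observation that fixing the oriented in-degrees $x(\delta^+(v)) = z_v$ turns the fractional matching polytope, which is non-integral precisely because of odd cycles, into a totally unimodular (bipartite incidence) system; checking total unimodularity and the bookkeeping around the vertices of $P_z$ is routine, so I expect the bipartite reinterpretation of $\binom{M^+}{M^-}$ to be the crux of the argument.
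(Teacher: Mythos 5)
Your proof is correct, but it takes a different route from the one in the paper for this proposition. You fix the integer values $z=\sigma(x)$ and show that each slice $P_z=\{y\in\R^E_{\ge 0}: y(\delta(v))\le 1,\ y(\delta^+(v))=z_v\ \forall v\}$ is an integral polytope, by observing that the constraint matrix $\binom{M^+}{M^-}$ is the incidence matrix of a bipartite graph on two copies of $V$ and hence totally unimodular; since every integral point of $P_z$ is a matching vector, $P_z\subseteq P$. The paper instead argues pointwise about any $x$ in the mixed-integer set: it first deduces $x(\delta^+(v))\in\{0,1\}$, then shows every support edge joins a vertex with $x(\delta^+(v))=1$ to one with $x(\delta^+(w))=0$, so the support graph is bipartite, and concludes by the known integrality of the bipartite fractional matching polytope. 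Both arguments ultimately rest on the same bipartite/TU phenomenon created by the orientation, but yours is the more mechanical ``fix the integer part, prove the fiber is integral via TU'' scheme --- which, amusingly, is exactly how the paper proves the neighboring $V$-join formulation (Proposition~\ref{prop:MILEFupperBoundVjoin}) --- while the paper's proof of this proposition is shorter, avoids setting up the TU machinery explicitly, and shows directly that the set inside the convex hull is already contained in $P$. One cosmetic remark: your parenthetical about appending rows ``$\pm e_v$'' conflates the nonnegativity rows (unit rows indexed by edges) with the rows $-M^+$ used to split the equality $M^+y=z$; the underlying fact you need --- that appending unit rows and negated copies of existing rows preserves total unimodularity (or simply invoking the Hoffman--Kruskal-type statement for systems $M^+y=z$, $M^-y\le \onesvec-z$, $y\ge 0$ with $\binom{M^+}{M^-}$ TU) --- is standard and correct, so this is an imprecision rather than a gap.
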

\begin{proof}
    Let $ Q $ denote the polytope on the right-hand side.
    It is clear that $ P $ is contained in $ Q $.
    To show $ Q \subseteq P $, let $ x \in \R^E_{\ge 0} $ that satisfies $ x(\delta(v)) \le 1 $ and $ x(\delta^+(v)) \in
    \Z $ for every $ v \in V $.
    Let $ F \subseteq E $ be the support of $ x $.
    We claim that $ G' \coloneqq (V, F) $ is a bipartite subgraph of $ G $.
    To see this, first observe that $ x(\delta^+(v)) \in \{0,1\} $ for every $ v \in V $.
    Suppose that $ e = \{v,w\} \in F $, and assume that $ e \in \delta^+(v) \cap \delta^-(w) $.
    Since $ e \in F $, we have $ 0 < x_e \le x(\delta^+(v)) $, which implies $ x(\delta^+(v)) = 1 $.
    Furthermore, we have
    \ifbool{SODAfinal}{
    \begin{align*}
     x(\delta^+(w)) &\le x(\delta^+(w)) + 1 - x(\delta(w)) \\
     &= 1 - x(\delta^-(w)) \le 1 - x_e < 1,
    \end{align*}
    }{
    \[
        x(\delta^+(w)) \le x(\delta^+(w)) + 1 - x(\delta(w)) = 1 - x(\delta^-(w)) \le 1 - x_e < 1,
    \]
    }
    and hence $ x(\delta^+(w)) = 0 $.
    Thus, any edge in $ F $ is incident to a node $ v $ with $ \delta^+(v) = 1 $ and a node $ w $ with $ \delta^-(w) = 0
    $, showing that $ G' $ is bipartite.

    Since $ G' $ is bipartite and since $ x(\delta(v)) \le 1 $ holds for every $ v \in V $, the restriction of $ x $ to
    $ \R^F $ is contained in the matching polytope of $ G' $.
    Embedding the matching polytope of $ G' $ into $ \R^E $, we obtain that it is a face of $ P $ and hence $ x $ is
    contained in $ P $.
\end{proof}
For the case of the complete graph, this shows that the lower bound obtained in
Corollary~\ref{corLowerBoundMatchingRdist} is tight up to a factor of $ O(\log n) $.

\subsection{Cut polytope and odd-cut polytope}
\label{secUpperBoundCut}
Next, let us consider the cut polytope $ \cut(n) $ and the odd-cut polytope $ \oddcut(n) $ of $ K_n $ (for the latter we
assume that $ n $ is even).
Recall that a cut is an edge subset $ F \subseteq E $ that can be written as $ F = \left\{ \{v, w\} \in E : v \in S, \, w \notin S \right\} $ for
some $ S \subseteq V $, and it is called an odd cut if $ |S| $ is odd. (We remind the reader that we allow $S=\emptyset$ and $S=V$.)
Let us first start with two simple MILEFs for $ \cut(n) $ and $ \oddcut(n) $ that use $ O(n) $ integer variables.
\begin{proposition}\label{prop:MILEFlowerBoundCut}
    For every $ n $ we have
    \ifbool{SODAfinal}{
        $\cut(n) =$
        \begin{alignat}{2}
         \conv \Big\{  & x \in [0,1]^E : &  \nonumber \\
				& \exists y\in \{0,1\}^V \text{ s.t. } \forall \{v,w\} \in E, & \nonumber \\
				& \quad x_{\{v,w\}} \ge y_v - y_w, & \label{eqMILEFCutFirst} \\
				& \quad x_{\{v,w\}} \ge y_w - y_v,	& \\
				& \quad x_{\{v,w\}} \le y_v + y_w, \text{ and} & \\
				& \quad x_{\{v,w\}} \le 2 - y_v - y_w & \Big\}. \label{eqMILEFCutLast}
        \end{alignat}
    }{
        \begin{alignat}{10}
        \label{eqMILEFCutFirst}
        \cut(n) = \conv \Big\{ x \in [0,1]^E : \ & x_{\{v,w\}} \ge y_v - y_w,                                    & \\
                                                 & x_{\{v,w\}} \ge y_w - y_v,                                    & \\
                                                 & x_{\{v,w\}} \le y_v + y_w, \text{ and}                        & \\
        \label{eqMILEFCutLast}
                                                 & x_{\{v,w\}} \le 2 - y_v - y_w \ \text{ for all } \{v,w\} \in E, & \\
        \nonumber
                                                 & y \in \{0,1\}^V                                               & \Big\}.
    \end{alignat}
    }
    Furthermore, for every $ n $ even we have
    \ifbool{SODAfinal}{
    $\oddcut(n) =$
    \begin{alignat*}{2}
        \conv \Big\{ &x \in [0,1]^E : & \\
				  &\exists y \in \{0,1\}^V \text{ satisfying \eqref{eqMILEFCutFirst}--\eqref{eqMILEFCutLast}}, &\\
				  &\exists z\in \Z \text{ s.t. } \sum \nolimits_{v \in V} x_v = 2z + 1 & \Big\}.
    \end{alignat*}
    }{
    \begin{equation*}
        \oddcut(n) = \conv \left \{ x \in [0,1]^E : y \in \{0,1\}^V, \,
            (x,y) \text{ satisfy \eqref{eqMILEFCutFirst}--\eqref{eqMILEFCutLast}}, \,
            \sum \nolimits_{v \in V} x_v = 2z + 1, \,
            z \in \Z \right \}.
    \end{equation*}
    }
    In particular, both $ \cut(n) $ and $ \oddcut(n) $ admit MILEFs of size $ O(n^2) $ with $ O(n) $ integer variables.
\end{proposition}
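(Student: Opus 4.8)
The plan is to prove both polyhedral identities from a single elementary observation, and then simply read off the claimed MILEF complexities. The crux is that, for a fixed $y\in\{0,1\}^V$, the four inequalities \eqref{eqMILEFCutFirst}--\eqref{eqMILEFCutLast} together with $x\in[0,1]^E$ force $x_{\{v,w\}}=|y_v-y_w|$ for every edge $\{v,w\}\in E$. To see this I would run the four cases for $(y_v,y_w)\in\{0,1\}^2$: if $y_v=y_w=0$ then $x_{\{v,w\}}\le y_v+y_w=0$ and $x_{\{v,w\}}\ge 0$ give $x_{\{v,w\}}=0$; if $y_v=y_w=1$ then $x_{\{v,w\}}\le 2-y_v-y_w=0$ and $x_{\{v,w\}}\ge 0$ give $x_{\{v,w\}}=0$; and if $y_v\neq y_w$ then $x_{\{v,w\}}\ge|y_v-y_w|=1$ and $x_{\{v,w\}}\le y_v+y_w=1$ give $x_{\{v,w\}}=1$. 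Conversely, for any $S\subseteq V$, the characteristic vector $x$ of the cut $F=\{\{v,w\}\in E:v\in S,\ w\notin S\}$ together with the $0/1$-vector $y$ with support $S$ satisfies \eqref{eqMILEFCutFirst}--\eqref{eqMILEFCutLast}, as a direct check of the same four cases shows.

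Given this, the first identity is immediate: the set whose convex hull is taken on the right-hand side equals exactly the set of characteristic vectors of cuts of $K_n$, since $S$ ranges over all subsets of $V$ (in particular $S=\emptyset$ and $S=V$, giving the empty cut) as $y$ ranges over $\{0,1\}^V$; taking convex hulls yields $\cut(n)$ by definition. For the odd-cut identity I would additionally impose $\sum_{v\in V}y_v=2z+1$ for a new integer variable $z$. Since $\sum_{v\in V}y_v=|S|$, this restriction selects precisely the $y$'s whose support $S$ has odd cardinality, so the set under the convex hull becomes the set of characteristic vectors of odd cuts of $K_n$, whose convex hull is $\oddcut(n)$.

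For the ``in particular'' part I would exhibit the MILEF explicitly. For $\cut(n)$, let $Q$ be the polytope of all $(x,y)\in\R^E\times\R^V$ satisfying $0\le x\le\onesvec$, $0\le y\le\onesvec$, and \eqref{eqMILEFCutFirst}--\eqref{eqMILEFCutLast}, and take the maps $\sigma(x,y)=y$ and $\pi(x,y)=x$. Since $\{0,1\}^V=[0,1]^V\cap\Z^V$, the computation above gives $\conv\bigl(\pi(Q\cap\sigma^{-1}(\Z^V))\bigr)=\cut(n)$, and $Q$ has $O(n^2)$ facets and $n$ integer variables. For $\oddcut(n)$ one adjoins the coordinate $z$ and the single equation $\sum_{v\in V}y_v=2z+1$, and sets $\sigma(x,y,z)=(y,z)$; this keeps the facet count at $O(n^2)$ and raises the number of integer variables to $n+1=O(n)$ (note that $z$ is automatically bounded, since $0\le\sum_{v\in V}y_v\le n$).

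The main (and essentially only) point requiring care is the four-case verification that $x_{\{v,w\}}=|y_v-y_w|$ for integral $y$ — note that it uses the lower bound $x\ge 0$ essentially but not the upper bound $x\le\onesvec$ — together with the minor bookkeeping that it is the box constraints $0\le y\le\onesvec$ which turn the abstract condition $y\in\{0,1\}^V$ into an honest finite system of linear inequalities, so that the construction conforms to the definition of a MILEF.
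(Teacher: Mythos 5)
Your proof is correct and follows essentially the same route as the paper's: for integral $y$ the four inequalities force $x_{\{v,w\}}=|y_v-y_w|$, so the points admitting an integral $y$ (and, for the odd case, an integral $z$) are exactly the characteristic vectors of cuts, respectively odd cuts, and the stated size and number of integer variables are read off the explicit formulation. Your reading of the parity constraint as $\sum_{v\in V} y_v = 2z+1$ matches the paper's intent (the ``$x_v$'' in the statement is evidently a typo for $y_v$), so nothing is lost there.
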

\begin{proof}
    Let $ Q $ denote the polytope on the right-hand side of the first claim.
    From the definition of a cut, it is clear that $ \cut(n) $ is contained in $ Q $.
    Let $ x \in [0,1]^E $ and $ y \in \{0,1\}^V $ that satisfy \eqref{eqMILEFCutFirst}--\eqref{eqMILEFCutLast}.
    It is straightforward to check that the integrality of $ y $ forces $ x $ to be integral as well.
    Furthermore, it is easy to see that $ x $ is the characteristic vector of the cut defined by $ S \coloneqq \{ v \in V : y_v
    = 1 \} $.
    Thus, $ x $ is contained in $ \cut(n) $, which shows $ Q \subseteq \cut(n) $.

    The second claim (the description of $ \oddcut(n) $) follows from the above argumentation and the fact that $
    \sum_{v \in V} x_v = 2z + 1, \, z \in \Z $ is equivalent to requiring $ S $ to be odd.
\end{proof}

The above proposition immediately implies the following.
\begin{corollary}\label{cor:MILEFlowerBoundDomOCut}
    For every $ n $ even, the dominant of the odd cut polytope $ \oddcutdom(n) $ admits a MILEF of size $ O(n^2) $ with $
    O(n) $ integer variables.
\end{corollary}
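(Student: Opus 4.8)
The plan is to obtain this as an immediate consequence of the MILEF for $\oddcut(n)$ from Proposition~\ref{prop:MILEFlowerBoundCut}, using that passing to the dominant amounts to a Minkowski sum with $\R^E_{\ge 0}$, which commutes with taking convex hulls and costs only $|E| = \binom{n}{2} = O(n^2)$ additional inequalities and no additional integer variables. More generally, I would first record the simple fact that if a polyhedron $P \subseteq \R^d$ admits a MILEF of complexity $(m,k)$, then its dominant $P\dom = P + \R^d_{\ge 0}$ admits a MILEF of complexity $(m + d, k)$, and then apply it with $P = \oddcut(n)$, $d = |E| = O(n^2)$, $m = O(n^2)$, and $k = O(n)$.

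Concretely, let $(Q,\sigma,\pi)$ be a MILEF of $\oddcut(n)$ with $Q \subseteq \R^\ell$, $\sigma\colon\R^\ell\to\R^k$, $\pi\colon\R^\ell\to\R^E$, where $Q$ has $O(n^2)$ facets and $k = O(n)$ (Proposition~\ref{prop:MILEFlowerBoundCut}). I would set $Q' \coloneqq Q \times \R^E_{\ge 0} \subseteq \R^\ell \times \R^E$ and define $\sigma'\colon\R^\ell\times\R^E\to\R^k$ by $\sigma'(w,s) \coloneqq \sigma(w)$ and $\pi'\colon\R^\ell\times\R^E\to\R^E$ by $\pi'(w,s) \coloneqq \pi(w) + s$. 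Then $Q'$ is a polyhedron with at most $O(n^2) + |E| = O(n^2)$ facets, and $\sigma'$ still has $k = O(n)$ components; so the triple $(Q',\sigma',\pi')$ has the claimed complexity, and it only remains to check that it is a MILEF of $\oddcutdom(n)$.

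For that check, write $A \coloneqq \pi(Q \cap \sigma^{-1}(\Z^k))$, so that $\conv(A) = \oddcut(n)$. Since $Q' \cap (\sigma')^{-1}(\Z^k) = \big(Q\cap\sigma^{-1}(\Z^k)\big)\times\R^E_{\ge 0}$, we get $\pi'\big(Q' \cap (\sigma')^{-1}(\Z^k)\big) = A + \R^E_{\ge 0}$, and hence, using that Minkowski sum with a convex set commutes with the convex hull,
\[
\conv\big(\pi'(Q' \cap (\sigma')^{-1}(\Z^k))\big) = \conv(A) + \R^E_{\ge 0} = \oddcut(n) + \R^E_{\ge 0} = \oddcutdom(n),
\]
as needed. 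There is essentially no real obstacle here: the only points that need a line of justification are this commutation identity and the bookkeeping that the facet count stays $O(n^2)$ while $k$ is unchanged. Alternatively, one may skip the generic statement and simply augment the explicit description in Proposition~\ref{prop:MILEFlowerBoundCut} by a fresh block of continuous variables $x' \in \R^E$ together with the $|E|$ inequalities $x'_e \ge x_e$ for $e \in E$, projecting onto $x'$; this is the same argument carried out in coordinates.
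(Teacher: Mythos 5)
Your proposal is correct and matches the paper's (unstated) argument: the paper derives Corollary~\ref{cor:MILEFlowerBoundDomOCut} as an immediate consequence of Proposition~\ref{prop:MILEFlowerBoundCut}, with exactly the idea you spell out, namely that taking the dominant amounts to adding a block of nonnegative continuous variables (a Minkowski sum with $\R^E_{\ge 0}$, which commutes with the convex hull), costing only $|E|=O(n^2)$ extra inequalities and no extra integer variables. Your write-up simply makes these routine details explicit.
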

Proposition~\ref{prop:MILEFlowerBoundCut} and Corollary~\ref{cor:MILEFlowerBoundDomOCut} show that the bounds obtained in Corollaries~\ref{corLowerBoundCut} and~\ref{corLowerBoundOddCutDom}, respectively, are tight up to a factor of $ O(\log n) $.
Recall that in our reasoning in Section~\ref{secApplicationBimodular}, we used another MILEF for $ \oddcutdom(n) $,
whose validity we want to prove next. We prove this through the proposition below, which shows that the polytope $P$ defined by~\eqref{eq:bimodularOddCutRelax} satisfies that there is an affine projection of $P\cap (\mathbb{R}^A \times \mathbb{Z}^V \times \mathbb{Z})$ whose convex hull is $\oddcutdom(n)$.
\begin{proposition}
    For every $ n $ even, let $ D = (V, A) $ be the complete digraph on $ n $ vertices.
    Then 
    \ifbool{SODAfinal}{
    $\oddcutdom(n) =$
    \begin{alignat*}{2}
     \conv \Big\{ & x\in \R^E : \, \exists \bar{x} \in \R^A_{\ge 0}, \, y\in\Z^V, \, z\in\Z \text{ s.t.} &\\
		  & x_{\{v,w\}} = \bar{x}_{(v,w)} + \bar{x}_{(w,v)} \quad \forall \{v,w\}\in E &\\
		  & \bar{x}_{(v,w)} \ge y_w - y_v \quad \forall (v,w)\in A &\\
		  & \sum \nolimits_{v\in V} y_v = 2z +1 &\Big\}.
    \end{alignat*}

    }{
    \begin{alignat*}{10}
        \oddcutdom(n) = \conv \Big \{ x \in \R^E : \
        & x_{\{v,w\}} = \bar{x}_{(v,w)} + \bar{x}_{(w,v)} & \text{ for all } \{v,w\} \in E, & \\
        & y_w - y_v \le \bar{x}_{(v, w)} & \text{ for all } (v,w) \in A, & \\
        & \sum \nolimits_{v \in V} y_v = 2 z + 1, \\
        & \bar{x} \in \R^A_{\ge 0}, \, y \in \Z^V, \, z \in \Z & & \Big \}.
    \end{alignat*}
    }
\end{proposition}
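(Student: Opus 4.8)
The plan is to prove both inclusions directly. Write $R$ for the set on the right-hand side and $S'$ for the set whose convex hull defines $R$, so $R=\conv(S')$; for $y\in\Z^V$ set $w(y)\coloneqq(|y_v-y_w|)_{\{v,w\}\in E}\in\R^E_{\ge 0}$. The starting observation is that if $(\bar x,y,z)$ certifies membership of $x$ in $S'$, then $\bar x_{(v,w)}\ge\max(0,y_w-y_v)$ and $\bar x_{(w,v)}\ge\max(0,y_v-y_w)$ force $x_{\{v,w\}}=\bar x_{(v,w)}+\bar x_{(w,v)}\ge|y_v-y_w|$, i.e.\ $x\ge w(y)$; conversely every $x\ge w(y)$ lies in $S'$ with this same $y,z$. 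I will also use that $S'$ is up-closed: given a certificate and any $r\in\R^E_{\ge0}$, increasing $\bar x$ on one arc of each edge (along a fixed reference orientation) by the corresponding entry of $r$ keeps all constraints satisfied, so $S'=S'+\R^E_{\ge0}$ and hence $R=R+\R^E_{\ge0}$; of course $\oddcutdom(n)=\oddcut(n)+\R^E_{\ge0}$ is up-closed as well.

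For the inclusion $\oddcutdom(n)\subseteq R$ it suffices to show $\chi^{\delta(S)}\in S'$ for every $S\subseteq V$ of odd cardinality, because then $\oddcut(n)=\conv\{\chi^{\delta(S)}:|S|\text{ odd}\}\subseteq R$, and adding $\R^E_{\ge0}$ to both sides preserves the inclusion by up-closedness of $R$. This membership is witnessed by $y\coloneqq\chi^S$, $z\coloneqq(|S|-1)/2\in\Z$ (so that $\sum_v y_v=|S|=2z+1$), and $\bar x_{(v,w)}\coloneqq\max(0,y_w-y_v)$: indeed $\bar x\in\R^A_{\ge0}$ satisfies the arc inequalities, and $\bar x_{(v,w)}+\bar x_{(w,v)}$ equals $1$ precisely when exactly one of $v,w$ lies in $S$, i.e.\ it equals $\chi^{\delta(S)}_{\{v,w\}}$.

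For the reverse inclusion $R\subseteq\oddcutdom(n)$, by convexity of $\oddcutdom(n)$ it is enough to show $S'\subseteq\oddcutdom(n)$, and by the inequality $x\ge w(y)$ together with up-closedness of $\oddcutdom(n)$ it is enough to show $w(y)\in\oddcutdom(n)$ whenever $y\in\Z^V$ and $\sum_v y_v$ is odd. Here I use a level-set decomposition: for $t\in\Z$ let $S_t\coloneqq\{v\in V:y_v\ge t\}$, so that for each edge $\{v,w\}$ one has $|y_v-y_w|=\#\{t\in\Z:\min(y_v,y_w)<t\le\max(y_v,y_w)\}=\sum_{t\in\Z}\chi^{\delta(S_t)}_{\{v,w\}}$, hence $w(y)=\sum_{t\in T}\chi^{\delta(S_t)}$ where $T\coloneqq\{t\in\Z:\min_u y_u<t\le\max_u y_u\}$ is finite (for $t\notin T$ the set $S_t$ is $\emptyset$ or $V$ and contributes $0$). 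Summing cardinalities, $\sum_{t\in T}|S_t|=\sum_{v\in V}(y_v-\min_u y_u)=\sum_v y_v-n\cdot\min_u y_u$, which is \emph{odd}, since $\sum_v y_v$ is odd and $n$ is even. Therefore some $t^\star\in T$ has $|S_{t^\star}|$ odd, i.e.\ $\delta(S_{t^\star})$ is an odd cut, and $w(y)=\chi^{\delta(S_{t^\star})}+\sum_{t\in T\setminus\{t^\star\}}\chi^{\delta(S_t)}\in\chi^{\delta(S_{t^\star})}+\R^E_{\ge0}\subseteq\oddcutdom(n)$, as needed.

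The up-closedness bookkeeping and the telescoping identity for $|y_v-y_w|$ are routine; the one genuinely delicate point — and exactly where the hypothesis that $n$ is even is indispensable — is the parity computation showing that among the level sets $S_t$ at least one has odd cardinality. If $y$ were restricted to $\{0,1\}^V$ one could read off a single odd cut directly, but since $y$ ranges over all of $\Z^V$ it is the level-set decomposition together with the mod-$2$ count above that does the work.
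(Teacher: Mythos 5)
Your proposal is correct and follows essentially the same route as the paper: the easy inclusion is witnessed by $y=\chi^S$ with $\bar{x}_{(v,w)}=\max(0,y_w-y_v)$, and the reverse inclusion is obtained by thresholding the integral potential $y$ to produce an odd cut whose characteristic vector is dominated by $x$. The only (harmless) difference is how the odd level set is located: the paper takes the sublevel set $\{v: y_v\le\delta\}$ for the smallest value $\delta$ whose value-class has odd size (parity of $\sum_v y_v$ alone suffices there), whereas you decompose $w(y)$ into all level cuts and use the count $\sum_t |S_t|=\sum_v y_v-n\min_u y_u$, which additionally invokes $n$ even; both arguments are valid.
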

\begin{proof}
    Let $ Q $ denote the polyhedron on the right-hand side.
    It is straightforward to check that every characteristic vector of an odd cut is contained in $ Q $.
    As $ Q $ is clearly equal to its dominant, this shows $ \oddcutdom(n) \subseteq Q $.

    To see the reverse inclusion, let us fix $ y \in \Z^V_{\ge 0} $ such that $ \sum_{v \in V} y_v $ is odd.
    It remains to show that the projection onto $\R^E$ of the polyhedron 
    \ifbool{SODAfinal}{
    \begin{alignat*}{2}
     P_y \coloneqq \Big\{ &(x,\bar{x}) \in \R^E \times \R^A_{\ge 0} : \, \forall (v,w)\in A, & \\
			  & \quad x_{\{v,w\}} = \bar{x}_{(v,w)}+\bar{x}_{(w,v)}, & \\
			  & \quad \bar{x}_{(v,w)} \geq y_w - y_v & \Big\}
    \end{alignat*}
    }{
    \[
        P_y \coloneqq \left \{
            (x,\bar{x}) \in \R^E \times \R^A_{\ge 0}
            :
            x_{\{v,w\}} = \bar{x}_{(v,w)} + \bar{x}_{(w,v)}
            \text{ and }
            y_w - y_v \le \bar{x}_{(v, w)} \text{ for all } (v,w) \in A
        \right \}
    \]
    }
     is contained in $\oddcutdom(n)$. To this end, let $ (x, \bar{x}) \in P_y $.
    Let $ \delta $ be the smallest integer such that $ |\{v \in V : y_v = \delta \}| $ is odd.
    Note that such a $ \delta $ exists since $ \sum_{v \in V} y_v $ is odd.
    By the definition of $ \delta $, the set $ S \coloneqq \{ v \in V : y_v \le \delta \} $ has odd cardinality.
    For any $ v \in S $ and any $ w \in V \setminus S $ we have
    \ifbool{SODAfinal}{
    \begin{align*}
     x_{\{v,w\}} &= \bar{x}_{(v,w)} + \bar{x}_{(w,v)} \ge \bar{x}_{(v,w)} \\
     &\ge y_w - y_v \ge (\delta + 1) - \delta = 1.
    \end{align*}

    }{
    \[
        x_{\{v,w\}} = \bar{x}_{(v,w)} + \bar{x}_{(w,v)} \ge \bar{x}_{(v,w)} \ge y_w - y_v \ge (\delta + 1) - \delta = 1.
    \]
    }
    Thus, $ x $ is entry-wise greater than or equal to the characteristic vector of the odd cut induced by $ S $, and hence $
    x \in \oddcutdom(n) $.
\end{proof}

\ifbool{SODAfinal}{}{
\subsection{Traveling salesman polytope}\label{secUpperBoundTSP}
Finally, we argue that there is a polynomial-size MILEF for the traveling salesman polytope $ \tsp(n) $ of $ K_n $ that
only uses $ O(n \log n) $ integer variables.
Let $ \ell \coloneqq \lceil \log_2 n \rceil $ and let us fix any set $ S \subseteq \{0,1\}^\ell $ with cardinality $ n $.
Furthermore, pick any bijective map $ f : S \to V $.
Finally, fix any Hamiltonian cycle $ T \subseteq E $, and consider the polytope
\ifbool{SODAfinal}{
\begin{alignat}{2}
    Q \coloneqq \conv \Big \{ & (y, y', z) \in S \times S \times \{0,1\} : & \label{eqMILEFTspAuxPolytope} \\
			      & y \ne y', & \nonumber \\
			      & z = \left| T \cap \left \{ \{f(y), f(y')\} \right \} \right| & \Big\}. \nonumber
\end{alignat}
}{
\begin{equation}
    \label{eqMILEFTspAuxPolytope}
    Q \coloneqq \conv \left \{ (y_1, y_2, z) \in S \times S \times \{0,1\} :
    y_1 \ne y_2, z = \left| T \cap \left \{ \{f(y_1), f(y_2)\} \right \} \right| \right \}.
\end{equation}
}
We will use $ Q \subseteq \R^\ell \times \R^\ell \times \R $ to construct a MILEF for $ \tsp(n) $ as described in the proposition below. To bound the number of constraints used for this MILEF, we will later show that $Q$ has small extension complexity.
\begin{proposition}
    \label{propMILEFTsp}
    For every $ n $ we have 
    \ifbool{SODAfinal}{
     $\tsp(n) =$ 
     \begin{alignat*}{2}
      \conv \Big\{ & x\in [0,1]^E : \, \exists y:V\rightarrow \{0,1\}^\ell \text{ s.t. } & \\
		  & (y_v, y_w, x_{\{v,w\}}) \in Q \quad \forall \{v,w\} \in E & \Big\}.
     \end{alignat*}
    }{
    \begin{align*}
        \tsp(n) = \conv \{ x \in [0,1]^E : \
            & \exists y_v\in \{0,1\}^\ell \text{ for $v\in V$ such that }\\
            & (y_v, y_w, x_{\{v,w\}}) \in Q \text{ for all } \{v,w\} \in E \}.
    \end{align*}
    }
\end{proposition}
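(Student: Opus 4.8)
The plan is to prove the displayed identity by establishing the two set inclusions, the crucial point being that the integrality of the vectors $y_v$ forces $v \mapsto y_v$ to be a bijection from $V$ onto $S$, so that $f \circ y$ is a vertex permutation of $K_n$ carrying the fixed Hamiltonian cycle $T$ to the cycle whose characteristic vector is $x$.

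First I would record a few elementary observations. Since $\ell = \lceil \log_2 n \rceil$ we have $2^\ell \ge n$, so a set $S$ with $|S| = n$ exists and $f$ (hence $f^{-1}$) is a bijection between sets of size $n$. The symmetric group on $V$ acts transitively on the Hamiltonian cycles of $K_n$: such a cycle is just a cyclic ordering of $V$ up to rotation and reflection, and a vertex bijection mapping one cyclic ordering to another maps one cycle to the other; moreover every vertex bijection of $V$ sends Hamiltonian cycles to Hamiltonian cycles. The key structural fact is: if $y_v \in \{0,1\}^\ell$ for every $v$ and $(y_v, y_w, x_{\{v,w\}}) \in Q$ for every $\{v,w\} \in E$, then $y_v, y_w \in S$, $y_v \ne y_w$, and $x_{\{v,w\}} = |T \cap \{\{f(y_v), f(y_w)\}\}|$. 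Indeed, the projection of $Q$ onto its first $2\ell$ coordinates is $\conv\{(s,s') : s,s' \in S,\ s \ne s'\}$, and a convex hull of $0/1$ points contains no other $0/1$ points, each of them being a vertex of $[0,1]^{2\ell}$; hence $(y_v,y_w)$ equals one of the defining pairs and, being an extreme point of that projection, its fibre in $Q$ is the corresponding single vertex of \eqref{eqMILEFTspAuxPolytope}, which pins down $x_{\{v,w\}}$.

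Granting this, the inclusion ``$\subseteq$'' would go as follows: for any Hamiltonian cycle $H$ pick, by transitivity, a permutation $\rho$ of $V$ with $\rho(H) = T$, and set $y_v := f^{-1}(\rho(v)) \in S \subseteq \{0,1\}^\ell$. The $y_v$ are pairwise distinct, and since $f(y_v) = \rho(v)$ one checks $\chi^H_{\{v,w\}} = 1 \iff \{v,w\} \in H \iff \{\rho(v),\rho(w)\} \in T \iff |T \cap \{\{f(y_v),f(y_w)\}\}| = 1$, so $(y_v, y_w, \chi^H_{\{v,w\}})$ is the vertex of $Q$ indexed by $(y_v,y_w)$; thus $\chi^H$ lies in the inner set, and taking convex hulls gives $\tsp(n) \subseteq$ (right-hand side). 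For ``$\supseteq$'', take $x$ in the inner set with witness $y$; the structural fact makes $v \mapsto y_v$ an injection $V \to S$, hence a bijection since $|V| = |S|$, so $\rho := f \circ y$ is a permutation of $V$, and $x_{\{v,w\}} = |T \cap \{\{\rho(v),\rho(w)\}\}|$ equals $1$ exactly when $\{v,w\} \in \rho^{-1}(T)$. Therefore $x$ is the characteristic vector of the Hamiltonian cycle $\rho^{-1}(T)$ and lies in $\tsp(n)$, and since $\tsp(n)$ is convex the convex hull of all such $x$ is contained in it.

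The only delicate step is the structural fact about $Q$, i.e.\ that substituting integral vectors $y_v$ into the relation $(y_v,y_w,\cdot) \in Q$ exactly recovers the combinatorial encoding (membership in $S$, distinctness, and the indicator of $T$); this rests on the ``no spurious $0/1$ points in a $0/1$-polytope'' principle applied to the projection of $Q$. The remainder is routine bookkeeping with permutations of Hamiltonian cycles of $K_n$.
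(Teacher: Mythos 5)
Your proposal is correct and follows essentially the same route as the paper's proof: the inclusion $\tsp(n)\subseteq\cdot$ via a vertex bijection carrying the given Hamiltonian cycle to $T$ and encoding it through $f^{-1}$, and the reverse inclusion by using integrality of the $y_v$'s together with the $0/1$-structure of the generating points of $Q$ to force $y_v\in S$, pairwise distinctness, and $x_{\{v,w\}}=\left|T\cap\left\{\{f(y_v),f(y_w)\}\right\}\right|$, so that $x$ is the characteristic vector of the preimage of $T$ under the induced permutation. The paper pins down these facts by writing $(y_v,y_w,x_{\{v,w\}})$ directly as a convex combination of generating points, whereas you phrase the same argument via the projection onto the first $2\ell$ coordinates and its fibres; this is only a cosmetic difference.
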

\begin{proof}
    Let $ K $ denote the polytope on the right-hand side of the claim.
    Let $ x \in \{0,1\}^E $ be the characteristic vector of a Hamiltonian cycle $ C \subseteq E $.
    Then there exists a bijective map $ g : V \to V $ such that $ \{v,w\} \in C \iff \{g(v), g(w)\} \in T $.
    For every $ v \in V $ choose $ y_v \in S \subseteq \{0,1\}^\ell $ such that $ f(y_v) = g(v) $.
    Now for every $ \{v,w\} \in E $ we clearly have $ y_v \ne y_w $ as well as
    \ifbool{SODAfinal}{
    \begin{align*}
        x_{\{v,w\}} = 1 &\iff \{v,w\} \in C \\
        &\iff \{g(v), g(w)\} \in T \\
        &\iff \{f(y_v), f(y_w)\} \in T,
    \end{align*}
    }{
    \[
        x_{\{v,w\}} = 1 \iff \{v,w\} \in C \iff \{g(v), g(w)\} \in T \iff \{f(y_v), f(y_w)\} \in T,
    \]
    }
    which means $ x_{\{v,w\}} = \left| T \cap \left \{ \{f(y_v), f(y_w)\} \right \} \right| $.
    Thus, we have $ (y_v, y_w, x_{\{v,w\}}) \in Q $ for every edge $ \{v,w\} \in E $ and hence $ x \in K $.
    This shows $ \tsp(n) \subseteq K $.

    For the reverse inclusion, let $ x \in [0,1]^E $ and consider for $y_v \in \{0,1\}^{\ell}$ for $v\in V$ such that we have $(y_v, y_w, x_{\{v,w\}}) \in Q$ for every edge $ \{v,w\} \in E $.
    Since every vertex $ v $ is incident to some edge, the definition of $ Q $ requires that $ y_v \in S $.
    Furthermore, since every two vertices are adjacent, all $ y_v $ are pairwise distinct.
    Consider the set
    \ifbool{SODAfinal}{
    \begin{alignat*}{2}
     X \coloneqq \Big\{ & (y, y', z) \in S \times S \times \{0,1\} : & \\
			& z = \left| T \cap \left \{ \{f(y), f(y')\} \right \} \right| &\Big\}.
    \end{alignat*}
    }{
    \[
        X \coloneqq \left\{ (y, y', z) \in S \times S \times \{0,1\} :
            z = \left| T \cap \left \{ \{f(y), f(y')\} \right \} \right| \right\}.
    \]
    }
    Fix any edge $ \{v,w\} \in E $ and note that we have $ (y_v, y_w, x_{\{v,w\}}) \in \conv(X) $.
    Thus, there exist some points $ (y_1,y_1',z_1), \dotsc, (y_k, y_k', z_k) \in S $ and coefficients $ \lambda_1,\dotsc,\lambda_k \ge 0 $ with $
    \sum_{i=1}^k \lambda_i = 1 $ such that
    \[
        (y_v, y_w, x_{\{v,w\}}) = \sum_{i=1}^k \lambda_i \cdot (y_i, y_i', z_i)\enspace.
    \]
    Since $ y_v \in \{0,1\}^\ell $ and $ y_1,\dotsc,y_k \in \{0,1\}^\ell $, this implies that $ y_1 = \dotsb = y_k = y_v
    $.
    Analogously, we must also have $ y_1' = \dotsb = y_k' = y_w $.
    By the definition of $ X $, we further have
    \[
        z_i = \left| T \cap \left \{ \{f(y_i), f(y_i')\} \right \} \right| = \left| T \cap \left \{ \{f(y_v), f(y_w)\} \right \} \right|
    \]
    and hence
    \[
        x_{\{v,w\}} = \sum_{i=1}^k \lambda_i \cdot z_i = \left| T \cap \left \{ \{f(y_v), f(y_w)\} \right \} \right|.
    \]
    In other words, we have $ x_{\{v,w\}} \in \{0,1\} $ with $ x_{\{v,w\}} = 1 \iff \{f(y_v), f(y_w)\} \in T $.
    This means that $ x $ is the characteristic vector of the Hamiltonian cycle with edge set $ \left \{ \{v,w\} :
    \{f(y_v), f(y_w)\} \in T \right \} $ (recall that the $ y_v $'s are pairwise distinct).
    Thus, we obtain $ x \in \tsp(n) $ and hence $ K \subseteq \tsp(n) $.
\end{proof}
\begin{corollary}
    For every $ n $, $ \tsp(n) $ admits a MILEF of size $ O(n^4) $ with $ O(n \log n) $ integer variables.
\end{corollary}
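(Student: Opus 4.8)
The plan is to combine the description of $\tsp(n)$ given in Proposition~\ref{propMILEFTsp} with a bound on the extension complexity of the auxiliary polytope $Q$ from~\eqref{eqMILEFTspAuxPolytope}, which is exactly what controls the number of constraints of the resulting MILEF. The key observation is that $Q$ is the convex hull of at most $|S|\cdot(|S|-1)=n(n-1)$ points, so it trivially admits an extended formulation of size $O(n^2)$: writing $Q=\conv\{p_1,\dots,p_N\}$ with $N\le n(n-1)$, the polyhedron
\[
 Q' \coloneqq \Big\{ (\lambda,q)\in\R^N\times\R^{2\ell+1} : \lambda\ge\zerovec,\ \onesvec\t\lambda = 1,\ q=\sum\nolimits_{i=1}^N \lambda_i p_i \Big\}
\]
has $N=O(n^2)$ facets, and the coordinate projection $\rho$ onto the $q$-block satisfies $\rho(Q')=Q$. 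I would then split $\rho=(\rho_1,\rho_2,\rho_3)$ according to the three factors $\R^\ell\times\R^\ell\times\R$ of the space containing $Q$.

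Next I would assemble the MILEF polyhedron $\hat Q$. Its variables are the edge variables $x\in\R^E$, a vector $y_v\in\R^\ell$ for each vertex $v\in V$, and a copy $u_e\in\R^N$ of the lifted space of $Q'$ for each edge $e\in E$. The constraints defining $\hat Q$ are: $\zerovec\le y_v\le\onesvec$ for all $v$; the $O(n^2)$ facet-defining inequalities of $Q'$ imposed on each $u_e$; and, for each edge $e=\{v,w\}$ with $v,w$ taken in a fixed order, the linking equalities $\rho_1(u_e)=y_v$, $\rho_2(u_e)=y_w$, $\rho_3(u_e)=x_e$ (well-posed because $Q$ is invariant under swapping its first two $\R^\ell$-blocks). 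I would let $\sigma$ be the projection onto the $(y_v)_{v\in V}$ coordinates, giving $k=\ell\,|V|=O(n\log n)$ integrality constraints, and $\pi$ the projection onto the $x$-coordinates. The number of facets of $\hat Q$ is at most its number of inequality constraints, which is dominated by the $|E|=O(n^2)$ copies of $Q'$, each with $O(n^2)$ facets, for a total of $O(n^4)$ (the box constraints add only $O(n\log n)$, and the linking equalities define the affine hull and contribute no facets). Hence the MILEF has size $O(n^4)$ with $O(n\log n)$ integer variables.

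It then remains to verify that $(\hat Q,\sigma,\pi)$ is a MILEF of $\tsp(n)$, i.e.\ $\conv\big(\pi(\hat Q\cap\sigma^{-1}(\Z^k))\big)=\tsp(n)$. By Proposition~\ref{propMILEFTsp} this reduces to showing that $\pi(\hat Q\cap\sigma^{-1}(\Z^k))$ coincides with the set $\{x\in[0,1]^E : \exists\, y_v\in\{0,1\}^\ell \text{ for } v\in V \text{ with } (y_v,y_w,x_{\{v,w\}})\in Q \text{ for all } \{v,w\}\in E\}$. For the inclusion ``$\supseteq$'', given such $x$ and $(y_v)_v$, each triple $(y_v,y_w,x_{\{v,w\}})\in Q=\rho(Q')$ lifts to some $u_e\in Q'$; the resulting point $(x,(y_v)_v,(u_e)_e)$ lies in $\hat Q$, has integral $\sigma$-image, and projects to $x$ under $\pi$. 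For ``$\subseteq$'', take $(x,(y_v)_v,(u_e)_e)\in\hat Q$ with $\sigma$-image in $\Z^k$; then $y_v\in\Z^\ell\cap[0,1]^\ell=\{0,1\}^\ell$, while $(y_v,y_w,x_{\{v,w\}})=\rho(u_{\{v,w\}})\in Q$ for every edge and $x_e=\rho_3(u_e)\in[0,1]$ since $Q\subseteq[0,1]^{2\ell}\times[0,1]$. Taking convex hulls on both sides and invoking Proposition~\ref{propMILEFTsp} completes the argument.

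There is no genuine obstacle here: once Proposition~\ref{propMILEFTsp} is in hand the argument is essentially bookkeeping. The only quantitative point that matters is that $Q$ lives in dimension $O(\log n)$ and has $O(n^2)$ vertices, so it need not have polynomially many facets in its own space; one must therefore pass to the trivial $O(n^2)$-size extended formulation $Q'$, and combining this with the $O(n^2)$ edges is exactly what yields the stated $O(n^4)$ bound. The minor things to watch are the symmetry of $Q$ (so that the per-edge linking is order-independent) and the automatic membership $x_e\in[0,1]$ for feasible integral points, which matches the ambient box in Proposition~\ref{propMILEFTsp}.
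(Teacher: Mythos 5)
Your proposal is correct and follows essentially the same route as the paper: it invokes Proposition~\ref{propMILEFTsp} and bounds the extension complexity of $Q$ by $O(n^2)$ via the trivial simplex lift over its $|S|(|S|-1)$ vertices, yielding size $O(n^2)\cdot O(n^2)=O(n^4)$ with the $\ell\,|V|=O(n\log n)$ integer variables $y_v$. The only difference is that you spell out the bookkeeping of assembling the per-edge copies of the lifted $Q$ into one MILEF, which the paper leaves implicit.
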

\begin{proof}
    By Proposition~\ref{propMILEFTsp} it suffices to show that the polytope $ Q $ defined 
    in~\eqref{eqMILEFTspAuxPolytope} can be described by an extended formulation of size $ O(n^2) $.
    To see this, observe that $ Q $ has $ k \coloneqq |S|(|S| - 1) = O(n^2) $ vertices.
    Since $ Q $ is the set of all convex combinations of its vertices, we have that $ Q $ is the projection of the
    simplex $ \{ x \in \R^k_{\ge 0} : \sum_{i=1}^k x_i = 1 \} $ under the linear map defined by a matrix whose columns
    are the vertices of $ Q $.
    Thus, $ Q $ indeed has an extended formulation of size $ k $.
\end{proof}
This shows that the lower bound obtained in Corollary~\ref{corLowerBoundTSP} is tight up to a factor of $ O(\log^2 n) $.
We are not aware of any polynomial-size MILEF for $ \tsp(n) $ that uses $ o(n \log n) $ integer variables.
}

 \ifbool{SODAfinal}{}{\section{Towards tight bounds}
\label{secTowards}
In this work, we obtained lower bounds on the number of integer variables required in sub-exponential size MILEFs for a
variety of polyhedra by relying on lower bounds on sizes of approximate extended formulations for such polyhedra.
We close our paper by highlighting some gaps left by our techniques.

\paragraph{Stable set polytopes}
For the case of stable set polytopes, we show a lower bound of $ \Omega(\sfrac{\sqrt{n}}{\log n}) $ (for certain
graphs), while we are not aware of any MILEF of sub-exponential size that uses $ o(n) $ integer variables.
In fact, we believe that there exist graphs for which $ \Omega(n) $ integer variables are needed.
This large gap can be explained by our current approach, which simply uses the lower bound for either the cut polytope or matching polytope in a
black-box way by considering stable set polytopes of graphs on $ n $ vertices that have faces that can be affinely
projected onto $ \cut(n') $ or $\match(n')$, respectively, where $ n' = O(\sqrt{n}) $.
A more promising family of graphs to study is the one considered in the recent work of G{\"o}{\"o}s, Jain \&
Watson~\cite{GoosJW2016} who exhibit $ n $-vertex graphs whose stable set polytopes have extension complexities of $2^{\Omega(n / \log n)}$.
Because their work, however, only refers to exact rather than approximate extended formulations, it would require further analysis 
to lift their results to the mixed-integer setting through our techniques.

Despite the fact that we do not believe that all stable set polytopes admit polynomial-size MILEFs with $ o(n) $
integer variables, another motivation for improving the lower bound is the following.
In~\cite[Prop.~2]{KaibelW2015b} it is mentioned that if a family of polytopes $ P $ with vertices in $ \{0,1\}^d $
admits a polynomial-time algorithm to decide whether a point in $ \{0,1\}^d $ belongs to $ P $, then $ P $ can be
described by a MILEF whose size is polynomial in $ d $ and that uses only $ d $ integer variables.
We are not aware of any family of polytopes that shows that the bound on the number of integer variables is
asymptotically tight, but we believe that stable set polytopes are good candidates.

\paragraph{Traveling salesman polytopes}
We proved that every MILEF of sub-exponential size for $ \tsp(n) $ requires at least $ \Omega(\sfrac{n}{\log n}) $
integer variables, while there exists a polynomial-size MILEF with only $ O(n \log n) $ integer variables.
While it is likely that the lower bound can be improved to $ \Omega(n) $, it is not clear to us whether $ \tsp(n) $
admits a polynomial-size MILEF with $ O(n) $ integer variables.

\paragraph{Closing the logarithmic gap: original-space formulations}
Even though we get nearly tight lower bounds on the number of integer variables required in sub-exponential size
MILEFs for the matching polytope, the cut polytope, and the (dominant of the) odd-cut polytope, there is still a gap
remaining.
More precisely, for a graph on $n$ vertices, we show a lower bound of $\Omega(\sfrac{n}{\log n})$ for each of the
above polytopes, whereas there are polynomial-size descriptions using only $O(n)$ many integer variables.
This leaves a logarithmic gap.
We believe that the lower bounds are not tight and $\Omega(n)$ integer variables are needed.

\medskip

Whereas we do not know how to get rid of the $\log n$-factor in general, we can show a stronger lower bound through a different technique for a restricted class of MILEFs for the matching polytope; namely, MILEFs that live in the \emph{original space}, i.e., the same space as the matching polytope. In other words, MILEFs in original space are not allowed to use additional variables. More formally, we say that a MILEF $ (Q, \sigma, \pi)$ for a polyhedron $P$ with $ k $ integer
constraints is \emph{in original space}, if $\pi$ is the identity, i.e., we have the following (see \eqref{eq:defMILEF}):
\[
    P = Q_\sigma \coloneqq \conv(Q\cap \sigma^{-1}(\mathbb{Z}^k)).
\]
For such MILEFs for the matching polytope we show a lower bound of $k=\Omega(n)$. However, we highlight that we derive this linear lower bound only for the matching polytope and in original space, and it is open whether such a technique may extend to general MILEFs and beyond the matching polytope.

Notice that a lower bound of $\Omega(n)$ for the number of integer constraints for small MILEFs of the matching polytope
in the original space is tight (up to a constant factor), because the MILEF given in
Proposition~\ref{propUpperBoundMatching} is in original space.
\begin{theorem}\label{thmLowerBoundMatching2}
    There exists a constant $ \gamma > 0 $ such that any MILEF of $ \match(n) $ of size at most $ 2^{\gamma n} $ in
    original space has $ \Omega(n) $ integer constraints.
\end{theorem}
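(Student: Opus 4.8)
The plan is to peel the statement down to a purely geometric condition and then extract a contradiction from a parity invariant carried by the half‑integral points of the fractional matching polytope.

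\emph{Normalisation.} Suppose $\match(n)=Q_\sigma=\conv(Q\cap\sigma^{-1}(\Z^k))$ with $Q\subseteq\R^E$ a polyhedron of at most $2^{\gamma n}$ facets and $\sigma\colon\R^E\to\R^k$ affine; being in original space, this is the same as $\match(n)\subseteq Q$ together with $Q\cap\sigma^{-1}(\Z^k)\subseteq\match(n)$. Since $\zerovec$ and every unit vector $\chi^{\{e\}}$ ($e\in E$) are characteristic vectors of matchings, $\sigma$ maps them into $\Z^k$, so $\sigma(x)=Sx+v$ with $S\in\Z^{k\times E}$ and $v\in\Z^k$; in particular $\Z^E\subseteq\sigma^{-1}(\Z^k)$. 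Intersecting $Q$ with the fractional matching polytope $P_f:=\{x\in\R^E_{\ge 0}:x(\delta(v))\le 1\ \forall v\in V\}$ costs only $O(n^2)$ extra facets, which for $n$ large are absorbed into $2^{\gamma n}$; hence we may assume $\match(n)\subseteq Q\subseteq P_f$, so $Q$ is a full‑dimensional polytope in $\R^E$. For $e\in E$ set $w_e:=S\chi^{\{e\}}\in\Z^k$ and $\bar w_e:=w_e\bmod 2\in\mathbb{F}_2^k$.

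\emph{The parity invariant.} For an odd cycle $C$ in $K_n$, the half‑integral point $y_C:=\tfrac12\chi^C$ lies in $P_f$ but violates the blossom inequality $\sum_{e\in E[V(C)]}x_e\le\tfrac{|V(C)|-1}{2}$, so $y_C\notin\match(n)$; since $\sigma(y_C)=\tfrac12\sum_{e\in C}w_e+v$, we get $\sigma(y_C)\in\Z^k$ exactly when $\sum_{e\in C}\bar w_e=\zerovec$ in $\mathbb{F}_2^k$ (and likewise for half‑integral vertices of $P_f$ supported on several vertex‑disjoint odd cycles together with a matching). Consequently it is \emph{impossible} to have $y_C\in Q$ for an odd cycle $C$ with $\sum_{e\in C}\bar w_e=\zerovec$, since then $y_C\in Q\cap\sigma^{-1}(\Z^k)\subseteq\match(n)$, a contradiction.

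\emph{Many blossom facets are ``open''.} The blossom inequalities over odd $U\subseteq V$ with $|U|\ge 3$ are $\ge 2^{n-1}-n-1$ pairwise distinct facets of $\match(n)$, whereas $Q$ has only $2^{\gamma n}$ facets. A short argument shows that any facet of $\match(n)$ whose defining inequality is still valid for $Q$ (hence tight somewhere on $Q$, because $\match(n)\subseteq Q$) spans a facet of $Q$, and distinct blossom inequalities span distinct facets of $Q$ (their normals are distinct $0/1$ vectors). Hence for all but at most $2^{\gamma n}$ odd sets $U$ the maximum of $\sum_{e\in E[U]}x_e$ over $Q$ strictly exceeds $\tfrac{|U|-1}{2}$; i.e.\ $Q$ ``bulges'' past $\ge 2^{n-2}$ blossom facets. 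Restricting to odd sets of size $\lceil n/2\rceil$ still leaves $\ge\tfrac12\binom{n}{\lceil n/2\rceil}=2^{\Omega(n)}$ bulging sets, and for a bulging $U$ one argues, using $Q\subseteq P_f$ and the fact that the relative interior of the blossom facet of $U$ lies in the interior of $Q$, that $y_C\in Q$ for some odd cycle $C$ supported inside $U$ (a Hamiltonian cycle of $K_U$ in the cleanest case, a disjoint union of odd cycles filling $U$ in general).

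\emph{Closing the argument and the main obstacle.} Combining the last two points we obtain $2^{\Omega(n)}$ odd cycles (each of length $\Theta(n)$) with $y_C\in Q$, so by the parity invariant each must satisfy $\sum_{e\in C}\bar w_e\ne\zerovec$. The remaining step — which I expect to be the real difficulty — is to deduce $k=\Omega(n)$ from this: one must show that a linear map $E\to\mathbb{F}_2^k$ with $k=o(n)$ cannot be nonzero simultaneously on the labels of all these cycles, under the annoyance that for each bulging set $U$ one gets to use only \emph{one} cycle $C_U\subseteq E[U]$ (the one $Q$ happens to contain), not an arbitrary one. This is delicate because for highly structured labellings, e.g.\ $\bar w_{\{u,v\}}=c+\phi(u)+\phi(v)$, all single odd cycles may carry nonzero label, and one is then forced to feed into the argument half‑integral points supported on \emph{pairs} of disjoint odd cycles — which requires $Q$ to bulge past two blossoms ``at once'' — and to control the interplay between which such multi‑cycle points $Q$ is forced to contain and the $\mathbb{F}_2$‑rank of $\bar w$. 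Once this combinatorial core is in place, choosing the constant in $k\ge cn$ small enough relative to $\gamma$ yields the contradiction; the resulting bound $k=\Omega(n)$ is tight by Proposition~\ref{propUpperBoundMatching}.
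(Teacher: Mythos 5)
Your write-up is not yet a proof, and the missing piece is not a technicality. The decisive step you defer --- deducing $k=\Omega(n)$ from the fact that every odd cycle $C$ with $\tfrac12\chi^C\in Q$ carries a nonzero $\mathbb{F}_2$-label --- cannot be supplied within the single-cycle framework you set up: with $k=1$ and $\sigma(x)=\sum_{e\in E}x_e$ (equivalently, your labelling $\bar w_e\equiv c\ne 0$), \emph{every} odd cycle has $\sigma(\tfrac12\chi^C)\notin\Z$, so the conclusion of your parity invariant is consistent with a single integer constraint. Hence no contradiction can come from single half-cycles; as you note yourself, one would have to force $Q$ to contain half-integral points supported on several disjoint odd cycles and then control the interplay with the $\mathbb{F}_2$-rank of the labelling, and neither of these is carried out. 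There is also an earlier unjustified step: from ``$U$ is bulging'' you conclude $\tfrac12\chi^C\in Q$ for some odd cycle $C$ inside $U$. This does not follow. A polytope with $\match(n)\subseteq Q\subseteq P_f$ may violate the blossom inequality of $U$ by an arbitrarily small amount (e.g.\ $Q=\conv(\match(n)\cup\{z+\eps d\})$ with $z$ in the relative interior of the blossom facet and $\eps$ tiny), in which case it contains no half-integral cycle point whatsoever; knowing that the relative interior of the blossom facet lies in the interior of $Q$ only yields points slightly beyond the facet, while $\tfrac12\chi^C$ exceeds the blossom value by a full $\tfrac12$. So both the geometric step and the combinatorial core would require genuinely new ideas (presumably re-using the facet bound on $Q$ in a quantitative way), and it is not clear they can be made to work.

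For contrast, the paper proves the statement by a short induction on $n$ that never looks at fractional vertices: after normalizing the integrality constraints to $Cx\in\Z^k$ with $C$ integral, it picks an edge $e$ whose column of $C$ is nonzero, uses unimodular row operations to leave a single row $r_e$ hitting that column, and restricts to the face $\{x_f=0\ \forall f\in F\}$ (with $F$ consisting of $e$ and its adjacent edges) while tightening the right-hand side to $\min\{b,b-a_e\}$. Any point of the restricted system can be shifted by $\lambda\chi^e$ to restore $r_e\t x\in\Z$ without leaving $Ax\le b$, and down-closedness of $\match(n)$ then shows the restricted system is an original-space MILEF of (a copy of) $\match(n-2)$ with at most $m$ facets and $k-1$ integrality constraints. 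Iterating and invoking $\xc(\match(n))\ge 2^{\gamma n}$ as the base case gives $m\cdot 2^{3\gamma k}\ge 2^{\gamma n}$, hence $k=\Omega(n)$ for subexponential $m$ --- an argument that sidesteps exactly the obstacles on which your approach is stuck.
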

\begin{proof}
    Recall that there exists a constant $ \gamma > 0 $ such that the extension complexity of $ \match(n) $ is at least $
    2^{\gamma n} $, for every $ n \ge 2 $.
    It suffices to prove that, for any MILEF of $ \match(n) $ in original space of complexity $ (m, k) $, the inequality
    \begin{equation}
        \label{eqProofOriginalSpace}
        m \cdot 2^{3 \gamma k} \ge 2^{\gamma n}
    \end{equation}
    must hold.
    By the definition of $ \gamma $, the inequality clearly holds whenever $ k = 0 $.

    We assume that $ k \ge 1 $ and proceed by induction over $ n \ge 2 $.
    Since any MILEF of $ \match(n) $ of complexity $ (m, k) $ must satisfy $ m \ge 1 $,
    inequality~\eqref{eqProofOriginalSpace} is clearly satisfied whenever $ 3 k \ge n $ and hence it holds if $ n \in
    \{2,3\} $.
    Now, let $ n \ge 4 $ and assume that $ \match(n) $ admits a MILEF of complexity $ (m, k) $ with $ k \ge 1 $.
    That is, denoting by $ G = (V, E) $ the complete undirected graph on $ n $ vertices, there exist matrices $ A \in
    \R^{[m] \times E} $, $ C \in \R^{[k] \times E} $ and vectors $ b \in \R^{[m]} $, $ d \in \Z^k $ such that
    \[
        \match(n) = \conv \left \{ x \in \R^E : Ax \leq b, \, Cx + d \in \Z^k \right\} .
    \]
    We start with some simplifications that can be done over the integrality constraints $Cx+d \in \Z^k$ without loss of
    generality.
    As the vector $ \zerovec $ is contained in $ \match(n) $, the vector $d$ must be integral; thus we can assume it to
    be zero, because $ Cx $ is integral if and only $ Cx + d $ is integral.
    Next, as the characteristic vector $\chi^e$ of each single edge $e$ is contained in $ \match(n) $, we learn that $C$
    is an integral matrix.
    Finally, we remark that we can add to a row of $ C $ an integer multiple of another row,
    and this operation will not change the (non-)integrality of a vector $Cx$.

    Fix an edge $e$ such that the corresponding column in $C$ is not zero.
    By performing integral row operations, as described above, we can assume that there is a single non-zero entry in this column.
    Let $r_e$ be the row corresponding to this non-zero entry, and let $ \bar{C} \in \Z^{[k-1] \times E} $ be the
    collection of all the other rows; hence, $ \bar{C} \chi^e = 0 $.
    We obtain
    \ifbool{SODAfinal}{
    \begin{alignat}{2}
     \match(n) = \conv \Big\{ & x\in \R^E : \, Ax \leq b, & \label{eq:original1} \\
			      & \bar{C} x \in \Z^{k-1}, \, r_e\t x \in \Z & \nonumber \Big\}. 
    \end{alignat}
    }{
    \begin{equation} \label{eq:original1}
        \match(n)
        = \conv \left \{ x \in \R^E : Ax \leq b, \, \bar{C} x \in \Z^{k-1}, \, r_e\t x \in \Z \right \}.
    \end{equation}
    }

    Now, let $F\subseteq E$ contain $e$ and all edges adjacent to it, and let $G'=(V,E\setminus F)$. 
    Let $a_e$ be the column of matrix $A$ corresponding to edge $e$.
    Let $ P' $ be the matching polytope of $ G' $.
    We claim that, if we identify $ P' $ with the face of $\match(n)$ defined by setting $ x_f=0$ for all $f\in F$, we have the
    following identity:
    \ifbool{SODAfinal}{
    \begin{alignat}{2}
     P' = \conv \Big\{ & x \in \R^E : \quad x_f = 0 \quad \forall f \in F, & \label{eq:original2} \\
		      & Ax \leq \min \{b, b - a_e\}, &\nonumber \\
		      & \bar{C} x \in \Z^{k-1} & \Big\}, \nonumber
    \end{alignat}
    }{
    \begin{equation}\label{eq:original2}
        P' = \conv \left \{ x \in \R^E : x_f = 0 \ \forall f \in F, \, Ax \leq \min \{b, b - a_e\}, \, \bar{C} x \in \Z^{k-1} \right \} ,
    \end{equation}
    }
    where the $\min$ operator in $\min\{b, b-a_e\}$ is taken component-wise.
    Note that the proof is complete once we show this, because $ P' $ is linearly isomorphic to $ \match(n - 2) $, and
    hence $ \match(n - 2) $ admits a MILEF of complexity $ (m, k - 1) $.
    By the induction hypothesis, this implies $ m 2^{3 \gamma (k - 1)} \ge 2^{\gamma (n - 2)} $, which yields
    inequality~\eqref{eqProofOriginalSpace}.

    To show that the inclusion ``$\subseteq$'' in~\eqref{eq:original2} holds, consider any vertex  $ x \in P' $.
    Notice that $ x $ is the characteristic vector of a matching in $G'$,
    which augments to a matching in $ G $ when we add edge $ e $.
    Therefore, both $x$ and $y \coloneqq x + \chi^e$ must be in $\match(n) $.
    By using~\eqref{eq:original1}, we deduce that the inequalities $ Ax \leq b $ and $ Ay = Ax + a_e \leq b$
    hold, and so $ Ax \leq \min \{b, b - a_e \} $ holds as well.
    The other conditions on the right-hand side of \eqref{eq:original2} are clearly satisfied for $x$. 

    For the opposite inclusion, let $ x \in \R^E $ satisfying $ x_f = 0 $ for all $ f \in F $, $ Ax \leq \min \{b, b -
    a_e\} $, and $  \bar{C} x \in \Z^{k-1} $.
    Clearly, there must be some $ \lambda \in [0,1] $ such that $ y \coloneqq x + \lambda \chi^e $ satisfies $ r_e\t y
    \in \Z $.
    Furthermore, we have $ \bar{C} y = \bar{C} x \in \Z^{k-1} $ because the column of $ \bar{C} $ that corresponds to $
    e $ is an all-zeros column.
    Moreover, we have the inequality $ Ay = Ax + \lambda a_e \leq \min\{b,b-a_e\} + \lambda a_e \leq b $.
    Thus, the vector $ y $ satisfies all constraints of the formulation in~\eqref{eq:original1}, and so it is contained
    in $ \match(n) $.
    Since $ \match(n) $ is down-closed, $ x $ is also contained in $ \match(n) $.
    Finally, recall that $ x $ satisfies $ x_f = 0 $ for all $ f \in F $ and hence $ x \in P' $.
\end{proof}
 }

\bibliographystyle{plain}

\ifbool{SODAfinal}{}{
\appendix
\section{Relative distance: proofs}
\label{appendixRdist}
In this part, we provide the proofs of Lemmas~\ref{lem:rd2}, \ref{lemma:rdistMaxGap}, \ref{lemma:rdistMinGap}, and
\ref{lemHausdorffBoundedByRdist}.
\begin{proof}[Proof of Lemma~\ref{lem:rd2}]
    In order to prove~\ref{item:rdistSetDef}, let us define
    \[
        f(A, B) \coloneqq \inf \left \{ \lambda \ge 0 : B\subseteq (1+\lambda)A - \lambda A \right \}.
    \]
    It is straightforward to check that $f(A,B)=\rd(A,B)$ in the cases that $A=B=\emptyset$ (value 0) 
    or $A=\emptyset\neq B$ (value $\infty$).
    Thus, in what follows we may assume that both sets are non-empty.
    Since both $ \rdist(A, B) $ and $ f(A, B) $ are non-negative, it suffices to show that $ f(A, B) < \lambda $ implies
    $\rd(A,B)\leq \lambda$, and that $ f(A, B) > \lambda $ implies $\rd(A,B)\geq \lambda$, for any $\lambda>0$.

    Suppose first that $ f(A, B) < \lambda$ holds for some $ \lambda > 0 $.
    Clearly, this implies $ B \subseteq (1+\lambda)A-\lambda A$ and hence for any linear map $ \pi : \R^d\rightarrow \R
    $ we obtain
    \[
        \pi(B) \subseteq \pi \left( (1 + \lambda) A - \lambda A \right) = (1 + \lambda) \pi(A) - \lambda \pi(A).
    \]
    Thus, for any point $b\in B$, there must be points $a,a'\in A$ such that 
    $\pi(b)=(1+\lambda)\pi(a) - \lambda \pi(a')$, or equivalently, $\pi(b) - \pi(a) = \lambda (\pi(a) - \pi(a'))$.
    Recalling that we treat the fraction $\sfrac{0}{0}$ as $0$, we obtain the inequality
    \[
     \lambda \geq \frac{|\pi(b) - \pi(a)|}{|\pi(a) - \pi(a')|} \geq \frac{\inf_{a\in A}|\pi(b) - \pi(a)|}{\diam(\pi(A))}.
    \]
    As this inequality holds for every linear map $ \pi : \R^d \to \R $ and every point $ b \in B $, we obtain
    \[
     \lambda \geq \sup_{\pi:\R^d\rightarrow \R} \frac{\sup_{b\in B} \inf_{a\in A} |\pi(b) - \pi(a)|}{\diam(\pi(A))}
     = \sup_{\pi:\R^d\rightarrow \R} \frac{d_H(A, B)}{\diam(\pi(A))} = \rdist(A,B).
    \]

    Conversely, suppose that $ f(A, B) > \lambda > 0$.
    Clearly, this implies $B\not\subseteq (1+\lambda)A-\lambda A\eqqcolon A'$.
    Let $b\in B\setminus A'$. Moreover, notice that $A'$ is convex, which follows by convexity of $A$.
We now invoke a classic convex separation theorem, see~\cite[Thm.~11.3]{Rockafellar2015}, to \emph{properly} separate $b$ from $A'$. 
More precisely, using that both $A'$ and $\{b\}$ are convex sets whose relative interiors do not intersect---which holds trivially because the relative interior of $\{b\}$ is the empty set---
one can find a hyperplane $H$ that \emph{properly} separates $\{b\}$ from $A'$, which means that
\begin{enumerate*}[label=(\roman*)]\item $b$ is contained in one of the two closed halfspaces defined by $H$, 
\item $A'$ is contained in the other closed halfspace defined by $H$, and 
\item not both $A'$ and $\{b\}$ are fully contained in $H$\end{enumerate*}.
By shifting $H$ to go through $b$, one can assume that $A'$ is not fully contained in $H$.
    This implies that there is a linear map $\pi:\R^d\to \R$ such that
    \[
     \pi(b) \geq \sup_{x\in A'} \pi(x) \quad \text{ and } \quad \pi(b) > \inf_{x\in A'} \pi(x).
    \]
    If $\sup_{a\in A} \pi(a)=\inf_{a\in A} \pi(a)$, then $ \pi(A) = \{\gamma\} $ for some $ \gamma \in \R $, and hence
    $ \diam(\pi(A)) = 0 $.
    Furthermore, we have $ \gamma = \inf_{x\in A'} \pi(x) < \pi(b) $, which implies $ d_H(\pi(A), \pi(B)) > 0 $ and we
    obtain $ \rdist(A, B) = \infty \ge \lambda $.

    Otherwise, if $\sup_{a\in A} \pi(a) > \inf_{a\in A} \pi(a)$, then we have
    \[
        \pi(b) \geq \sup_{x\in A'} \pi(x)
        = (1+\lambda)\sup_{a\in A} \pi(a) - \lambda \inf_{a\in A} \pi(a) = \sup_{a\in A} \pi(a) + \lambda \diam(\pi(A)),
    \]
    where $\diam(\pi(A))> 0$ follows from the assumption $\sup_{a\in A}\pi(a) > \inf_{a\in A} \pi(a)$.
    As $\pi(b)$ is finite, then so are the quantities $\sup_{a\in A} \pi(a)$ and $\diam(\pi(A))$. Finally,
    \[
        \lambda \leq \frac{\pi(b) - \sup_{a\in A} \pi(a)}{\diam(\pi(A))} \leq \frac{d_H(\pi(A), \pi(B))}{\diam(\pi(A))}
        \leq \rd(A,B).
    \]

    \medskip 

    Claim \ref{item:rdistProjSmaller} follows directly from~\ref{item:rdistSetDef} and the fact that every affine map $
    \pi $ satisfies $ \pi((1 + \lambda) A - \lambda A) = (1 + \lambda) \pi(A) + \lambda \pi(A) $.

    \medskip 

    In order to show~\ref{item:rdistTriangleIneq}, let $ R $ denote the right-hand side of the claimed inequality.
    Since $ \rd(A,C) $ and $ R $ are non-negative, it suffices to show that $ \rd(A,C) > \lambda > 0 $ implies $ R \ge
    \lambda $, for any $ \lambda > 0 $.
    By the definition of $ \rd(\cdot) $, note that $ \rd(A,C) > \lambda > 0 $ implies that there exists a linear map $
    \pi : \R^d \to \R $ such that
\begin{equation}\label{eq:rdistTriangleLamLow}
        \lambda \le \frac{d_H(\pi(A), \pi(C))}{\diam(\pi(A))}\enspace,
\end{equation}
    where $ 0 < \diam(\pi(A)) < \infty $ due to $\lambda > 0$.
    In particular, $ \pi(A) $ is a proper interval.
    If $ \pi(C) $ is unbounded, we must have $ \rd(A,B) = \infty $ (if also $ \pi(B) $ is unbounded) or $ \rd(B, C) =
    \infty $ (if $ \pi(B) $ is bounded).
    Thus, if $ \pi(C) $ is unbounded, we have $ R = \infty $ and the inequality holds.

    It remains to consider the case that $ \pi(A) $, $ \pi(B) $, and $ \pi(C) $ are proper intervals. (Notice that these intervals need not be closed.)
    In this case, there exist numbers $ c \leq b \leq a < a' \leq b' \leq c'$ describing the closures of these intervals: $\cl(\pi(A))=[a,a']$, $\cl(\pi(B))=[b,b']$, and $\cl(\pi(C))=[c,c']$.
    Using this notation and setting $ x \coloneqq \max \{ b - c, c' - b' \} $, $ y \coloneqq \max \{a - b, b' - a'\} $, and $ D \coloneqq a'
    - a $, we have
    \[
        \rd(A, B)
        \ge \frac{d_H(\pi(A), \pi(B))}{\diam(\pi(A))}
        = \frac{\max \{ a - b, b' - a' \}}{D}
        = \frac{y}{D}
    \]
    as well as
    \begin{align*}
        \rd(B, C)
        \ge \frac{d_H(\pi(B), \pi(C))}{\diam(\pi(B))}
        = \frac{\max \{ b - c, c' - b' \}}{b' - b}
        = \frac{x}{(b' - a') + D  + (a - b)} \ge \frac{x}{2y + D}.
    \end{align*}
    Thus, we obtain
    \begin{align*}
        R & \ge \frac{y}{D} + \frac{x}{2y + D} + 2 \cdot \frac{y}{D} \cdot \frac{x}{2y + D} = \frac{x + y}{D} \\
        & \ge \frac{\max \{(a - b) + (b - c), (c' - b') + (b' - a')\}}{D} = \frac{d_H(\pi(A), \pi(C))}{\diam(\pi(A))} \\
        & \ge \lambda,
    \end{align*}
    as claimed, where the last inequality follows by~\eqref{eq:rdistTriangleLamLow}.

    \medskip 

    To prove~\ref{item:lemrd2HullUnion}, first notice that the claimed inequality holds trivially if all sets are empty,
    or if $A_i=\emptyset\neq B_i$ for some $i\in [t]$.  We can also ignore any pair of empty sets $A_i=B_i=\emptyset$,
    as its removal does not modify the terms in the inequality.  Thus, we assume in what follows that all sets are
    non-empty.
    It suffices to show that $\max_{i\in [t]} \rd(A_i, B_i) < \lambda$ implies $\rd(\conv(\cup_{i\in [t]} A_i),
    \conv(\cup_{i\in[t]} B_i)) \leq \lambda$, for any $ \lambda > 0 $.

    Suppose that $\max_{i\in [t]} \rd(A_i, B_i) < \lambda$ holds for some $ \lambda > 0 $.
    By~\ref{item:rdistSetDef}, this implies that we have $B_i\subseteq (1+\lambda)A_i - \lambda A_i$ for each $i\in
    [t]$.
    Let $ b \in \conv(\cup_{i\in[t]} B_i) $ and write it as $ b = \sum_{i \in [t]} \mu_i b_i $ for some $
    \mu_1,\dotsc,\mu_t \ge 0 $ with $ \sum_{i\in [t]} \mu_i =1$ and $ b_i \in B_i $ for $ i \in [t] $.
    For every $ i \in [t] $, since $B_i\subseteq (1+\lambda)A_i - \lambda A_i$, there exist $ a_i, a_i' \in A_i $ with $
    b_i = (1 + \lambda)a_i - \lambda a_i' $.
    We obtain
    \begin{align*}
        b & = \sum_{i \in [t]} \mu_i \left( (1 + \lambda a_i) - \lambda a_i' \right) \\
        & = (1 + \lambda) \sum_{i \in [t]} \mu_i a_i - \lambda \sum_{i \in [t]} \mu_i a_i'
        \in (1 + \lambda) \conv(\cup_{i\in [t]} A_i) - \lambda \conv(\cup_{i\in [t]} A_i),
    \end{align*}
    which shows $ B \subseteq (1 + \lambda) \conv(\cup_{i\in [t]} A_i) - \lambda \conv(\cup_{i\in [t]} A_i) $.
    By~\ref{item:rdistSetDef}, this implies that the relative distance of $ \conv(\cup_{i\in [t]} A_i) $ and $
    \conv(\cup_{i\in[t]} B_i) $ is at most $ \lambda $, which completes the proof.
\end{proof}
\begin{proof}[Proof of Lemma~\ref{lemma:rdistMaxGap}]
We shall use the alternative definition of relative distance provided in Lemma \ref{lem:rd2} \ref{item:rdistSetDef}.
Since both $ \rd(A, B) $ and $ \LPgapMax(A, B) $ are non-negative, it suffices to show that $\LPgapMax(A,B)<\lambda$
implies $\rd(A,B)\leq \lambda$, and that $\rd(A,B)< \lambda$ implies $\LPgapMax(A,B)\leq \lambda$, for any $\lambda>0$.

Suppose first that $\LPgapMax(A,B)<\lambda$.
As $A$ is down-closed and $B$ is contained in $\R^d_{\geq 0}$, this implies that $B\subseteq (1+\lambda)A$. 
And as $0\in A$, it is clear that $(1+\lambda)A \subseteq (1+\lambda)A -\lambda A$. 
Therefore, we have the inclusion $B\subseteq (1+\lambda)A -\lambda A$, and the inequality $\rd(A,B)\leq \lambda$.

Conversely, if $\rd(A,B)<\lambda$, we have the inclusion $B\subseteq (1+\lambda)A -\lambda A$. 
As $A$ is down-closed, so is the set $(1+\lambda)A$, and by definition this means that 
$((1+\lambda)A - \lambda A) \cap \R^d_{\geq 0}\subseteq (1+\lambda)A$. 
Thus, since $B$ is contained in $\R^d_{\geq 0}$, we have $B\subseteq (1+\lambda)A$. 
This implies that $\LPgapMax(A,B)\leq \lambda$.
\end{proof}
\begin{proof}[Proof of Lemma~\ref{lemma:rdistMinGap}]
    If $ d' = 1 $, then both $ A $ and $ B $ are proper line segments whose endpoints are $ 0/1 $-points.
    Since no such line segment contains a third $ 0/1 $-point and since $ A \subseteq B $, we obtain $ A = B $.
    Thus, from now on we can assume that $ d' \ge 2 $ holds.

    To prove the two inequalities, we first argue that we may assume that $ A $ and $B$ are full-dimensional.
    To see this, let $H\subset \R^d$ be the affine hull of $A$ (and $B$). 
    If $H\neq \R^d$, then there exists a set $ I \subseteq [d] $ such that $ H = \{ (x_1,\dotsc,x_d) \in \R^d : x_i =
    1 \ \forall i \in I \} $, because $ A $ is up-closed.
    In this case, let $ \pi : \R^d \to \R^{d'} $ denote the projection onto the coordinates in $ [d]\setminus I $.
    It is straightforward to verify that $ \rd(A, B) = \rd(\pi(A), \pi(B)) $ and $ \LPgapMin(A, B) = \LPgapMin(\pi(A),
    \pi(B)) $ hold.
    Assuming that the inequalities hold for full-dimensional sets, we directly obtain the claimed inequalities since $ \dim(\pi(A)) = \dim(\pi(B)) = d' $.
    Thus, we may assume that $ d' = \dim(A) = \dim(B) = d $ holds.

    To show~\ref{itemRdistMinGapLowerBoundRdist}, since $ \rd(A,B) \ge 0 $, it suffices to show that $ \LPgapMin(A, B) > \eps $
    implies $ \rd(A, B) \ge \frac{1}{d - 1} \cdot \frac{\eps}{1 + \eps} $, for any $ \eps \ge 0 $.
    Assume that $ \LPgapMin(A, B) > \eps $ holds for some $ \eps \ge 0 $.
    This implies that there exists a direction $ c = (c_1,\dotsc,c_d) \in \R^d_{\ge 0} $ such that
    \begin{equation}
        \label{eqProofLemmardistMinGap1}
        \min_{a \in A} c\t a > (1 + \eps) \inf_{b \in B} c\t b.
    \end{equation}
    First, observe that~\eqref{eqProofLemmardistMinGap1} implies $ \alpha \coloneqq \min_{a \in A} c\t a > 0 $.
    Second, we argue that we may assume that $ \alpha \ge \frac{\|c\|_1}{d} $ holds.
    To this end, let $ V \subseteq \{0,1\}^d $ denote the vertex set of $ A $.
    We clearly have $ \alpha = \min_{v \in V} c\t v $.
    For every $ i \in [d] $, replace $ c_i $ by the smallest nonnegative number such that the value of $ \min_{v \in V}
    c\t v $ does not change.
    With this modification, we clearly have that~\eqref{eqProofLemmardistMinGap1} is still valid.
    Furthermore, for every $ i \in [d] $ with $ c_i > 0 $ there must exist a point $ v = (v_1,\dotsc,v_d) \in V $ with $
    c\t v = \alpha $ and $ v_i = 1 $, and hence $ \alpha \ge c_i $.
    This implies $ d \cdot \alpha \ge \|c\|_1 $, as claimed.

    Third, since $ A $ is up-closed, it contains the all-ones vector and hence $ \max_{a \in A} c\t a = \|c\|_1 $.
    Denoting by $ \pi : \R^d \to \R $ the linear projection $ x \mapsto c\t x $, we thus obtain $ \pi(A) = [\alpha,
    \|c\|_1] $.
    By inequality~\eqref{eqProofLemmardistMinGap1}, we also have $ \pi(B) = [\beta, \|c\|_1] $, where $ \beta \coloneqq \inf_{b
    \in B} c\t b \le \frac{\alpha}{1 + \eps} $.
    Finally, we establish
    \[
        \rd(A, B)
        \ge \frac{\alpha - \beta}{\|c\|_1 - \alpha}
        \ge \frac{\alpha - \frac{1}{1 + \eps}\alpha}{d\alpha - \alpha}
        = \frac{1}{d-1} \cdot \frac{\eps}{1 + \eps}.
    \]

    To show~\ref{itemRdistMinGapLowerBoundMinGap}, since $ \LPgapMin(A, B) \ge 0 $, it suffices to show that $ \rd(A, B) >
    \lambda $ implies $ \LPgapMin(A, B) \ge \frac{\lambda}{d - 1 - \lambda} $, for any $ \lambda \ge 0 $.
    Assume that $ \rd(A, B) > \lambda $ holds for some $ \lambda \ge 0 $.
    By Lemma~\ref{lem:rd2}~\ref{item:rdistSetDef}, this implies $ B \not \subseteq (1 + \lambda) A - \lambda A $.
    Denoting by $ \onesvec \in A $ the all-ones vector, this in particular means that there exists a $ \bar{b} \in B $ such
    that $ \bar{b} + \lambda \onesvec \notin (1 + \lambda) A $.
    Equivalently, we obtain $ \frac{1}{1 + \lambda} \bar{b} + \frac{\lambda}{1 + \lambda} \onesvec \notin A $.
    Since $ \frac{1}{1 + \lambda} \bar{b} + \frac{\lambda}{1 + \lambda}\onesvec \in [0,1]^d $ and $ A $ is up-closed, we obtain $
    \frac{1}{1 + \lambda} \bar{b} + \frac{\lambda}{1 + \lambda} \onesvec \notin A + \R^d_{\ge 0} $.
    Since $ A + \R^d_{\ge 0} $ is an up-closed polyhedron, there exists a vector $ c \in \R^d_{\ge 0} $ such that
    \[
        c\t \left( \frac{1}{1 + \lambda} \bar{b} + \frac{\lambda}{1 + \lambda} \onesvec \right)
        < \min_{a \in A + \R^d_{\ge 0}} c\t a
        = \min_{a \in A} c\t a
        =: \alpha
    \]
    holds, which is equivalent to
    \[
        c\t \bar{b} < (1 + \lambda) \alpha - \lambda \|c\|_1.
    \]
    Furthermore, note that since $ A $ is an up-closed full-dimensional $ 0/1 $-polytope, it must contain all $ 0/1
    $-points with a support of size $ d - 1 $ and hence $ \frac{d - 1}{d} \onesvec \in A $.
    This clearly implies $ \alpha \le \frac{d - 1}{d} \|c\|_1 $.
    Hence, using the previous inequality, we obtain
    \[
        \inf_{b \in B} c\t b
        \le c\t \bar{b}
        < (1 + \lambda) \alpha - \lambda \|c\|_1
        \le (1 + \lambda) \alpha - \lambda \frac{d}{d - 1} \alpha
        = \frac{d - 1 - \lambda}{d - 1} \alpha.
    \]
    Since $ \inf_{b \in B} c\t b \ge 0 $ and $ \alpha \ge 0 $, we must have $ d - 1 - \lambda > 0 $ and hence
    \[
        \alpha \ge \left( 1 + \frac{\lambda}{d - 1 - \lambda} \right) \inf_{b \in B} c\t b,
    \]
    which shows $ \LPgapMin(A, B) \ge \frac{\lambda}{d - 1 - \lambda} $.
\end{proof}
\begin{proof}[Proof of Lemma~\ref{lemHausdorffBoundedByRdist}]
    Since $ \rdist(A, B) \le 1 $, it suffices to show that $ \rdist(A, B) < \lambda $ for some $ 0 \le \lambda \le 1 $
    implies $ d_H(A, B) \le \sqrt{d} \cdot \frac{\lambda}{1 + \lambda} $.
    Note that, by Lemma~\ref{lem:rd2} \ref{item:rdistSetDef}, $ \rdist(A, B) < \lambda $ implies that for every $ b \in
    B $ there exists an $ a \in A $ such that $ a' \coloneqq \frac{1}{1 + \lambda} b + \frac{\lambda}{1 + \lambda} a \in A $,
    and thus
    \[
        \|b - a'\|_2 = \frac{\lambda}{1 + \lambda} \|b - a\|_2 \le \frac{\lambda}{1 + \lambda} \sqrt{d}\enspace,
    \]
    where the inequality follows from the fact that $ a,b \in [0,1]^d $.
    Thus, for every $ b \in B $ there exists a point $ a' \in A $ with $ \|b - a'\|_2 \le \frac{\lambda}{1 + \lambda}
    \sqrt{d} $, which yields the claim.
\end{proof}
 }

\end{document}